\def\confversion{0}
\def\ifconf{\ifnum\confversion=1}
\def\ifnotconf{\ifnum\confversion=0}
\def\showauthornotes{0}
\def\showkeys{0}
\def\showdraftbox{0}
\definecolor{darkred}{rgb}{0.5,0,0}
\definecolor{darkgreen}{rgb}{0,0.35,0}
\definecolor{darkblue}{rgb}{0,0,0.55}
\newcommand{\Authornote}[2]{{\sf\small\color{red}{[#1: #2]}}}
\newcommand{\Authorcomment}[2]{{\sf \small\color{gray}{[#1: #2]}}}
\newcommand{\Authorfnote}[2]{\footnote{\color{red}{#1: #2}}}
\newcommand{\Authornote}[2]{}
\newcommand{\Authorcomment}[2]{}
\newcommand{\Authorfnote}[2]{}
\newcommand{\draftbox}{\begin{center}
  \fbox{%
    \begin{minipage}{2in}%
      \begin{center}%
        \begin{Large}%
          \textsc{Working Draft}%
        \end{Large}\\
        Please do not distribute%
      \end{center}%
    \end{minipage}%
  }%
\end{center}
\vspace{0.2cm}}
\newcommand{\draftbox}{}
\newtheorem{theorem}{Theorem}[section]
\newtheorem{definition}[theorem]{Definition}
\newtheorem{notation}[theorem]{Notation}
\newtheorem{lemma}[theorem]{Lemma}
\newtheorem{remark}[theorem]{Remark}
\newtheorem{corollary}[theorem]{Corollary}
\newtheorem{claim}[theorem]{Claim}
\newtheorem{fact}[theorem]{Fact}
\newtheorem{algo}[theorem]{Algorithm}
\newenvironment{algorithm}[3]
        {\noindent\begin{boxedminipage}{\textwidth}\begin{algo}[#1]\ \par
        {\begin{tabular}{r l}
        \textbf{Input} & #2\\
        \textbf{Output} & #3
        \end{tabular}\par\enskip}}
        {\end{algo}\end{boxedminipage}}
\def\FullBox{\hbox{\vrule width 6pt height 6pt depth 0pt}}
\def\qed{\ifmmode\qquad\FullBox\else{\unskip\nobreak\hfil
\penalty50\hskip1em\null\nobreak\hfil\FullBox
\parfillskip=0pt\finalhyphendemerits=0\endgraf}\fi}
\def\qedsketch{\ifmmode\Box\else{\unskip\nobreak\hfil
\penalty50\hskip1em\null\nobreak\hfil$\Box$
\parfillskip=0pt\finalhyphendemerits=0\endgraf}\fi}
\def\to{\rightarrow}
\def\eps{\varepsilon}
\def\epsilon{\varepsilon}
\def\eps{\epsilon}
\def\phi{\varphi}
\def\implies{\Rightarrow}
\newcommand{\defeq}{:=}
\newcommand{\ie}{i.e.,\xspace}
\newcommand{\eg}{e.g.,\xspace}
\newcommand{\etal}{et al.\xspace}
\newcommand{\mper}{\,.}
\newcommand{\mcom}{\,,}
\newcommand{\R}{{\mathbb R}}
\newcommand{\E}{{\mathbb E}}
\newcommand{\N}{{\mathbb{N}}}
\newcommand{\F}{{\mathbb F}}
\newcommand{\pmone}{\{-1,1\}\xspace}
\newcommand{\indicator}[1]{\mathds{1}_{\{#1\}}}
\newcommand{\abs}[1]{\ensuremath{\left\lvert #1 \right\rvert}}
\newcommand{\norm}[1]{\ensuremath{\left\lVert #1 \right\rVert}}
\newcommand{\ip}[2] {\ensuremath{\left\langle #1 , #2 \right\rangle}}
\newcommand{\one}{{\mathbf{1}}}
\newcommand{\Esymb}{\mathbb{E}}
\newcommand{\Psymb}{\mathbb{P}}
\DeclareMathOperator*{\ExpOp}{\Esymb}
\DeclareMathOperator*{\ProbOp}{\Psymb}
\def\Ex#1{%
    \ProbabilityRender{\Esymb}{#1}%
}
\def\ProbabilityRender#1#2{
  \@ifnextchar\bgroup%
  {\renderwithdist{#1}{#2}}
   {\singlervrender{#1}{#2}}
}
\def\singlervrender#1#2{%
   \ensuremath{\mathchoice
       {{#1}\left[ #2 \right]}
       {{#1}[ #2 ]}
       {{#1}[ #2 ]}
       {{#1}[ #2 ]}
   }
}
\def\renderwithdist#1#2#3{%
   \@ifnextchar\bgroup
   {\superfancyrender{#1}{#2}{#3}}
   {\ensuremath{\mathchoice
      {\underset{#2}{#1}\left[ #3 \right]}
      {{#1}_{#2}[ #3 ]}
      {{#1}_{#2}[ #3 ]}
      {{#1}_{#2}[ #3 ]}
     }
   }
}
\def\superfancyrender#1#2#3#4#5{
   \ensuremath{\mathchoice
      {\underset{#1}{{#1}}\left#4 #3 \right#5}
      {{#1}_{#2}#4 #3 #5}
      {{#1}_{#2}#4 #3 #5}
      {{#1}_{#2}#4 #3 #5}
   }
}
\newfont{\inhead}{eufm10 scaled\magstep1}
\newcommand{\poly}{{\mathrm{poly}}}
\newcommand{\polylog}{{\mathrm{polylog}}}
\DeclareMathOperator{\cov}{\operatorname {Cov}}
\DeclareMathOperator*{\argmin}{\arg\!\min}
\DeclareMathOperator*{\argmax}{\arg\!\max}
\renewcommand{\iff}{\ensuremath{\Leftrightarrow}}
\newcommand{\ceil}[1]{\ensuremath{\left\lceil #1 \right\rceil}}
\newcommand{\floor}[1]{\ensuremath{\left\lfloor #1 \right\rfloor}}
\newcommand{\problemmacro}[1]{\textsf{#1}}
\newcommand{\inparen}[1]{\left(#1\right)}             
\newcommand{\inbraces}[1]{\left\{#1\right\}}           
\DeclareSymbolFont{extraup}{U}{zavm}{m}{n}
\DeclareMathSymbol{\varheart}{\mathalpha}{extraup}{86}
\DeclareMathSymbol{\vardiamond}{\mathalpha}{extraup}{87}
\def\matr#1{\mathsf{#1}}
\def\rand#1{\mathbf{#1}}
\def\rv#1{\rand #1}
\def\One{\mathbb 1}
\def\one{\mathbf 1}
\def\Cc{\mathcal C}
\def\assn{\sigma}
\def\tree{\mathcal T}
\def\Aye{\matr A}
\def\Ess{\matr S}
\def\Emm{\matr M}
\def\Jay{\matr J}
\def\Gee{\matr G}
\def\ess{\mathfrak s}
\DeclarePairedDelimiter\set{\lbrace}{\rbrace}
\DeclarePairedDelimiter\parens{\lparen}{\rparen}
\DeclareMathOperator{\bias}{bias}
\DeclareMathOperator{\lift}{dsum}
\DeclareMathOperator{\dsum}{dsum}
\newcommand{\tswap}[3]{\Ess_{#1,#2,#3}}
\newcommand{\fswap}[3]{\mathfrak{S}_{#1,#2,#3}}
\newcommand{\sswap}[1]{\Ess_{#1,#1}^{\bigtriangleup}}
\newcommand{\lict}[1]{$(1/2-#1)$-\text{close}}
\newcommand*\circled[1]{\tikz[baseline=(char.base)]{
            \node[shape=circle,draw,inner sep=1pt] (char) {#1};}}
\DeclareMathOperator{\zigzag}{\circled{{\rm z}}}
\begin{document}

\title{Unique Decoding of Explicit $\epsilon$-balanced Codes Near \\
       the Gilbert--Varshamov Bound}

\author{
Fernando Granha Jeronimo\thanks{{\tt University of Chicago}. {\tt granha@uchicago.edu}. Supported in part by NSF grant CCF-1816372. } 
\and
Dylan Quintana\thanks{{\tt University of Chicago}. {\tt dquintana@uchicago.edu}}
\and
Shashank Srivastava\thanks{{\tt TTIC}. {\tt shashanks@ttic.edu}. Supported in part by NSF grant CCF-1816372.}
\and
Madhur Tulsiani\thanks{{\tt TTIC}. {\tt madhurt@ttic.edu}. Supported by NSF grant CCF-1816372.} 
}


\setcounter{page}{0}

\date{}

\maketitle
\draftbox
\thispagestyle{empty}

The Gilbert--Varshamov bound (non-constructively) establishes the
existence of binary codes of distance $1/2 -\epsilon$ and rate
$\Omega(\epsilon^2)$ (where an upper bound of $O(\epsilon^2\log(1/\epsilon))$
is known). Ta-Shma [STOC 2017] gave an explicit construction
of $\eps$-balanced binary codes, where any two distinct codewords are at a  
distance between $1/2 -\eps/2$ and $1/2+\eps/2$, achieving a near optimal
rate of $\Omega(\epsilon^{2+\beta})$, where $\beta \to 0$ as $\epsilon \to 0$. 

We develop unique and list decoding algorithms for (a slight modification of) 
the family of codes constructed by Ta-Shma, in the adversarial error model. 
We prove the following results for $\epsilon$-balanced codes with block 
length $N$ and rate $\Omega(\epsilon^{2+\beta})$ in this family:
\begin{itemize}
\item For all $\epsilon, \beta > 0$ there are explicit codes 
which can be uniquely decoded up to an error of half the minimum distance 
in time $N^{O_{\epsilon, \beta}(1)}$.
\item For any fixed constant $\beta$ independent of $\eps$, there 
is an explicit construction of codes which can be uniquely decoded up to an error of half the 
minimum distance in time $(\log(1/\epsilon))^{O(1)} \cdot N^{O_\beta(1)}$.
\item For any $\epsilon > 0$,  there are explicit $\epsilon$-balanced codes with 
rate $\Omega(\epsilon^{2+\beta})$ which can be list decoded up to error 
$1/2 - \epsilon'$ in time $N^{O_{\epsilon,\epsilon',\beta}(1)}$, where $\epsilon', \beta \to 0$ as $\epsilon \to 0$.
\end{itemize}

The starting point of our algorithms is the framework for list
decoding direct-sum codes develop in Alev \etal [SODA 2020], which
uses the Sum-of-Squares SDP hierarchy. The rates obtained there were
quasipolynomial in $\epsilon$. Here, we show how to overcome the far
from optimal rates of this framework obtaining \emph{unique decoding}
algorithms for explicit binary codes of near optimal rate. These codes
are based on simple modifications of Ta-Shma's construction.

\newpage

\ifnotconf
\pagenumbering{roman}
\tableofcontents
\clearpage
\fi

\pagenumbering{arabic}
\setcounter{page}{1}

\section{Introduction}\label{sec:intro}

Binary error correcting codes have pervasive
applications~\cite{G10:ICM,GRS:coding:notes} and yet we are far from
understanding some of their basic properties~\cite{G09Survey}. For
instance, until very recently no explicit binary code achieving
distance $1/2-\epsilon/2$ with rate near $\Omega(\epsilon^2)$ was known,
even though the existence of such codes was (non-constructively)
established long ago~\cite{G52,V57} in what is now referred as the
Gilbert--Varshamov (GV) bound. On the impossibility side, a rate upper
bound of $O(\epsilon^2\log(1/\epsilon))$ is known for binary codes of
distance $1/2-\epsilon/2$ (e.g.,~\cite{Delsarte75,MRRW77,NavonS09}).  

In a breakthrough result~\cite{Ta-Shma17}, Ta-Shma gave an explicit
construction of binary codes achieving nearly optimal distance versus
rate trade-off, namely, binary codes of distance $1/2-\epsilon/2$ with
rate $\Omega(\epsilon^{2+\beta})$ where $\beta$ vanishes as $\epsilon$
vanishes~\footnote{In fact, Ta-Shma obtained
$\beta=\beta(\epsilon)=\Theta(((\log{\log{1/\epsilon}})/\log{1/\epsilon})^{1/3})$
and thus $\lim_{\epsilon \to 0} \beta(\epsilon)=0$.}. 
Actually, Ta-Shma
obtained $\epsilon$-balanced binary linear codes, that is, linear binary
codes with the additional property that non-zero codewords have
Hamming weight bounded not only below by $1/2-\epsilon/2$ but also above
by $1/2+\epsilon/2$, and this is a fundamental property in the study of
pseudo-randomness~\cite{NN90,AGHP92}. 

While the codes constructed by Ta-Shma are explicit, they were not known to
admit efficient decoding algorithms, while such results are known for 
codes with smaller rates.
In particular, an explicit binary code due to Guruswami and
Rudra~\cite{GuruswamiR06} is known to be even list decodable at an
error radius $1/2-\epsilon$ with rate $\Omega(\epsilon^3)$.
We consider the following question:
\begin{center}
  \emph{Do explicit binary codes near the GV bound admit 
	an efficient decoding algorithm?}
\end{center}

Here, we answer this question in the affirmative by providing an
efficient~\footnote{By ``efficient'', we mean polynomial
time. Given the fundamental nature of the problem of decoding nearly
optimal binary codes, it is an interesting open problem to make these
techniques viable in practice.} unique decoding algorithm for
(essentially) Ta-Shma's code construction, which we refer as Ta-Shma
codes. More precisely, by building on Ta-Shma's construction and using
our unique decoding algorithm we have the following result.

\begin{restatable}[Unique Decoding]{theorem}{TheoMainUniqueDec}\label{theo:main}
  For every $\epsilon > 0$ sufficiently small, there are explicit binary linear Ta-Shma codes
  $\Cc_{N,\epsilon,\beta} \subseteq \mathbb{F}_2^N$ for infinitely many values $N \in \mathbb{N}$ with
  \begin{enumerate}[(i)]
   \item distance at least $1/2 - \epsilon/2$ (actually $\epsilon$-balanced),
   \item rate $\Omega(\epsilon^{2 + \beta})$ where $\beta = O(1/(\log_2(1/\epsilon))^{1/6})$, and
   \item a unique decoding algorithm with running time $N^{O_{\epsilon,\beta}(1)}$.
  \end{enumerate}
  Furthermore, if instead we take  $\beta > 0$ to be an arbitrary constant, the running time
  becomes $(\log(1/\epsilon))^{O(1)} \cdot N^{O_{\beta}(1)}$ (fixed polynomial time).
\end{restatable}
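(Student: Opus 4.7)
The plan is to build an efficient unique decoder for (a slight modification of) Ta-Shma's family $\Cc = \dsum_W(\Cc_0)$, where $\Cc_0 \subseteq \mathbb{F}_2^n$ is an $\epsilon_0$-balanced base code of constant rate equipped with an efficient unique decoder, and $W \subseteq [n]^t$ is the collection of $t$-step walks on an expanding structure (Ta-Shma's $s$-wide replacement product, modified as needed). The direct-sum encoding amplifies distance from $1/2-\epsilon_0$ to $1/2-\epsilon/2$ via parity sampling, while the rate loss is controlled by $|W|/n$; tuning the walk length $t$ and expander degree yields the rate $\Omega(\epsilon^{2+\beta})$ of the theorem.

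The first step is to cast decoding as a max-CSP over $\Cc_0$. Given a received word $\tilde y \in \mathbb{F}_2^{|W|}$ within distance $1/4-\epsilon/4$ of a planted codeword $\dsum_W(f^\star)$, we seek $f \in \Cc_0$ maximizing $\Pr_{w \in W}[\dsum_W(f)_w = \tilde y_w]$, and relax this to a degree-$k$ Sum-of-Squares program in the bits $\{x_i\}_{i\in[n]}$ with constraints (e.g.\ low-degree parity checks) forcing the pseudo-distribution to be supported on $\Cc_0$ in the SoS sense. The planted $f^\star$ achieves objective value at least $1/2 + \epsilon/2$, so the SoS solver returns a pseudo-expectation $\PExp$ with at least this value.

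The main technical step is to round $\PExp$ to a short list of actual codewords in $\Cc_0$ containing $f^\star$, using the framework of Alev, Jeronimo, and Tulsiani for list decoding direct-sum codes. Their rounding requires the tuple hypergraph $W$ to be \emph{splittable} --- admitting a recursive decomposition into bipartite layers whose normalized adjacency operators have bounded spectral norm on the non-trivial eigenspace --- with both the required SoS degree and the final list size depending on $1/\beta$ and on this spectral parameter. This is the main obstacle I expect: Ta-Shma's walks were optimized only for parity sampling, and a naive application of the Alev--Jeronimo--Tulsiani bounds gives only quasipolynomial rates. To overcome this, I would slightly modify the walk construction (for instance, hybridizing the inner walks of the $s$-wide replacement product with additional expander steps, or carefully adjusting their lengths) and carry out a fresh spectral analysis showing that the modified hypergraph decomposes into pieces with operator norm polynomially small in $\epsilon$, while still retaining near-optimal rate.

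Finally, unique decoding follows by re-encoding each candidate $f$ from the list via $\dsum_W$ and keeping only those within distance $1/4-\epsilon/4$ of $\tilde y$; the minimum distance $\geq 1/2-\epsilon/2$ of $\Cc$ forces at most one candidate to survive. For the running time, the degree-$k$ SoS has $n^{O(k)}$ variables and constraints. In the first regime ($\beta = O(1/(\log(1/\epsilon))^{1/6})$) the degree $k$ grows with $1/\beta$ and the total cost is $N^{O_{\epsilon,\beta}(1)}$. In the second regime (constant $\beta$), $k$ is a fixed constant depending only on $\beta$; arranging the construction so that the SoS acts on a sub-instance of size $\polylog(1/\epsilon)$ independent of $N$ makes the SoS cost $(\log(1/\epsilon))^{O_\beta(1)}$, with the dominant $N^{O_\beta(1)}$ term coming from the candidate re-encoding and distance checks, yielding the claimed $(\log(1/\epsilon))^{O(1)} \cdot N^{O_\beta(1)}$ bound.
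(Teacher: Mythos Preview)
Your proposal attempts a one-shot decoding of $\dsum_W(\Cc_0)$ for the full length-$t$ walk collection $W$, hoping that a modified construction will have split operators with norm ``polynomially small in $\epsilon$'' while retaining near-optimal rate. This is precisely the tension the paper identifies and does \emph{not} resolve by spectral tweaks. The Alev--Jeronimo--Tulsiani framework applied to $t$-tuples needs splittability $\tau \le 2^{-\Omega(t)}$; since parity sampling forces $t = \Theta(\log(1/\epsilon))$, this means $\tau \le \epsilon^{\Omega(1)}$, hence the walk graph must have degree $d \ge \epsilon^{-\Omega(1)}$, and the rate $d^{-(t-1)}$ becomes $\epsilon^{\Omega(\log(1/\epsilon))}$, i.e.\ quasipolynomial. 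No ``fresh spectral analysis'' of a single length-$t$ walk structure escapes this, because the $2^{-\Omega(t)}$ splittability requirement is inherent to the $t$-XOR rounding analysis, not an artifact of a loose bound.

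The paper's actual mechanism is a \emph{code cascade}: write $t = k^\ell$ for a constant $k$ and a sequence $\Cc_0,\Cc_1,\ldots,\Cc_\ell$ where each $\Cc_i = \dsum_{W_i(k)}(\Cc_{i-1})$ is a $k$-wise (constant-arity) direct sum of the previous code. Unique decoding of $\Cc_i$ is reduced to unique decoding of $\Cc_{i-1}$ by running the list-decoding framework at radius $1/2-\sqrt{\eta}$ for a fixed constant $\eta$ (not $\epsilon$), which now only needs splittability $\tau = 2^{-O(k)} = \Theta(1)$ and hence constant-degree graphs. Two further ideas you are missing: (i) using list decoding at a \emph{constant} radius to perform unique decoding, so the approximation slack $\delta$ is $\Theta(1)$ rather than $\Theta(\epsilon)$; and (ii) for the fixed-polynomial-time claim, the $\ell = \Theta(\log\log(1/\epsilon))$ levels each produce a $\poly(N)$-sized list, so naive recursion gives $N^{O(\ell)}$; the paper prunes each list to a constant-size $\zeta$-cover via the Johnson bound before recursing. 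Your suggestion that ``the SoS acts on a sub-instance of size $\polylog(1/\epsilon)$'' has no basis---the SoS program at each level has $\Theta(N)$ variables.
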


We can also perform ``gentle'' list decoding in the following sense
(note that this partially implies~\cref{theo:main}).
\begin{theorem}[Gentle List Decoding]\label{theo:gentle_list_decoding}
  For every $\epsilon > 0$ sufficiently small, there are explicit binary linear Ta-Shma codes $\Cc_{N,\epsilon,\beta} \subseteq \mathbb{F}_2^N$
  for infinitely many values $N \in \mathbb{N}$ with
  \begin{enumerate}[(i)]
   \item distance at least $1/2 - \epsilon/2$ (actually $\epsilon$-balanced),
   \item rate $\Omega(\epsilon^{2 + \beta})$ where $\beta = O(1/(\log_2(1/\epsilon))^{1/6})$, and
   \item a list decoding algorithm that decodes within radius $1/2 - 2^{-\Theta((\log_2(1/\epsilon))^{1/6})}$
                in time $N^{O_{\epsilon,\beta}(1)}$.
  \end{enumerate}
\end{theorem}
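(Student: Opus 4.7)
The plan is to adapt the SoS-based list-decoding framework for direct-sum codes of Alev et al.~[SODA 2020] to Ta-Shma's near-Gilbert--Varshamov construction. Recall that (after the minor modification described in the paper) the Ta-Shma code $\Cc_{N,\epsilon,\beta}$ is the image $\dsum_W(C_0)$ of a base $\epsilon_0$-balanced code $C_0$ of essentially constant rate under direct-sum on a walk collection $W$ arising from an iterated expander/zigzag-type construction with $O(1/\beta)$ composition levels and total walk length $t = \Theta(\log(1/\epsilon))$. Since $\beta = O(1/(\log_2(1/\epsilon))^{1/6})$, the gentle radius $\epsilon' = 2^{-\Theta((\log_2(1/\epsilon))^{1/6})}$ satisfies $\log(1/\epsilon') = \Theta(1/\beta)$, so $\epsilon'$ is polynomially larger than any per-level spectral gap of the Ta-Shma walk graph; this slack is what I plan to exploit.

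Given a received word $y$, list decoding reduces to outputting every $c \in C_0$ with $\bias(\dsum_W(c), y) \geq 2\epsilon'$. I would run SoS at degree $d = O_{\epsilon,\beta}(1)$ on a relaxation whose pseudo-moments $\pmu$ encode a pseudo-distribution over $C_0$ together with its correlation profile with $y$ along each walk in $W$. Following the Alev et al.\ template, I would apply a conditional rounding scheme: iteratively condition $\pmu$ on the values of $\poly(1/\epsilon')$ base-code coordinates (chosen greedily to decrease a pseudo-mutual-information potential between walk endpoints), so that the conditional pseudo-covariance along each walk is $O(\epsilon')$. Sampling from the conditioned pseudo-distribution then yields a vertex assignment that is $(1/2 - \Omega(\epsilon'))$-correlated with some true $c \in C_0$; passing this assignment to a brute-force list decoder for the short, constant-rate base code $C_0$ recovers the corresponding codeword, and enumerating over the polynomially many rounding outcomes produces the list.

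The main obstacle will be propagating the SoS decoupling guarantee through the \emph{nested} Ta-Shma walk structure while keeping the conditioning budget, and hence the SoS degree, bounded by $O_{\epsilon,\beta}(1)$. The basic Alev et al.\ argument uses the spectral gap of a single expander walk; Ta-Shma composes $O(1/\beta)$ levels of walks on different graphs, each with its own spectral decay, and one must show that the decoupling tensorises across levels. I expect this to follow from a splittability-style inductive lemma that bounds the pseudo-covariance of a length-$t$ Ta-Shma walk by the product of the per-level covariances, with the gentle target $\epsilon' = 2^{-\Theta(1/\beta)}$ chosen precisely so that the accumulated slack at the innermost level still exceeds $\epsilon'$. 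Under such a lemma the per-level conditioning cost is $O(1)$, giving overall SoS degree $\poly(1/\beta) \cdot (\log(1/\epsilon))^{O(1)}$. Items~(i) and~(ii) are then inherited directly from Ta-Shma's construction, while~(iii) follows by combining the SoS rounding with base-code decoding, yielding the claimed running time $N^{O_{\epsilon,\beta}(1)}$.
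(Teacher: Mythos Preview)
Your proposal attempts a one-shot SoS rounding over the full length-$t$ walk collection $W$, hoping that a ``splittability-style inductive lemma'' will keep the conditioning budget polylogarithmic in $1/\epsilon$. This is precisely the obstacle that the paper's main technical contribution---code cascades---is designed to circumvent, and your sketch does not overcome it. The tensoriality analysis (inherited from Barak--Raghavendra--Steurer via~\cite{AJQST19}) carries an unavoidable $2^{\Theta(k)}$ factor in the splittability requirement, $\tau \leq \eta^{O(1)}/(k \cdot 2^{\Theta(k)})$; this factor arises because the tuple variables over which one must decorrelate take $2^k$ values, not merely from a loose triangle inequality with $k$ terms. With $k = t = \Theta(\log(1/\epsilon))$ this forces $\tau = \epsilon^{\Omega(1)}$, which in turn forces the graph expansion to be polynomially small in $\epsilon$ and destroys the near-GV rate (the proof overview states this explicitly: the one-shot framework yields only rate $2^{-(\log(1/\epsilon))^2}$). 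Your assertion that the ``per-level conditioning cost is $O(1)$'' would be a genuine strengthening of the known SoS rounding analysis, and nothing in the outline indicates how to obtain it.

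The paper's actual argument is structurally different. It first uses the code cascade $\Cc_0,\ldots,\Cc_\ell$ to establish efficient \emph{unique} decoding of every intermediate code, in particular $\Cc_{\ell-1}$ (this is \cref{theo:main}, with the Round~III parameter choice $s = \Theta((\log(1/\epsilon))^{1/6})$). Then \cref{theo:gentle_list_decoding} follows from a \emph{single} invocation of the list-decoding framework (\cref{theo:list_dec_hammer}) at the top level only, decoding $\Cc_\ell$ relative to $\Cc_{\ell-1}$: here the relevant walk length is at most $s^2$ rather than $t$, the splittability satisfies $\tau \lesssim s^{-s^2}$, and one can afford list-decoding radius $1/2-\sqrt{\eta}$ with $\eta = 2^{-\Theta(s)}$ because the required $\tau_0(\eta,s^2) \approx 2^{-\Theta(s^2)}$ is still dominated by $\tau$. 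A further problem: your plan to ``brute-force list decode the short, constant-rate base code $C_0$'' is not available---$\Cc_0$ has block length polynomially related to $N$, so brute force is exponential; the paper instead relies on $\Cc_0$ (and each $\Cc_i$) having an efficient unique decoder.
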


We observe that the exponent in the running time
$N^{O_{\epsilon,\beta}(1)}$ appearing in~\cref{theo:main}
and~\cref{theo:gentle_list_decoding} depends on $\epsilon$. This
dependence is no worse than $O(\log\log(1/\epsilon))$, and if $\beta > 0$ is
taken to be an arbitrarily constant (independent of $\eps$), 
the running time becomes $(\log(1/\epsilon))^{O(1)} \cdot N^{O_{\beta}(1)}$. 
Avoiding this dependence in the exponent when
$\beta=\beta(\epsilon)$ is an interesting open
problem. Furthermore, obtaining a list decoding radius of
$1/2-\epsilon/2$ in~\cref{theo:gentle_list_decoding} with the same rate
(or even $\Omega(\epsilon^2)$) is another very interesting open problem
and related to a central open question in the adversarial error
regime~\cite{G09Survey}.

\medskip
\noindent \textbf{Direct sum codes.} \enspace
Our work can be viewed within the broader context of developing
algorithms for the decoding of direct sum codes. Given a (say linear)
code $\Cc \subseteq \F_2^n$ and a collection of tuples $W \subseteq [n]^t$,
the code $\dsum_{W}(\Cc)$ with block length $\abs{W}$ is defined as
	$$\dsum_{W}(\Cc) = \left\{(z_{w_1} + z_{w_2} + \cdots + z_{w_t})_{w \in W} \mid z \in \Cc \right\}.$$
The direct sum operation has been used for several applications in coding and complexity 
theory~\cite{ABNNR92, IW97, GI01, ImpagliazzoKW09, DinurS14, DDGEKS15, Chan16, 
DinurK17, A02:icm}.
It is easy to see that if $\Cc$ is $\eps_0$-balanced for a constant $\eps_0$, then for
any $\eps > 0$, choosing $W$ to be a random collection of tuples of size $O(n/\eps^2)$
results in $\dsum_W(\Cc)$ being an $\eps$-balanced code. 
The challenge in trying to construct good codes using this approach
is to find explicit constructions of (sparse) collections $W$ which are 
``pseudorandom'' enough to yield a similar distance amplification as above. 
On the other hand, the challenge in decoding such codes is to identify 
notions of ``structure'' in such collections $W$, which can be 
exploited by decoding algorithms.

In Ta-Shma's construction~\cite{Ta-Shma17}, such a pseudorandom collection $W$ was
constructed by considering an expanding graph $G$ over the vertex set $[n]$, and 
generating $t$-tuples using sufficiently long walks of length $t-1$ over the so-called $s$-wide replacement
product of $G$ with another (small) expanding graph $H$.
Roughly speaking, this graph product is a generalization of the celebrated zig-zag
product~\cite{RVW00} but with $s$ different steps of the zig-zag product 
instead of a single one.
Ta-Shma's construction can also be viewed as a clever way of selecting a 
\emph{sub-collection} of all walks in $G$, which refines an earlier construction
suggested by Rozenman and Wigderson \cite{RW08} (and also analyzed by Ta-Shma) 
using \emph{all} walks of length $t-1$.

\medskip
\noindent \textbf{Identifying structures to facilitate decoding.} \enspace
For the closely related direct product construction (where the entry
corresponding to $w \in W$ is the entire $t$-tuple $(z_{w_1}, \ldots,
z_{w_t})$) which amplifies distance but increases the alphabet size,
it was proved by Alon \etal \cite{ABNNR92} that the resulting code
admits a unique decoding algorithm if the incidence graph
corresponding to the collection $W$ is a good sampler. Very recently,
it was proved by Dinur \etal~\cite{DinurHKNT19} that such a direct
product construction admits list decoding if the incidence graph is a
``double sampler''. The results of~\cite{DinurHKNT19} also apply to
direct sum, but the use of double samplers pushes the rate away from
near optimality.

For the case of direct sum codes, the decoding task
can be phrased as a maximum $t$-XOR problem with the additional constraint
that the solution must lie in $\Cc$. More precisely, given
$\tilde{y} \in \mathbb{F}_2^{W}$ within the unique decoding radius of
$\dsum_W(\Cc)$, we consider the following optimization problem
$$
\argmin_{z \in \Cc}~ \Delta(\tilde{y},\dsum_W(z)),
$$
where $\Delta(\cdot,\cdot)$ is the (normalized) Hamming distance.
While maximum $t$-XOR is in general hard to solve to even any non-trivial
degree of approximation~\cite{H97:stoc}, previous work by the authors~\cite{AJQST19}
identified a structural condition on $W$ called ``splittability'' under which
the above constraint satisfaction problem can be solved (approximately) resulting in
efficient unique and list decoding algorithms. 
However, by itself the splittability condition is too crude to be applicable
to codes such as the ones in Ta-Shma's construction.
The requirements it places on the expansion of $G$ are too strong 
and the framework in \cite{AJQST19} is only able 
to obtain algorithms for direct sum codes with rate 
$2^{-(\log(1/\eps))^{2}} \ll \eps^{2+\beta}$.

The conceptual contribution of this work can be viewed as identifying
a different recursive structure in direct sums generated by expander walks,
which allows us to view the construction as giving a sequence of codes
$\Cc_0,\Cc_1, \ldots, \Cc_{\ell}$. Here, $\Cc_0$ is the starting code $\Cc$
and $\Cc_{\ell}$ is the final desired code, and each element in the sequence
can be viewed as being obtained via a direct sum operation 
on the preceding code. Instead of considering a ``one-shot'' decoding task of
finding an element of $\Cc_0$, this facilitates an iterative approach where at
each step we reduce the task of decoding the code $\Cc_i$ to decoding for 
$\Cc_{i-1}$, using the above framework from \cite{AJQST19}. 
Such an iterative approach with a sequence of codes was also used (in a 
very different setting) in a work of Guruswami and Indyk~\cite{GI03} constructing
codes over a large alphabet which are list decodable in linear time via spectral algorithms.

Another simple and well-known (see \eg ~\cite{GuruswamiI04}) observation, 
which is very helpful in our setting, is the use of list decoding 
algorithms for unique decoding. 
For a code with distance $1/2 - \eps/2$, unique decoding can be obtained by 
list decoding at a much smaller error radius of (say) $1/2 - 1/8$.
This permits a much more efficient application
of the framework from~\cite{AJQST19}, with a milder 
dependence on the expansion of the graphs $G$ and $H$ 
in Ta-Shma's construction, resulting in higher rates.
We give a more detailed overview of our approach in~\cref{sec:strategy}.

\medskip
\noindent \textbf{Known results for random ensembles.} \enspace
While the focus in this work is on explicit constructions, there 
are several known (non-explicit) constructions 
of random ensembles of binary codes near or achieving the
Gilbert--Varshamov bound (e.g.,~\cref{table:gv_bound_cmp}). 
Although it is usually straightforward to ensure the desired rate in such constructions, 
the distance only holds with high probability. Given a sample code from such ensembles,
certifying the minimum distance is usually not known to be polynomial
time in the block length. 
Derandomizing such constructions is also a possible avenue for obtaining optimal codes, 
although such results remain elusive to this date (to the best of our knowledge).

One of the simplest constructions is that of random binary linear
codes in which the generator matrix is sampled uniformly. This random
ensemble achieves the GV bound with high probability, but its decoding
is believed to be computationally hard~\cite{MMT11}.

Much progress has been made on binary codes by using results for larger
alphabet codes~\cite{G09Survey}. Codes over non-binary alphabets
with optimal (or nearly optimal) parameters are
available~\cite{vanLint99,Stichtenoth08,GuruswamiR06} and thanks to
this availability a popular approach to constructing binary codes has
been to concatenate such large alphabet codes with binary
ones. Thommesen~\cite{T83} showed that by concatenating Reed--Solomon
(RS) codes with random binary codes (one random binary code for each
position of the outer RS code) it is possible to achieve the GV bound.
Note that Thommesen codes arise from a more structured ensemble than
random binary linear codes. This additional structure enabled
Guruswami and Indyk~\cite{GuruswamiI04} to obtain efficient decoding
algorithms for the non-explicit Thommesen codes (whose minimum distance is not
known to admit efficient certification). This kind of concatenation
starting from a large alphabet code and using random binary codes,
which we refer as Thommesen-like, has been an important technique in
tackling binary code constructions with a variety of properties near
or at the GV bound. An important drawback in several such Thommesen-like
code constructions is that they end up being non-explicit (unless
efficient derandomization or brute-force is viable).

Using a Thommesen-like construction, Gopi \etal~\cite{GKORS16} showed
non-explicit constructions of locally testable and locally correctable
binary codes approaching the GV bound. More recently, again with a
Thommesen-like construction, Hemenway \etal \cite{HRW17} obtained
non-explicit near linear time unique decodable codes at the GV bound
improving the running time of Guruswami and Indyk~\cite{GuruswamiI04} (and
also the decoding rates).
We summarize the results discussed so far
in~\cref{table:gv_bound_cmp}.
\begin{table}[h!]
\centering
\scalebox{0.8}{
\begin{tabularx}{1.25\textwidth}{|l|>{\raggedright\arraybackslash}X|l|l|l|>{\raggedright\arraybackslash}X|l|}
  \hline
  \multicolumn{7}{|c|}{Binary Code Results near the Gilbert--Varshamov bound} \\
  \hline
   \textbf{Who?}  & \textbf{Construction} & \textbf{GV} & \textbf{Explicit} & \textbf{Concatenated} & \textbf{Decoding}  & \textbf{Local} \\
  \hline
  \cite{G52,V57} & existential & yes & no & no & no & n/a\\
  \hline
  \cite{T83}  & Reed--Solomon + random binary & yes & no & yes & no & n/a\\
  \hline
  \cite{GuruswamiI04} &  Thommesen~\cite{T83} & yes & no & yes & unique decoding & n/a\\
  \hline
  \cite{GKORS16} & Thommesen-like & yes & no & yes & unique decoding & LTC/LCC\\
  \hline
  \cite{HRW17} & Thommesen-like & yes & no & yes & near linear time unique decoding & n/a\\  
  \hline
  \cite{Ta-Shma17} & Expander-based & $\Omega(\epsilon^{2+\beta})$ & yes & no & no & n/a\\
  \hline
  \hline
  this paper  & Ta-Shma~\cite{Ta-Shma17} & $\Omega(\epsilon^{2+\beta})$ & yes & no & gentle list decoding & n/a\\
  \hline
\end{tabularx}}
\caption{GV bound related results for binary codes.}\label{table:gv_bound_cmp}
\end{table}

There are also non-explicit constructions known to
achieve list decoding capacity~\cite{GuruswamiR08,MosheiffRRSW19}
(being concatenated or LDPC/Gallager~\cite{Gallager62} is not an
obstruction to achieve capacity).  
Contrary to the other results in this subsection, Guruswami and
Rudra~\cite{Guruswami05,GuruswamiR06,G09Survey}, also using a
Thommesen-like construction, obtained explicit codes that are
efficiently list decodable from radius $1/2 - \epsilon$ with rate
$\Omega(\epsilon^3)$.
This was done by concatenating the so-called folded
Reed--Solomon codes with a derandomization of a binary ensemble of
random codes.
%


\medskip
\noindent \textbf{Results for non-adversarial error models.} \enspace
All the results mentioned above are for the adversarial error model of
Hamming \cite{H50, G10:ICM}. In the setting of random corruptions (Shannon
model), the situation seems to be better understood thanks to the
seminal result on explicit polar codes of
Arikan~\cite{Arikan08}. 
More recently, in another breakthrough
Guruswami \etal~\cite{GuruswamiRY19} showed that polar codes can
achieve almost linear time decoding with near optimal convergence to
capacity for the binary symmetric channel. This result gives an explicit code construction
achieving parameter trade-offs similar to Shannon's randomized
construction~\cite{S48} while also admitting very efficient encoding
and decoding.
Explicit capacity-achieving constructions are also known for bounded memory 
channels~\cite{ShaltielKS19} which restrict the power of the adversary 
and thus interpolate between the Shannon and Hamming models.

\section{Preliminaries and Notation}\label{sec:prelim}

\subsection{Codes}

We briefly recall some standard code terminology.
Given $z,z' \in \F_2^n$, recall that the relative
Hamming distance between $z$ and $z'$
is \text{$\Delta(z,z') \coloneqq \abs{\set{i \mid z_i\ne
z_i'}}/n$}. A binary code is any subset $\Cc \subseteq \F_2^n$.
The distance of $\Cc$ is defined as $\Delta(\Cc)
\coloneqq \min_{z\ne z'} \Delta(z,z')$ where $z,z' \in \Cc$. We say
that $\Cc$ is a linear code if $\Cc$ is a linear subspace of
$\mathbb{F}_2^n$. The rate of $\Cc$ is $\log_2(\abs{\Cc})/n$.

Instead of discussing the distance of a binary code, it will often be
more natural to phrase results in terms of its bias.

\begin{definition}[Bias]
  The \emph{bias} of a word $z \in \F_2^n$ is defined as $\bias(z) \coloneqq \abs{\E_{i \in [n]} (-1)^{z_i}}$.
  The bias of a code $\Cc$ is the maximum bias of any non-zero codeword in $\Cc$.
\end{definition}

\begin{definition}[$\epsilon$-balanced Code]
  A binary code $\Cc$ is \emph{$\epsilon$-balanced} if $\bias(z+z') \le \epsilon$ for every pair of distinct $z,z' \in \Cc$.
\end{definition}

\begin{remark}
 For linear binary code $\Cc$, the condition $\bias(\Cc) \leq \epsilon$ is equivalent to $\Cc$ being an $\epsilon$-balanced code.
\end{remark}

\subsection{Direct Sum Lifts}

Starting from a code $\Cc \subseteq \F_2^n$, we amplify its distance by considering the \textit{direct sum lifting} operation based on a collection $W(k) \subseteq [n]^k$.
The direct sum lifting maps each codeword of $\Cc$ to a new word in $\F_2^{|W(k)|}$ by taking the $k$-XOR of its entries on each element of $W(k)$.

\begin{definition}[Direct Sum Lifting]
  Let $W(k) \subseteq [n]^k$.
  For $z \in \F_2^n$, we define the \emph{direct sum lifting} as $\dsum_{W(k)}(z) = y$ such that $y_{\ess} = \sum_{i \in \ess} z_i$ for all $\ess \in W(k)$.
  The direct sum lifting of a code $\Cc \subseteq \F_2^n$ is
  $$
  \dsum_{W(k)}(\Cc) = \{\dsum_{W(k)}(z) \mid z \in \mathcal C\}.
  $$
  We will omit $W(k)$ from this notation when it is clear from context.
\end{definition}

\begin{remark}
  We will be concerned with collections $W(k) \subseteq [n]^k$ arising
  from length-$(k-1)$ walks on expanding structures (mostly on the
  $s$-wide replacement product of two expander graphs).
\end{remark}

We will be interested in cases where the direct sum lifting reduces
the bias of the base code; in~\cite{Ta-Shma17}, structures with such a
property are called \emph{parity samplers}, as they emulate the reduction in
bias that occurs by taking the parity of random samples.

\begin{definition}[Parity Sampler]
  A collection $W(k) \subseteq [n]^k$ is called an \emph{$(\epsilon_0, \epsilon)$-parity sampler} if for all $z \in \F_2^{n}$ with $\bias(z) \leq \epsilon_0$, we have $\bias(\dsum_{W(k)}(z)) \leq \epsilon$.
\end{definition}

\subsection{Linear Algebra Conventions}
All vectors considered in this paper are taken to be column vectors, and are
multiplied on the left with any matrices or operators acting on them. 
Consequently, given an indexed sequence of operators $\Gee_{k_1}, \ldots, \Gee_{k_2}$
(with $k_1 \leq k_2$) corresponding to steps  $k_1$ through $k_2$ of a walk, we 
expand the product $\prod_{i=k_1}^{k_2} G_i$ as 
\[
\prod_{i=k_1}^{k_2} G_i ~\defeq~ G_{k_2} \cdots G_{k_1} \mper
\]
Unless otherwise stated, all inner products for vectors in coordinate spaces are
taken to be with respect to the (uniform) probability measure on the coordinates.
Similarly, all inner products for functions are taken to be with respect to the 
uniform measure on the inputs. 
All operators considered in this paper are normalized to have singular values 
at most 1.

\section{Proof Overview}\label{sec:strategy}

The starting point for our work is the framework developed
in~\cite{AJQST19} for decoding direct sum codes, obtained by starting
from a code $\Cc \subseteq \F_2^n$ and considering all parities
corresponding to a set of $t$-tuples $W(t) \subseteq [n]^t$.
Ta-Shma's near optimal $\eps$-balanced codes are
constructed by starting from a code with constant rate and constant
distance and considering such a direct sum lifting.  The set of
tuples $W(t)$ in his construction corresponds to a set of walks of
length $t-1$ on the $s$-wide replacement product of an expanding graph
$G$ with vertex set $[n]$ and a smaller expanding graph $H$. The
$s$-wide replacement product can be thought of here as a way of
constructing a much smaller pseudorandom subset of the set of all
walks of length $t-1$ on $G$, which yields a similar distance
amplification for the lifted code.

\paragraph{The simplified construction with expander walks.}
While we analyze Ta-Shma's construction later in the paper, it is
instructive to first consider a $W(t)$ simply consisting of all walks
of length $t-1$ on an expander.  This construction, based on a
suggestion of Rozenman and Wigderson~\cite{RW08}, was also analyzed by
Ta-Shma~\cite{Ta-Shma17} and can be used to obtain $\epsilon$-balanced
codes with rate $\Omega(\eps^{4+o(1)})$. It helps to illustrate many
of the conceptual ideas involved in our proof, while avoiding some
technical issues.

Let $G$ be a $d$-regular expanding graph with vertex set $[n]$ and the
(normalized) second singular value of the adjacency operator $\Aye_G$
being $\lambda$.
Let $W(t) \subseteq [n]^t$ denote the set of $t$-tuples corresponding to
all walks of length $t-1$, with $N = \abs{W(t)} = n \cdot
d^{t-1}$. Ta-Shma proves that  for all $z \in \F_2^n$, $W(t)$ satisfies
\[
\bias(z) ~\leq~ \eps_0 
\quad \implies \quad 
\bias(\dsum_{W(t)}(z)) ~\leq~ (\eps_0 + 2\lambda)^{\lfloor (t-1)/2 \rfloor} \mcom
\]
\ie $W(t)$ is an \emph{$(\eps_0, \eps)$-parity sampler} for $\eps = (\eps_0 + 2\lambda)^{\lfloor (t-1)/2 \rfloor}$. 
Choosing $\eps_0 = 0.1$ and $\lambda = 0.05$ (say), we can choose $d =
O(1)$ and obtain the $\eps$-balanced code $\Cc' = \dsum_{W(t)}(\Cc)$ with
rate $d^{-(t-1)} = \eps^{O(1)}$ (although the right constants matter a
lot for optimal rates).
\paragraph{Decoding as constraint satisfaction.}
\newcommand{\maxtxor}{\problemmacro{MAX t-XOR}\xspace}
The starting point for our work is the framework in \cite{AJQST19}
which views the task of decoding $\tilde{y}$ with
$\Delta(\Cc',\tilde{y}) < (1-\eps)/4 - \delta$ (where the distance of
$\Cc'$ is $(1-\eps)/2$) as an instance of the $\maxtxor$ problem
(see~\cref{fig:approx_unique_dec}).
The goal is to find
\[
\argmin_{z \in \Cc}\Delta\inparen{\dsum_{W(t)}(z), \tilde{y}},
\]
which can be rephrased as
\[
\argmax_{z \in \Cc}  \Ex{w=(i_1, \ldots, i_t) \in W(t)}{\indicator{z_{i_1} + \cdots + z_{i_t} = \tilde{y}_w}}.
\]
It is possible to ignore the condition that $z \in \Cc$ if the
collection $W(t)$ is a slightly stronger parity sampler. For any solution
$\tilde{z} \in \mathbb{F}_2^n$ (not necessarily in $\Cc$) such that
	$$\Delta(\dsum_{W(t)}(\tilde{z}), \tilde{y}) < \frac{1-\eps}{4} + \delta,$$
we have
	$$\Delta(\dsum_{W(t)}(\tilde{z}), \dsum_{W(t)}(z)) < \frac{1-\eps}{2}$$
by the triangle inequality, and thus $\bias(\dsum_{W(t)}(z-\tilde{z})) > \eps$.
If $W(t)$ is not just an $(\eps_0, \eps)$-parity sampler, but in fact a
$((1+\eps_0)/2, \eps)$-parity sampler, this would imply
$\bias(z-\tilde{z}) > (1+\eps_0)/2$.
Thus, $\Delta(z,\tilde{z}) < (1-\eps_0)/4$ (or
$\Delta(z,\overline{\tilde{z}}) < (1-\eps_0)/4$) and we can use a
unique decoding algorithm for $\Cc$ to find $z$ given $\tilde{z}$.

\begin{figure}[h!]
	\centering
	\begin{tikzpicture}[scale=0.8, every node/.style={scale=0.8, fill=white, inner sep=0}]
		\fill[black] (0,0) circle (0.1cm);
		\draw (0,0) circle (2.5cm);
		\node[right] at (80:1.9) {\begin{tabular}{c} Small approximation error $\delta$ \\ (comparable to $\epsilon$) \end{tabular}};				
		\draw[dashed](0,0) circle (2cm);
		\node[right] at (0.2,0) {$\tilde{y}$};
		\fill[black] (0.7,1) circle (0.1cm);
		\node[right] at (0.9,1) {$y$};
		\node[left] at (100:1) {\begin{tabular}{c} Unique decoding radius \\ $((1-\epsilon)/4)$ \end{tabular}};
		\draw[thick, ->] (0,0) -- (100:2.5);
		\draw[thick, <->] (80:2) -- (80:2.5);
	\end{tikzpicture}
	\caption{Unique decoding ball along with error from approximation.} \label{fig:approx_unique_dec}
\end{figure}
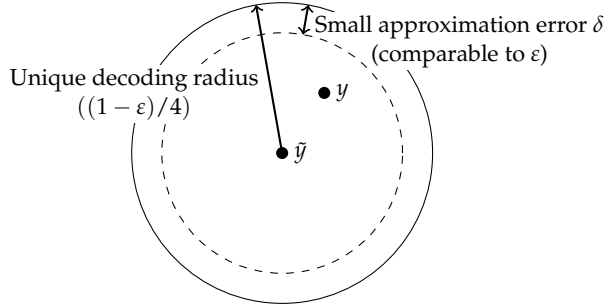

The task of finding such a $z \in \Cc$ boils down to finding a
solution $\tilde{z} \in \mathbb{F}_2^n$ to a $\maxtxor$ instance, up
to a an additive loss of $O(\delta)$ in the fraction of constraints
satisfied by the optimal solution. While this is hard to do in
general~\cite{Hastad97,Grigoriev01}, \cite{AJQST19} (building
on \cite{AJT19}) show that this can be done if the instance satisfies
a special property called
\emph{splittability}. 
To define this, we let $W[t_1,t_2] \subset [n]^{t_2-t_1+1}$ denote the
collection of $(t_2-t_1+1)$-tuples obtained by considering the indices
between $t_1$ and $t_2$ for all tuples in $W(t)$. We also assume that all
$w \in W[t_1,t_2]$ can be extended to the same number of tuples in $W(t)$
(which is true for walks).
\begin{definition}[Splittability ~(informal)] A collection $W(t) \subseteq [n]^t$ is said to be
$\tau$-splittable, if $t=1$ (base case) or there exists $t' \in [t-1]$ such that:
\begin{enumerate}
\item The matrix $\Ess \in \R^{W[1,t'] \times W[t'+1,t]}$
defined by $\Ess(w,w') ~=~ \indicator{ww' \in W}$ 
has normalized second singular value at most $\tau$ 
(where $ww'$ denotes the concatenated tuple).
\item The collections $W[1,t']$ and $W[t'+1,t]$ are $\tau$-splittable.
\end{enumerate}
\end{definition}
For example, considering walks in $G$ of length $3$ ($t=4$) and
$t'=2$, we get that $W[1,2] = W[3,4] = E$, the set of oriented edges
in $G$. Also $\Ess(w,w')=1$ if and only if the second vertex of $w$
and first vertex of $w'$ are adjacent in $G$. Thus, up to permutation
of rows and columns, we can write the normalized version of $\Ess$ as
$\Aye_G \otimes \Jay_d/d$ where $\Aye_G$ is normalized adjacency
matrix of $G$ and $\Jay_d$ denotes the $d \times d$ matrix of 1s.
Hence such a $W(t)$ satisfies $\sigma_2(\Ess) \le \tau$ with
$\tau=\sigma_2(\Aye_G)$, and a similar proof works for walks of all
lengths.

The framework in \cite{AJQST19} and \cite{AJT19} gives that if $W(t)$ is
$\tau$-splittable for $\tau = (\delta/2^t)^{O(1)}$, then the above
instance of $\maxtxor$ can be solved to additive error $O(\delta)$
using the Sum-of-Squares (SOS) SDP hierarchy.
Broadly speaking, splittability allows one to (recursively) treat
instances as expanding instances of problems with two ``tuple
variables'' in each constraint, which can then be analyzed using known
algorithms for 2-CSPs~\cite{BarakRS11, GuruswamiS11} in the SOS
hierarchy.
Combined with parity sampling, this yields a unique decoding
algorithm.
Crucially, this framework can also be extended to perform \emph{list
decoding}\footnote{ While unique decoding can be thought of as
recovering a single solution to a constraint satisfaction problem, the
goal in the list decoding setting can be thought of as obtaining a
``sufficiently rich'' set of solutions which forms a good cover. This is
achieved in the framework by adding an entropic term to the
semidefinite program, which ensures that the SDP solution satisfies
such a covering property.  }  up to a radius of $1/2 - \sqrt{\eps}
- \delta$ under a similar condition on $\tau$, which will be very
useful for our application.

While the above can yield decoding algorithms for suitably expanding
$G$, the requirement on $\tau$ (and hence on $\lambda$) makes the rate
much worse.
We need $\delta = O(\eps)$ (for unique decoding) and $t =
O(\log(1/\eps))$ (for parity sampling), which requires $\lambda
= \eps^{\Omega(1)}$, yielding only a quasipolynomial rate for the code
(recall that we could take $\lambda=O(1)$ earlier yielding polynomial
rates).

\paragraph{Unique decoding: weakening the error requirement.} We first observe that it is possible to
get rid of the dependence $\delta = O(\eps)$ above by using
the \emph{list decoding} algorithm for unique decoding. It suffices to
take $\delta = 0.1$ and return the closest element from the the list
of all codewords up to an error radius $1/2 - \sqrt{\eps} - 0.1$, if
we are promised that $\Delta(\tilde{y}, \Cc)$ is within the unique
decoding radius (see~\cref{fig:approx_req}).
However, this alone does not improve the rate as we still need the
splittability (and hence $\lambda$) to be $2^{-\Omega(t)}$ with $t =
O(\log(1/\eps))$.

\begin{figure}[h!]
	\centering
	\begin{tikzpicture}[scale=0.7, every node/.style={scale=0.7, fill=white, inner sep=0}]
		\fill[black] (0,0) circle (0.1cm);
		\draw (0,0) circle (2.5cm);
		\draw[dashed] (0,0) circle (4cm);
		\node[left] at (260:2.5) {\begin{tabular}{c} List decoding radius \\ $(1/2-\sqrt{\epsilon})$ \end{tabular}};
		\draw (0,0) circle (5cm);
		\node[right] at (0.2,0) {$\tilde{y}$};                                
		\fill (0.7,1) circle (0.1cm);
		\node[right] at (0.9,1) {$y$};
		\node[left] at (100:1) {\begin{tabular}{c} Unique decoding radius \\ $(1/4-\epsilon/4)$ \end{tabular}};
		\node[right] at (280:4.25) {\begin{tabular}{c} Constant approximation \\ error (0.1) \end{tabular}};

		\draw[thick, ->] (0,0) -- (100:2.5);
		\draw[thick, ->] (0,0) -- (260:5);
		\draw[thick, <->] (280:4) -- (280:5);
	\end{tikzpicture}
	\caption{Unique decoding and list decoding balls along with error from approximation. Note that the list decoding ball contains the unique decoding ball even after allowing for a relatively large amount of error.} \label{fig:approx_req}
\end{figure}
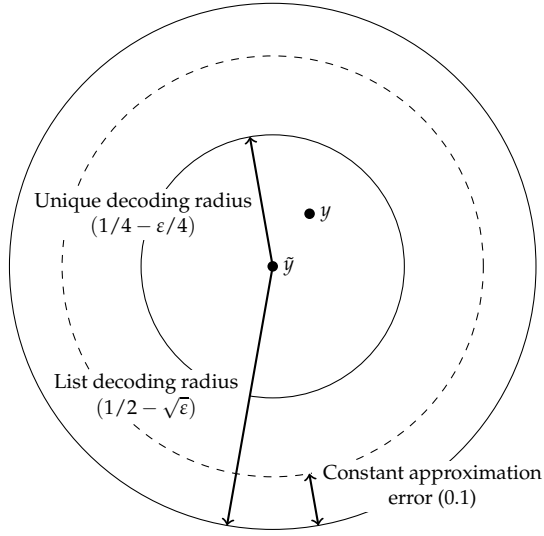

\paragraph{Code cascades: handling the dependence on walk length.} To avoid the
dependence of the expansion on the length $t-1$ of the walk (and hence
on $\eps$), we avoid the ``one-shot'' decoding above, and instead
consider a sequence of intermediate codes between $\Cc$ and $\Cc'$.
Consider the case when $t = k^{2}$, and instead of computing
$t$-wise sums of bits in each $z \in \F_2^n$, we first compute
$k$-wise sums according to walks of length $k-1$ on $G$, and then a
$k$-wise sum of these values. In fact, the second sum can also be
thought of as arising from a length $k-1$ walk on a different graph,
with vertices corresponding to (directed) walks with $k$ vertices in
$G$, and edges connecting $w$ and $w'$ when the last vertex of $w$ is
connected to the first one in $w'$ (this is similar to the matrix
considered for defining splittability). We can thus think of a
sequence of codes $\Cc_0, \Cc_1, \Cc_2$ with $\Cc_0 = \Cc$ and $\Cc_2
= \Cc'$, and both $\Cc_1$ and $\Cc_2$ being $k$-wise direct sums.
More generally, when $t = k^{\ell}$ for an appropriate constant $k$ we
can think of a sequence $\Cc = \Cc_0, \Cc_1, \ldots, \Cc_{\ell}
= \Cc'$, where each is an $k$-wise direct sum of the previous code,
obtained via walks of length $k-1$ (hence $k$ vertices) in an
appropriate graph. We refer to such sequences (defined formally in
\cref{sec:code_cascading}) as \emph{code cascades} (see~\cref{fig:code_cascade}).

\begin{figure}[h!]
  \centering
\begin{tikzpicture}[scale=0.7, every node/.style={scale=0.7}]]
\draw  (-6,3.5) rectangle node {$\mathcal{C}_0$} (-5,-2.5);
\draw  (-3.5,3.5) rectangle node {$\mathcal{C}_1$} (-2.5,-2.5);
\draw  (0.5,3.5) rectangle node {$\mathcal{C}_{i-1}$} (1.5,-2.5);
\draw  (3.5,3.5) rectangle node {$\mathcal{C}_{i}$} (4.5,-2.5);
\draw  (8,3.5) rectangle node {$\mathcal{C}_{\ell}$} (9,-2.5);
\node (v1) at (-5,0.5) {};
\node (v2) at (-3.5,0.5) {};
\node at (-1,0.5) {$\cdots$};
\node at (1,0.5) {};
\node (v4) at (3.5,0.5) {};
\node (v8) at (4.5,0.5) {};
\node at (6.5,0.5) {$\cdots$};
\node (v11) at (8,0.5) {};
\draw  (v1) edge[->] node[above] {$\lift$} (v2);
\node (v3) at (1.5,0.5) {};
\draw  (v3) edge[->] node[above] {$\lift$}  (v4);
\node (v5) at (-5.5,-3) {$\epsilon_0$};
\node at (-3,-3) {$\epsilon_1$};
\node at (1,-3) {$\epsilon_{i-1}$};
\node at (4,-3) {$\epsilon_i$};
\node at (8.5,-3) {$\epsilon_{\ell}=\epsilon$};
\draw  plot[smooth, tension=.7] coordinates {(v5)};
\draw  plot[smooth, tension=.7] coordinates {(v5)};
\draw  plot[smooth, tension=.7] coordinates {(-5,-3.25) (1,-6) (8,-3.25)};
\node at (1,-6.5) {\small Refined parity sampling via Ta-Shma's walk};
\draw  plot[smooth, tension=.7] coordinates {(1.5,-3) (2.5,-3.5) (3.5,-3)};
\node at (2.5,-4) {\small Crude parity sampling via Markov chain walk};
\node (v6) at (-2.5,0.5) {};
\node (v7) at (-1.5,0.5) {};
\node (v9) at (6,0.5) {};
\node (v10) at (7,0.5) {};
\draw  (v6) edge[->] (v7);
\draw  (v8) edge[->] (v9);
\draw  (v10) edge[->] (v11);
\node (v12) at (-0.5,0.5) {};
\node (v13) at (0.5,0.5) {};
\draw  (v12) edge[->] (v13);
\end{tikzpicture}
  \caption{Code cascading.}\label{fig:cascading}\label{fig:code_cascade}
\end{figure}
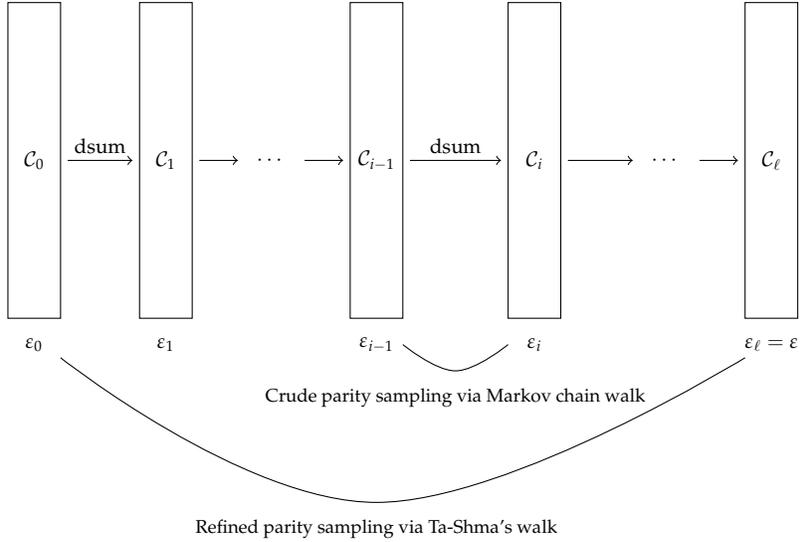

Instead of applying the decoding framework above to directly reduce
the decoding of a corrupted codeword from $\Cc'$ to the unique
decoding problem in $\Cc$, we apply it at each level of a cascade,
reducing the unique decoding problem in $\Cc_i$ to that in
$\Cc_{i-1}$.
If the direct sum at each level of the cascade is an
$(\eta_0, \eta)$-parity sampler, the list decoding algorithm at radius
$1/2 - \sqrt{\eta}$ suffices for unique decoding even if $\eta$ is a
(sufficiently small) constant independent of $\eps$. This implies that
we can take $k$ to be a (suitably large) constant.
This also allows the splittability (and hence $\lambda$) to be
$2^{-O(k)} = \Omega(1)$, yielding polynomial rates. We present the
reduction using cascades in
\cref{sec:main_result} and the parameter choices in \cref{sec:ta-shma_param_basic}.
The specific versions of the list decoding results from \cite{AJQST19}
needed here are instantiated in \cref{sec:instantiation_list_dec}.

While the above allows for polynomial rate, the \emph{running time} of
the algorithm is still exponential in the number of levels $\ell$
(which is $O(\log t) = O(\log\log(1/\eps))$) since the list decoding
for each level potentially produces a list of size $\poly(n)$, and
recursively calls the decoding algorithm for the previous level on
each element of the list.  We obtain a fixed polynomial time algorithm
by ``pruning'' the list at each level of the cascade before invoking
the decoding algorithm for the previous level, while only slightly
increasing the parity sampling requirements.  The details are
contained in \cref{sec:main_result}.

\paragraph{Working with Ta-Shma's construction.}
Finally, to obtain near-optimal rates, we need to work with with
Ta-Shma's construction, where the set of tuples $W(t) \subseteq [n]^t$
corresponds to walks arising from an $s$-wide replacement product of
$G$ with another expanding graph $H$.
One issue that arises is that the collection of walks $W(t)$ as defined
in \cite{Ta-Shma17} does not satisfy the important splittability
condition required by our algorithms. However, this turns out to be
easily fixable by modifying each step in Ta-Shma's construction to be
exactly according to the zig-zag product of Reingold, Vadhan and
Wigderson~\cite{RVW00}.
We present Ta-Shma's construction and this modification
in \cref{sec:ta_shma_tweaking}.

We also verify that the tuples given by Ta-Shma's construction satisfy
the conditions for applying the list decoding framework,
in \cref{sec:satisfying_framework}. While the sketch above stated this
in terms of splittability, the results in \cite{AJQST19} are in terms
of a more technical condition called \emph{tensoriality}. We show
in \cref{sec:satisfying_framework} that this is indeed implied by
splittability, and also prove splittability for (the modified version
of) Ta-Shma's construction.

\section{Ta-Shma's Construction: A Summary and Some Tweaks}\label{sec:ta_shma_tweaking}
In this section, we first discuss the $s$-wide replacement product
that is central to Ta-Shma's construction of optimal
$\epsilon$-balanced codes, and then we describe the construction
itself (we refer the reader to~\cite{Ta-Shma17} for formal details
beyond those we actually need here).

As mentioned before, we will also need to modify Ta-Shma's
construction~\cite{Ta-Shma17} a little to get \textit{splittability}
which is a notion of expansion of a collection $W(k) \subseteq [n]^k$
(and it is formally defined in~\cref{def:splittability}). The reason
for this simple modification is that this \emph{splittability}
property is required by the list decoding framework. Note that we are
not improving the Ta-Shma code parameters; in fact, we need to argue
why with this modification we can still achieve Ta-Shma's
parameters. Fortunately, this modification is simple enough that we
will be able to essentially reuse Ta-Shma's original
analysis. In \cref{sec:tweaks}, we will also have the opportunity to
discuss, at an informal level, the intuition behind some parameter
trade-offs in Ta-Shma codes which should provide enough motivation
when we instantiate these codes in~\cref{sec:ta-shma_param_basic}.

\subsection{The $s$-wide Replacement Product}\label{sec:s_wide_replacement_prod}

Ta-Shma's code construction is based on the so-called $s$-wide
replacement product~\cite{Ta-Shma17}.  This is a derandomization of
random walks on a graph $G$ that will be defined via a product
operation of $G$ with another graph $H$
(see \autoref{def:s_wide_replacement} for a formal definition). We
will refer to $G$ as the
\emph{outer} graph and $H$ as the \emph{inner} graph in this construction.

Let $G$ be a $d_1$-regular graph on vertex set $[n]$ and $H$ be a
$d_2$-regular graph on vertex set $[d_1]^s$, where $s$ is any positive
integer.  Suppose the neighbors of each vertex of $G$ are labeled 1,
2, \dots, $d_1$.  For $v \in V(G)$, let $v_G[j]$ be the $j$-th
neighbor of $v$.  The $s$-wide replacement product is defined by
replacing each vertex of $G$ with a copy of $H$, called a ``cloud''.
While the edges within each cloud are determined by $H$, the edges
between clouds are based on the edges of $G$, which we will define via
operators $\matr G_0, \matr G_1, \dots, \matr G_{s-1}$.  The $i$-th
operator $\matr G_i$ specifies one inter-cloud edge for each vertex
$(v, (a_0, \dots, a_{s-1})) \in V(G) \times V(H)$, which goes to the
cloud whose $G$ component is $v_G[a_i]$, the neighbor of $v$ in $G$
indexed by the $i$-th coordinate of the $H$ component.  (We will
resolve the question of what happens to the $H$ component after taking
such a step momentarily.)

Walks on the $s$-wide replacement product consist of steps with two
different parts: an intra-cloud part followed by an inter-cloud part.
All of the intra-cloud substeps simply move to a random neighbor in
the current cloud, which corresponds to applying the operator $\matr
I \otimes \matr A_H$, where $\matr A_H$ is the normalized adjacency
matrix of $H$.  The inter-cloud substeps are all deterministic, with
the first moving according to $\matr G_0$, the second according to
$\matr G_1$, and so on, returning to $\matr G_0$ for step number
$s+1$.  The operator for such a walk taking $t-1$ steps on the $s$-wide
replacement product is
$$
\prod_{i=0}^{t-2} \matr G_{i \bmod s} (\matr I \otimes \matr A_H).
$$

Observe that a walk on the $s$-wide replacement product yields a walk
on the outer graph $G$ by recording the $G$ component after each step
of the walk.  The number of $(t-1)$-step walks on the $s$-wide replacement
product is
$$
|V(G)| \cdot |V(H)| \cdot d_2^{t-1} = n \cdot d_1^s \cdot d_2^{t-1},
$$
since a walk is completely determined by its intra-cloud steps.  If
$d_2$ is much smaller than $d_1$ and $t$ is large compared to $s$,
this is less than $n d_1^{t-1}$, the number of $(t-1)$-step walks on $G$
itself.  Thus the $s$-wide replacement product will be used to
simulate random walks on $G$ while requiring a reduced amount of
randomness (of course this simulation is only possible under special
conditions, namely, when we are uniformly distributed on each cloud).

To formally define the $s$-wide replacement product, we must consider
the labeling of neighbors in $G$ more carefully.

\begin{definition}[Rotation Map]
	Suppose $G$ is a $d_1$-regular graph on $[n]$.
	For each $v \in [n]$ and $j \in [d_1]$, let $v_G[j]$ be the $j$-th neighbor of $v$ in $G$.
	Based on the indexing of the neighbors of each vertex, we define the rotation map~\footnote{This kind of map is denoted rotation map in the zig-zag terminology~\cite{RVW00}.}
        $\textup{rot}_G \colon [n] \times [d_1] \to [n] \times [d_1]$ such that for every $(v,j) \in [n] \times [d_1]$,
        $$
        \textup{rot}_G((v,j)) = (v',j') \iff v_G[j] = v' \text{ and } v'_G[j']=v.
        $$
	Furthermore, if there exists a bijection $\phi \colon [d_1] \to [d_1]$ such that for every $(v,j) \in [n] \times [d_1]$,
        $$
        \textup{rot}_G((v,j)) = (v_G[j],\phi(j)),
        $$
	then we call $\textup{rot}_G$ \emph{locally invertible}.
\end{definition}

If $G$ has a locally invertible rotation map, the cloud label after
applying the rotation map only depends on the current cloud label, not
the vertex of $G$.  In the $s$-wide replacement product, this
corresponds to the $H$ component of the rotation map only depending on
a vertex's $H$ component, not its $G$ component.  We define the
$s$-wide replacement product as described before, with the inter-cloud
operator $\matr G_i$ using the $i$-th coordinate of the $H$ component,
which is a value in $[d_1]$, to determine the inter-cloud step.

\begin{definition}[$s$-wide replacement product]\label{def:s_wide_replacement}
  Suppose we are given the following:
  \begin{itemize}
    \item A $d_1$-regular graph $G=([n],E)$ together with a locally invertible rotation map
          $\textup{rot}_G \colon [n] \times [d_1] \to [n] \times [d_1]$.
    \item A $d_2$-regular graph $H = ([d_1]^s,E')$.
  \end{itemize}
  And we define:
  \begin{itemize}
    \item For $i \in \set{0,1,\dots,s-1}$, we define $\textup{Rot}_i \colon [n] \times [d_1]^s \to [n] \times [d_1]^s$ as,
          for every $v \in [n]$ and $(a_0,\dots,a_{s-1}) \in [d_1]^s$,
          $$
          \textup{Rot}_i((v, (a_0,\dots,a_{s-1}))) \coloneqq (v', (a_0,\dots,a_{i-1},a_i',a_{i+1},\dots,a_{s-1})),
          $$
          where $(v',a_i') = \textup{rot}_G(v,a_i)$.
    \item Denote by $\matr G_i$ the operator realizing $\textup{Rot}_i$ and
          let $\matr A_H$ be the normalized random walk operator of $H$. Note that $\matr G_i$ is
          a permutation operator corresponding to a product of transpositions.
  \end{itemize}

  Then $t-1$ steps of the $s$-wide replacement product are given by the operator
  $$
  \prod_{i=0}^{t-2} \matr G_{i \bmod s} (\matr I \otimes \matr A_H).
  $$
\end{definition}

Ta-Shma instantiates the $s$-wide replacement product with an outer
graph $G$ that is a Cayley graph, for which locally invertible
rotation maps exist generically.

\begin{remark}
   Let $R$ be a group and $A \subseteq R$ where the set $A$ is closed under inversion. For every Cayley graph
   $\textup{Cay}(R,A)$, the map $\phi \colon A \to A$ defined as
   $\phi(g) = g^{-1}$ gives rise to the locally invertible rotation
   map
   $$
   \textup{rot}_{\textup{Cay}(R,A)}((r,a)) = (r\cdot a ,a^{-1}),
   $$
   for every $r \in R$, $a \in A$.
\end{remark}

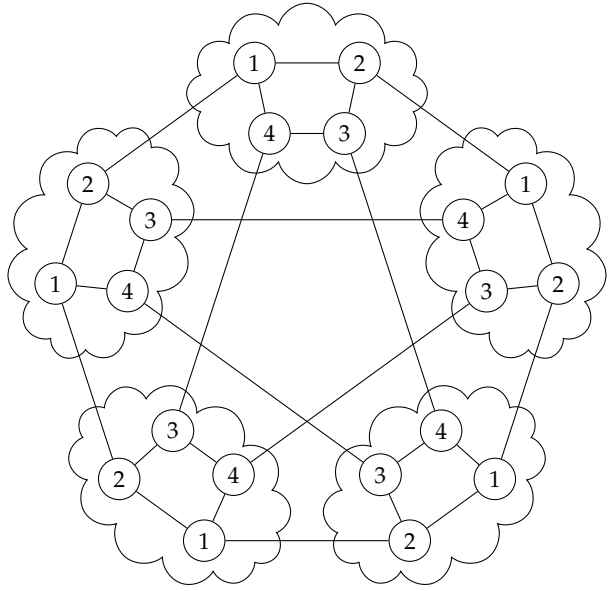
\begin{figure}[h!]
	\centering
	\begin{tikzpicture}[scale=0.8,every node/.style={scale=0.8}]
		\foreach \i in {0,...,4}
		{
			\node (v1\i) [circle,draw] at (102+72*\i:4.2) {1};
			\node (v2\i) [circle,draw] at (78+72*\i:4.2) {2};
			\node (v3\i) [circle,draw] at (78+72*\i:3) {3};
			\node (v4\i) [circle,draw] at (102+72*\i:3) {4};
		}
		\foreach \i in {0,...,4}
		{
			\draw (v1\i) -- (v2\i) -- (v3\i) -- (v4\i) -- (v1\i);
			
			\pgfmathparse{int(mod(\i+1,5))};
			\def\j{\pgfmathresult};
			\draw (v1\i) -- (v2\j);
			
			\pgfmathparse{int(mod(\i+2,5))};
			\def\j{\pgfmathresult};
			\draw (v4\i) -- (v3\j);
			
			\node[cloud,draw,minimum width=4cm,minimum height=3cm,rotate=72*\i,cloud puffs=12] at (90+72*\i:3.6) {};
		}
	\end{tikzpicture}
	\caption{An example of the 1-wide replacement product with outer graph $G = K_5$ and inner graph $H = C_4$.
		Vertices are labeled by their $H$ components.
		Note that the rotation map is locally invertible, with $\phi(1) = 2$, $\phi(2) = 1$, $\phi(3) = 4$, and $\phi(4) = 3$.}
\end{figure}

\subsection{The Construction}

Ta-Shma's code construction works by starting with a constant bias
code $\mathcal{C}_0$ in $\mathbb{F}_2^{n}$ and boosting to arbitrarily
small bias using direct sum liftings. Recall that the direct sum
lifting is based on a collection $W(t) \subseteq [n]^{t}$, which
Ta-Shma obtains using $t-1$ steps of random walk on the $s$-wide
replacement product of two regular expander graphs $G$ and $H$. The
graph $G$ is on $n$ vertices (same as blocklength of the base code)
and other parameters like degrees $d_1$ and $d_2$ of $G$ and $H$
respectively are chosen based on target code parameters.

To elaborate, every $t-1$ length walk on the replacement product gives
a sequence of $t$ outer vertices or $G$-vertices, which can be seen as
an element of $[n]^{t}$. This gives the collection $W(t)$ with $|W(t)|
= n \cdot d_1^s \cdot d_2^{t-1}$ which means the rate of lifted code
is smaller than the rate of $\mathcal{C}_0$ by a factor of $d_1^s
d_2^{t-1}$. However, the collection $W(t)$ is a parity sampler and
this means that the bias decreases (or the distance increases). The
relationship between this decrease in bias and decrease in rate with
some careful parameter choices allows Ta-Shma to obtain nearly optimal
$\epsilon$-balanced codes.

\subsection{Tweaking the Construction}\label{sec:tweaks}

Recall the first $s$ steps in Ta-Shma's construction are given by the
operator
$$
\matr G_{s-1}(\matr I \otimes \Aye_H) \matr G_{s-2} \cdots G_{1}(\matr I \otimes \Aye_H) \matr G_{0} (\matr I \otimes \Aye_H).
$$
Naively decomposing the above operator into the product of operators
$\prod_{i=0}^{s-1} \matr G_{i} (\matr I \otimes \Aye_H)$ is not good
enough to obtain the \emph{splittability} property which would hold
provided $\sigma_2(\matr G_{i}(\matr I \otimes \Aye_H))$ was small
for every $i$ in $\set{0,\ldots,s-1}$. However, each $\matr G_i (\matr
I \otimes \Aye_H)$ has $\abs{V(G)}$ singular values equal to $1$
since $G_{i}$ is an orthogonal operator and $(\matr I \otimes \Aye_H)$
has $\abs{V(G)}$ singular values equal to $1$. To avoid this issue we
will tweak the construction to be the following product
$$
\prod_{i=0}^{s-1} (\matr I \otimes \Aye_H) \matr G_{i} (\matr I \otimes \Aye_H).
$$

The operator $(\matr I \otimes \Aye_H) \matr G_{i} (\matr
I \otimes \Aye_H)$ is exactly the walk operator of the zig-zag product
$G \zigzag H$ of $G$ and $H$ with a rotation map given by the
(rotation map) operator $\matr G_i$. This tweaked construction is slightly
simpler in the sense that $G \zigzag H$ is an undirected graph. We
know by the zig-zag analysis that $(\matr I \otimes \Aye_H) \matr
G_{i} (\matr I \otimes \Aye_H)$ is expanding as long $G$ and $H$ are
themselves expanders.  More precisely, we have a bound that follows
from~\cite{RVW00}.

\begin{fact}\label{fact:zig_zag_bound}
  Let $G$ be an outer graph and $H$ be an inner graph used in the $s$-wide replacement product.
  For any integer $0 \leq i \leq s-1$,
  $$
  \sigma_2((I \otimes \Aye_H) G_i (I \otimes \Aye_H)) \leq \sigma_2(G) + 2 \cdot \sigma_2(H) + \sigma_2(H)^2.
  $$
\end{fact}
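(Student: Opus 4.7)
The plan is to mimic the standard zig-zag analysis of Reingold--Vadhan--Wigderson, decomposing $\Aye_H$ into its ``averaging'' and ``expanding'' components and expanding the product. Concretely, let $\Jay_H$ denote the orthogonal projector onto the space of functions on $V(H)$ that are constant (\ie the projector onto the uniform distribution on $H$), and write
\[
\Aye_H ~=~ \Jay_H + \Ee_H \mcom \qquad \text{where } \Ee_H \defeq \Aye_H - \Jay_H \mper
\]
Since $\Aye_H$ is a normalized adjacency operator of the $d_2$-regular graph $H$ and $\Jay_H$ is exactly the rank-one part in the spectral decomposition of $\Aye_H$ corresponding to the eigenvalue $1$, we have $\|\Ee_H\| \leq \sigma_2(H)$, and $\Jay_H \Ee_H = \Ee_H \Jay_H = 0$.

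With this decomposition in hand, I would expand
\[
(\Ide \otimes \Aye_H) \Gee_i (\Ide \otimes \Aye_H)
~=~ \underbrace{(\Ide \otimes \Jay_H) \Gee_i (\Ide \otimes \Jay_H)}_{\text{main term}} ~+~ \underbrace{R}_{\text{three cross terms}} \mcom
\]
where $R$ is a sum of three operators, each containing at least one factor of $\Ide \otimes \Ee_H$. Since $\Gee_i$ is a permutation (hence an isometry) and $\Ide \otimes \Aye_H$ has operator norm $1$, a term-by-term triangle inequality immediately gives $\|R\| \leq 2 \sigma_2(H) + \sigma_2(H)^2$.

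The heart of the argument is bounding the second singular value of the main term. I would argue that on the image of $\Ide \otimes \Jay_H$, i.e.\ on vectors that are constant on each cloud, the operator $(\Ide \otimes \Jay_H)\Gee_i(\Ide \otimes \Jay_H)$ acts precisely as $\Aye_G$ on the $G$-component (tensored with $\Jay_H$). This is where local invertibility of $\textup{rot}_G$ enters: for a fixed outer vertex $v$, as $\bfa \in [d_1]^s$ ranges uniformly, the coordinate $a_i$ ranges uniformly in $[d_1]$, so the $\Gee_i$-step lands on a uniformly random $G$-neighbor of $v$, and the $\Jay_H$ on the right reaverages the $H$-coordinate. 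On the orthogonal complement of $\text{image}(\Ide \otimes \Jay_H)$ the operator is zero. Hence, restricted to vectors orthogonal to the global uniform vector, the main term has operator norm at most $\sigma_2(G)$.

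Combining the two estimates via the triangle inequality $\sigma_2(A+B) \leq \sigma_2(A) + \|B\|$ yields
\[
\sigma_2\bigl((\Ide \otimes \Aye_H)\Gee_i(\Ide \otimes \Aye_H)\bigr) ~\leq~ \sigma_2(G) + 2\sigma_2(H) + \sigma_2(H)^2 \mcom
\]
as desired. The only genuinely nontrivial step is the identification of the main term with $\Aye_G$ on constant-on-cloud vectors; everything else is routine operator arithmetic once the $\Jay_H + \Ee_H$ split is in place. Since this is precisely the calculation carried out in \cite{RVW00} (specialized to the rotation map $\Gee_i$ affecting only the $i$-th $H$-coordinate), I expect no further obstacles.
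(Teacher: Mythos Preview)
Your proposal is correct and follows essentially the same zig-zag analysis as the paper's proof. The only cosmetic difference is that the paper decomposes the test vector $v = u + w$ along $\mathcal{W}^{\parallel} \oplus \mathcal{W}^{\perp}$ and expands the bilinear form into four terms, whereas you decompose the operator via $\Aye_H = \Jay_H + \Ee_H$ and expand the product into four operator terms; since $\Ide \otimes \Jay_H$ is precisely the projector onto $\mathcal{W}^{\parallel}$, these are dual presentations of the identical bound $\sigma_2(G) + 2\sigma_2(H) + \sigma_2(H)^2$.
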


This bound will imply \emph{splittability} as shown
in~\cref{sec:ta-shma:splittability}. We will need to argue that this
modification still preserves the correctness of the parity sampling
and that it can be achieved with similar parameter trade-offs.

The formal definition of a length-$t$ walk on this slightly modified
construction is given below.
\begin{definition}
  Let $t \in \mathbb{N}$, $G$ be a $d_1$-regular graph and $H$
  be a $d_2$-regular graph on $d_1^s$ vertices. Given a starting vertex $(v,h) \in V(G) \times
  V(H)$, a $(t-1)$-step walk on the tweaked $s$-wide replacement product of $G$ and $H$ is
  a tuple $((v_0,h_0),\dots, (v_{t-1},h_{t-1})) \in (V(G) \times V(H))^{t}$ such that
  \begin{itemize}
    \item $(v_0,h_0) = (v,h)$, and
    \item for every $0 \le i < t-1$, we have $(v_i,h_i)$ adjacent to $(v_{i+1},h_{i+1})$
          in $(\matr I \otimes \Aye_H) \matr G_{i \bmod s} (\matr I \otimes \Aye_H)$.
  \end{itemize}
  Note that each $(\matr I \otimes \Aye_H) \matr G_{i \bmod s} (\matr I \otimes \Aye_H)$
  is a walk operator of a $d_2^2$-regular graph. Therefore, the starting vertex $(v,h)$ together
  with a degree sequence $(m_1,\dots,m_t) \in [d_2^2]^{t-1}$ uniquely defines a $(t-1)$-step walk.
\end{definition}

\subsubsection{Parity Sampling}\label{sec:tweaked_parity_sampling}

We argue informally why parity sampling still holds with similar
parameter trade-offs. Later in~\cref{sec:ta-shma_spectral_analysis},
we formalize a key result underlying parity sampling and,
in~\cref{sec:ta-shma_param_basic}, we compute the new trade-off
between bias and rate in some regimes. In~\cref{sec:s_wide_replacement_prod},
the definition of the original $s$-wide replacement product as a
purely graph theoretic operation was given. Now, we explain how
Ta-Shma used this construction for parity sampling obtaining codes
near the GV bound.

For a word $z \in \mathbb{F}_2^{V(G)}$ in the base code, let $\matr
P_z$ be the diagonal matrix, whose rows and columns are indexed by
$V(G) \times V(H)$, with $(\matr P_z)_{(v,h), (v,h)} =
(-1)^{z_v}$. Proving parity sampling requires analyzing the operator
norm of the following product
\begin{equation}\label{eq:tweaked_tashma}
  \matr P_z \prod_{i=0}^{s-1} \matr (\matr I \otimes \Aye_H) \matr G_{i} \matr P_z (\matr I \otimes \Aye_H),
\end{equation}
when $\bias(z) \le \epsilon_0$. Let $\one \in \mathbb{R}^{V(G)\times
V(H)}$ be the all-ones vector and
$W$ be the collection of all $(t-1)$-step walks on the tweaked $s$-wide
replacement product. Ta-Shma showed (and it is not difficult to
verify) that
$$
\bias\left(\dsum_{W}(z) \right) = \left\lvert \ip{\one}{\matr P_z \prod_{i=0}^{t-2} \matr (\matr I \otimes \Aye_H) \matr G_{i \bmod s} \matr P_z (\matr I \otimes \Aye_H) \one} \right\rvert.
$$
From the previous equation, one readily deduces that
$$
\bias\left(\dsum_{W}(z) \right) \le \sigma_1\left(\matr P_z \prod_{i=0}^{s-1} \matr (\matr I \otimes \Aye_H) \matr G_{i} \matr P_z (\matr I \otimes \Aye_H)\right)^{\lfloor (t-1)/s \rfloor}.
$$
Set $\matr B \coloneqq \matr P_z \prod_{i=0}^{s-1} \matr (\matr
I \otimes \Aye_H) \matr G_{i} \matr P_z (\matr I \otimes \Aye_H)$. To
analyze the operator norm of $\matr B$, we will first need some
notation. Note that $\matr B$ is an operator acting on the space
$\mathcal{V} = \mathbb{R}^{V(G)} \otimes \mathbb{R}^{V(H)}$. Two of
its subspaces play an important role in the analysis, namely,
\begin{gather*}
\mathcal{W}^{\parallel} = \textup{span}\set{ a \otimes b \in \mathbb{R}^{V(G)} \otimes \mathbb{R}^{V(H)}\mid b = \one} \text{ and }
\mathcal{W}^{\perp} = (\mathcal{W}^{\parallel})^{\perp}.
\end{gather*}
Note that the complement subspace is with respect to the standard inner product. Observe that $\mathcal{V} = \mathcal{W}^{\parallel} \oplus \mathcal{W}^{\perp}$. Given arbitrary
unit vectors $v,w \in \mathcal{V}$, Ta-Shma considers the inner product
\begin{equation}\label{eq:bilinear_ta_shma}
  \ip{v}{\prod_{i=0}^{s-1} \matr (\matr I \otimes \Aye_H) \matr G_{i} \matr P_z (\matr I \otimes \Aye_H) w}.
\end{equation}
Each time an operator $(\matr I \otimes \Aye_H)$ appears in the above
expression, the next step of the walk can take one out of $d_2$
possibilities and thus the rate suffers a multiplicative decrease of
$1/d_2$. We think that we are ``paying'' $d_2$ for this step of the
walk. The whole problem lies in the trade-off between rate and
distance, so the crucial question now is how much the norm decreases
as we pay $d_2$. For a moment, suppose that the norm always decreases
by a factor of $\lambda_2 \coloneqq \sigma_2(H)$ per occurrence of
$(\matr I \otimes \Aye_H)$. If in this hypothetical case we could
further assume $\lambda_2 = 1/\sqrt{d_2}$, then if $B$ was a product
containing $\lceil \log_{\lambda_2}(\epsilon) \rceil$ factors of $(\matr
I \otimes \Aye_H)$, the final bias would be at most $\epsilon$ and the
rate would have suffered a multiplicative decrease of (essentially)
$\epsilon^2$ and we would be done. 

Of course, this was an oversimplification. The general strategy is roughly the above, but a
beautiful non-trivial step is needed. Going back to the bilinear
form~\cref{eq:bilinear_ta_shma}, if $w \in \mathcal{W}^{\perp}$ (or $v \in
\mathcal{W}^{\perp}$), we pay $d_2$ and we do obtain a norm decrease of
$\lambda_2$. More generally, note that can decompose $w =
w^{\parallel} + w^{\perp}$ with
$w^{\parallel} \in \mathcal{W}^{\parallel}$ and
$w^{\perp} \in \mathcal{W}^{\perp}$ (decompose $v = v^{\parallel} +
v^{\perp}$ similarly) and we can carry this process iteratively
collecting factors of $\lambda_2$. However, we are stuck with several
terms of the form for $0\le k_1 \le k_2 < s$,
$$
\ip{v_{k_1}^{\parallel}}{\prod_{i=k_1}^{k_2} \matr (\matr I \otimes \Aye_H) \matr G_{i} \matr P_z (\matr I \otimes \Aye_H) w_{k_2}^{\parallel}},
$$
with
$v_{k_1}^{\parallel},w_{k_2}^{\parallel} \in \mathcal{W}^{\parallel}$,
and for which the preceding naive norm decrease argument fails. This
is the point in the analysis where the structure of the $s$-wide
replacement product is used. Since $v_{k_1}^{\parallel},
w_{k_2}^{\parallel} \in \mathcal{W}^{\parallel}$, these vectors are uniform on each ``cloud'', i.e., copy of $H$. Recall
that a vertex in $H$ is an $s$-tuple $(m_1,\dots,m_s) \in
[d_1]^s$. Ta-Shma leverages the fact of having a uniform such tuple to
implement $k_2 -k_1 +1$ (up to $s$) steps of random walk on $G$. More
precisely, Ta-Shma obtains the following beautiful result:

\begin{theorem}[Adapted from Ta-Shma~\cite{Ta-Shma17}]\label{theo:ta_shma_simulation}
 Let $G$ be a locally invertible graph of degree $d_1$, $H$ be a Cayley graph on $\mathbb{F}_2^{s\log d_1}$, and $0 \le k_1 \le k_2 < s$ be integers. If $v^{\parallel} = v \otimes 1$ and $w^{\parallel} = w \otimes 1$, then
 $$
 \ip{v^{\parallel}}{\prod_{i=k_1}^{k_2} \matr G_{i} \matr (\matr I \otimes \Aye_H) \matr P_z  w^{\parallel}} =  \ip{v}{\left(\matr \Aye_G \matr M_z\right)^{k_2-k_1+1} w}
 $$
 where $\matr M_z \in \mathbb{R}^{V(G) \times V(G)}$ is the diagonal matrix defined as $(\matr M_z)_{v,v} \coloneqq (-1)^{z_v}$ for $v \in V(G)$.
\end{theorem}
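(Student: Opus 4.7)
The plan is to unroll the product explicitly, turning the left-hand side into an iterated expectation, and then exploit the uniformity of the $H$-component (coming from the $\otimes\one$ factors) to reduce this to an expectation over a length-$(k_2-k_1+1)$ random walk on $G$. Since $\matr P_z = \matr M_z \otimes \matr I$ acts only on the $V(G)$-component and $\matr I \otimes \Aye_H$ only on the $V(H)$-component, the two commute and each factor collapses to $\matr G_i(\matr I \otimes \Aye_H)\matr P_z = \matr G_i(\matr M_z \otimes \Aye_H)$. Setting $f_{k_1-1} \defeq w \otimes \one$ and $f_j \defeq \matr G_j(\matr M_z \otimes \Aye_H)f_{j-1}$, and using that $\matr G_j$ is the permutation operator for the involutive rotation map $\textup{Rot}_j$, a direct pointwise calculation yields the recursion
\[
f_j(x,h) \;=\; (-1)^{z_{x_G[h_j]}}\, \Ex{s \in S}{f_{j-1}\bigl(x_G[h_j],\,\textup{shift}_{j,\phi}(h)\oplus s\bigr)},
\]
where $S$ is the generating set of the Cayley graph $H$ on $\mathbb{F}_2^{s\log d_1}$ and $\textup{shift}_{j,\phi}(h)$ replaces the $j$-th coordinate of $h$ by $\phi(h_j)$.

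Unrolling this recursion from $j=k_1$ to $j=k_2$ and taking the outer inner product with $v \otimes \one$ gives
\[
\Ex{x_{k_2+1},\,h_{k_2+1},\,s_{k_1},\dots,s_{k_2}}{v_{x_{k_2+1}}\,(-1)^{z_{x_{k_2}}+\cdots+z_{x_{k_1}}}\,w_{x_{k_1}}},
\]
where $x_j \defeq (x_{j+1})_G[t_j]$ and the ``port'' $t_j$ denotes the $j$-th coordinate of the $H$-component at level $j+1$. The key bookkeeping step is that, by local invertibility of $G$ (so that $\textup{Rot}_{j'}$ touches only the $j'$-th $H$-coordinate, independently of the current $V(G)$-vertex) and the hypothesis $k_2 - k_1 < s$, each $H$-coordinate is updated by $\phi$ at most once in the entire unrolling; hence
\[
t_j \;=\; h_{k_2+1,\,j} \oplus s_{k_2,\,j} \oplus s_{k_2-1,\,j} \oplus \cdots \oplus s_{j+1,\,j}.
\]

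The decisive step is to observe that $h_{k_2+1}$ is uniform on $\mathbb{F}_2^{s\log d_1}$ (this is what the $\otimes \one$ factors contribute under the uniform probability measure), so its coordinates $\{h_{k_2+1,j}\}_j$ are iid uniform on $[d_1]$, and XORing each with a quantity independent of $h_{k_2+1}$ preserves this. Thus, for any fixing of the $s_j$'s, the ports $(t_{k_1},\dots,t_{k_2})$ are iid uniform on $[d_1]$ when averaging over $h_{k_2+1}$; the $s$-averages then become vacuous, and the expectation rewrites as one over $x_{k_2+1}$ together with iid uniform ports $t_{k_1},\dots,t_{k_2}$. The induced vertices $x_{k_1},\dots,x_{k_2}$ are then exactly a length-$(k_2-k_1+1)$ uniform random walk on $G$ starting at $x_{k_2+1}$, and combined with the sign factors this matches $\ip{v}{(\Aye_G \matr M_z)^{k_2-k_1+1} w}$ after unfolding one application of $\Aye_G \matr M_z$ per step. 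I expect the main obstacle to be precisely this coordinate-level bookkeeping of the $H$-component through the alternating $\matr G_j$ and $(\matr I \otimes \Aye_H)$ factors: without local invertibility of $G$, the $\phi$-updates would depend on $V(G)$ and entangle the ports with the $G$-walk, destroying the iid structure that makes the simulation work.
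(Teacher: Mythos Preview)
The paper does not actually prove this theorem; it is stated as ``Adapted from Ta-Shma~\cite{Ta-Shma17}'' and only accompanied by an informal paragraph explaining that the proof ``crucially uses the product structure of $\mathbb{F}_2^{s\log d_1}$: every vertex of $H$ can be represented by $s$ registers of $\log d_1$ bits each, and both inter-cloud and intra-cloud steps can be seen as applying register-wise bijections.''

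Your proposal is correct and is precisely a fleshed-out version of that sketch. The recursion you write is accurate (with $\textup{Rot}_j$ an involution, so the direction of $\matr G_j$ is unambiguous), and the coordinate bookkeeping is the heart of the matter: since $0\le k_1\le k_2<s$, the ports $t_{k_1},\dots,t_{k_2}$ read \emph{distinct} registers of $h_{k_2+1}$, each one before its (unique) $\phi$-update, so your formula $t_j=h_{k_2+1,j}\oplus s_{k_2,j}\oplus\cdots\oplus s_{j+1,j}$ holds. Uniformity of $h_{k_2+1}$ over $\mathbb{F}_2^{s\log d_1}$ then makes the ports iid uniform on $[d_1]$ for every fixing of the Cayley steps, which is exactly the $0$-pseudorandomness property the paper alludes to. The remaining identification with $\ip{v}{(\Aye_G\matr M_z)^{k_2-k_1+1}w}$ is straightforward. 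So your argument matches the paper's (deferred) approach, carried out in full for the special case $H=\textup{Cay}(\mathbb{F}_2^{s\log d_1},S)$ rather than via the abstract $0$-pseudorandom hypothesis.
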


\begin{remark}
Note that the walk operator in this theorem corresponds to the original construction. \autoref{theo:ta_shma_simulation} was used by Ta-Shma to obtain \autoref{fact:ta-shma_main} whose \autoref{cor:tweaked_ta-shma_spectral_analysis} corresponds to the modified construction.
\end{remark}

Ta-Shma proved \autoref{theo:ta_shma_simulation} under the more general condition that $H$ is 0-pseudorandom. Roughly speaking, this property means that if we start with a distribution that is uniform over the clouds, and walk according to fixed $H$-steps $j_0,j_1,\cdots , j_{s-1} $, then the distribution of $G$-vertices obtained will be identical to the distribution obtained if we were doing the usual random walk on $G$. We will always choose $H$ to be a Cayley graph on $\mathbb{F}_2^{s\log d_1}$, which will imply that $H$ is also 0-pseudorandom. The proof of \autoref{theo:ta_shma_simulation} crucially uses the product structure of $\mathbb{F}_2^{s\log d_1}$: every vertex of $H$ can be represented by $s$ registers of $\log d_1$ bits each, and both inter-cloud and intra-cloud steps can be seen as applying register-wise bijections using some canonical mapping between $[d_1]$ and $\mathbb{F}_2^{\log d_1}$.

Ta-Shma's original parity sampling proof required $\epsilon_0 + 2\theta
+ 2\sigma_2(G) \le \sigma_2(H)^2$, where $\epsilon_0$ is the initial bias
and $\theta$ is an error parameter arising from a number theoretic
construction of Ramanujan graphs for the outer graph $G$. This is because $\epsilon_0 + 2\theta
+ 2\sigma_2(G)$ is the reduction of bias in every two steps while taking a walk on $G$ (see \autoref{theo:ta_shma_bias_simple}). Having $\epsilon_0 + 2\theta+ 2\sigma_2(G) \le \sigma_2(H)^2$ ensured that after establishing \autoref{theo:ta_shma_simulation}, we were collecting enough reduction for $d_2^2$ price we paid for two steps. In the modified construction, we now have $d_2^2$ possibilities for each step in $(\matr
I \otimes \Aye_H^2)$ (so $d_2^4$ price for two steps), and so if instead
we have $\epsilon_0 + 2\theta + 2\sigma_2(G) \le \sigma_2(H)^4$ in the
modified construction, we claim that the correctness of the parity
sampling analysis is preserved as well as (essentially) the trade-off
between walk length and norm decay. Fortunately, Ta-Shma's parameters decouple and we can choose parameters to satisfy the above requirement.

\begin{remark}
  This modification on the $s$-replacement product of $G$ and $H$
  essentially~\footnote{Except at the first and last factors in the
  product of operators.} amounts to taking a different inner graph $H$
  which can be factored as $H=\sqrt{H} \sqrt{H}$ (and is still
  $0$-pseudorandom).
\end{remark}

\subsubsection{Spectral Analysis of the Modified Construction}\label{sec:ta-shma_spectral_analysis}

We formally show that we don't loose much by going from
Ta-Shma's original $s$-wide product construction to its tweaked version. The
key technical result obtained by Ta-Shma is the following, which is
used to analyze the bias reduction as a function of the total number
walk steps $t-1$.
 
\begin{fact}[Theorem 24 abridged~\cite{Ta-Shma17}]\label{fact:ta-shma_main}
 If $H$ is a Cayley graph on $\mathbb{F}_2^{s\log d_1}$ and $\epsilon_0 + 2 \cdot \theta + 2 \cdot \sigma_2(G) \le \sigma_2(H)^2$, then
 $$
 \norm{\prod_{i=0}^{s-1} \matr P_z \matr G_i (\matr I \otimes \matr A_H)}_{\textup{op}} \le \sigma_2(H)^s + s \cdot \sigma_2(H)^{s-1} + s^2 \cdot \sigma_2(H)^{s-3},
 $$
 where $\matr P_z \in \mathbb{R}^{(V(G)\times V(H)) \times (V(G)\times V(H))}$ is the \emph{sign operator} of a $\epsilon_0$ biased word $z \in \mathbb{F}_2^{V(G)}$
 defined as a diagonal matrix with $(P_z)_{(v,h),(v,h)} = (-1)^{z_v}$ for every $(v,h) \in V(G) \times V(H)$.
\end{fact}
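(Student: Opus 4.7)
The plan is to bound the operator norm by analyzing the bilinear form
\[
\ip{v}{\prod_{i=0}^{s-1} \matr P_z \matr G_i (\matr I \otimes \matr A_H) w}
\]
for arbitrary unit vectors $v, w \in \mathcal{V} = \mathbb{R}^{V(G)} \otimes \mathbb{R}^{V(H)}$, and expanding each intermediate vector in the decomposition $\mathcal{V} = \mathcal{W}^{\parallel} \oplus \mathcal{W}^{\perp}$. First I would note two basic facts: the operator $(\matr I \otimes \matr A_H)$ preserves $\mathcal{W}^{\parallel}$ (since $H$ is regular, so $\matr A_H \one = \one$) and contracts $\mathcal{W}^{\perp}$ by $\sigma_2(H)$, while $\matr P_z$ and $\matr G_i$ are norm-preserving but may mix the two subspaces.

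Next I would insert a parallel/perpendicular split between each consecutive pair of operator applications, producing a sum of $O(2^{s})$ terms indexed by a pattern $\pi \in \{\parallel, \perp\}^{s+1}$ specifying the subspace label at each of the $s+1$ positions. I would bound each term as a product of two kinds of contributions. First, every position labeled $\perp$ contributes a factor of $\sigma_2(H)$ from the contraction of $(\matr I \otimes \matr A_H)$ on $\mathcal{W}^{\perp}$. Second, every maximal contiguous run of $\parallel$ labels from position $k_1$ to $k_2$ leaves us with an inner product of the form
\[
\ip{v_{k_1}^{\parallel}}{\prod_{i=k_1}^{k_2}(\matr I \otimes \matr A_H)\matr G_i \matr P_z \, w_{k_2}^{\parallel}},
\]
at which point I would invoke \autoref{theo:ta_shma_simulation} to rewrite this as $\ip{v}{(\matr A_G \matr M_z)^{k_2-k_1+1} w}$ on $\R^{V(G)}$. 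Then, using the hypothesis $\epsilon_0 + 2\theta + 2\sigma_2(G) \le \sigma_2(H)^2$ together with the standard two-step bound $\norm{(\matr A_G \matr M_z)^2}_{\mathrm{op}} \le \epsilon_0 + 2\theta + 2\sigma_2(G)$ for $\epsilon_0$-biased $z$ on an outer Ramanujan-like graph $G$, each parallel run of length $r$ yields a factor $\sigma_2(H)^{\lfloor r/2\rfloor \cdot 2}$, which is essentially $\sigma_2(H)^{r}$ up to losing at most one factor per run of odd length.

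Finally, I would sum the contributions over all patterns $\pi$. The all-$\parallel$ pattern yields the main $\sigma_2(H)^s$ term. Patterns with exactly one $\perp$ label contribute $s$ terms each of size at most $\sigma_2(H)^{s-1}$, giving $s \cdot \sigma_2(H)^{s-1}$. Patterns with two or more $\perp$ labels are bounded by $s^2 \cdot \sigma_2(H)^{s-3}$ after accounting for the $\binom{s}{\ge 2}$ combinatorial choices and the at most two extra losses from odd-length parallel runs sandwiched between perpendicular positions. Taking the supremum over unit $v,w$ gives the claimed bound.

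The main obstacle will be the careful bookkeeping around parallel runs of odd length, since the simulation-based bound most naturally decays as $\sigma_2(H)^2$ every two steps, so each odd run potentially loses a $\sigma_2(H)^{-1}$ factor; controlling these losses is exactly what forces the third term $s^2 \cdot \sigma_2(H)^{s-3}$ rather than $s^2 \cdot \sigma_2(H)^{s-2}$. One must also verify that \autoref{theo:ta_shma_simulation} applies to runs that may start or end at positions $0$ or $s-1$, i.e., at the boundary of the product, which is why the Cayley-over-$\mathbb{F}_2^{s \log d_1}$ structure of $H$ (ensuring $0$-pseudorandomness on every coordinate block) is essential.
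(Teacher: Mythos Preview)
The paper does not actually prove this statement; it is quoted as a \emph{Fact} from Ta-Shma's paper \cite{Ta-Shma17} (Theorem~24 there) and used as a black box. What the present paper does provide is an informal description of Ta-Shma's argument in \cref{sec:tweaked_parity_sampling}: decompose the bilinear form via $\mathcal{V} = \mathcal{W}^{\parallel} \oplus \mathcal{W}^{\perp}$, collect a factor $\sigma_2(H)$ each time the vector lands in $\mathcal{W}^{\perp}$, and for maximal $\parallel$-runs invoke \cref{theo:ta_shma_simulation} to reduce to a walk on $G$, which then decays as $(\epsilon_0 + 2\theta + 2\sigma_2(G))^{\lfloor r/2 \rfloor} \le \sigma_2(H)^{2\lfloor r/2\rfloor}$ by the hypothesis. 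Your proposal is exactly this strategy, including the correct identification of the bookkeeping issue (odd-length $\parallel$-runs each lose one $\sigma_2(H)$ factor, which is what produces the exponent $s-3$ rather than $s-2$ in the third term). So your sketch is correct and is the same route Ta-Shma takes; there is simply nothing in this paper to compare it against beyond the informal outline, which you have faithfully reconstructed.
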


We reduce the analysis of Ta-Shma's tweaked construction
to~\cref{fact:ta-shma_main}. In doing so, we only lose one extra step
as shown below.

\begin{corollary}\label{cor:tweaked_ta-shma_spectral_analysis}
 If $H^2$ is a Cayley graph on $\mathbb{F}_2^{s\log d_1}$ and $\epsilon_0 + 2 \cdot \theta + 2 \cdot \sigma_2(G) \le \sigma_2(H)^4$, then
 $$
 \norm{\prod_{i=0}^{s-1} (\matr I \otimes \matr A_H) \matr P_z \matr G_i (\matr I \otimes \matr A_H)}_{\textup{op}} \le \sigma_2(H^2)^{s-1} + (s-1) \cdot \sigma_2(H^2)^{s-2} + (s-1)^2 \cdot \sigma_2(H^2)^{s-4},
 $$
 where $\matr P_z$ is the \emph{sign operator} of an $\epsilon_0$-biased word $z \in \mathbb{F}_2^{V(G)}$
 as in~\cref{fact:ta-shma_main}.
\end{corollary}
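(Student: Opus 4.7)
The plan is to reduce the bound to \cref{fact:ta-shma_main} applied with $H^2$ in place of $H$ as the inner graph and with $s-1$ inter-cloud steps in place of $s$. First I would expand the indexed product using the right-to-left convention and merge each adjacent pair $(\matr I \otimes \matr A_H)(\matr I \otimes \matr A_H) = (\matr I \otimes \matr A_H^2) = (\matr I \otimes \matr A_{H^2})$ to write
\[
\prod_{i=0}^{s-1} (\matr I \otimes \matr A_H) \matr P_z \matr G_i (\matr I \otimes \matr A_H) ~=~ (\matr I \otimes \matr A_H) \cdot M \cdot (\matr I \otimes \matr A_H),
\]
where $M = \matr P_z \matr G_{s-1} (\matr I \otimes \matr A_{H^2}) \matr P_z \matr G_{s-2} (\matr I \otimes \matr A_{H^2}) \cdots (\matr I \otimes \matr A_{H^2}) \matr P_z \matr G_0$ contains exactly $s-1$ copies of $(\matr I \otimes \matr A_{H^2})$ interleaved between the $s$ copies of $\matr P_z \matr G_i$.

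Next, since $\matr A_H$ is normalized so that $\norm{\matr I \otimes \matr A_H}_{\text{op}} \le 1$, and since $\matr P_z \matr G_0$ is a composition of a sign operator with a permutation operator (hence orthogonal with norm exactly $1$), I would apply submultiplicativity to peel off the two boundary $(\matr I \otimes \matr A_H)$ factors as well as the trailing $\matr P_z \matr G_0$, obtaining
\[
\norm{\prod_{i=0}^{s-1} (\matr I \otimes \matr A_H) \matr P_z \matr G_i (\matr I \otimes \matr A_H)}_{\text{op}} ~\le~ \norm{\matr P_z \matr G_{s-1} (\matr I \otimes \matr A_{H^2}) \matr P_z \matr G_{s-2} \cdots \matr P_z \matr G_1 (\matr I \otimes \matr A_{H^2})}_{\text{op}},
\]
a product with exactly $s-1$ copies each of $\matr P_z \matr G_i$ (indexed $i = 1,\ldots,s-1$) and $(\matr I \otimes \matr A_{H^2})$.

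I would then invoke \cref{fact:ta-shma_main} on this residual product, now viewed as a walk on the $s$-wide replacement product whose inner graph is $H^2$ and whose number of inter-cloud steps is $s-1$. By hypothesis, $H^2$ is a Cayley graph on $\mathbb{F}_2^{s\log d_1}$, and the precondition $\epsilon_0 + 2\theta + 2\sigma_2(G) \le \sigma_2(H)^4 = \sigma_2(H^2)^2$ is exactly \cref{fact:ta-shma_main}'s precondition after replacing $\sigma_2(H)$ by $\sigma_2(H^2)$. The shift in the index range from $\{0,\ldots,s-2\}$ to $\{1,\ldots,s-1\}$ is inessential because Ta-Shma's analysis treats the $s$ registers of $\mathbb{F}_2^{s\log d_1}$ symmetrically, and \cref{theo:ta_shma_simulation} is already stated for any consecutive range $k_1 \le i \le k_2 < s$. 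This delivers the claimed bound $\sigma_2(H^2)^{s-1} + (s-1)\sigma_2(H^2)^{s-2} + (s-1)^2 \sigma_2(H^2)^{s-4}$.

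The main subtlety I anticipate is confirming that the unabridged form of \cref{fact:ta-shma_main} (Ta-Shma's Theorem 24) genuinely permits fewer than $s$ inter-cloud steps on this shifted index range without any new ingredient; this requires tracing through Ta-Shma's inductive bilinear-form argument to confirm that the Cayley/register structure is exploited only through \cref{theo:ta_shma_simulation}, which $H^2$ inherits from $H$. Given the remark in the excerpt that the tweak essentially corresponds to factoring the inner graph as $H \cdot H$ while preserving $0$-pseudorandomness, this step should be a direct translation of the original proof rather than a new argument, and the extra factor lost at each boundary costs only the decrement from $s$ to $s-1$ in every exponent.
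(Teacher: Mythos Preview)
Your proposal is correct and follows essentially the same approach as the paper: factor the product as $(\matr I \otimes \matr A_H)\cdot \left(\prod_{i=1}^{s-1}\matr P_z \matr G_i (\matr I \otimes \matr A_{H^2})\right)\cdot \matr P_z \matr G_0 (\matr I \otimes \matr A_H)$, peel off the norm-$1$ boundary factors by submultiplicativity, and apply \cref{fact:ta-shma_main} with $H^2$ and $s-1$ steps. The paper's proof is terser and simply invokes \cref{fact:ta-shma_main} on the middle product without commenting on the index shift, whereas your remark that this requires the unabridged Theorem~24 to tolerate a shifted range $\{1,\dots,s-1\}$ (which it does via \cref{theo:ta_shma_simulation}) is a fair observation the paper leaves implicit.
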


\begin{proof}
  We have
  \begin{align*}
    \norm{\prod_{i=0}^{s-1} (\matr I \otimes \matr A_H) \matr P_z \matr G_i (\matr I \otimes \matr A_H)}_{\text{op}} &\le \norm{(\matr I \otimes \matr A_H)}_{\text{op}} \norm{\prod_{i=1}^{s-1}  \matr P_z \matr G_i (\matr I \otimes \matr A_H^2)}_{\text{op}} \norm{\matr P_z \matr G_{0} (\matr I \otimes \matr A_H)}_{\text{op}}\\
                                                  &\le \norm{\prod_{i=1}^{s-1}  \matr P_z \matr G_i (\matr I \otimes \matr A_H^2)}_{\text{op}}\\
                                                  &\le \sigma_2(H^2)^{s-1} + (s-1) \cdot \sigma_2(H^2)^{s-2} + (s-1)^2 \cdot \sigma_2(H^2)^{s-4},
  \end{align*}
  where the last inequality follows from~\cref{fact:ta-shma_main}.
\end{proof}

\begin{remark}
  We know that in the modified construction $H^2$ is a Cayley graph since $H$ is a Cayley graph.
\end{remark}

From this point onward, we will be working exclusively with the
modified construction instead of using it in its original
form.  Any references to Ta-Shma's construction or the $s$-wide
replacement product will actually refer to the modified versions
described in this section.

\section{Code Cascading}\label{sec:code_cascading}

A code cascade is a sequence of codes generated by starting with a
base code $\Cc_0$ and recursively applying lifting operations.

\begin{definition}
   We say that a sequence of codes $\Cc_0, \Cc_1,\ldots, \Cc_{\ell}$
   is a \emph{code cascade} provided $\Cc_{i} =\lift_{W_i(t_i)}(\Cc_{i-1})$ for
   every $i \in [\ell]$.
   Each $W_i(t_i)$ is a subset of $[n_{i-1}]^{t_i}$, where $n_{i-1}=|W_{i-1}(t_{i-1})|$ is the block length of the code $\Cc_{i-1}$.
\end{definition}

Let us see how code cascades may be useful for decoding. Suppose we
wish to lift the code $\Cc_0$ to $\Cc_{\ell}$, and there is some
$W(t) \subseteq [n_0]^t$ such that $\Cc_{\ell} = \dsum_{W(t)}
(\Cc_0)$. In our case of bias boosting, this $t$ will depend on the
target bias $\epsilon$. However, the expansion requirement of the
list-decoding framework of \cite{AJQST19} has a poor dependence on
$t$. A way to work around this issue is to go from $\Cc_0$ to
$\Cc_{\ell}$ via a code cascade as above such that each $t_i$ is a
constant independent of the final bias but $\prod\limits_{i=1}^{\ell}
t_i = t$ (which means $\ell$ depends on $\epsilon$). The final code
$\Cc_{\ell}$ of the cascade is the same as the code obtained from
length-$(t-1)$ walks. While decoding will now become an $\ell$-level
recursive procedure, the gain from replacing $t$ by $t_i$ will
outweigh this loss, as we discuss below.

\subsection{Warm-up: Code Cascading Expander Walks}\label{sec:warmup}

We now describe the code cascading construction and unique decoding
algorithm in more detail. Let $G=(V,E)$ be a $d$-regular graph with
uniform distribution over the edges. Let $m$ be a sufficiently large
positive integer, which will be the number of vertices of the walks
used for the lifting between consecutive codes in the cascade.  At
first, it will be crucial that we can take $m=O(1)$ so that the
triangle inequality arising from the analysis of the lifting between
two consecutive codes involves a constant number of terms. We
construct a recursive family of codes as follows.

\begin{itemize}
  \item Start with a code $\mathcal{C}_0$ which is linear
        and has constant bias $\epsilon_0$.
  \item Define the code $\mathcal{C}_1 = \dsum_{W(m)}(\Cc_0)$, which is the direct sum lifting over the collection $W(m)$ of all
        length-$(m-1)$ walks on $G$ using the code $\mathcal{C}_0$.
  \item Let $\widehat{G}_i = (V_i,E_i)$ be the (directed) graph where $V_i$ is the collection of all walks on $m^i$ vertices on $G$ with
        two walks $(v_1,\dots,v_{m^i})$ and $(u_1,\dots,u_{m^i})$ connected iff $v_{m^i}$ is adjacent to $u_1$ in $G$.
  \item Define $\mathcal{C}_i$ to be the direct sum lifting on the collection $W_i(m)$ of all length-$(m-1)$ walks on $G_{i-1}$ using
        the code $\mathcal{C}_{i-1}$, i.e., $\mathcal{C}_i = \dsum_{W_i(m)}(\Cc_{i-1})$.
  \item Repeat this process to yield a code cascade $\mathcal{C}_0,\dots,\mathcal{C}_{\ell}$.
\end{itemize}

Thanks to the definition of the graphs $\widehat{G}_i$ and the recursive nature
of the construction, the final code $\Cc_{\ell}$ is the same as the
code obtained from $\Cc_0$ by taking the direct sum lifting over all
walks on $t = m^{\ell}$ vertices of $G$. We can use Ta-Shma's analysis
(building on the ideas of Rozenman and Wigderson~\cite{RW08})
for the simpler setting of walks over a single expander graph to
determine the amplification in bias that occurs in going from $\Cc_0$
all the way to $\Cc_{\ell}$.

\begin{theorem}[Adapted from Ta-Shma~\cite{Ta-Shma17}]\label{theo:ta_shma_bias_simple}
  Let $\Cc$ be an $\epsilon_0$-balanced linear code, and let $\Cc' = \dsum_{W(t)}(\Cc)$ be the direct sum lifting
  of $\Cc$ over the collection of all length-$(t-1)$ walks $W(t)$ on a graph $G$. Then
  $$
  \bias(\Cc') \leq (\epsilon_0 + 2 \sigma_2(G))^{\floor{(t-1)/2}}.
  $$
\end{theorem}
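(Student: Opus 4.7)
The plan is to express the bias of the lifted code as a bilinear form on $\mathbb{R}^V$ under the uniform probability measure, and then bound this form by the operator norm of a product of $\Aye = \Aye_G$ and the sign operator of $z$. For a codeword $z \in \Cc$, let $P_z$ be the diagonal sign operator with $(P_z)_{v,v} = (-1)^{z_v}$ and let $\one$ denote the constant function $1$ on $V$. Unrolling the definition of a $(t-1)$-step random walk on $G$ (which starts at a uniform vertex and then transitions according to $\Aye$), one obtains the identity
$$
\bias\!\left(\dsum_{W(t)}(z)\right) \;=\; \bigl|\langle \one, (P_z\Aye)^{t-1}\, P_z \one\rangle\bigr|.
$$

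The next step is to pair up consecutive factors. Set $M := P_z \Aye P_z \Aye$. Regardless of the parity of $t-1$, we can factor $(P_z\Aye)^{t-1} = M^{\lfloor (t-1)/2\rfloor}\, R$, where $R \in \{I,\ P_z\Aye\}$ satisfies $\|R\|_{\mathrm{op}} \le 1$ because $\Aye$ and $P_z$ are each contractive. Since $\|\one\| = \|P_z\one\| = 1$ under the uniform measure, Cauchy--Schwarz yields
$$
\bias\!\left(\dsum_{W(t)}(z)\right) \;\le\; \|M\|_{\mathrm{op}}^{\lfloor (t-1)/2\rfloor}.
$$
It therefore suffices to show $\|M\|_{\mathrm{op}} \le \epsilon_0 + 2\sigma_2(G)$.

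The heart of the argument is a spectral decomposition around $\one$. For any unit vector $v$, write $v = \alpha\, \one + v^\perp$ with $v^\perp \perp \one$ and $\alpha^2 + \|v^\perp\|^2 = 1$. Two spectral facts drive the calculation: (i) $\Aye \one = \one$ and $\Aye$ preserves $\one^\perp$ with $\|\Aye|_{\one^\perp}\|_{\mathrm{op}} \le \sigma_2(G)$; and (ii) $P_z \one = \beta\, \one + u$ with $u \perp \one$, $|\beta| = \bias(z) \le \epsilon_0$ and $\|u\| \le 1$. Applying $\Aye$, then $P_z$, then $\Aye$, then $P_z$, and at each stage tracking the components of the running vector along $\one$ and along $\one^\perp$, gives the estimate
$$
\|Mv\| \;\le\; |\alpha|\,(\epsilon_0 + \sigma_2(G)) \;+\; \|v^\perp\|\, \sigma_2(G)\,(1 + o(1)) \;\le\; \epsilon_0 + 2\sigma_2(G).
$$
Combined with Step 2, this yields the stated bound.

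The main obstacle is the bookkeeping in the last step: one has to expand $Mv$ carefully enough to see that the $\one$-component picks up the factor $\bias(z) \le \epsilon_0$ only through the two occurrences of $P_z$, while each occurrence of $\Aye$ contributes at most one factor of $\sigma_2(G)$ on the $\one^\perp$-component, and that the cross-terms sum to at most $2\sigma_2(G)$ rather than something larger. The remaining ingredients (the inner-product identity in Step 1 and the factorization in Step 2) are routine algebraic manipulations once one fixes the uniform-measure inner product convention from the preliminaries.
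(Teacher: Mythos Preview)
Your proposal is correct and follows essentially the same argument the paper uses (see the proofs of \cref{claim:code_cascade_bias_1} and \cref{claim:code_cascade_bias_2}, which carry out exactly this computation in the $s$-wide setting; the paper itself attributes \cref{theo:ta_shma_bias_simple} to Ta-Shma without a standalone proof). For the bookkeeping in Step~3, it is cleaner to first drop the leading $P_z$ by unitarity and bound $\|\Aye P_z \Aye\|_{\mathrm{op}}$ instead: splitting $v = v^\parallel + v^\perp$ by the triangle inequality, and then $P_z v^\parallel = (P_z v^\parallel)^\parallel + (P_z v^\parallel)^\perp$ likewise, yields $\|(P_z v^\parallel)^\parallel\| + \sigma_2(G) + \sigma_2(G) \le \epsilon_0 + 2\sigma_2(G)$ exactly, with no $o(1)$ fudge needed.
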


If $\sigma_2(G) \leq \epsilon_0/2$ and $\ell
= \ceil{\log_m(2 \log_{2\epsilon_0}(\epsilon) + 3)}$, taking $t = m^{\ell} \geq 2 \log_{2 \epsilon_0}(\epsilon)+3$ in the above theorem shows that the
final code $\Cc_{\ell}$ is $\epsilon$-balanced.  Observe that the required
expansion of the graph $G$ only depends on the constant initial bias
$\epsilon_0$, not on the desired final bias $\epsilon$.  It will be
important for being able to decode with better parameters that both
$\sigma_2(G)$ and $m$ are constant with respect to $\epsilon$; only
$\ell$ depends on the final bias (with more care we can make
$\sigma_2(G)$ depend on $\epsilon$, but we restrict this analysis to
Ta-Shma's refined construction on the $s$-wide replacement product).

As mentioned before, to uniquely decode $\Cc_{\ell}$ we will
inductively employ the list decoding machinery for expander walks
from~\cite{AJQST19}.  The list decoding algorithm can decode a direct
sum lifting $\Cc' = \dsum_{W(m)}(\Cc)$ as long as the graph $G$ is
sufficiently expanding, the walk length $m-1$ is large enough, and the
base code $\Cc$ has an efficient unique decoding algorithm
(see \cref{theo:list_dec_hammer} for details).

The expansion requirement ultimately depends on the desired list
decoding radius of $\Cc'$, or more specifically, on how close the list
decoding radius is to $1/2$.  If the distance of $\Cc'$ is at most
$1/2$, its unique decoding radius is at most $1/4$, which means list
decoding at the unique decoding radius is at a constant difference
from $1/2$ and thus places only a constant requirement on the
expansion of $G$.  In the case of the code cascade $\Cc_i
= \dsum_{W_i(m)}(\Cc_{i-1})$, unique decoding of $\Cc_{i-1}$ is
guaranteed by the induction hypothesis.  It is not too difficult to
see that each graph $\widehat{G}_i$ will have the same second singular value as
$G$, so we can uniquely decode $\Cc_i$ if $G$ meets the constant
expansion requirement and $m$ is sufficiently large.

\subsection{Code Cascading Ta-Shma's Construction}\label{sec:code_cascading_ta_shma}

We will now describe how to set up a code cascade based on walks on an $s$-wide replacement product.
Consider the $s$-wide replacement product of the outer graph
$G$ with the inner graph $H$. The first $s$ walk steps are given by
the walk operator
$$
\prod_{i=0}^{s-1} (\matr I \otimes \Aye_H) \matr G_{i} (\matr I \otimes \Aye_H).
$$
Let $\matr A_{s-1} \coloneqq (\matr I \otimes \Aye_H) \matr G_{s-2} (\matr
I \otimes \Aye_H) \cdots (\matr I \otimes \Aye_H) \matr G_{0} (\matr
I \otimes \Aye_H)$. If the total walk length $t-1$ is a multiple of $s$,
the walks are generated using the operator
$$
\left( (\matr I \otimes \Aye_H) \matr G_{s-1} (\matr I \otimes \Aye_H) \matr A_{s-1} \right)^{(t-1)/s}.
$$
Here $(\matr I \otimes \Aye_H) \matr G_{s-1}
(\matr I \otimes \Aye_H)$ is used as a ``binding'' operator to connect two
walks containing $s$ vertices at level $\Cc_2$, $s^2$ vertices at
level $\Cc_3$, and so on. More precisely, we form the following code
cascade.
\begin{itemize}
  \item $\Cc_0$ is an $\epsilon_0$-balanced linear code efficiently uniquely decodable from a constant radius.
  \item $\Cc_1 = \lift_{W_1(s)}(\Cc_0)$, where $W_1(s)$ is the set of length-(s-1) walks given by the operator
        $$
        \underbrace{(\matr I \otimes \Aye_H) \matr G_{s-2} (\matr I \otimes \Aye_H)}_{\text{$(s-2)$th step}} \cdots \underbrace{(\matr I \otimes \Aye_H) \matr G_{0} (\matr I \otimes \Aye_H)}_{\text{$0$th step}}.
        $$
  \item $\Cc_2 = \lift_{W_2(s)}(\Cc_1)$, where $W_2(s)$ is the set of length-$(s-1)$ walks over the vertex set $W_1(s)$ (with the latter being the set of length-$(s-1)$ walks on the replacement product graph as mentioned above).
  \item $\Cc_{i+1} = \lift_{W_{i+1}(s)}(\Cc_i)$, where $W_{i+1}(s)$ is the set of length-$(s-1)$ walks~\footnote{For simplicity we chose the number of vertices in all walks of the cascade to be $s$,
        but it could naturally be some $s_i \in \mathbb{N}$ depending on $i$.}
        over the vertex set $W_i(s)$.
        Similarly to the cascade of expander walks above,
        the lift can be thought of as being realized by taking walks using a suitable operator analogous to $\widehat{G}_i$. Since its description is more technical we postpone
        its definition (see~\cref{def:swap_operator}) to~\cref{sec:ta-shma:splittability} where it is actually used.
  \item $\Cc_{\ell}$ denotes the final code in the sequence, which will later be chosen so that its bias is at most $\epsilon$.
\end{itemize}

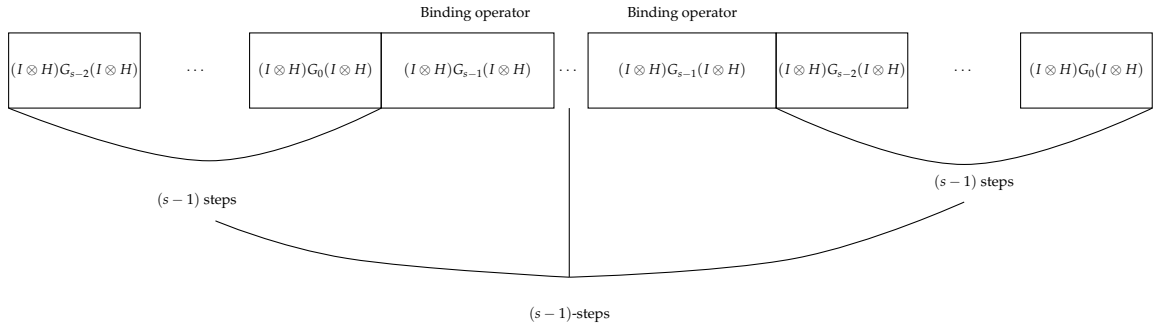
\begin{figure}[h!]
  \centering
  \begin{tikzpicture}[scale=0.5, every node/.style={scale=0.5}]]
\draw  (7,4.5) rectangle node {$(I \otimes H) G_{s-1} (I \otimes H)$ } (12,2.5);
\draw  (-8.4067,4.5) rectangle  node {$(I \otimes H) G_{s-2} (I \otimes H)$ }(-4.9067,2.5);
\draw  (-2,4.5) rectangle node {$(I \otimes H) G_{0} (I \otimes H)$} (1.5,2.5) node (v2) {};
\node at (-3.4067,3.5) {$\cdots$};
\draw  (1.5,4.5) rectangle node {$(I \otimes H) G_{s-1} (I \otimes H)$ } (6.0933,2.5) node (v3) {};
\draw  (12,4.5) rectangle  node {$(I \otimes H) G_{s-2} (I \otimes H)$ }(15.5,2.5);
\draw  (18.5,4.5) rectangle node {$(I \otimes H) G_{0} (I \otimes H)$} (22,2.5) node (v6) {};
\node at (17,3.5) {$\cdots$};
\node (v1) at (-8.4067,2.5) {};
\node at (-2,2.4616) {};
\draw  plot[smooth, tension=.7] coordinates {(v1)};
\draw  plot[smooth, tension=.7] coordinates {(v1) (-3.0736,1.0977) (v2)};
\node (v7) at (-2.9067,-0.5) {};
\node at (-3.4067,0.05) {$(s-1)$ steps};
\node (v5) at (12,2.5) {};
\draw  plot[smooth, tension=.7] coordinates {(v5) (17,1) (v6)};
\node (v9) at (6.5,2.5) {};
\node (v10) at (17,0) {};
\node at (17.2561,0.5208) {$(s-1)$ steps};
\node at (6.5,3.5) {$\cdots$};
\node (v8) at (6.5,-2) {};
\draw  plot[smooth, tension=.7] coordinates {(v7)};
\draw  plot[smooth, tension=.7] coordinates {(v7) (0.5933,-1.5) (v8)};
\draw  plot[smooth, tension=.7] coordinates {(v9) (6.5,0.5) (v8)};
\draw  plot[smooth, tension=.7] coordinates {(v8) (12.5,-1.5) (v10)};
\node at (6.5,-3) {$(s-1)$-steps};
\node at (4,5) {Binding operator};
\node at (9.5,5) {Binding operator};
\end{tikzpicture}
  \caption{Two levels of code cascading for Ta-Shma's construction
           involving codes $\Cc_0$, $\Cc_1$ and $\Cc_2$ (to make the
           notation compact we used $H$ to denote
           $\Aye_H$).}\label{fig:ta-shma_cascading}
\end{figure}

\section{Unique Decoding of Ta-Shma Codes}\label{sec:main_result}

We show how code cascading together with list decoding for each level
of the cascade allow us to obtain an efficient unique decoding
algorithm for Ta-Shma's construction.  We obtain a sequence of results
of increasing strength culminating in~\cref{theo:main} (which we
recall below for convenience). The approach is as follows: we use
several different instantiations of Ta-Shma's construction, each
yielding a value of $s$ (for the $s$-wide replacement product) and
expansion parameters for the family of outer and inner graphs, and
show how the list decoding framework can be invoked in the associated
cascade for each one.

\TheoMainUniqueDec*

In this section, we will fit these objects and tools together assuming
the parameters are chosen to achieve the required rates and the
conditions for applying the list decoding results are satisfied.  The
concrete instantiations of Ta-Shma codes are done
in~\cref{sec:ta-shma_param_basic}.  Establishing that the list
decoding framework can be applied to this construction is done
in~\cref{sec:satisfying_framework} after which the framework is
finally instantiated in~\cref{sec:instantiation_list_dec}.

Ta-Shma uses the direct sum lifting on an $s$-wide replacement product
graph to construct a family of $\epsilon$-balanced codes
$\Cc_{N,\epsilon,\beta}$ with rate $\Omega(\epsilon^{2+\beta})$ and
finds parameters for such codes to have the required bias and rate.
We will discuss unique decoding results for several versions of these
codes.  Throughout this section, we will use collections $W(k)$ which
will always be either the set of walks with $k=s$ vertices on an
$s$-wide replacement product graph (corresponding to the first level
of the code cascade), which we denote $W[0,s-1]$, or a set of walks
where the vertices are walks on a lower level of the code cascade.

\subsection{Unique Decoding via Code Cascading}

To perform unique decoding we will use the machinery of list decoding
from~\cref{theo:list_dec_hammer} (proven later
in~\cref{sec:instantiation_list_dec}), which relies on the list
decoding framework of~\cite{AJQST19}.  Proving that this framework can
be applied to Ta-Shma's construction is one of our technical
contributions.

\begin{theorem}[List decoding direct sum lifting]\label{theo:list_dec_hammer}
  Let $\eta_0 \in (0,1/4)$ be a constant, $\eta \in (0,\eta_0)$, and
  	$$k \geq k_0(\eta) \coloneqq \Theta(\log(1/\eta)).$$
  Suppose $\Cc \subseteq \F_2^n$ is an $\eta_0$-balanced linear code and $\Cc' = \dsum_{W(k)}(\Cc)$ is the direct sum lifting of $\Cc$ on a $\tau$-splittable collection of walks $W(k)$.
  There exists an absolute constant $K > 0$ such that if
  	$$\tau \leq \tau_0(\eta,k) \coloneqq \frac{\eta^{8}}{K \cdot k \cdot 2^{4k}},$$
  then the code $\Cc'$ is $\eta$-balanced and can be efficiently list decoded in the following sense:

  If $\tilde{y}$ is \lict{\sqrt{\eta}} to $\Cc'$, then we
         can compute the list
         $$
         \mathcal{L}(\tilde{y},\Cc,\Cc')\coloneqq \left\{(z,\dsum_{W(k)}(z)) \mid z \in \Cc, \Delta\parens*{\dsum_{W(k)}(z),\tilde{y}} \le \frac{1}{2} - \sqrt{\eta}\right\}
         $$
         in time
         $$
         n^{O(1/\tau_0(\eta,k)^4)} \cdot f(n),
         $$
         where $f(n)$ is the running time of a unique decoding algorithm for $\Cc$.
         Otherwise, we return $\mathcal{L}(\tilde{y},\Cc,\Cc')=\emptyset$ with the same running time of the preceding case.
\end{theorem}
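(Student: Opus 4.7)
The plan has four main steps.

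First, I would establish the $\eta$-balanced claim, which follows from parity sampling. The $\tau$-splittability of $W(k)$, in analogy with the expander walk analysis of \cref{theo:ta_shma_bias_simple}, implies that for any $z \in \F_2^n$ with $\bias(z) \leq \eta_0$ one has $\bias(\dsum_{W(k)}(z)) \leq (\eta_0 + O(\tau))^{\lfloor (k-1)/2 \rfloor}$, obtained by iteratively splitting a walk into two halves and bounding the resulting quadratic form by the second singular value of the associated bipartite operator (at most $\tau$) combined with a $\sign$-operator for $z$. For $k \geq k_0(\eta) = \Theta(\log(1/\eta))$ and $\tau \leq \tau_0(\eta,k)$, this quantity is at most $\eta$, and since $\Cc$ is linear and $\eta_0$-balanced this transfers to the pairwise bias of $\Cc'$.

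Second, I would recast list decoding as a $\maxkxor$ problem. Given the received word $\tilde y$, the task is to find all $z \in \Cc$ that maximize
\[
\mathrm{AGR}(z) \;\defeq\; \Ex{w \in W(k)}{\indicator{z_{w_1} + \cdots + z_{w_k} = \tilde y_w}},
\]
where any $z$ in the target list $\mathcal{L}(\tilde y,\Cc,\Cc')$ satisfies $\mathrm{AGR}(z) \geq 1/2 + \sqrt{\eta}$. I would then invoke the Sum-of-Squares rounding machinery of Alev et al.~\cite{AJQST19}, instantiated for $\tau$-splittable tuples. Concretely, one solves the SoS SDP at level $\Theta(1/\tau_0(\eta,k)^4)$; via the splittability-to-tensoriality reduction (to be proved in \cref{sec:satisfying_framework}) and the covering/propagation lemma of \cite{AJQST19}, the pseudoexpectation rounds to a polynomial-size list $\{\tilde z_1,\dots,\tilde z_L\} \subseteq \F_2^n$ with the guarantee that every codeword $z \in \Cc$ satisfying $\Delta(\dsum_{W(k)}(z),\tilde y) \leq 1/2 - \sqrt{\eta}$ is Hamming-close to some $\tilde z_i$ or its complement, well within the unique-decoding radius of $\Cc$ (which exists by assumption; this uses $\eta_0 < 1/4$ to ensure the rounding radius falls inside $(1-\eta_0)/4$).

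Third, for each $\tilde z_i$ (and $\overline{\tilde z_i}$), I would run the assumed unique decoder for $\Cc$, obtain a candidate $z$, and include it in the output iff a direct check verifies $\Delta(\dsum_{W(k)}(z),\tilde y) \leq 1/2 - \sqrt{\eta}$. If $\tilde y$ is not \lict{\sqrt{\eta}} to $\Cc'$, no candidate passes verification and we return $\emptyset$. The running time is dominated by the SDP solve, which costs $n^{O(1/\tau_0(\eta,k)^4)}$, plus $L = n^{O(1)}$ calls to the unique decoder, yielding the claimed $n^{O(1/\tau_0(\eta,k)^4)} \cdot f(n)$.

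The main obstacle is the second step: instantiating the \cite{AJQST19} rounding at the precise threshold $\tau_0(\eta,k) = \eta^{8}/(K \cdot k \cdot 2^{4k})$. This demands careful bookkeeping of (i) the splittability-implies-tensoriality reduction in a quantitative form, (ii) the dependence of the required SoS level on $k$ and on the slack $\sqrt{\eta}$ between the list-decoding radius and $1/2$ (the $2^{4k}$ factor arises from unioning/tensoring over $k$-tuple constraints, and $k$ enters linearly from the number of split positions), and (iii) arranging the covering radius output by the propagation lemma so that each true codeword lands \emph{inside} the unique-decoding radius of $\Cc$ (the $\eta^{8}$ scaling comes from chaining the $(\sqrt{\eta})^{O(1)}$ rounding loss with the $\eta_0$-gap needed to cross the base-code distance). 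All three pieces follow the blueprint of \cite{AJQST19, AJT19} but must be stated and verified in the form stated above, which is the content of \cref{sec:instantiation_list_dec}.
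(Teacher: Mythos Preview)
Your proposal is correct and follows essentially the same approach as the paper's proof in \cref{sec:instantiation_list_dec}: establish parity sampling from splittability to get the $\eta$-balanced property and the closeness guarantee needed for unique decoding, invoke the \cite{AJQST19} list decoding framework (\cref{theo:list_dec_framework}) via the splittability $\Rightarrow$ two-step tensoriality reduction of \cref{sec:tensoriality}, and track the SoS level $L = \Theta(k^4 2^{4k}/\eta^{32})$ to obtain the stated running time. One minor imprecision: the parity sampling bound $(\eta_0 + 2\tau)^{\lfloor (k-1)/2\rfloor}$ is obtained not by ``iteratively splitting the walk in half'' but by the standard two-steps-at-a-time parallel/perpendicular decomposition (\cref{claim:code_cascade_bias_1}, \cref{claim:code_cascade_bias_2}); the recursive splitting tree is used only in the tensoriality argument.
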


Note that the requirement on $k$ in the above theorem is necessary for
the lifted code $\Cc'$ to be $\eta$-balanced.  Splittability will
imply that the walk collection $W(k)$ is expanding, which gives us
parity sampling for large $k$.  Specifically, $k$ must be large enough
for $W(k)$ to be a $(1/2 + \eta_0/2, \eta)$-parity sampler.

Applying the list decoding tool above, we can perform unique decoding
in the regime of $\eta_0$, $\eta$, and $k$ being constant.  With these
choices the expansion required for splittability and the parity
sampling strength are only required to be constants.

\begin{lemma}[Decoding Step]\label{lemma:dec_step}
  Let $\eta_0 \in (0,1/4)$ and $\eta < \min\{\eta_0, 1/16\}$.
  If $W(k)$ is a walk collection on vertex set $[n]$ with $k \geq k_0(\eta)$ and splittability $\tau \leq \tau_0(\eta,k)$, where $k_0$ and $\tau_0$ are as in~\cref{theo:list_dec_hammer}, we have the following unique decoding property:

  If $\Cc \subseteq \mathbb{F}_2^n$ is an \text{$\eta_0$-balanced} linear code that can be uniquely decoded in time $f(n)$, then $\Cc' = \dsum_{W(k)}(\Cc)$ is an $\eta$-balanced code that can be uniquely decoded in time $n^{O(1/\tau_0(\eta,k)^4)} \cdot f(n)$.
\end{lemma}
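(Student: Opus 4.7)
The plan is to derive unique decoding of $\Cc'$ as an immediate corollary of the list decoding guarantee of~\cref{theo:list_dec_hammer}, using the standard observation (also emphasized in the proof overview) that a list decoding algorithm at a radius which \emph{contains} the unique decoding ball suffices for unique decoding. The fact that $\Cc'$ is $\eta$-balanced is already part of the conclusion of~\cref{theo:list_dec_hammer}, so nothing extra needs to be shown there.

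The first step is to verify the radius containment. Since $\Cc'$ is $\eta$-balanced, its distance is at least $1/2 - \eta/2$, so the unique decoding radius of $\Cc'$ is $1/4 - \eta/4$. On the other hand, \cref{theo:list_dec_hammer} list decodes up to radius $1/2 - \sqrt{\eta}$. The hypothesis $\eta < 1/16$ yields $\sqrt{\eta} < 1/4$, hence
\[
\tfrac{1}{2} - \sqrt{\eta} \;>\; \tfrac{1}{4} \;>\; \tfrac{1}{4} - \tfrac{\eta}{4},
\]
so any word $\tilde y$ within the unique decoding radius of $\Cc'$ is in particular \lict{\sqrt{\eta}} to $\Cc'$, and the true nearest codeword $y = \dsum_{W(k)}(z)$ appears in the list $\mathcal{L}(\tilde y, \Cc, \Cc')$ returned by~\cref{theo:list_dec_hammer}.

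The second step is algorithmic. Given $\tilde y$, invoke the list decoder from~\cref{theo:list_dec_hammer} with parameters $\eta_0$ and $\eta$, which requires exactly the hypotheses $k \geq k_0(\eta)$ and $\tau \leq \tau_0(\eta,k)$ that we assumed. This produces $\mathcal{L}(\tilde y,\Cc,\Cc')$ in time $n^{O(1/\tau_0(\eta,k)^4)} \cdot f(n)$, internally calling the unique decoder for $\Cc$. Then scan the (polynomially bounded) list, compute $\Delta(\dsum_{W(k)}(z), \tilde y)$ for each $(z, \dsum_{W(k)}(z)) \in \mathcal{L}(\tilde y,\Cc,\Cc')$, and output the unique minimizer. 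By the first step, if $\tilde y$ is within the unique decoding radius of $\Cc'$ the true codeword lies in the list, and by the distance of $\Cc'$ it is the unique element of $\mathcal{L}(\tilde y,\Cc,\Cc')$ at distance less than $1/4 - \eta/4$, so the minimizer is correct. If instead $\tilde y$ is farther, then either the list is empty or no element lies within the unique decoding radius, and we may declare failure. The scan adds only a polynomial overhead, so the overall running time remains $n^{O(1/\tau_0(\eta,k)^4)} \cdot f(n)$.

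I do not anticipate a serious obstacle here: the work of actually establishing list decoding, parity sampling, and the splittability-to-tensoriality passage is absorbed into~\cref{theo:list_dec_hammer}. The only thing to be careful about is the quantitative relationship between $\eta$ and the list decoding radius $\sqrt{\eta}$, which is exactly what the hypothesis $\eta < 1/16$ is designed to guarantee; the hypothesis $\eta < \eta_0$ simply matches the setting of~\cref{theo:list_dec_hammer}.
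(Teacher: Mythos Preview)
Your proposal is correct and follows essentially the same approach as the paper: invoke \cref{theo:list_dec_hammer}, use $\eta < 1/16$ to ensure the list decoding radius $1/2-\sqrt{\eta}$ exceeds the unique decoding radius $1/4 > \Delta(\Cc')/2$, and return the closest element of the list. Your write-up is slightly more explicit about the scan and failure case, but the argument is the same.
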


\begin{proof}
Using \cref{theo:list_dec_hammer}, we can list decode $\Cc'$ up to a radius of $1/2 - \sqrt{\eta}$ for any $\eta$ if we have the appropriate parameters $k$ and $\tau$.
Let $\tilde y \in \F_2^{W(k)}$ be a received word within the unique decoding radius of $\Cc'$.
To perform unique decoding, we simply run the list decoding algorithm on $\tilde y$ and return the codeword on the resulting list which is closest to $\tilde y$; this will yield the correct result as long as the list decoding radius is larger than the unique decoding radius.
It suffices to have $1/2 - \sqrt{\eta} > 1/4 \geq \Delta(\Cc')/2$.
We choose parameters as follows:

\begin{enumerate}
\item Take $\eta < 1/16$ to ensure $1/2 - \sqrt{\eta} > 1/4$.
\item Let $k_0 =  \Theta(\log(1/\eta))$ be the smallest integer satisfying the assumption in~\cref{theo:list_dec_hammer} with the chosen $\eta$.
Take $k \geq k_0$.
\item Take $\tau \leq \tau_0(\eta,k) = \eta^8 / (K \cdot k \cdot 2^{4k})$.
\end{enumerate}

Note that $k$ and $\tau$ satisfy the conditions of~\cref{theo:list_dec_hammer}, so we can use this theorem to list decode a received word $\tilde y$ in time $n^{O(1/\tau_0(\eta,k)^4)} \cdot f(n)$.
To unique decode, we return the closest $y$ on the list to $\tilde y$ (or failure if the list is empty).
\end{proof}

Iteratively using the decoding step given by~\cref{lemma:dec_step}
above, we obtain unique decodability of all codes in a cascade
(under suitable assumptions).

\begin{lemma}[Code Cascade Decoding]\label{lemma:code_cascading}
  Let $\eta_0 \in (0,1/4)$ and $\eta < \min\{\eta_0, 1/16\}$.
  Suppose $\Cc_0 \subseteq \mathbb{F}_2^{n_0},\Cc_1 \subseteq \mathbb{F}_2^{n_1}, \dots,\Cc_{\ell} \subseteq \mathbb{F}_2^{n_{\ell}}$ is a code cascade where $\Cc_0$ is an $\eta_0$-balanced linear code that can be uniquely decoded in time $g(n_0)$.

  If for every $i \in [\ell]$ we have that $\Cc_{i}$ is obtained from $\Cc_{i-1}$ by a $\tau_i$-splittable walk collection $W_i(k_i)$ on vertex set $[n_{i-1}]$
  with $k_i \ge k_0(\eta)$ and $\tau_i \le \tau_0(\eta,k_i)$, where $k_0$ and $\tau_0$ are as in~\cref{theo:list_dec_hammer}, then $\Cc_{\ell}$ is uniquely decodable in time
  $$
  g(n_0) \cdot \prod_{i=1}^{\ell} n_{i-1}^{O(1/\tau_0(\eta,k_i)^4)}.
  $$
\end{lemma}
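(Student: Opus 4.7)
The plan is to induct on the cascade depth $i$, applying Lemma~\ref{lemma:dec_step} (the Decoding Step) at each level and using the unique decoder for $\Cc_{i-1}$ constructed in the previous round as the base decoder required by the lemma for $\Cc_i$. The base case $i=0$ is handed to us: $\Cc_0$ is $\eta_0$-balanced and admits unique decoding in time $g(n_0)$.

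\textbf{Inductive step.} For the inductive step, I will verify that the hypotheses of Lemma~\ref{lemma:dec_step} at level $i$ are met. Linearity of $\Cc_{i-1}$ is preserved because the direct sum lifting of a linear code is linear. For $i \geq 2$, the lemma applied at level $i-1$ produces an $\eta$-balanced output, while for $i=1$ we have $\Cc_0$ $\eta_0$-balanced by assumption; since $\eta<\eta_0$, in either case $\Cc_{i-1}$ is $\eta_0$-balanced, as required. The splittability bound $\tau_i\le\tau_0(\eta,k_i)$ and walk length $k_i\ge k_0(\eta)$ are hypotheses of the present lemma. Lemma~\ref{lemma:dec_step} therefore yields a unique decoder for $\Cc_i=\dsum_{W_i(k_i)}(\Cc_{i-1})$ running in time
\[
n_{i-1}^{O(1/\tau_0(\eta,k_i)^4)}\cdot T_{i-1},
\]
where $T_{i-1}$ denotes the running time of the unique decoder for $\Cc_{i-1}$, and also guarantees that $\Cc_i$ is $\eta$-balanced so that the induction continues.

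\textbf{Bookkeeping and obstacle.} Unrolling the recursion across all $\ell$ levels telescopes the running time to
\[
g(n_0)\cdot\prod_{i=1}^{\ell}n_{i-1}^{O(1/\tau_0(\eta,k_i)^4)},
\]
which is exactly the bound claimed. There is no real technical obstacle at this stage beyond Lemma~\ref{lemma:dec_step} itself: the cascade is set up precisely so that this ``one-shot'' decoding step composes, and the only subtlety is ensuring that the $\eta$-balance of the lifted code at each level is small enough to serve as input to the next iteration, which is immediate from $\eta<\eta_0$. The genuine technical work happens upstream in establishing Lemma~\ref{lemma:dec_step}, which in turn relies on being able to list decode to radius $1/2-\sqrt{\eta}$ (a constant-sized margin above the unique decoding radius $\Delta(\Cc_i)/2\le 1/4$) so that the splittability and parity-sampling parameters depend only on the constant $\eta$, and not on the ultimate bias $\epsilon$ that $\Cc_{\ell}$ is meant to achieve.
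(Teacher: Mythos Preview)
Your proposal is correct and matches the paper's own proof: induct on $i$, apply Lemma~\ref{lemma:dec_step} at each level, and observe that the $\eta$-balance guaranteed at level $i-1$ (with $\eta<\eta_0$) feeds back into the $\eta_0$-balanced hypothesis for level $i$. The paper's proof is a two-sentence version of exactly this argument.
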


\begin{proof}
  Induct on $i \in [\ell]$ applying~\cref{lemma:dec_step} as the induction step.
  The code $\Cc_i$ produced during each step will have bias at most $\eta < \eta_0$, so the hypothesis of~\cref{lemma:dec_step} will be met at each level of the cascade.
\end{proof}

We are almost ready to prove our first main theorem establishing decodability close to the Gilbert--Varshamov bound.
We will need parameters for an instantiation of Ta-Shma's code that achieves the desired distance and rate (which will be developed in~\cref{sec:round_1}) and a lemma relating splittability to the spectral properties of the graphs used in the construction (to be proven in~\cref{sec:ta-shma:splittability}).

\begin{lemma}[Ta-Shma Codes I]\label{lemma:ta_shma_construction_I}
 For any $\beta > 0$, there are infinitely many values of $\epsilon \in (0,1/2)$ (with 0 as an accumulation point) such that for infinitely many values of $N \in \N$, there are explicit binary Ta-Shma codes $\Cc_{N,\epsilon,\beta} \subseteq \F_2^N$ with
 \begin{enumerate}[(i)]
   \item distance at least $1/2 - \epsilon/2$ (actually $\epsilon$-balanced), and
   \item rate $\Omega(\epsilon^{2 + \beta})$.
 \end{enumerate}
 Furthermore, $\Cc_{N,\epsilon,\beta}$ is the direct sum lifting of a base code $\Cc_0 \subseteq \F_2^{n_0}$ using the collection of walks $W[0,t-1]$ on the $s$-wide replacement product of two graphs $G$ and $H$, with the following parameters:
 
 \begin{itemize}
	\item $s \geq s_0 \coloneqq \max\{128, 26/\beta\}$.
	\item The inner graph $H$ is a regular graph with $\sigma_2(H) \leq \lambda_2$, where $\lambda_2 = (16s^3 \log s)/s^{2s^2}$.
	\item The outer graph $G$ is a regular graph with $\sigma_2(G) \leq \lambda_1$, where $\lambda_1 = \lambda_2^4/6$.
	\item The base code $\Cc_0$ is unique decodable in time $n_0^{O(1)}$ and has bias $\epsilon_0 \leq \lambda_2^4/3$.
	\item The number of vertices $t$ in the walks satisfies $\lambda_2^{2(1-5/s)(1-1/s)(t-1)} \leq \epsilon$.
 \end{itemize}
\end{lemma}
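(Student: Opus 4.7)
The plan is to instantiate the tweaked Ta-Shma construction with explicit components, then verify distance via \cref{cor:tweaked_ta-shma_spectral_analysis} and rate by a direct count of the block length. First I would fix the three building blocks. An explicit base code $\Cc_0 \subseteq \F_2^{n_0}$ of constant rate, bias at most $\epsilon_0 \leq \lambda_2^4/3$, and polynomial-time unique decoding can be obtained from standard constructions (for instance a suitable concatenation of Reed--Solomon with a short inner binary code, decoded via GMD), noting that $\lambda_2^4/3$ is a fixed small constant once $s$ is fixed. The outer graph $G$ is chosen as an explicit near-Ramanujan Cayley graph of degree $d_1 = O(\lambda_1^{-2})$ with $\sigma_2(G) \leq \lambda_1 = \lambda_2^4/6$ and the locally invertible rotation map coming from its Cayley structure. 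The inner graph $H$ of degree $d_2$ on $d_1^s$ vertices is chosen so that $H^2$ is a Cayley graph on $\F_2^{s\log d_1}$ with $\sigma_2(H) \leq \lambda_2$; such an $H$ exists from an explicit small-bias set and satisfies $d_2 = O(\lambda_2^{-2})$. The code $\Cc_{N,\epsilon,\beta}$ is then $\dsum_{W[0,t-1]}(\Cc_0)$, the direct sum lift of $\Cc_0$ along all $(t-1)$-step walks on the tweaked $s$-wide replacement product of $G$ and $H$.

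For distance, I would invoke \cref{cor:tweaked_ta-shma_spectral_analysis}: its hypothesis $\epsilon_0 + 2\theta + 2\sigma_2(G) \leq \sigma_2(H)^4$ holds because $\epsilon_0 \leq \lambda_2^4/3$, $\sigma_2(G) \leq \lambda_2^4/6$, and $\theta$ is the standard negligible number-theoretic error from the outer Cayley construction. The corollary then bounds the norm of a single $s$-factor block by $3(s-1)^2 \lambda_2^{2(s-4)}$, and by the reasoning of \cref{sec:tweaked_parity_sampling} the bias of the lift equals a bilinear form whose magnitude is at most this quantity raised to the $\lfloor (t-1)/s \rfloor$ power. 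Since $\lambda_2 \leq (16 s^3 \log s)/s^{2s^2}$ forces $\log(1/\lambda_2) = \Theta(s^2 \log s)$, the polynomial-in-$s$ prefactor $3(s-1)^2$ costs only $\lambda_2^{-O(1/s^2)}$ in the base, and the floor adjusts $\lfloor (t-1)/s \rfloor$ down by a $(1-1/s)$ factor to leading order. Collecting these corrections yields the clean bound $\bias(\Cc_{N,\epsilon,\beta}) \leq \lambda_2^{2(1-5/s)(1-1/s)(t-1)} \leq \epsilon$ by the stated walk-length hypothesis, so $\Cc_{N,\epsilon,\beta}$ is $\epsilon$-balanced.

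For rate, the block length is $N = n_0 \cdot d_1^s \cdot d_2^{2(t-1)}$, since each walk step uses the $d_2^2$-regular operator $(\matr I \otimes \Aye_H) \matr G_i (\matr I \otimes \Aye_H)$, and hence the rate equals $R_0/(d_1^s d_2^{2(t-1)})$ because the lift is linear and injective. Plugging in $d_2 = O(\lambda_2^{-2})$ and $d_1 = O(\lambda_2^{-8})$ and choosing $t$ as the smallest integer satisfying the bias condition, the dominant factor contributes $d_2^{2(t-1)} = \epsilon^{-2/((1-5/s)(1-1/s))} = \epsilon^{-(2 + 12/s + O(1/s^2))}$, while $d_1^s = \lambda_2^{-8s}$ contributes the subleading $\epsilon^{-4s/((1-5/s)(1-1/s)(t-1))} = \epsilon^{-o(1)}$ as $\epsilon \to 0$. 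The requirement $s \geq \max\{128, 26/\beta\}$ ensures $12/s + O(1/s^2) \leq \beta$, yielding rate $\Omega(\epsilon^{2+\beta})$.

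The hard part is the quantitative bookkeeping in the second paragraph: showing that the $3(s-1)^2$ prefactor from \cref{cor:tweaked_ta-shma_spectral_analysis}, the floor loss in $\lfloor (t-1)/s \rfloor$, and the $d_1^s$ starting-vertex randomness all combine into at most an $O(1/s^2)$ correction to the exponent, leveraging the aggressive choice $\lambda_2 = \Theta((s^3 \log s)/s^{2s^2})$. Beyond this, the analysis follows Ta-Shma's original argument, with the only quantitative weakening coming from the stronger hypothesis $\sigma_2(H)^4$ (rather than $\sigma_2(H)^2$) required by \cref{cor:tweaked_ta-shma_spectral_analysis}; this is what forces the additional constraint $s \geq 128$ alongside $s \geq 26/\beta$, while leaving the final rate/distance trade-off essentially unchanged.
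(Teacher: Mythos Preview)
Your proposal follows essentially the same route as the paper's proof in Section~\ref{sec:round_1}: fix the explicit components $G$, $H$, $\Cc_0$, bound the bias via \cref{cor:tweaked_ta-shma_spectral_analysis}, and count the block length $N = n_0 \cdot d_1^s \cdot d_2^{2(t-1)}$ for the rate. The structure and key ideas match.

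A few quantitative points where your sketch is looser than the paper and which explain why the lemma needs $s \geq 26/\beta$ rather than your implied $s \gtrsim 12/\beta$: (i) the relation $d_2 = O(\lambda_2^{-2})$ is off by the factor $b_2^2 = (16 s^3 \log s)^2$, since $\lambda_2 = b_2/\sqrt{d_2}$; the paper absorbs this via $d_2^{1-2/s} \leq 1/\lambda_2^2$, which costs an extra $(1-2/s)^{-1}$ in the exponent; (ii) choosing $t$ as the \emph{smallest} integer satisfying the bias condition introduces an additional $(1-1/s)$ slack (Remark~\ref{remark:choice_of_t}); (iii) the $d_1^s$ term is not merely $\epsilon^{-o(1)}$ in the regime of interest but contributes a concrete $(1+2/s)$ factor once you use $t-1 \geq s^2$ (Claim~\ref{claim:t_lower_bound}, which in turn needs $\epsilon$ small enough relative to $s$). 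Collecting all four $(1+O(1/s))$ factors gives the paper's $2+26/s$ in the exponent, hence the threshold $s \geq 26/\beta$. None of these affect the correctness of your approach, only the bookkeeping.
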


\begin{lemma}\label{lemma:splittability_preview}
  Let $W(k)$ be either the collection $W[0,s-1]$ of walks of length $s$ on the $s$-wide replacement product with outer graph $G$ and inner graph $H$ or the collection of walks over the vertex set $W[0,r]$, where $r \equiv -1 \pmod s$.
  Then $W(k)$ is $\tau$-splittable with $\tau = \sigma_2(G) + 2 \sigma_2(H) + \sigma_2(H)^2$.
\end{lemma}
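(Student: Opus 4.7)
My plan is to prove both cases by induction on the walk length $k$, reducing the singular-value bound on the splitting matrix $\Ess$ to a spectral bound on a single step of the walk, and then invoking \cref{fact:zig_zag_bound}. The base case $k=1$ is immediate since \cref{def:splittability} (the informal definition of splittability given in \cref{sec:strategy}) treats single-coordinate tuples as trivially splittable.

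For the first case, where $W(k) = W[0,s-1]$ consists of walks of length $s{-}1$ on the tweaked $s$-wide replacement product, I choose a split point $k' \in \{0,\dots,k-2\}$ (any choice works for the recursion). The prefix collection $W[0,k']$ consists of valid walks of length $k'$ and the suffix $W[k'+1,k-1]$ of valid walks of length $k-2-k'$, where ``valid'' means each consecutive pair is connected by the appropriate step operator $(\matr I \otimes \Aye_H)\matr G_{i \bmod s}(\matr I \otimes \Aye_H)$. The splitting matrix $\Ess \in \R^{W[0,k']\times W[k'+1,k-1]}$ is $\Ess(w,w')=\indicator{ww'\in W}$, which by construction equals $1$ exactly when the last vertex of $w$ and the first vertex of $w'$ are joined by the $(k'{+}1)$-th step operator $\matr T_{k'} \defeq (\matr I \otimes \Aye_H)\matr G_{k' \bmod s}(\matr I \otimes \Aye_H)$. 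Since the replacement product is regular at every level, both $W[0,k']$ and $W[k'+1,k-1]$ decompose as the set of starting vertices $V(G)\times V(H)$ times uniform ``tail'' choices of the remaining intra-cloud steps, and under the uniform measures on these collections the normalized $\Ess$ acts on the common starting/ending vertex $V(G)\times V(H)$ exactly as $\matr T_{k'}$ does (the remaining coordinates are independent and uniform, contributing only singular values equal to $1$ on the all-ones vector). Consequently $\sigma_2(\Ess) = \sigma_2(\matr T_{k'})$, and \cref{fact:zig_zag_bound} gives $\sigma_2(\matr T_{k'}) \leq \sigma_2(G) + 2\sigma_2(H) + \sigma_2(H)^2 = \tau$. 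The two halves $W[0,k']$ and $W[k'+1,k-1]$ are again collections of walks of the same form, so induction on length completes the argument.

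For the second case, the collection is a set of walks of length $k{-}1$ whose vertices are themselves walks from $W[0,r]$ with $r \equiv -1 \pmod s$, and two super-vertices are adjacent when the endpoint of the first walk is connected to the start of the second by the next step of the product. Because $r+1 \equiv 0 \pmod s$, this connecting step is $(\matr I \otimes \Aye_H)\matr G_{0}(\matr I \otimes \Aye_H)$. The same splitting argument applies: picking any split point $k'$ and forming $\Ess$, the compatibility condition factors through a single application of this step operator on the shared $V(G)\times V(H)$ coordinate, with the remaining ``tail'' coordinates contributing trivially. Thus $\sigma_2(\Ess) \leq \sigma_2(G) + 2\sigma_2(H) + \sigma_2(H)^2$ again by \cref{fact:zig_zag_bound}, and recursion on shorter walk lengths finishes the induction (once walks become length $1$, they are single vertices and splittability is trivial).

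The main obstacle I anticipate is the careful bookkeeping that reduces $\sigma_2(\Ess)$ to $\sigma_2(\matr T_{k'})$: one must verify that the ``tail'' coordinates really do factor out cleanly under the uniform product measure, so that $\Ess$ (after normalization) is spectrally equivalent to $\matr T_{k'}$ tensored with identity/all-ones blocks rather than something with a worse spectrum. This requires using regularity of the $s$-wide replacement product (equivalently, that every prefix/suffix walk extends to the same number of full walks), which holds here because $G$ and $H$ are regular. Everything else is a routine induction combined with the already-stated zig-zag bound.
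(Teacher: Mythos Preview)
Your proposal is correct and follows essentially the same route as the paper: the paper makes your ``tail coordinates factor out'' observation precise by showing (in \cref{lemma:swap_matrix_rep}) that the split operator equals $((\matr I\otimes\Aye_H)\matr G_{k_2}(\matr I\otimes\Aye_H))\otimes \Jay/d_2^{2(k_3-k_2-1)}$ up to row/column reordering, deduces $\sigma_2(\tswap{k_1}{k_2}{k_3})=\sigma_2((\matr I\otimes\Aye_H)\matr G_{k_2}(\matr I\otimes\Aye_H))$, and then applies \cref{fact:zig_zag_bound}. One small correction: in the second case, since $r\equiv -1\pmod s$ the binding step uses $\matr G_{s-1}$, not $\matr G_0$ (cf.\ \cref{sec:code_cascading_ta_shma}); this does not affect your argument since the zig-zag bound is uniform in $i$.
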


The statement of this first decoding theorem is more technical than~\cref{theo:main}, but it will be easier to prove while the latter will build on this theorem with a more careful tuning of parameters.

\begin{theorem}[Main I]\label{theo:main_1}
  For every $\beta > 0$, there are infinitely many values $\epsilon \in (0,1/2)$ (with $0$ an accumulation point) such that
  for infinitely many values of $N \in \mathbb{N}$ there are explicit binary linear Ta-Shma codes
  $\Cc_{N,\epsilon,\beta} \subseteq \mathbb{F}_2^N$ with
  \begin{enumerate}[(i)]
   \item distance at least $1/2 - \epsilon/2$ (actually $\epsilon$-balanced),
   \item rate $\Omega(\epsilon^{2 + \beta})$, and
   \item a unique decoding algorithm with running time $N^{O_{\beta}(\log(\log(1/\epsilon)))}$.
  \end{enumerate}
\end{theorem}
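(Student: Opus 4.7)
The plan is to invoke \cref{lemma:ta_shma_construction_I} for the existence of Ta-Shma codes $\Cc_{N,\epsilon,\beta}$ with the required distance and rate, and then to decode them by viewing $\Cc_{N,\epsilon,\beta}$ as the top of a code cascade and applying the Code Cascade Decoding lemma (\cref{lemma:code_cascading}). The key point is that the parameters furnished by \cref{lemma:ta_shma_construction_I}, though very restrictive for the rate, must also satisfy the hypotheses of the list decoding machinery of \cref{theo:list_dec_hammer}/\cref{lemma:dec_step}.

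First I set $s = \max\{128, 26/\beta\}$ and take $G$, $H$, and the base code $\Cc_0$ as in \cref{lemma:ta_shma_construction_I}, so that $\sigma_2(H) \le \lambda_2 = (16 s^3 \log s)/s^{2s^2}$, $\sigma_2(G) \le \lambda_2^4/6$, and the bias of $\Cc_0$ is at most $\lambda_2^4/3$. I then form the cascade $\Cc_0, \Cc_1, \ldots, \Cc_\ell$ exactly as in \cref{sec:code_cascading_ta_shma}: $\Cc_1$ is the $s$-vertex walk lift of $\Cc_0$ on the $s$-wide replacement product, and each $\Cc_i = \dsum_{W_i(s)}(\Cc_{i-1})$ uses $s$-vertex walks on the graph of walks for the previous level, with the binding operator $(\matr I \otimes \Aye_H) \matr G_{s-1} (\matr I \otimes \Aye_H)$ providing the connections between them. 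Because the total walk length $t$ on $G$ only has to satisfy $\lambda_2^{2(1-5/s)(1-1/s)(t-1)} \le \epsilon$ and $\lambda_2$ depends only on $s = s(\beta)$, we have $t = \Theta_\beta(\log(1/\epsilon))$, and hence $\ell = \lceil \log_s t \rceil = O_\beta(\log\log(1/\epsilon))$.

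Next I verify the hypotheses of \cref{lemma:code_cascading} at each level. By \cref{lemma:splittability_preview}, every $W_i(s)$ is $\tau$-splittable with $\tau \le \sigma_2(G) + 2\sigma_2(H) + \sigma_2(H)^2 \le 4\lambda_2$. Choose $\eta$ to be a constant with $\eta < \min\{\lambda_2^4/3, 1/16\}$ (e.g., $\eta = \lambda_2^4/4$); both $\eta_0 = \lambda_2^4/3$ and $\eta$ then depend only on $\beta$. For $s$ large enough (which we ensure by $s \ge 128$), the super-polynomial smallness of $\lambda_2$ guarantees $4\lambda_2 \le \tau_0(\eta, s) = \eta^8/(K\, s\, 2^{4s})$ and $s \ge k_0(\eta) = \Theta(\log(1/\eta))$, since $\lambda_2 \le s^{-s^2}$ dominates the $2^{-4s}$ in $\tau_0$. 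Bias amplification at each level is handled automatically by the parity-sampling statement inside \cref{theo:list_dec_hammer}: the parameters are chosen so that each $\Cc_i$ is $\eta$-balanced for $i \ge 1$, keeping the hypotheses of \cref{lemma:dec_step} satisfied all the way up to $\Cc_\ell$.

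Finally, \cref{lemma:code_cascading} yields a unique decoding algorithm for $\Cc_\ell = \Cc_{N,\epsilon,\beta}$ running in time $g(n_0) \cdot \prod_{i=1}^\ell n_{i-1}^{O(1/\tau_0(\eta,s)^4)}$, where $g(n_0) = n_0^{O(1)}$ is the base-code decoding time. Since $\eta$, $s$, and $\tau_0(\eta,s)$ all depend only on $\beta$, the exponent at each level is $O_\beta(1)$; using the crude bound $n_{i-1} \le N$ for all $i$, the total running time is $N^{O_\beta(\ell)} = N^{O_\beta(\log\log(1/\epsilon))}$, as claimed. The main obstacle in the argument is precisely the parameter compatibility check in the previous paragraph: the value of $\lambda_2$ forced on us by \cref{lemma:ta_shma_construction_I} to achieve rate $\Omega(\epsilon^{2+\beta})$ must simultaneously be small enough to serve as a constant-quality splittability parameter for every level of the cascade, which is possible only because $s = s(\beta)$ makes both quantities depend on the same large constant and the super-exponential decay $\lambda_2 \le s^{-s^2}$ beats the exponential $2^{-4s}$ appearing in $\tau_0(\eta,s)$.
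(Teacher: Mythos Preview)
Your overall strategy matches the paper's: instantiate \cref{lemma:ta_shma_construction_I}, build the cascade of \cref{sec:code_cascading_ta_shma}, verify splittability via \cref{lemma:splittability_preview}, and apply \cref{lemma:code_cascading}. However, your choice of $\eta_0 = \lambda_2^4/3$ and $\eta = \lambda_2^4/4$ breaks the parameter checks you claim to verify.

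The requirement $k \ge k_0(\eta) = \Theta(\log(1/\eta))$ with $k = s$ becomes $s \ge \Theta(\log(1/\lambda_2^4)) = \Theta(s^2\log s)$, since $\lambda_2 \le s^{-s^2}$; this is impossible. Likewise, $\tau_0(\eta,s) = \eta^8/(K s\, 2^{4s})$ is now of order $\lambda_2^{32}/(K s\,2^{4s}) \approx s^{-64 s^2}$, which is \emph{far smaller} than the actual splittability $\tau \approx 4\lambda_2 \approx 4 s^{-2s^2}$, so the inequality $\tau \le \tau_0(\eta,s)$ fails badly. In short, tying $\eta$ to $\lambda_2$ makes the list-decoding requirements blow up faster than $\lambda_2$ can shrink.

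The paper avoids this by decoupling: it fixes $\eta_0 = 1/10$ and $\eta = 1/100$ as absolute constants (the base code, with bias $\epsilon_0 \le \lambda_2^4/3$, is automatically $\eta_0$-balanced since $\epsilon_0 \ll 1/10$). Then $k_0(\eta)$ is an absolute constant and one simply takes $s$ larger than both $k_0(\eta)$ and $s_0$; the splittability check reduces to $4\lambda_2 \le \eta^8/(K\,s^2\,2^{4s^2})$, which does hold because $\lambda_2 \le s^{-s^2}$ decays super-exponentially while the right side is only exponentially small in $s$. So your identification of the ``main obstacle'' is exactly right, but your particular resolution of it does not go through; taking $\eta_0,\eta$ to be absolute constants (and then enlarging $s$ accordingly) is what makes the argument work.
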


\begin{proof}
    We proceed as follows:
    \begin{enumerate}
      \item Let $\eta_0 = 1/10$ and $\eta = 1/100$ (these choices are arbitrary; we only need $\eta_0 < 1/4$, $\eta < 1/16$, and $\eta < \eta_0$).
      Let $k_0 = k_0(\eta)$ be the constant from~\cref{theo:list_dec_hammer} with this value of $\eta$.
      
      \item Given $\beta > 0$, \cref{lemma:ta_shma_construction_I} provides a value $s_0$ such that the direct sum lifting on the $s$-wide replacement product with $s \ge s_0$ can achieve a rate of $\Omega(\epsilon^{2+\beta})$ for infinitely many $\epsilon \in (0,1/2)$.
      Choose $s$ to be an integer larger than both $k_0$ and $s_0$ that also satisfies
        \begin{equation}
        	s^2 \cdot \left(\frac{s}{16}\right)^{-s^2} \leq \frac{\eta^8}{4K},\label{eq:splittability_s_bound}
        \end{equation}
      where $K$ is the constant from~\cref{theo:list_dec_hammer}.
      
      \item Use~\cref{lemma:ta_shma_construction_I} with this value of $s$ to obtain graphs $G$ and $H$ and a base code $\Cc_0$ having the specified parameters $\lambda_1$, $\lambda_2$, $\epsilon_0$, and $t$, with the additional requirement that $t = s^{\ell}$ for some integer $\ell$.
      These parameter choices ensure that the resulting code $\Cc_{N,\epsilon,\beta}$ has the desired distance and rate.
      Since $s \geq 128$, we have $\lambda_2 = (16s^3 \log s)/s^{2s^2} \leq s^{-s^2}$.
      From the choice of $t$ satisfying $\lambda_2^{2(1-5/s)(1-1/s)(t-1)} \leq \epsilon$, we deduce that $\ell = O(\log(\log(1/\epsilon)))$.
      Note also that the bias $\epsilon_0$ of the code $\Cc_0$ is smaller than $\eta_0$.
      
      \item Create a code cascade with $\ell$ levels using the $s$-wide replacement product of the graphs $G$ and $H$ as in~\cref{sec:code_cascading_ta_shma}, starting with $\Cc_0$ and ending with the final code $\Cc_{\ell} = \Cc_{N,\epsilon,\beta}$.
      As the total number of vertices in a walk is $t = s^{\ell}$, each level of the code cascade will use walks with $s$ vertices.
      Let $\Cc_0,\Cc_1,\dots,\Cc_{\ell}$ be the sequence of codes in this cascade.
            
      \item In order to satisfy the splittability requirement of~\cref{lemma:code_cascading}, the walk collection $W_i(s)$ at each level of the code cascade must be $\tau$-splittable, where $\tau \leq \tau_0(\eta, s^2)$.
      (We use $k=s^2$ instead of $k=s$ in the requirement for a technical reason that will be clear in~\cref{sec:careful_analysis_continum_for_beta}.)
      The bounds on the singular values of $G$ and $H$ and~\cref{lemma:splittability_preview} ensure that
            $$\tau =  \sigma_2(G) + 2 \sigma_2(H) + \sigma_2(H)^2 \leq 4\lambda_2 \leq 4s^{-s^2},$$
      which is smaller than $\tau_0(\eta, s^2) = \eta^8/(K \cdot s^2 \cdot 2^{4s^2})$ by~\cref{eq:splittability_s_bound}
            
      \item As all hypotheses of~\cref{lemma:code_cascading} are satisfied by the code cascade, we apply it to conclude that $\Cc_{N,\epsilon,\beta}$ is uniquely decodable in time
            $$
            g(n_0) \cdot \prod_{i=1}^{\ell} n_{i-1}^{O(1/\tau_0(\eta,s)^4)} \le N^{O(1)} \cdot \prod_{i=1}^{\ell} N^{O_{\beta}(1)} = N^{O_{\beta}(\log(\log(1/\epsilon)))},
            $$
            where we use that $\Cc_0$ is uniquely decodable in time $n_0^{O(1)}$, $1/\tau_0(\eta,s) = 2^{O(1/\beta)}$, $n_{i-1} < n_{\ell} = N$ for every $i \in [\ell]$, and $\ell = O(\log(\log(1/\epsilon)))$.
    \end{enumerate}
\end{proof}

In the code cascade constructed in \cref{theo:main_1}, the final number of vertices in a walk is $t = s^{\ell}$, where $s$ is a sufficiently large constant
that does not depend on $\epsilon$.
The limited choices for $t$ place some restrictions on the values of the final bias $\epsilon$ that can be achieved.
To achieve any bias $\epsilon$ for $\Cc_{\ell}$ we need to choose the parameters more carefully, which will be done in~\cref{sec:careful_analysis_continum_for_beta} to yield our next main result.

\begin{theorem}[Main II]\label{theo:main_2}
  For every $\beta > 0$ and every $\epsilon > 0$ with $\beta$ and $\epsilon$ sufficiently
  small, there are explicit binary linear Ta-Shma codes $\Cc_{N,\epsilon,\beta} \subseteq \mathbb{F}_2^N$
  for infinitely many values $N \in \mathbb{N}$ with
  \begin{enumerate}[(i)]
   \item distance at least $1/2 - \epsilon/2$ (actually $\epsilon$-balanced),
   \item rate $\Omega(\epsilon^{2 + \beta})$, and
   \item a unique decoding algorithm with running time $N^{O_{\beta}(\log(\log(1/\epsilon)))}$.
  \end{enumerate}
\end{theorem}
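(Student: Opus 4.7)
The plan is to extend the proof of~\cref{theo:main_1} so that it handles \emph{any} sufficiently small $\epsilon > 0$, rather than only those $\epsilon$ for which the parity sampling requirement $\lambda_2^{2(1-5/s)(1-1/s)(t-1)} \leq \epsilon$ can be met with $t$ equal to a power of $s$. The key modification is to allow the final level of the code cascade to use walks of a variable length $s_\ell \in [s, s^2]$, instead of fixing $s_\ell = s$; this added flexibility provides a large enough lattice of achievable walk lengths to hit any target bias.

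First I would promote~\cref{lemma:ta_shma_construction_I} to a ``continuous'' version applicable for any sufficiently small $\epsilon > 0$. Given $\beta$ and $\epsilon$, define $t_0 = t_0(\epsilon)$ to be the smallest integer satisfying $\lambda_2^{2(1-5/s)(1-1/s)(t_0-1)} \leq \epsilon$, choose $\ell$ to be the largest integer with $s^\ell \leq t_0$, and set $s_\ell \defeq \lceil t_0 / s^{\ell-1} \rceil$, so that $s_\ell \in [s, s^2]$. Let the actual walk length used be $t \defeq s^{\ell-1}\cdot s_\ell \geq t_0$, which still satisfies the parity sampling bound. Rounding $t$ up from $t_0$ inflates the walk length by a factor of at most $1 + 1/s$, inflating the block length only by a further factor of $d_2^{O(t/s)}$; for $s$ large, this multiplicative overhead is absorbable into the rate exponent.

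Next I would construct the code cascade as in~\cref{sec:code_cascading_ta_shma}, using $\ell - 1$ levels of walks with $s$ vertices on the $s$-wide replacement product and a final level with $s_\ell$ vertices, so that the total walk length is $t$. Splittability of each level continues to follow from~\cref{lemma:splittability_preview}, giving $\tau \leq \sigma_2(G) + 2\sigma_2(H) + \sigma_2(H)^2 \leq 4\lambda_2$. Because the last level may use walks of as many as $s^2$ vertices, the splittability bound must meet $\tau \leq \tau_0(\eta, s^2)$, which is precisely why the proof of~\cref{theo:main_1} already phrases its splittability condition in terms of $k = s^2$ (and imposes~\cref{eq:splittability_s_bound}). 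Thus the same choice of $s$ works here, and~\cref{lemma:code_cascading} applies to the whole cascade. Invoking it produces unique decoding in time $n_0^{O(1)} \cdot \prod_{i=1}^{\ell} n_{i-1}^{O(1/\tau_0(\eta, s^2)^4)} = N^{O_\beta(\log\log(1/\epsilon))}$, since $\ell = O(\log\log(1/\epsilon))$, $1/\tau_0(\eta, s^2) = 2^{O_\beta(1)}$, and each $n_{i-1} \leq N$.

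The main obstacle I anticipate is the rate bookkeeping: verifying that rounding the walk length up from $t_0(\epsilon)$ to $t = s^{\ell-1}\cdot s_\ell$ still delivers rate $\Omega(\epsilon^{2+\beta})$ for an \emph{arbitrary} $\epsilon$, not just for those lying on a discrete lattice of values. This requires the refined Ta-Shma instantiation to leave enough slack in the $\beta$ exponent to absorb both the multiplicative rate loss from the rounding and the fact that the last cascade level now pays a factor of $d_2^{s_\ell - 1}$ (with $s_\ell$ possibly as large as $s^2$) rather than $d_2^{s-1}$. Tracking these trade-offs through the parameter choices of~\cref{lemma:ta_shma_construction_I}, while preserving $\epsilon_0 + 2\theta + 2\sigma_2(G) \leq \sigma_2(H)^4$ and the splittability inequality simultaneously, is the technical content promised in~\cref{sec:careful_analysis_continum_for_beta}; once it is carried out, the remainder of the argument is a direct adaptation of the proof of~\cref{theo:main_1}.
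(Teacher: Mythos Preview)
Your proposal is correct and identifies the same key idea as the paper: let the final cascade level use walks of a variable length $s_\ell \in [s,s^2]$, so that any target walk length can be matched up to a multiplicative factor of $1+O(1/s)$, an overshoot that is absorbed into the rate exponent. You also correctly observe that this is precisely why the proof of~\cref{theo:main_1} already imposes the splittability condition at arity $k=s^2$.

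The paper's~\cref{sec:careful_analysis_continum_for_beta} reaches the same construction through a somewhat different bookkeeping device. Rather than rounding the walk length directly, it invokes the parameter function $\Gamma$ twice: first with a multiplying factor $Q=s$ (so $d_2=s^{4s^3}$) to compute a coarse-scale target $t$, and then with $Q=1$ (so $d_2'=s^{4s^2}$) for the actual construction, choosing an integer $P\in[s,s^2]$ for the last level so that $t'=P\cdot s^{\ell-1}$ satisfies $t-1\le (t'-1)/Q\le(1+2\alpha)(t-1)$. The bias and rate bounds are then transferred between scales via the identities $(d_2')^Q=d_2$ and $(\lambda_2')^{Q}\le\lambda_2^{1-2\alpha}$. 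Your direct rounding argument is simpler and yields the same code with the same guarantees; the paper's two-invocation analysis is more elaborate but sets up the parameter function $\Gamma$ and the factor $Q$ that are reused verbatim in Rounds~III and~IV.
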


Ta-Shma obtained codes of rate $\Omega(\epsilon^{2+\beta})$ with
vanishing $\beta$ as $\epsilon$ goes to zero. We obtain a unique
decoding algorithm for this regime (with slightly slower decreasing
$\beta$ as $\epsilon$ vanishes).
More precisely, using the parameters described in~\cref{sec:round_vanishing_esp} and the running time analysis in~\cref{sec:fixed_poly_time}, we obtain the
following theorem which is our main result for unique decoding.

\begin{theorem}[Main Unique Decoding (restatement of~\cref{theo:main})]
  For every $\epsilon > 0$ sufficiently small, there are explicit binary linear Ta-Shma codes
  $\Cc_{N,\epsilon,\beta} \subseteq \mathbb{F}_2^N$ for infinitely many values $N \in \mathbb{N}$ with
  \begin{enumerate}[(i)]
   \item distance at least $1/2 - \epsilon/2$ (actually $\epsilon$-balanced),
   \item rate $\Omega(\epsilon^{2 + \beta})$ where $\beta = O(1/(\log_2(1/\epsilon))^{1/6})$, and
   \item a unique decoding algorithm with running time $N^{O_{\epsilon,\beta}(1)}$.
  \end{enumerate}
  Furthermore, if instead we take  $\beta > 0$ to be an arbitrary constant, the running time
  becomes $(\log(1/\epsilon))^{O(1)} \cdot N^{O_{\beta}(1)}$ (fixed polynomial time).
\end{theorem}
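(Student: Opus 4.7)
The plan is to follow closely the blueprint of \cref{theo:main_2} (Main II), with two refinements: (1) using a vanishing-$\beta$ instantiation of Ta-Shma's construction (to be developed in \cref{sec:round_vanishing_esp}), in which $s = s(\epsilon)$ is allowed to grow slowly with $1/\epsilon$ so as to approach the Gilbert--Varshamov bound; and (2) replacing the naive per-level recursive list-decoding with a list-decode-then-prune step, in order to achieve fixed polynomial time when $\beta$ is a positive constant.

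The first step is to invoke the refined Ta-Shma instantiation to obtain $s$, an outer graph $G$ with $\sigma_2(G) \le \lambda_1$, an inner graph $H$ with $\sigma_2(H) \le \lambda_2$, a base code $\Cc_0$ of constant bias $\epsilon_0$, and a walk length $t = s^\ell$ that amplifies the bias down to at most $\epsilon$, while ensuring that the resulting rate is $\Omega(\epsilon^{2+\beta})$ for the claimed $\beta = O(1/(\log_2(1/\epsilon))^{1/6})$. The exponent $1/6$ arises from optimizing the interaction between the $s^{-\Theta(s^2)}$-type bound on $\lambda_2$ forced by splittability and the multiplicative rate loss per walk block, which scales roughly like $\lambda_2^{-O(1)} = s^{O(s^2)}$; choosing $s = \Theta((\log_2(1/\epsilon))^{1/6})$ balances these contributions. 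The careful optimization is carried out in \cref{sec:round_vanishing_esp}, analogously to the derivation of \cref{lemma:ta_shma_construction_I} but now with $s$ depending on $\epsilon$.

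The second step is to set up the code cascade $\Cc_0, \Cc_1, \ldots, \Cc_\ell$ exactly as in \cref{sec:code_cascading_ta_shma}, so that each $\Cc_i = \lift_{W_i(s)}(\Cc_{i-1})$ is obtained from $s$-step walks on the appropriate $s$-wide replacement product. By \cref{lemma:splittability_preview}, each collection is $\tau$-splittable with $\tau \le \sigma_2(G) + 2\sigma_2(H) + \sigma_2(H)^2 \le 4\lambda_2$; the parameters from the first step ensure $\tau \le \tau_0(\eta, s^2)$ with the same constants $\eta_0 = 1/10$, $\eta = 1/100$ as used in \cref{theo:main_1}. Iteratively applying \cref{lemma:code_cascading}, the final code $\Cc_\ell = \Cc_{N,\epsilon,\beta}$ is uniquely decodable in time
\[
g(n_0) \cdot \prod_{i=1}^{\ell} n_{i-1}^{O(1/\tau_0(\eta, s^2)^4)} \;=\; N^{O_{\epsilon,\beta}(1)},
\]
since the per-level exponent is $1/\tau_0(\eta, s^2)^4 = 2^{O(s^2)} = 2^{O((\log(1/\epsilon))^{1/3})}$ and $\ell = O(\log_s t) = O(\log\log(1/\epsilon))$; this proves parts (i)--(iii).

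For the ``furthermore'' (fixed polynomial time when $\beta$ is a constant independent of $\epsilon$), the naive recursion of \cref{lemma:code_cascading} multiplies the per-level overhead $N^{O_\beta(1)}$ across all $\ell = O(\log\log(1/\epsilon))$ levels, which is too expensive. The plan, as sketched in \cref{sec:strategy}, is to prune the output list at level $i$ to a bounded number of candidates before recursing into level $i-1$, using the fact that only the genuine codeword needs to survive; this requires slightly strengthening the parity-sampling parameter $\eta$ at each level so that pruning cannot discard the correct codeword. The result is that the $N^{O_\beta(1)}$ decoding cost is incurred only once at the bottom of the cascade, while each of the remaining $O(\log\log(1/\epsilon))$ levels contributes only a $\polylog(1/\epsilon)$ factor, yielding the claimed running time $(\log(1/\epsilon))^{O(1)} \cdot N^{O_\beta(1)}$; the running-time bookkeeping is deferred to \cref{sec:fixed_poly_time}. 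The main obstacle throughout is the first step: finding a joint choice of $s, \lambda_1, \lambda_2, t$ that simultaneously yields the vanishing-$\beta$ rate and satisfies the splittability-driven bound $4\lambda_2 \le \tau_0(\eta, s^2)$ required by the list-decoding framework of \cref{theo:list_dec_hammer}.
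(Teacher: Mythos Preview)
Your proposal follows the paper's approach and correctly identifies the two ingredients: the Round~III parameter choice with $s = \Theta((\log_2(1/\epsilon))^{1/6})$ for vanishing $\beta$, and the per-level list pruning for fixed polynomial time. Two points need correcting, however.

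First, in the vanishing-$\beta$ regime the base-code bias $\epsilon_0 = 1/d_2^2$ is \emph{not} constant: since $d_2 = s^{4s^2}$ and $s$ grows with $1/\epsilon$, $\epsilon_0 \to 0$ as $\epsilon \to 0$. This is why Round~III cannot just take any off-the-shelf base code; the paper bootstraps by taking $\Cc_0$ to be a Round~II Ta-Shma code (constant $\beta$), which is uniquely decodable for arbitrarily small $\epsilon_0$ by \cref{theo:main_2}. Your summary ``a base code $\Cc_0$ of constant bias $\epsilon_0$'' misses this recursion, though you do defer to \cref{sec:round_vanishing_esp} for details.

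Second, your running-time accounting for the fixed-polynomial-time case is off. The $N^{O_\beta(1)}$ cost is \emph{not} incurred only once at the bottom: the SOS-based cover retrieval (\cref{theo:list_dec_mallet}) costs $N^{O_\beta(1)}$ at \emph{every} node of the recursion tree. The saving is that pruning via \cref{lemma:cover_compactness} reduces the branching factor from $N^{O_\beta(1)}$ to the constant $2/\eta$, so the tree has only $(2/\eta)^\ell = (\log(1/\epsilon))^{O(1)}$ nodes, each costing $N^{O_\beta(1)}$; this is what gives $(\log(1/\epsilon))^{O(1)} \cdot N^{O_\beta(1)}$.
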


\cref{theo:gentle_list_decoding} about gentle list decoding is
proved in~\cref{sec:round_4} after instantiating Ta-Shma codes in some
parameter regimes in the preceding parts
of~\cref{sec:ta-shma_param_basic}.

\subsection{Fixed Polynomial Time}\label{sec:fixed_poly_time}

In~\cref{theo:main_2}, a running time of $N^{O_{\beta}(\log(\log(1/\epsilon)))}$ was obtained to decode Ta-Shma codes $\Cc_{N,\epsilon,\beta}$ of distance $1/2-\epsilon/2$ and rate $\Omega(\epsilon^{2+\beta})$ for constant $\beta >0$ and block length $N$.
The running time contains an exponent which depends on the bias $\epsilon$ and is therefore not fixed polynomial time.
We show how  to remove this dependence in this regime of $\beta > 0$ being an arbitrary constant.
More precisely, we show the following.

\begin{theorem}[Fixed PolyTime Unique Decoding]\label{theo:main_fixed_poly_time}
  Let $\beta > 0$ be an arbitrary constant. For every $\epsilon > 0$ sufficiently small, there are
  explicit binary linear Ta-Shma codes $\Cc_{N,\epsilon,\beta} \subseteq \mathbb{F}_2^N$ for
  infinitely many values $N \in \mathbb{N}$ with
  \begin{enumerate}[(i)]
   \item distance at least $1/2 - \epsilon/2$ (actually $\epsilon$-balanced),
   \item rate $\Omega(\epsilon^{2 + \beta})$, and
   \item a unique decoding algorithm with fixed polynomial running time $(\log(1/\epsilon))^{O(1)} \cdot N^{O_{\beta}(1)}$.
  \end{enumerate}
\end{theorem}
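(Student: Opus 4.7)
The plan is to upgrade the Main II algorithm by bounding the number of recursive calls made per level of the cascade. Recall that in~\cref{theo:main_2}, unique decoding at level $i$ invokes the list decoder from~\cref{theo:list_dec_hammer}, which internally uses the base unique decoder at level $i-1$ some $N^{O(1/\tau_0(\eta,k_i)^4)}$ times (the multiplicative factor $f(n)$ in the list decoding running time). This leads to the recurrence $T(i) = N^{O(1/\tau_0(\eta,k_i)^4)} \cdot T(i-1) + \poly(N)$, whose unrolling over $\ell = O(\log\log(1/\epsilon))$ levels produces the $N^{O_\beta(\log\log(1/\epsilon))}$ bound. The goal is to replace the multiplicative $N^{O(1/\tau_0^4)}$ factor by a constant $C = C(\eta)$ independent of $N$, so that the recursion telescopes to a fixed polynomial.

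The approach is to prune the list of intermediate candidates produced inside the SOS-based list decoding of~\cite{AJQST19} before each invocation of the base unique decoder. For an $\eta$-balanced code $\Cc_i$ with $\eta$ a fixed constant (say $\eta = 1/100$ as in~\cref{lemma:dec_step}), a Johnson-type bound guarantees that only $L = L(\eta) = O(1)$ codewords lie within the list decoding radius $1/2 - \sqrt{\eta}$ of any received word. Consequently, out of the $\poly(N)$ candidate solutions that the SOS rounding would produce, only $O(L)$ of them can correspond to distinct true codewords, and the remaining ones can be identified and discarded via a cheap consistency check (for instance, comparing rounded outputs to one another and keeping only the distinct clusters). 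To ensure that this pruning is lossless, that is, that every codeword in the list decoding ball is represented among the surviving representatives, we slightly strengthen the parity sampling requirements at each cascade level by increasing $k_i$ and tightening $\tau_i$. These modifications keep $k_i$ and $1/\tau_i$ at $O_\beta(1)$, so the rate $\Omega(\epsilon^{2+\beta})$ from~\cref{lemma:ta_shma_construction_I} is preserved up to constants since $\beta > 0$ is held constant.

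With this refinement, the per-level recurrence becomes
\[
T(i) ~\le~ N^{O_\beta(1)} + C \cdot T(i-1) + \poly(N),
\]
where the first term accounts for SOS solving, rounding, and pruning (all fixed polynomial time since $s$ and each $k_i$ are $O_\beta(1)$, so $\tau_0(\eta,k_i)^{-4} = O_\beta(1)$) and the second term accounts for the constant number of recursive calls to the base decoder at the previous level. Unrolling over $\ell = O(\log\log(1/\epsilon))$ levels yields
\[
T(\ell) ~\le~ O(C^\ell) \cdot N^{O_\beta(1)} ~=~ (\log(1/\epsilon))^{O(1)} \cdot N^{O_\beta(1)},
\]
as claimed. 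The main obstacle is establishing pruning correctness inside the list decoding framework of~\cite{AJQST19}: we must show that the $O(L)$ representatives surviving the pruning cover every codeword within the list decoding ball, and in particular the unique target codeword inside the unique decoding ball. This requires a careful analysis of the SOS pseudo-distribution's concentration, leveraging the splittability and tensoriality of the walk collection already exploited in~\cite{AJQST19} (and revisited in~\cref{sec:satisfying_framework}), together with the strengthened parity sampling that guarantees sufficient separation between distinct true codewords so that clustering the SOS candidates does not merge two of them into a single representative.
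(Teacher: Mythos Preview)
Your proposal is essentially the paper's approach: prune the SOS candidate list to constant size via the Johnson bound before recursing, after strengthening parity sampling so that clustering is lossless, yielding the recurrence $T(i) \le N^{O_\beta(1)} + C \cdot T(i-1)$ and hence the claimed bound. The paper makes the pruning step concrete through the notion of a $\zeta$-cover (\cref{lemma:cover_compactness}): it takes a maximal independent set among candidates that are pairwise far in the ground space, then applies the Johnson bound in the \emph{lifted} space (using parity sampling to guarantee the lifts $y' = \dsum(z')$ of the surviving representatives are pairwise far and all close to $\tilde y$), which is exactly the ``keep distinct clusters'' step you sketched; it also separates out a variant of the list decoder (\cref{theo:list_dec_mallet}) that returns only the cover without invoking the base decoder, matching your ``prune before calling $T(i-1)$'' structure.
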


The list decoding framework finds a list of pairs $(z,y=\lift(z))$ of size at most $N^{(1/\tau_0(\eta,k))^{O(1)}}$ at each level of the code cascade and recursively issues decoding calls to all $z$ in this list.
Since the number of lifts in the cascade is $\Omega(\log(\log(1/\epsilon)))$, we end up with an overall running time of $N^{O_{\beta}(\log(\log(1/\epsilon)))}$.

We will describe a method of pruning these lists which will lead to fixed polynomial running time.
Let $1/2-\sqrt{\eta}$, where $\eta > 0$ is a constant, be the list decoding radius used for a unique decoding step in the cascade.
To achieve fixed polynomial time we will prune this polynomially large list of words to a constant size at each inductive step in~\cref{lemma:code_cascading}.
As we are working with parameters within the Johnson bound, the actual list of codewords has constant size $(1/\eta)^{O(1)}$.
At every step, we will be able to find a small sublist whose size only depends on $\eta$ that has a certain covering property, and then issue decoding calls to this much smaller list.

\begin{definition}[$\zeta$-cover]
  Let $W(k) \subseteq [n]^k$, $\Cc \subseteq \mathbb{F}_2^n$ be a code, $A \subseteq \Cc$, and $\mathcal L = \{(z, \dsum_{W(k)}(z)) \mid z \in A\}$.
  We say that $\mathcal L' = \{(z^{(1)}, \dsum_{W(k)}(z^{(1)})), \dots, (z^{(m)}, \dsum_{W(k)}(z^{(m)}))\}$ is a \emph{$\zeta$-cover} of $\mathcal L$ if for every $(z,y) \in \mathcal L$, there exists some $(z',y') \in \mathcal L'$ with $\bias(z-z') > 1 - 2\zeta$ (that is, either $\Delta(z,z') < \zeta$ or $\Delta(z,z') > 1-\zeta$).
\end{definition}

\begin{lemma}[Cover Compactness]\label{lemma:cover_compactness}
  Let $W(k) \subseteq [n]^k$, $\Cc \subseteq \mathbb{F}_2^n$ be a linear $\eta_0$-balanced code, $\Cc' = \dsum_{W(k)}(\Cc)$ be an $\eta$-balanced code, and $\tilde{y} \in \mathbb{F}_2^{W(k)}$.
  Define
    $$\mathcal{L}(\tilde{y},\Cc,\Cc')\coloneqq \left\{(z,\dsum_{W(k)}(z)) \mid z \in \Cc, \Delta\parens*{\dsum_{W(k)}(z),\tilde{y}} \le \frac{1}{2} - \sqrt{\eta}\right\}.$$
  Suppose $\mathcal{L}'$ is a $\zeta$-cover of $\mathcal{L}(\tilde y, \Cc, \Cc')$ for some $\zeta < 1/2$.
  Further, suppose that for every $(z',y') \in \mathcal{L}'$, we have $\Delta\parens*{y',\tilde{y}} \le 1/2 - \sqrt{\eta}$.
  If $W(k)$ is a $(1-2\zeta, \eta)$-parity sampler, then there exists $\mathcal{L}'' \subseteq \mathcal{L}'$ with $\abs{\mathcal{L}''} \le 1/\eta$ which is a $(2\zeta)$-cover of $\mathcal L$.
\end{lemma}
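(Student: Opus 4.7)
The plan is to construct $\mathcal{L}''$ greedily by thinning $\mathcal{L}'$ based on proximity in the lifted ($y$) coordinates, and then transfer this control back to the ($z$) coordinates via the parity sampler hypothesis. Concretely, I would enumerate the elements of $\mathcal{L}'$ in some order and maintain a sublist $\mathcal{L}''$, adding a candidate $(z',y')$ to $\mathcal{L}''$ iff every $(z'',y'') \in \mathcal{L}''$ added so far satisfies $\bias(y' - y'') \le \eta$. By construction, any pair of distinct elements of $\mathcal{L}''$ has lifted bias at most $\eta$, and any $(z',y') \in \mathcal{L}' \setminus \mathcal{L}''$ is ``blocked'' by some $(z'',y'') \in \mathcal{L}''$ with $\bias(y'-y'') > \eta$.

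The size bound $|\mathcal{L}''| \le 1/\eta$ comes from a Johnson-style second moment argument. Writing $\chi^{(i)} = (-1)^{y^{(i)}}$ for $(z^{(i)},y^{(i)}) \in \mathcal{L}''$ and $\tilde\chi = (-1)^{\tilde y}$ (all under the uniform inner product on coordinates), the hypothesis $\Delta(y^{(i)},\tilde y) \le 1/2 - \sqrt{\eta}$ gives $\langle \chi^{(i)}, \tilde\chi\rangle \ge 2\sqrt{\eta}$, while the greedy construction gives $|\langle \chi^{(i)}, \chi^{(j)}\rangle| \le \eta$ for $i \ne j$. Setting $m = |\mathcal{L}''|$ and applying Cauchy--Schwarz to $\sum_i \chi^{(i)}$ against $\tilde\chi$ yields
\[
4 m^2 \eta \;\le\; \Bigl\langle \sum_i \chi^{(i)}, \tilde\chi\Bigr\rangle^{\!2} \;\le\; \Bigl\| \sum_i \chi^{(i)} \Bigr\|^{2} \;\le\; m + m(m-1)\,\eta,
\]
from which $m \le (1-\eta)/(3\eta) < 1/\eta$ follows.

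For the covering property, fix $(z,y) \in \mathcal{L}$. Since $\mathcal{L}'$ is a $\zeta$-cover, pick $(z',y') \in \mathcal{L}'$ with $\bias(z-z') > 1 - 2\zeta$. If $(z',y') \in \mathcal{L}''$ we are done, since $1-2\zeta > 1 - 2(2\zeta)$. Otherwise there is $(z'',y'') \in \mathcal{L}''$ with $\bias(y'-y'') > \eta$, and here the $(1-2\zeta,\eta)$-parity sampler hypothesis, applied to $z'-z'' \in \Cc$ via linearity of $\dsum_{W(k)}$, promotes this to $\bias(z'-z'') > 1 - 2\zeta$ (contrapositive of parity sampling). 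Replacing $z'$ or $z''$ by their complement if needed, we may assume $\Delta(z,z') < \zeta$ and $\Delta(z',z'') < \zeta$, whence $\Delta(z,z'') < 2\zeta$, i.e. $\bias(z - z'') > 1 - 4\zeta$. This is exactly the $(2\zeta)$-cover condition.

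The only subtle point is the case analysis with complementation in the last step, which is why the cover radius doubles from $\zeta$ to $2\zeta$: a $\zeta$-neighborhood in $z$ and a $\zeta$-neighborhood in $z$ compose to a $2\zeta$-neighborhood after matching signs. The real conceptual work is the parity sampler contrapositive, which is what licenses the greedy thinning in the $y$-coordinate to produce a cover in the $z$-coordinate at all; the Cauchy--Schwarz bound is then just a standard packing argument in the Johnson regime $\Delta(y^{(i)},\tilde y) \le 1/2 - \sqrt{\eta}$.
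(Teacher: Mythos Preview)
Your proof is correct and follows essentially the same skeleton as the paper's, but with a pleasant duality: the paper thins $\mathcal{L}'$ in the $z$-coordinate (it takes a maximal independent set in the graph on $\mathcal{L}'$ with edges $\{(z',z''): \bias(z'-z'')>1-2\zeta\}$), whereas you thin in the $y$-coordinate. Consequently the paper uses the parity sampler in the \emph{forward} direction to get the size bound (pairwise far in $z$ $\Rightarrow$ pairwise $\eta$-biased in $y$ $\Rightarrow$ Johnson), while the covering property is immediate from maximality in $z$; you instead get the size bound directly from your $y$-thinning and invoke the parity sampler \emph{contrapositively} to lift a ``blocked in $y$'' witness back to closeness in $z$. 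Both routes land on the same Johnson-type packing inequality (the paper just cites the Johnson bound; you spell out the Cauchy--Schwarz computation), and both finish with the same four-case triangle-inequality/complementation argument.

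One small inaccuracy: you write ``applied to $z'-z'' \in \Cc$,'' but the definition of a $\zeta$-cover does not require the $z'$ appearing in $\mathcal{L}'$ to lie in $\Cc$, so $z'-z''$ need not be a codeword. This does no harm, since the parity-sampler hypothesis is stated for \emph{all} $z\in\F_2^n$, and linearity of $\dsum_{W(k)}$ still gives $\dsum_{W(k)}(z'-z'') = y'-y''$; just drop the ``$\in\Cc$''.
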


\begin{proof}
  Build a graph where the vertices are pairs $(z',y') \in \mathcal{L}'$ and two vertices $(z^{(i)},y^{(i)})$, $(z^{(j)},y^{(j)})$ are connected iff $\bias(z^{(i)} - z^{(j)}) > 1-2\zeta$.
  Let $\mathcal{L}''$ be any maximal independent set of this graph.
  Any two vertices $(z^{(i)},y^{(i)}),(z^{(j)},y^{(j)}) \in \mathcal{L}''$ have $\bias(z^{(i)} - z^{(j)}) \leq 1-2\zeta$ and thus $\bias(y^{(i)} - y^{(j)}) \leq \eta$ since $W(k)$ is a $(1-2\zeta,\eta)$-parity sampler.
  This means that $\set{y'' \mid (z'',y'') \in \mathcal{L}''}$ is a code of distance at least $1/2-\eta/2$.
  By the condition that $\Delta(y'',\tilde y) \leq 1/2 - \sqrt{\eta}$ for all $(z'', y'') \in \mathcal L''$ and the Johnson bound, we have $\abs{\mathcal{L}''} \le 1/\eta$.
  
  Finally, we will show that $\mathcal{L}''$ is a $(2\zeta)$-cover of $\mathcal L$.
  Let $(z,y) \in \mathcal L$.
  As $\mathcal{L}'$ is a $\zeta$-cover of $\mathcal L$, there exists a pair $(z',y') \in \mathcal L$ with $\bias(z-z') > 1 - 2\zeta$, so $z$ is within distance $\zeta$ of either $z'$ or its complement $\overline{z'}$.
  The construction of $\mathcal{L}''$ as a maximal independent set ensures that there is some $(z'',y'') \in \mathcal L''$ with $\bias(z'-z'') > 1 - 2\zeta$, so $z''$ is also within distance $\zeta$ of either $z'$ or its complement $\overline{z'}$.
  Applying the triangle inequality in all of the possible cases, we see that either $\Delta(z,z'') < 2\zeta$ or $\Delta(z,z'') > 1-2\zeta$, which implies $\mathcal{L}''$ is a $(2\zeta)$-cover of $\mathcal L$.
\end{proof}

To decode in fixed polynomial time, we use a modification of the list decoding result~\cref{theo:list_dec_hammer} that outputs a $\zeta$-cover $\mathcal{L}'$ of the list of codewords $\mathcal L$ instead of the list itself.
\Cref{theo:list_dec_hammer} recovers the list $\mathcal{L}$ by finding $\mathcal{L}'$ and unique decoding every element of it.
To get $\mathcal{L}'$, we use the same algorithm, but stop before the final decoding step.
This removes the unique decoding time $f(n)$ of the base code from the running time of the list decoding algorithm.
We will apply~\cref{lemma:cover_compactness} after each time we call this $\zeta$-cover algorithm to pare the list down to a constant size before unique decoding; note that this loses a factor of 2 in the strength of the cover.
To compensate for this, we will use a collection $W(k)$ with stronger parity sampling than required for~\cref{theo:list_dec_hammer}.
In that theorem, $W(k)$ was a $(1/2+\eta_0/2, \eta)$-parity sampler to ensure that we obtained words within the list decoding radius $(1/4-\eta_0/4)$ of the base code.
By using a stronger parity sampler, the words in the pruned list $\mathcal{L}''$ will still be within the unique decoding radius even after accounting for the loss in the bias from cover compactness, which means decoding will still be possible at every level of the cascade.
Fortunately, improving the parity sampling only requires increasing the walk length $k$ by a constant multiplicative factor.
The cover retrieval algorithm below will be proven in~\cref{sec:instantiation_list_dec}.

\begin{theorem}[Cover retrieval for direct sum lifting]\label{theo:list_dec_mallet}
  Let $\eta_0 \in (0,1/4)$ be a constant, $\eta \in (0,\eta_0)$, $\zeta = 1/8-\eta_0/8$, and
  	$$k \geq k_0'(\eta) \coloneqq \Theta(\log(1/\eta)).$$
  Suppose $\Cc \subseteq \F_2^n$ is an $\eta_0$-balanced linear code and $\Cc' = \dsum_{W(k)}(\Cc)$ is the direct sum lifting of $\Cc$ on a $\tau$-splittable collection of walks $W(k)$.
  There exists an absolute constant $K > 0$ such that if
  	$$\tau \leq \tau_0(\eta,k) \coloneqq \frac{\eta^{8}}{K \cdot k \cdot 2^{4k}},$$
  then the code $\Cc'$ is $\eta$-balanced, $W(k)$ is a $(1-2\zeta, \eta)$-parity sampler, and we have the following:

  If $\tilde{y}$ is \lict{\sqrt{\eta}} to $\Cc'$, then we can compute a $\zeta$-cover $\mathcal{L}'$ of the list
         $$
         \mathcal{L}(\tilde{y},\Cc,\Cc')\coloneqq \left\{(z,\dsum_{W(k)}(z)) \mid z \in \Cc, \Delta\parens*{\dsum_{W(k)}(z),\tilde{y}} \le \frac{1}{2} - \sqrt{\eta}\right\}
         $$
         in which $\Delta(y',\tilde y) \leq 1/2 - \sqrt{\eta}$ for every $(z',y') \in \mathcal{L}'$, in time
         $$
         n^{O(1/\tau_0(\eta,k)^4)}.
         $$
         Otherwise, we return $\mathcal{L}'=\emptyset$ with the same running time of the preceding case.
\end{theorem}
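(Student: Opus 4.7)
The plan is to derive~\cref{theo:list_dec_mallet} by isolating the intermediate ``cover'' produced by the list decoding algorithm that underlies~\cref{theo:list_dec_hammer}, and by slightly strengthening the parity sampling requirement to accommodate the $\zeta$-covering guarantee we need. The point is that the $f(n)$ factor in the running time of~\cref{theo:list_dec_hammer} comes solely from the final step of snapping each candidate base word to an actual codeword using the unique decoder of $\Cc$; skipping that last step yields exactly the object that~\cref{lemma:cover_compactness} will consume later.

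First, I would recall the internal structure of the list decoding algorithm of~\cref{theo:list_dec_hammer}, which is inherited from the SoS-based framework of~\cite{AJQST19}. Given a received word $\tilde{y}$, that algorithm solves an SoS relaxation of the $\maxtxor$ instance associated with $(\tilde{y}, \Cc')$ and then rounds it to produce a collection of candidate base words $\{z^{(j)}\} \subseteq \F_2^n$ together with their lifts $y^{(j)} = \dsum_{W(k)}(z^{(j)})$; by construction of the rounding, each candidate already satisfies $\Delta(y^{(j)}, \tilde{y}) \le 1/2 - \sqrt{\eta}$, and the covering guarantee of the framework asserts that every true codeword $z \in \Cc$ with $\Delta(\dsum_{W(k)}(z), \tilde{y}) \le 1/2 - \sqrt{\eta}$ is close in Hamming distance to some $z^{(j)}$. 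Only at the very end is the unique decoder of $\Cc$ applied to each $z^{(j)}$ in order to output an honest element of $\mathcal{L}(\tilde{y}, \Cc, \Cc')$. To obtain the cover retrieval algorithm I simply halt this procedure just before that final call, returning $\mathcal{L}' = \{(z^{(j)}, y^{(j)})\}$ directly. The running time is then exactly the running time of~\cref{theo:list_dec_hammer} minus the $f(n)$ factor, namely $n^{O(1/\tau_0(\eta,k)^4)}$, and the side condition $\Delta(y', \tilde{y}) \le 1/2 - \sqrt{\eta}$ for every $(z',y') \in \mathcal{L}'$ is built into the rounding.

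To verify that $\mathcal{L}'$ is a $\zeta$-cover for $\zeta = 1/8 - \eta_0/8$, and that $W(k)$ is a $(1-2\zeta, \eta)$-parity sampler with $1 - 2\zeta = 3/4 + \eta_0/4$, I would reuse the splittability-based parity sampling analysis used for~\cref{theo:list_dec_hammer} and summarized in~\cref{sec:ta-shma_spectral_analysis}. The new parity sampling demand is strictly stronger than the $((1+\eta_0)/2, \eta)$ one used previously, but it differs only in a multiplicative constant in the required walk length; this is why the threshold is $k \ge k_0'(\eta) = \Theta(\log(1/\eta))$ with a larger hidden constant than $k_0(\eta)$, while the splittability threshold $\tau_0(\eta,k)$ keeps the same functional form. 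The quantitative version of the covering guarantee also improves under this stronger parity sampling: the rounding of~\cite{AJQST19} is always phrased as yielding a cover whose diameter (measured via bias of differences of base words) is controlled by the parity sampling of $W(k)$, so pushing parity sampling from level $(1+\eta_0)/2$ down to level $1 - 2\zeta = 3/4 + \eta_0/4$ exactly tightens the cover radius from its value in~\cref{theo:list_dec_hammer} to the $\zeta$ needed here.

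The main obstacle I anticipate is purely bookkeeping: extracting from the SoS rounding of~\cite{AJQST19} the precise intermediate object $\mathcal{L}'$ together with the quantitative covering statement ``every $(z, \dsum_{W(k)}(z)) \in \mathcal{L}$ has some $(z^{(j)}, y^{(j)}) \in \mathcal{L}'$ with $\bias(z - z^{(j)}) > 1 - 2\zeta$'', and verifying that the rounding's internal bias-of-difference bound is exactly what the chosen $\zeta = 1/8 - \eta_0/8$ requires. Once this is made explicit in~\cref{sec:instantiation_list_dec}, the full statement of~\cref{theo:list_dec_mallet} follows by transcribing the bounds of~\cref{theo:list_dec_hammer} with the enlarged walk-length constant $k_0'(\eta)$ and dropping the $f(n)$ factor.
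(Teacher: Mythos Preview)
Your plan is correct and matches the paper's proof essentially line for line: halt the list-decoding framework before the final unique-decoding snap to drop the $f(n)$ factor, keep the tensoriality bound from~\cref{theo:list_dec_hammer} unchanged, and strengthen the parity-sampling threshold from $(1/2+\eta_0/2,\eta)$ to $(1-2\zeta,\eta)=(3/4+\eta_0/4,\eta)$, which only enlarges the constant in $k_0$ to $k_0'$. The paper makes this last step explicit via $(\eta_0'+2\tau)^{\lfloor (k-1)/2\rfloor}<(15/16)^{k/2-1}$, but that is exactly the bookkeeping you flagged.
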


We now have all of the pieces necessary to prove~\cref{theo:main_fixed_poly_time}.
The process is essentially the same as our earlier unique decoding algorithm, except we use the cover retrieval algorithm from~\cref{theo:list_dec_mallet} instead of the full list decoding from~\cref{theo:list_dec_hammer}.
This allows us to insert a list pruning step in between obtaining the $\zeta$-cover and calling the unique decoding algorithm for the previous level of the cascade.

\begin{proof}[Proof of~\cref{theo:main_fixed_poly_time}]
We use the code $\Cc_{N,\epsilon,\beta}$ from~\cref{theo:main_2} to get the desired distance and rate, with the slight modification of ensuring $s$ is larger than $k_0'$ from~\cref{theo:list_dec_mallet} rather than $k_0$ from~\cref{theo:list_dec_hammer}.

Each level of the code cascade between $\Cc_{i-1}$ and $\Cc_i$ uses constant $\eta_0 < 1/4$ and $\eta < \min\{\eta_0, 1/16\}$, which allows for decoding in a similar fashion to \cref{lemma:dec_step} and~\cref{lemma:code_cascading}.
The difference is that we use~\cref{theo:list_dec_mallet} as the decoding step to obtain a $\zeta$-cover $\mathcal{L}'$ of the list $\mathcal L(\tilde y, \Cc_{i-1}, \Cc_i)$ for $\tilde y \in \F_2^{n_i}$, where $\zeta = 1/8-\eta_0/8$.
By~\cref{lemma:cover_compactness} and the fact that the walk collection is a $(1-2\zeta, \eta)$-parity sampler, $\mathcal L$ has a $(2\zeta)$-cover $\mathcal {L}'' \subseteq \mathcal{L}'$ of size at most $1/\eta$.
The covering property says that for every $(z,y) \in \mathcal L$, there exists some $(z'',y'') \in \mathcal{L}''$ such that $z$ is within distance $2\zeta = 1/4-\eta_0/4$ of either $z''$ or its complement $\overline{z''}$.
This is the unique decoding radius of the $\eta_0$-balanced code $\Cc_{i-1}$, so we can recursively decode the list
	$$\mathcal{L}'' \cup \{(\overline{z''}, \dsum(\overline{z''})) \mid (z'', \dsum(z'')) \in \mathcal{L}''\}$$
to obtain the complete list of codewords in $\Cc_{i-1}$.

Now we analyze the running time.
On each level of the code cascade, we run the cover retrieval algorithm once to get $\mathcal{L}'$, prune the cover to get $\mathcal{L}''$, and then feed the union of $\mathcal{L}''$ and its complement (which has size at most $2/\eta$) into the unique decoding algorithm for the next level of the cascade.
Letting $T_i(n_i)$ be the running time of unique decoding a single word in the code $\Cc_i \subseteq \F_2^{n_i}$, we have the following recurrence:
$$
T_i(n_i) \le n_i^{O(1/\tau_0(\eta,k)^4)} + \frac{2}{\eta} \cdot T_{i-1}(n_{i-1}) \quad \text{ and } \quad T_0(n_0) = n_0^{O(1)}.
$$
Note that the base code $\Cc_0$ has constant bias $\epsilon_0$ and thus it has a fixed polynomial time decoding algorithm (e.g.~\cref{theo:main_2}).
The height of the recursive call tree is the number of levels in the code cascade, which is $\ell = O(\log(\log(1/\epsilon)))$, as in the proof of~\cref{theo:main_1}.
Each node of this tree has a constant branching factor of $2/\eta$.
Thus, the tree has $(\log(1/\epsilon))^{O(1)}$ nodes, each of which costs at most $n_i^{O(1/\tau_0(\eta,k)^4)} \leq N^{O(1/\tau_0(\eta,k)^4)}$ time.
Furthermore, in this regime of $\beta > 0$ being a constant, $k$ is constant as well as $\eta$, so we have $N^{O(1/\tau_0(\eta,k)^4)} = N^{O_{\beta}(1)}$ and the total running time is $(\log(1/\epsilon))^{O(1)} \cdot N^{O_{\beta}(1)}$.
\end{proof}

\section{Satisfying the List Decoding Framework Requirements}\label{sec:satisfying_framework}

The list decoding framework of~\cite{AJQST19} is capable of decoding codes obtained from direct sum liftings, provided they satisfy a few requisite properties.
The framework was originally shown to work for expander walks; we need to adapt it to our case of a code cascade based on walks on the $s$-wide replacement product.
We will start with a broad overview of the list decoding algorithm and point out where various requirements arise.

The problem of finding a list of codewords in a direct sum lifting close to a received word can be viewed as finding approximate solutions to a $k$-XOR instance.
This is done by solving a particular SOS program and rounding the resulting solution.
The algorithm is unable to perform rounding if the $k$-XOR instance is based on an arbitrary collection of walks $W(k)$; it can only handle liftings in which $W(k)$ satisfies a property called \emph{tensoriality}.
If $W(k)$ is tensorial, the SOS local variables in the solution can be approximated by product distributions, which will allow us to obtain a list of solutions by independent rounding.
Tensoriality for expander walks is a consequence of a simpler property known as \emph{splittability}, which is a certain measure of the expansion of a walk collection.

Unfortunately, the list returned by the rounding process will not contain codewords directly---instead, we only get a guarantee that all of the codewords we are looking for have a weak agreement (just over 1/2) with something on this list.
We will find the desired codewords by relying on the parity sampling of $W(k)$.
If $W(k)$ is a sufficiently good parity sampler, weak agreement in the lifted space corresponds to a much stronger agreement in the ground space.
This will allow us to recover the codewords using the unique decoding algorithm of the base code.

To recap, applying the list decoding framework in our setting requires doing the following:
\begin{enumerate}
  \item Proving parity sampling for the walks used in the code cascade (\cref{sec:parity_sampling_walk_over_walk}).
  \item Showing that the walk collection of the $s$-wide replacement product is splittable (\cref{sec:ta-shma:splittability}).
  \item Making Ta-Shma's construction compatible with the Sum-of-Squares machinery (\cref{sec:sos_integration}) and then obtaining tensoriality from splittability (\cref{sec:tensoriality}).
\end{enumerate}

An additional complication is introduced by using a code cascade instead of a single decoding step: the above requirements need to be satisfied at every level of the cascade.
The details of the proofs will often differ between the first level of the cascade, which is constructed using walks on the $s$-wide replacement product, and higher levels, which are walks on a directed graph whose vertices are walks themselves.
Once we have established all of the necessary properties, we will instantiate the list decoding framework in~\cref{sec:instantiation_list_dec}.

We will first define some convenient notation which will be used throughout this section.

\begin{notation}
  Let $G$ be a $d_1$-regular outer graph and $H$ be a $d_2$-regular
  inner graph used in Ta-Shma's $s$-wide replacement product.

  Let $0 \le k_1 \le k_2$ be integers. We define $W[k_1,k_2]$ to
  be the set of all walks starting at time $k_1$ and ending at time $k_2$ in
  Ta-Shma's construction.  More precisely, since $G$ and $H$ are
  regular graphs, the collection $W[k_1,k_2]$ contains all walks
  obtained by sampling a uniform vertex $(v,h) \in V(G) \times V(H)$
  and applying the operator
  $$
  (\matr I \otimes \Aye_H) \matr G_{k_2-1} (I \otimes \Aye_H) \cdots (\matr I \otimes \Aye_H) \matr G_{k_1} (I \otimes \Aye_H),
  $$
  where the index $i$ of each $G_i$ is taken modulo $s$. Observe that
  when $k_1 = k_2$, we have $W[k_1,k_2] = V(G) \times V(H)$.
\end{notation}

We define a family of Markov operators which will play a similar
role to the graphs $\widehat{G}_i$ from the cascade described in~\cref{sec:warmup},
but for Ta-Shma's construction rather than expander walks.

\begin{definition}[Split Operator]\label{def:swap_operator}
  Let $0 \le k_1 \le k_2 < k_3$. We define the graph walk split operator
  $$
  \tswap{k_1}{k_2}{k_3} \colon \mathbb{R}^{W[k_2+1,k_3]} \to \mathbb{R}^{W[k_1,k_2]}
  $$
  such that for every $f \in \mathbb{R}^{W[k_2+1,k_3]}$,
  \begin{align*}
      \left(\tswap{k_1}{k_2}{k_3}(f)\right)(w) \coloneqq \E_{w': ww' \in W[k_1,k_3]}[f(w')],
  \end{align*}
  where $ww'$ denotes the concatenation of the walks
  $w$ and $w'$.  The operator $\tswap{k_1}{k_2}{k_3}$  can be
  defined more concretely in matrix form such that for every $w
  \in W[k_1,k_2]$ and $w' \in W[k_2+1,k_3]$,
  \begin{align*}
      \left(\tswap{k_1}{k_2}{k_3}\right)_{w,w'} = \frac{\One_{ww' \in W[k_1,k_3]}}{|\{\tilde w: w\tilde w \in W[k_1,k_3]\}|}
      	= \frac{\One_{ww' \in W[k_1,k_3]}}{d_2^{2(k_3-k_2)}}.
  \end{align*}
\end{definition}

\subsection{Parity Sampling for the Code Cascade}\label{sec:parity_sampling_walk_over_walk}

To be able to apply the list decoding machinery to the code cascade $\Cc_0 \subseteq \mathbb{F}_2^{n_0}, \Cc_1 \subseteq \mathbb{F}_2^{n_1}, \dots, \Cc_{\ell} \subseteq \mathbb{F}_2^{n_{\ell}}$, we need the direct sum lifting at every level to be a parity sampler.
The first level in the cascade uses walks directly on the $s$-wide replacement product, which we can show is a good parity sampler using the spectral properties proven in~\cref{sec:tweaked_parity_sampling}.
However, it will be more convenient for calculating parameters later on to prove a weaker result, which will suffice for our purposes since we only need to obtain constant bias for every level of the cascade.
We analyze the parity sampling of these walks with the same strategy Ta-Shma employed to show parity sampling for walks on expander graphs (which resulted in \cref{theo:ta_shma_bias_simple}).

\begin{claim}\label{claim:code_cascade_bias_1}
  Let $W[0,s-1]$ be the collection of walks on the $s$-wide replacement product of the graphs $G$ and $H$ and $z \in \mathbb{F}_2^{V(G)}$ be a word with $\bias(z) \le \eta_0$.
  Let $\matr P_z$ be the diagonal matrix with entries $(\matr P_z)_{(v,h),(v,h)} = (-1)^{z_v}$ for $(v,h) \in V(G) \times V(H)$.
  If $\sigma_2((\matr I \otimes \matr A_H) \matr \matr G_i (\matr I \otimes \matr A_H)) \leq \gamma$ for all $0 \leq i \leq s-2$, then
  	$$\norm{\prod_{i=0}^{s-2} (\matr I \otimes \matr A_H) \matr G_i (\matr I \otimes \matr A_H) \matr P_z }_2 \leq (\eta_0 + 2\gamma)^{\lfloor (s-1)/2 \rfloor}.$$
\end{claim}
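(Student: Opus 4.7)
The plan is to adapt the two-step blocking argument Ta-Shma uses for single-expander walks (cf.\ \cref{theo:ta_shma_bias_simple}) to the ``double-step'' walk operator of the $s$-wide replacement product. Write $\matr B_i \coloneqq (\matr I \otimes \matr A_H) \matr G_i (\matr I \otimes \matr A_H)$, so that the target product becomes $\matr B_{s-2}\matr P_z \matr B_{s-3}\matr P_z \cdots \matr B_0 \matr P_z$. I would group consecutive factors into $\lfloor (s-1)/2 \rfloor$ pairs of the form $\matr B_{j+1}\matr P_z \matr B_j \matr P_z$, and use submultiplicativity of the operator norm (any leftover factor $\matr B_0 \matr P_z$ on the right has norm at most $1$), reducing the claim to showing that each such pair has operator norm at most $\eta_0 + 2\gamma$.

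First I would record structural facts about $\matr B_i$: it is symmetric (since $\matr I \otimes \matr A_H$ is symmetric and $\matr G_i$ is an involution, hence symmetric), stochastic with $\matr B_i \one = \one$, and by hypothesis has $\sigma_2(\matr B_i) \le \gamma$. Consequently it decomposes as $\matr B_i = \matr J + \matr E_i$, where $N = |V(G)|\cdot|V(H)|$, $\matr J = \one\one^T/N$ is the projector onto the constants, $\matr E_i \one = \matr E_i^T \one = 0$, and $\|\matr E_i\|_{\textup{op}} \le \gamma$. Expanding the pair,
\[
\matr B_{j+1}\matr P_z \matr B_j \matr P_z \;=\; \matr B_{j+1}\matr P_z \matr J \matr P_z \;+\; \matr B_{j+1}\matr P_z \matr E_j \matr P_z.
\]
The second summand is bounded immediately by $\|\matr B_{j+1}\|\,\|\matr P_z\|\,\|\matr E_j\|\,\|\matr P_z\| \le \gamma$.

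The heart of the argument is the first summand. Using $\matr J \matr P_z = \frac{1}{N}\one (\matr P_z \one)^T$, it equals the rank-$1$ operator $\frac{1}{N}(\matr B_{j+1}\matr P_z \one)(\matr P_z \one)^T$. I would decompose $\matr P_z \one = c\one + v$ with $v \perp \one$, where $c = \frac{1}{N}\one^T \matr P_z \one$ satisfies $|c| = \bias(z) \le \eta_0$ and $\|v\|^2 = (1-c^2)N \le N$. Since $\matr B_{j+1}\one = \one$ and $\matr B_{j+1} v = \matr E_{j+1}v$, and since $\matr E_{j+1}^T \one = 0$ guarantees $\matr E_{j+1}v \perp \one$, I obtain by the Pythagorean identity $\|\matr B_{j+1}\matr P_z \one\|^2 = c^2 N + \|\matr E_{j+1}v\|^2 \le (c^2 + \gamma^2)N$. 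Combined with $\|\matr P_z \one\| = \sqrt{N}$, the rank-$1$ operator has norm at most $\sqrt{c^2 + \gamma^2} \le |c| + \gamma \le \eta_0 + \gamma$. Summing the two pieces gives the per-pair bound $\|\matr B_{j+1}\matr P_z \matr B_j \matr P_z\|_{\textup{op}} \le \eta_0 + 2\gamma$, which together with submultiplicativity across the $\lfloor (s-1)/2\rfloor$ pairs proves the claim.

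The calculation is essentially mechanical and I do not anticipate a serious obstacle. The only subtle point is recognizing that the $\matr E_{j+1}v$ component stays orthogonal to $\one$ — this orthogonality uses the symmetry and stochasticity of $\matr B_{j+1}$, and is precisely what prevents a parasitic $\gamma^2$ cross-term that would otherwise weaken the per-pair bound to $\eta_0 + 2\gamma + \gamma^2$.
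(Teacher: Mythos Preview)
Your proposal is correct and follows essentially the same two-step blocking strategy as the paper: group the product into $\lfloor (s-1)/2\rfloor$ pairs $\matr B_{j+1}\matr P_z \matr B_j \matr P_z$, bound each pair by $\eta_0 + 2\gamma$, and finish by submultiplicativity. The only cosmetic difference is that the paper bounds each pair by decomposing the input vector as $v = v^{\parallel} + v^{\perp}$ and applying the triangle inequality twice, whereas you decompose the operator $\matr B_j = \matr J + \matr E_j$ and handle the $\matr J$-term via a rank-$1$ Pythagorean computation; both routes yield the identical per-pair bound.
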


\begin{proof}
  Let $0 \leq j < s-2$ be even.
  Take a vector $v \in \R^{V(G) \times V(H)}$ with $\norm{v}_2 = 1$ and let $v^{\parallel}$ and $v^{\perp}$ be its parallel and orthogonal components to the all ones vector.
  For $0 \leq i \leq s-2$, let $\matr A_i = (\matr I \otimes \matr A_H) \matr G_i (\matr I \otimes \matr A_H)$.
  Consider two terms $\matr A_{j+1} \matr P_z \matr A_j \matr P_z$ of the product appearing in the claim.
  Since $\matr P_z$ is unitary, $\norm{\matr A_{j+1} \matr P_z \matr A_j \matr P_z}_2 = \norm{\matr A_{j+1} \matr P_z \matr A_j}_2$.
  We have
  \begin{align*}
    \norm{\matr A_{j+1} \matr P_z \matr A_j v}_2 & \le \norm{\matr A_{j+1} \matr P_z \matr A_j v^{\parallel}}_2 + \norm{\matr A_{j+1} \matr P_z \matr A_j v^{\perp}}_2\\
                                         & \le \norm{\matr A_{j+1} \matr P_z \matr A_j v^{\parallel}}_2 + \norm{\matr A_j v^{\perp}}_2\\
                                         & \le \norm{\matr A_{j+1} \matr P_z v^{\parallel}}_2 + \sigma_2(\matr A_j)\\
                                         & \le \norm{\matr A_{j+1} (\matr P_z v^{\parallel})^{\parallel}}_2 + \norm{\matr A_{j+1} (\matr P_z v^{\parallel})^{\perp}}_2 + \sigma_2(\matr A_j)\\
                                         & \le \norm{(\matr P_z v^{\parallel})^{\parallel}}_2 + \sigma_2(\matr A_{j+1}) + \sigma_2(\matr A_j)\\
                                         & \le \eta_0 + 2 \gamma.
  \end{align*}
  Applying this inequality to every two terms of the product, the result follows.
\end{proof}

\begin{corollary}\label{cor:code_cascade_parity_sampling_1}
  Let $W[0,s-1]$ be the collection of walks on the $s$-wide replacement product of the graphs $G$ and $H$ and $\eta_0 > 0$.
  If $\sigma_2((\matr I \otimes \matr A_H) \matr \matr G_i (\matr I \otimes \matr A_H)) \leq \gamma$ for all $0 \leq i \leq s-2$, then $W[0,s-1]$ is an $(\eta_0, \eta)$-parity sampler, where $\eta = (\eta_0 + 2\gamma)^{\lfloor (s-1)/2 \rfloor}$.
\end{corollary}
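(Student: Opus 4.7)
The plan is to deduce this corollary directly from Claim 6.2 by expressing the bias of the lifted codeword as a quadratic form whose magnitude can be controlled by the operator norm that Claim 6.2 already bounds. Fix any $z \in \mathbb{F}_2^{V(G)}$ with $\bias(z) \le \eta_0$, and let $\matr M_i := (\matr I \otimes \matr A_H) \matr G_i (\matr I \otimes \matr A_H)$ so that a walk $w = ((v_0,h_0),\ldots,(v_{s-1},h_{s-1})) \in W[0,s-1]$ is distributed as: sample $(v_0,h_0)$ uniformly, then apply $\matr M_0, \matr M_1, \ldots, \matr M_{s-2}$ as the transition operators. Since each entry of $\dsum_{W[0,s-1]}(z)$ indexed by $w$ is $\sum_{j=0}^{s-1} z_{v_j}$, and the sign operator $\matr P_z$ satisfies $(\matr P_z)_{(v,h),(v,h)} = (-1)^{z_v}$, unrolling the expectation as a telescoping sequence of operator applications yields
\[
\bias\bigl(\dsum_{W[0,s-1]}(z)\bigr) \;=\; \left|\, \bigl\langle \one,\; \matr P_z \matr M_{s-2} \matr P_z \matr M_{s-3} \matr P_z \cdots \matr M_0 \matr P_z \,\one \bigr\rangle \,\right|,
\]
where the inner product is with respect to the uniform measure on $V(G)\times V(H)$, so $\|\one\|_2 = 1$.

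From here, applying Cauchy--Schwarz and using the fact that $\matr P_z$ is an isometry (being a diagonal $\pm 1$ matrix) strips off the leftmost $\matr P_z$, giving
\[
\bias\bigl(\dsum_{W[0,s-1]}(z)\bigr) \;\le\; \left\| \matr M_{s-2} \matr P_z \matr M_{s-3} \matr P_z \cdots \matr M_0 \matr P_z \right\|_{\text{op}} \;=\; \left\| \prod_{i=0}^{s-2} (\matr I \otimes \matr A_H) \matr G_i (\matr I \otimes \matr A_H) \matr P_z \right\|_{\text{op}}.
\]
This is precisely the quantity bounded in Claim 6.2, which under the hypothesis $\sigma_2(\matr M_i) \le \gamma$ yields the upper bound $(\eta_0 + 2\gamma)^{\lfloor (s-1)/2 \rfloor} = \eta$. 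Since this argument applies to every $z$ with $\bias(z) \le \eta_0$, it shows $W[0,s-1]$ is an $(\eta_0, \eta)$-parity sampler, as required.

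There is essentially no obstacle: the only mildly delicate point is book-keeping the number and placement of the $\matr P_z$ factors (there are $s$ of them, one per vertex of the walk, versus the $s-1$ factors of $\matr M_i$), and ensuring that the one ``extra'' $\matr P_z$ is absorbed cleanly via $\|\matr P_z\|_{\text{op}} = 1$ so that what remains matches the operator inside Claim 6.2 exactly. All substantive spectral work has already been done in Claim 6.2 via the parallel/perpendicular decomposition with respect to $\one$, so the corollary is just a clean packaging statement.
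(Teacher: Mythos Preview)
Your proof is correct and follows essentially the same approach as the paper: express the bias as the quadratic form $\lvert\langle \one,\, \matr P_z \prod_i \matr M_i \matr P_z\, \one\rangle\rvert$, use that $\matr P_z$ is unitary (an isometry) together with Cauchy--Schwarz to bound this by the operator norm of $\prod_{i=0}^{s-2}(\matr I\otimes \matr A_H)\matr G_i(\matr I\otimes \matr A_H)\matr P_z$, and then invoke the preceding claim. Your care with counting the $s$ copies of $\matr P_z$ versus the $s-1$ transition operators matches the paper's treatment exactly.
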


\begin{proof}
  Let $z \in \F_2^n$ have bias at most $\eta_0$.
  The bias of $\dsum_{W[0,s-1]}(z)$ is given by \footnote{This is slightly different from the expression for the bias given in \cref{sec:tweaks}, but both are equal since moving on the $H$ component of the graph doesn't affect the bit assigned to a vertex.}
  	$$\bias(\dsum_{W[0,s-1]}(z)) = \abs{\ip{\one}{\matr P_z \left(\prod_{i=0}^{s-2} (\matr I \otimes \matr A_H) \matr G_i (\matr I \otimes \matr A_H) \matr P_z \right) \one}},$$
  where $\matr P_z$ is the diagonal matrix with entries $(\matr P_z)_{(v,h),(v,h)} = (-1)^{z_v}$ for $(v,h) \in V(G) \times V(H)$ and $\one$ is the all-ones vector.
  Since $\matr P_z$ is unitary, we have
  	  $$\bias(\dsum_{W[0,s-1]}(z)) \leq \norm{\prod_{i=0}^{s-2} (\matr I \otimes \matr A_H) \matr G_i (\matr I \otimes \matr A_H) \matr P_z}_{2} \leq (\eta_0+2\gamma)^{\lfloor (s-1)/2 \rfloor} = \eta$$
  by \cref{claim:code_cascade_bias_1}.
  Hence $W[0,s-1]$ is an $(\eta_0, \eta)$-parity sampler.
\end{proof}

For higher levels of the cascade, we need to prove parity sampling for collections of walks over walks.
Since the walks on the first level contain $s$ vertices, when we take walks on higher levels, the operator linking different walks together will always use $G_{s-1}$ as the walk operator for the $G$ step.
Thus we can consider a more specific form of the split operator where we split at a time parameter that is one less than a multiple of $s$.

\begin{definition}
  Let $r \equiv -1 \pmod{s}$ be a positive integer. We define the operator $\sswap{r}$
  as
  $$
  \sswap{r} = \tswap{k_1}{k_2}{k_3},
  $$
  where $k_1 = 0$, $k_2 = r$, and $k_3 = 2r +1$. In this case,
  $W[k_1,k_2] = W[k_2+1,k_3]$.
\end{definition}

All levels of the code cascade beyond the first use walks generated by the directed operator $\sswap{r}$.
Proving parity sampling for these walks is analogous to the proof of \cref{cor:code_cascade_parity_sampling_1}, but slightly simpler since the walk operator doesn't change with each step.

\begin{claim}\label{claim:code_cascade_bias_2}
  Let $r \equiv -1 \pmod{s}$ be a positive integer and $z \in \mathbb{F}_2^{W[0,r]}$ be a word with $\bias(z) \le \eta_0$.
  Let $\widetilde{\matr P}_z$ be the diagonal matrix with entries $(\widetilde{\matr P}_z)_{w,w} = (-1)^{z_w}$ for $w \in W[0,r]$.
  For every integer $k \geq 1$, we have
  $$
  \norm{\left(\sswap{r}\widetilde{\matr P}_z\right)^{k-1}}_2  \le \left(\eta_0 + 2 \cdot \sigma_2\left(\sswap{r}\right) \right)^{\lfloor (k-1)/2\rfloor}.
  $$
\end{claim}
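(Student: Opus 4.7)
The statement is the direct analog of \cref{claim:code_cascade_bias_1} with the family of operators $\{(\matr I \otimes \Aye_H) \matr G_i (\matr I \otimes \Aye_H)\}_i$ replaced by iterated applications of the single Markov operator $\sswap{r}$. The plan is therefore to mimic the proof of \cref{claim:code_cascade_bias_1} essentially verbatim: pair up consecutive factors of $\sswap{r} \widetilde{\matr P}_z$ and bound each pair's operator norm by $\eta_0 + 2\sigma_2(\sswap{r})$ via a parallel--perpendicular decomposition.

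Before doing this, I will dispatch one spectral prerequisite: that both $\sswap{r}$ and its adjoint fix the all-ones vector $\one$, so that $\sigma_2(\sswap{r})$ is precisely the operator norm of $\sswap{r}$ restricted to $\one^\perp$. The identity $\sswap{r}\one = \one$ is immediate from the averaging definition of $\tswap{0}{r}{2r+1}$. For $\sswap{r}^{*}\one = \one$, I will invoke regularity of the walk collection on the $s$-wide replacement product: each $w \in W[0,r]$ extends to exactly $d_2^{2(r+1)}$ concatenated walks in $W[0,2r+1]$, and a double-counting of $|W[0,2r+1]|$ forces each $w' \in W[r+1,2r+1]$ to likewise be the second half of exactly $d_2^{2(r+1)}$ such concatenations. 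Hence the column sums of $\sswap{r}$ also equal $1$, i.e., $\sswap{r}$ is doubly stochastic and has operator norm $1$.

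With this in hand, I will take any unit $v \in \R^{W[0,r]}$, decompose $v = v^{\parallel} + v^{\perp}$ relative to $\one$, and bound two consecutive applications as
\[
\norm{\sswap{r}\widetilde{\matr P}_z \sswap{r}\widetilde{\matr P}_z v}_2
\le \norm{\sswap{r}\widetilde{\matr P}_z \sswap{r} v^{\parallel}}_2 + \norm{\sswap{r}\widetilde{\matr P}_z \sswap{r} v^{\perp}}_2,
\]
after using that $\widetilde{\matr P}_z$ is unitary. The perpendicular term is at most $\sigma_2(\sswap{r})$, since $\norm{\sswap{r} v^{\perp}}_2 \le \sigma_2(\sswap{r})$ while $\widetilde{\matr P}_z$ and $\sswap{r}$ both have operator norm at most $1$. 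For the parallel term, $\sswap{r} v^{\parallel} = v^{\parallel}$, and decomposing $\widetilde{\matr P}_z v^{\parallel} = (\widetilde{\matr P}_z v^{\parallel})^{\parallel} + (\widetilde{\matr P}_z v^{\parallel})^{\perp}$ yields $\norm{(\widetilde{\matr P}_z v^{\parallel})^{\parallel}}_2 \le |\langle \widetilde{\matr P}_z \one, \one\rangle| \cdot \norm{v^{\parallel}}_2 = \bias(z) \cdot \norm{v^{\parallel}}_2 \le \eta_0$, while the subsequent application of $\sswap{r}$ to the perpendicular piece contributes at most $\sigma_2(\sswap{r})$. Summing, each pair of applications of $\sswap{r}\widetilde{\matr P}_z$ contributes a multiplicative factor of at most $\eta_0 + 2\sigma_2(\sswap{r})$, and iterating over the $\lfloor (k-1)/2 \rfloor$ pairs obtained by greedy grouping of the product $(\sswap{r}\widetilde{\matr P}_z)^{k-1}$ yields the claimed bound.

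The only step that requires any genuine thought beyond transcribing the proof of \cref{claim:code_cascade_bias_1} is the verification that $\sswap{r}^{*}\one = \one$; I expect this to be routine given the regularity of the $s$-wide replacement product construction, and once it is in hand the rest of the argument is a mechanical two-line calculation.
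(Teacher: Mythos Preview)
Your proposal is correct and follows essentially the same argument as the paper's proof: pair consecutive factors, use that $\widetilde{\matr P}_z$ is unitary to reduce to bounding $\norm{\sswap{r}\widetilde{\matr P}_z\sswap{r}}_2$, and then carry out the parallel--perpendicular decomposition exactly as you describe. The paper's proof silently uses that $\sswap{r}$ is doubly stochastic (needed both for $\norm{\sswap{r}}_2 \le 1$ and for $\norm{\sswap{r} v^{\perp}}_2 \le \sigma_2(\sswap{r})$); your explicit verification of $\sswap{r}^{*}\one = \one$ via the regularity/double-counting argument fills that gap and is a welcome addition.
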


\begin{proof}
  Take a vector $v \in \R^{W[0,r]}$ with $\norm{v}_2 = 1$ and let $v^{\parallel}$ and $v^{\perp}$ be its parallel and orthogonal components to the all ones vector.
  Since $\widetilde{\matr P}_z$ is unitary, $\norm{\sswap{r} \widetilde{\matr P}_z \sswap{r} \widetilde{\matr P}_z}_2 = \norm{\sswap{r} \widetilde{\matr P}_z \sswap{r}}_2$.
  We have
  \begin{align*}
    \norm{\sswap{r} \widetilde{\matr P}_z \sswap{r} v}_2 & \le \norm{\sswap{r} \widetilde{\matr P}_z \sswap{r} v^{\parallel}}_2 + \norm{\sswap{r} \widetilde{\matr P}_z \sswap{r} v^{\perp}}_2\\
                                         & \le \norm{\sswap{r} \widetilde{\matr P}_z \sswap{r} v^{\parallel}}_2 + \norm{\sswap{r} v^{\perp}}_2\\
                                         & \le \norm{\sswap{r} \widetilde{\matr P}_z v^{\parallel}}_2 + \sigma_2(\sswap{r})\\
                                         & \le \norm{\sswap{r} (\widetilde{\matr P}_z v^{\parallel})^{\parallel}}_2 + \norm{\sswap{r} (\widetilde{\matr P}_z v^{\parallel})^{\perp}}_2 + \sigma_2(\sswap{r})\\
                                         & \le \norm{(\widetilde{\matr P}_z v^{\parallel})^{\parallel}}_2 + \sigma_2(\sswap{r}) + \sigma_2(\sswap{r})\\
                                         & \le \eta_0 + 2\cdot \sigma_2(\sswap{r}).
  \end{align*}
  As $\norm{(\sswap{r}\widetilde{\matr P}_z)^{k-1}}_2 \le \norm{(\sswap{r}\widetilde{\matr P}_z)^2}^{\lfloor (k-1)/2\rfloor}$, the result follows.
\end{proof}

\begin{corollary}\label{cor:code_cascade_parity_sampling_2}
  Let $r \equiv -1 \pmod s$ be a positive integer and $\eta_0 > 0$.
  The collection of walks $W(k)$ with $k$ vertices over the vertex set $W[0,r]$ using random walk operator $\sswap{r}$ is an $(\eta_0, \eta)$-parity sampler, where $\eta = (\eta_0 + 2 \cdot \sigma_2(\sswap{r}))^{\lfloor (k-1)/2 \rfloor}$.
\end{corollary}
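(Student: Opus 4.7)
The plan is to mirror the proof of Corollary~\ref{cor:code_cascade_parity_sampling_1} essentially verbatim, with the single modification that Claim~\ref{claim:code_cascade_bias_1} is replaced by Claim~\ref{claim:code_cascade_bias_2}. The structure of this argument has two ingredients: (i) rewriting the bias of the lifted word as an inner product involving alternating applications of the diagonal sign operator $\widetilde{\matr P}_z$ and the walk operator $\sswap{r}$, and (ii) bounding that inner product by the operator norm already controlled by Claim~\ref{claim:code_cascade_bias_2}.

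Concretely, I would first fix an arbitrary $z \in \F_2^{W[0,r]}$ with $\bias(z) \leq \eta_0$ and let $\widetilde{\matr P}_z$ denote the diagonal sign operator defined in Claim~\ref{claim:code_cascade_bias_2}. A walk in $W(k)$ is obtained by sampling $w_0 \in W[0,r]$ uniformly and then taking $k-1$ successive steps according to $\sswap{r}$. Recognizing $\sswap{r}$ as a Markov operator whose stationary distribution is the uniform distribution on $W[0,r]$, I would express the bias of the lifted word as
\[
\bias\!\left(\dsum_{W(k)}(z)\right)
= \left|\,\E_{(w_0,\ldots,w_{k-1})}(-1)^{z_{w_0}+\cdots+z_{w_{k-1}}}\,\right|
= \left|\,\ip{\one}{\widetilde{\matr P}_z \left(\sswap{r}\widetilde{\matr P}_z\right)^{k-1}\one}\,\right|,
\]
where $\one$ is the all-ones vector with respect to the uniform measure on $W[0,r]$ (so that $\|\one\|_2 = 1$).

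Next, since $\widetilde{\matr P}_z$ is unitary and $\|\one\|_2 = 1$, two applications of Cauchy--Schwarz yield
\[
\left|\ip{\one}{\widetilde{\matr P}_z (\sswap{r}\widetilde{\matr P}_z)^{k-1}\one}\right|
\;\leq\; \norm{(\sswap{r}\widetilde{\matr P}_z)^{k-1}}_2.
\]
Finally, invoking Claim~\ref{claim:code_cascade_bias_2} directly gives
\(\norm{(\sswap{r}\widetilde{\matr P}_z)^{k-1}}_2 \leq (\eta_0 + 2\sigma_2(\sswap{r}))^{\lfloor (k-1)/2\rfloor} = \eta\),
establishing the $(\eta_0,\eta)$-parity sampling property.

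There is no serious obstacle here: all of the spectral work has been absorbed into Claim~\ref{claim:code_cascade_bias_2}, whose proof already handles the alternation between the sign operator and the walk operator via the parallel/orthogonal decomposition relative to $\one$. The only point worth double-checking is the first displayed equation, i.e. that the stationary measure of $\sswap{r}$ is uniform on $W[0,r]$ so that averaging against $\one$ under that measure correctly produces the expected sign. This is immediate from the definition of $\sswap{r}$ in Definition~\ref{def:swap_operator} together with the regularity of $G$ and $H$, which together imply that the uniform distribution on $W[0,r]$ is preserved by $\sswap{r}$.
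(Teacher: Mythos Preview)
Your proposal is correct and follows essentially the same approach as the paper: express the bias of the lifted word as $\abs{\ip{\one}{\widetilde{\matr P}_z(\sswap{r}\widetilde{\matr P}_z)^{k-1}\one}}$, bound by the operator norm using unitarity of $\widetilde{\matr P}_z$, and invoke Claim~\ref{claim:code_cascade_bias_2}. The paper's proof is slightly terser (it does not explicitly verify the stationarity of the uniform distribution, which you rightly flag as a small point to check), but the argument is identical in substance.
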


\begin{proof}
  Let $z \in \F_2^{W[0,r]}$ have bias at most $\eta_0$.
  The bias of the direct sum lifting of $z$ is given by
  	$$\bias(\dsum_{W(k)}(z)) = \abs{\ip{\one}{\widetilde{\matr P}_z (\sswap{r} \widetilde{\matr P}_z)^{k-1} \one}},$$
  where $\widetilde{\matr P}_z$ is the diagonal matrix with entries $(\widetilde{\matr P}_z)_{w,w} = (-1)^{z_w}$ for $w \in W[0,r]$ and $\one$ is the all-ones vector.
  Since $\widetilde{\matr P}_z$ is unitary, we have
  	$$\abs{\ip{\one}{\widetilde{\matr P}_z (\sswap{r} \widetilde{\matr P}_z)^{k-1} \one}} \leq \norm{\left(\sswap{r} \widetilde{\matr P}_z \right)^{k-1}}_2 \leq \left(\eta_0 + 2 \cdot \sigma_2\left(\sswap{r}\right) \right)^{\lfloor (k-1)/2\rfloor} = \eta$$
  by~\cref{claim:code_cascade_bias_2}.
  Hence $W(k)$ is an $(\eta_0, \eta)$-parity sampler.
\end{proof}

\subsection{Splittability of Ta-Shma's Construction}\label{sec:ta-shma:splittability}

We investigate the splittability of the collection of walks generated by Ta-Shma's construction.
In order to formally define this property, we will need the concept of an interval splitting tree, which describes how a walk is split into smaller and smaller pieces.

\begin{definition}[Interval Splitting Tree]
  We say that a binary rooted tree $\tree$ is a $k$-interval splitting tree if it has
  exactly $k$ leaves and
  \begin{itemize}
    \item the root of $\tree$ is labeled with $(0,m,k-1)$ for some $m \in \set{0,1,\dots,k-2}$, and
    \item each non-leaf non-root vertex $v$ of $\tree$ is labeled with $(k_1,k_2,k_3)$ for some
          integer $k_2 \in [k_1,k_3-1]$. Suppose $(k_1',k_2',k_3')$ is the label assigned to the parent of $v$. 
          If $v$ is a left child, we must have $k_1=k_1'$ and $k_3=k_2'$; otherwise, we must
          have $k_1=k_2'+1$ and $k_3=k_3'$.
  \end{itemize}
\end{definition}

Given an interval splitting tree $\tree$, we can naturally associate a split operator $\tswap{k_1}{k_2}{k_3}$ to each internal node $(k_1,k_2,k_3)$.
The splittability of a collection $W[0,k-1]$ of $k$-tuples is a notion of expansion at every node in the splitting tree.

\begin{definition}[$(\tree, \tau)$-splittability]\label{def:splittability}
  The collection $W[0,k-1]$ is said to be
  $(\tree, \tau)$-splittable if $\tree$ is a $k$-interval splitting
  tree and
  $$
  \sigma_2(\tswap{k_1}{k_2}{k_3}) \le \tau
  $$
  for every internal node $(k_1,k_2,k_3)$ of $\tree$.

  If there exists some $k$-interval splitting tree $\tree$ such that
  $W[0,k-1]$ is $(\tree, \tau)$-splittable, then $W[0,k-1]$ will be
  called $\tau$-splittable.
\end{definition}

In order to prove that the collection of walks in Ta-Shma's construction is splittable, a split operator $\tswap{k_1}{k_2}{k_3}$ can be related to the walk operator $(I \otimes \Aye_H) G_{k_2} (I \otimes \Aye_H)$ as shown below.
This structural property will allow us to deduce spectral properties of $\tswap{k_1}{k_2}{k_3}$ from the spectrum of $(I \otimes \Aye_H) G_{k_2} (I \otimes \Aye_H)$.

\begin{lemma}\label{lemma:swap_matrix_rep}
  Let $0 \leq k_1 \leq k_2 < k_3$.
  Suppose $G$ is a $d_1$-regular outer graph on vertex set $[n]$ with walk operator $G_{k_2}$ used at step $k_2$ of a walk on the $s$-wide replacement product and $H$ is a $d_2$-regular inner graph on vertex set $[m]$ with normalized random walk operator $\Aye_H$.
  Then there are orderings of the rows and columns of the representations of $\tswap{k_1}{k_2}{k_3}$ and $\Aye_H$ as matrices such that
  $$
  \tswap{k_1}{k_2}{k_3} = \left( (I \otimes \Aye_H) G_{k_2} (I \otimes \Aye_H) \right) \otimes \Jay/d_2^{2(k_3-k_2-1)},
  $$
  where $\Jay \in \mathbb{R}^{[d_2]^{2(k_2-k_1)} \times [d_2]^{2(k_3-k_2-1)}}$ is the all ones matrix.
\end{lemma}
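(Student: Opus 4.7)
\textbf{Proof plan for Lemma \ref{lemma:swap_matrix_rep}.} The plan is to set up explicit coordinate systems on the row and column index sets of $\tswap{k_1}{k_2}{k_3}$, and then to read off both sides of the desired identity entry-by-entry. Since the walks on the $s$-wide replacement product are generated by $d_2^2$-regular operators $(I \otimes \Aye_H) G_i (I \otimes \Aye_H)$, any walk $w \in W[k_1,k_2]$ is completely determined by its \emph{endpoint} $e \in V(G) \times V(H)$ together with the backward sequence of $d_2^2$-choices used at each of the $k_2-k_1$ steps leading to $e$; symmetrically, any walk $w' \in W[k_2+1,k_3]$ is determined by its \emph{startpoint} $e'$ and the forward sequence of $k_3-k_2-1$ choices of $d_2^2$-moves. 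I will index rows of $\tswap{k_1}{k_2}{k_3}$ by pairs $(e,u)$ with $u \in [d_2]^{2(k_2-k_1)}$ and columns by pairs $(e',u')$ with $u' \in [d_2]^{2(k_3-k_2-1)}$.

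The next step is the key structural observation: given $w = (e,u)$ and $w' = (e',u')$, the indicator $\mathbf{1}_{ww' \in W[k_1,k_3]}$ depends only on $e$ and $e'$, and equals $1$ exactly when $e$ is adjacent to $e'$ in the walk operator $M := (I \otimes \Aye_H) G_{k_2} (I \otimes \Aye_H)$; the histories $u,u'$ impose no further constraints, since by definition of $W[k_1,k_3]$ every intermediate step of the walk is already a legal move on its own graph. Hence
\[
(\tswap{k_1}{k_2}{k_3})_{(e,u),(e',u')} \;=\; \frac{\mathbf{1}_{e \sim_M e'}}{d_2^{2(k_3-k_2)}},
\]
using the normalization in \cref{def:swap_operator}.

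Now I compare this to the Kronecker product on the right-hand side. Because $M$ is the walk operator of a $d_2^2$-regular graph, its matrix entries are $M_{e,e'} = \mathbf{1}_{e \sim_M e'}/d_2^2$. Under the identification of rows with $(e,u)$ and columns with $(e',u')$, the Kronecker product $M \otimes \Jay/d_2^{2(k_3-k_2-1)}$ has $(e,u),(e',u')$-entry equal to $M_{e,e'} \cdot 1/d_2^{2(k_3-k_2-1)} = \mathbf{1}_{e \sim_M e'}/d_2^{2(k_3-k_2)}$, which matches the expression above. The degenerate cases $k_1 = k_2$ (empty backward history, so $W[k_1,k_2] = V(G) \times V(H)$) and $k_2 + 1 = k_3$ (empty forward history, with $\Jay$ a row vector of ones) cause no trouble, as the corresponding factor becomes trivial on its side.

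I do not anticipate a real obstacle here; the content is essentially a bookkeeping exercise once one commits to parameterizing walks by (endpoint, history). The only mild subtlety is to be careful that the ordering convention on the tensor factors in $M \otimes \Jay/d_2^{2(k_3-k_2-1)}$ is compatible with the ordering $(e,u),(e',u')$ chosen for $\tswap{k_1}{k_2}{k_3}$; this is precisely the reordering of rows and columns permitted by the statement of the lemma.
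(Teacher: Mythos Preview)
Your proposal is correct and follows essentially the same approach as the paper: the paper partitions $W[k_1,k_2]$ by last vertex and $W[k_2+1,k_3]$ by first vertex (your ``endpoint $e$'' and ``startpoint $e'$''), observes that the entry depends only on this pair through the transition weight of $(I\otimes\Aye_H)G_{k_2}(I\otimes\Aye_H)$, and matches this block structure to the tensor product. One small point worth tightening: writing $M_{e,e'} = \mathbf{1}_{e\sim_M e'}/d_2^2$ tacitly assumes the zig-zag step is a simple graph, whereas the more robust phrasing (which the paper also uses) is that $\mathbf{1}_{ww'\in W[k_1,k_3]}$ equals $d_2^2$ times the transition weight $M_{e,e'}$---this handles possible multi-edges without change to the argument.
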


\begin{proof}
  Partition the set of walks $W[k_1,k_2]$ into the sets
  $W_{1,1}, \dots, W_{n,m}$, where $w \in W_{i,j}$ if the last vertex
  of the walk $w_{k_2} = (v_{k_2},h_{k_2})$ satisfies $v_{k_2} = i$
  and $h_{k_2} = j$. Similarly, partition $W[k_2+1,k_3]$ into the sets
  $W_{1,1}', \dots, W_{n,m}'$, where $w' \in W_{i,j}'$ if the first
  vertex of the walk $w'_1 = (v_1,h_1)$ satisfies $v_1 = i$ and $h_1 =
  j$.  Note that $\abs{W_{i,j}} = d_2^{2(k_2-k_1)}$ and $\abs{W_{i,j}'} =
  d_2^{2(k_3-k_2-1)}$ for all $(i,j) \in [n]\times[m]$, since there are $d_2^2$ choices for each step of the walk.

  Now order the rows of the matrix $\tswap{k_1}{k_2}{k_3}$ so that all
  of the rows corresponding to walks in $W_{1,1}$ appear first,
  followed by those for walks in $W_{1,2}$, and so on in
  lexicographic order of the indices $(i,j)$ of $W_{i,j}$, with an
  arbitrary order within each set.  Do a similar re-ordering of the
  columns for the sets $W_{1,1}', \dots, W_{1,m}'$.  Observe that
  \begin{align*}
  \left(\tswap{k_1}{k_2}{k_3}\right)_{w,w'} &= \frac{\One_{ww' \in W[k_1,k_3]}}{d_2^{2(k_3-k_2)}} \\
                                          & = \frac{d_2^2 \cdot (\text{weight of  transition from } (v_{k_2},h_{k_2}) \text{ to } (v_1', h'_1)  \text{ in } (I \otimes \Aye_H)G_{k_2} (I \otimes \Aye_H))}{d_2^{2(k_3-k_2)}},
  \end{align*}
  which only depends on the adjacency of the last vertex of $w$ and
  the first vertex of $w'$.  If the vertices $w_{k_2} = (v_{k_2},h_{k_2})$ and $w_1'=(v_1,h_1)$
  are adjacent, then
  $$
  \left(\tswap{k_1}{k_2}{k_3}\right)_{w,w'} = \left((I \otimes \Aye_H)G_{k_2}(I \otimes \Aye_H)\right)_{(v_{k_2},h_{k_2}),(v_1', h'_1)}/d_2^{2(k_3-k_2-1)},
  $$
  for every $w \in W_{w_{k_2}}$ and $w' \in W_{w_{k_1}}'$; otherwise,
  $\left(\tswap{k_1}{k_2}{k_3}\right)_{w,w'} = 0$.  Since the walks in
  the rows and columns are sorted according to their last and first
  vertices, respectively, the matrix $\tswap{k_1}{k_2}{k_3}$ exactly
  matches the tensor product
  $((I \otimes \Aye_H)G_{k_2} (I \otimes \Aye_H)) \otimes \Jay/d_2^{2(k_3-k_2-1)}$.
\end{proof}

\begin{corollary}\label{cor:split_walk_spectral_gap}
  Let $0 \leq k_1 \leq k_2 < k_3$.
  Suppose $G$ is a $d_1$-regular outer graph with walk operator $G_{k_2}$ used at step $k_2$ of a walk on the $s$-wide replacement product and $H$ is a $d_2$-regular inner graph with normalized random walk operator $\Aye_H$.
  Then
  $$
  \sigma_2(\tswap{k_1}{k_2}{k_3}) = \sigma_2((I \otimes \Aye_H) G_{k_2} (I \otimes \Aye_H)).
  $$
\end{corollary}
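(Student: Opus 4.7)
The plan is to derive the corollary directly from Lemma~\ref{lemma:swap_matrix_rep} by computing singular values of a tensor product. By that lemma, under a suitable reordering of rows and columns,
$$
\tswap{k_1}{k_2}{k_3} \;=\; \matr A \otimes \matr B,
$$
where $\matr A = (I \otimes \Aye_H)\, G_{k_2}\,(I \otimes \Aye_H)$ and $\matr B = \Jay/d_2^{2(k_3-k_2-1)}$ is a rank-one ``averaging'' matrix between two uniformly weighted probability spaces. Since relabeling coordinates in a probability space preserves singular values, it is enough to compute $\sigma_2(\matr A \otimes \matr B)$.

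First I would pin down the singular values of $\matr B$. With respect to the uniform probability inner products on its domain and codomain, $\matr B$ sends every input function to the constant function equal to its average; computing $\matr B^{\ast} \matr B$ confirms that $\matr B$ is a rank-one operator whose only nonzero singular value is $\sigma_1(\matr B) = 1$, attained by the constant direction, with $\sigma_i(\matr B) = 0$ for $i \ge 2$. This matches the expected behavior of a trivial Markov transition operator when the probability inner product coincides with the stationary measure.

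Second, I would apply the standard tensor-product identity $(\matr A \otimes \matr B)^{\ast}(\matr A \otimes \matr B) = (\matr A^{\ast} \matr A) \otimes (\matr B^{\ast} \matr B)$, from which the singular values of $\matr A \otimes \matr B$ are exactly the products $\{\sigma_i(\matr A)\,\sigma_j(\matr B)\}$. Because $\matr B$ has a single nonzero singular value (equal to $1$), the multiset of nonzero singular values of $\matr A \otimes \matr B$ coincides with that of $\matr A$ with matching multiplicities; in particular $\sigma_2(\matr A \otimes \matr B) = \sigma_2(\matr A)$, which is the desired identity. The only mild bookkeeping is to verify that the probability-measure normalization is compatible across the two tensor factors and the product space, so that the above identity holds in the intended inner-product structure. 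I do not expect a real obstacle here, and once $\sigma_1(\matr B)=1$ has been established, the rest is immediate.
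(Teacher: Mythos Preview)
Your proposal is correct and follows essentially the same approach as the paper: invoke Lemma~\ref{lemma:swap_matrix_rep} to write $\tswap{k_1}{k_2}{k_3}$ as the tensor product $\matr A \otimes \matr B$ with $\matr B = \Jay/d_2^{2(k_3-k_2-1)}$, then use that $\matr B$ is rank one with top singular value $1$ so that $\sigma_2(\matr A \otimes \matr B) = \sigma_2(\matr A)$. The paper's proof simply asserts this last equality in one line without spelling out the rank-one computation, but the underlying reasoning is identical.
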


\begin{proof}
  Using~\cref{lemma:swap_matrix_rep} and the fact that
  $$
  \sigma_2(((I \otimes \Aye_H)G_{k_2} (I \otimes \Aye_H)) \otimes \Jay/d_2^{2(k_3-k_2-1)})  = \sigma_2((I \otimes \Aye_H)G_{k_2} (I \otimes \Aye_H)),
  $$
  the result follows.
\end{proof}

\begin{remark}
\Cref{cor:split_walk_spectral_gap} is what causes the splittability argument to break down for Ta-Shma's original construction, as $\sigma_2(\matr G_{k_2} (\matr I \otimes \matr A_H)) = 1$.
\end{remark}

By combining this result with the spectral bound from \cref{fact:zig_zag_bound}, we find that the collection of walks of length $s$ on the $s$-wide replacement product is $(\mathcal T, \tau)$-splittable for any splitting tree $\mathcal T$, where $\tau$ is controlled by the second singular values of the graphs $G$ and $H$.
This analysis can also be applied to walks on higher levels of the cascade where the vertex set is $W[0,r]$.

\begin{corollary}[Restatement of~\cref{lemma:splittability_preview}] \label{cor:splittability}
  The collection of walks $W[0,s-1]$ on the $s$-wide replacement product with outer graph $G$ and inner graph $H$ and the collection of walks $W(k)$ on the vertex set $W[0,r]$ with random walk operator $\sswap{r}$ and $r \equiv -1 \pmod s$ are both $\tau$-splittable with $\tau = \sigma_2(G) + 2 \sigma_2(H) + \sigma_2(H)^2$.
\end{corollary}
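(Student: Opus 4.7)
The plan is to fix an arbitrary $k$-interval splitting tree $\mathcal T$ for each collection and show that every internal node's split operator has second singular value at most $\tau = \sigma_2(G) + 2\sigma_2(H) + \sigma_2(H)^2$. The argument in both cases reduces the split operator's $\sigma_2$ to $\sigma_2$ of a single-step walk operator via the tensor decomposition from Lemma 6.7 (\cref{lemma:swap_matrix_rep}), after which Fact 4.1 (\cref{fact:zig_zag_bound}) closes the bound.

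For the first collection $W[0,s-1]$, fix any $s$-interval splitting tree $\mathcal T$. For an internal node labeled $(k_1,k_2,k_3)$ with $0 \le k_1 \le k_2 < k_3 \le s-1$, the split operator is $\tswap{k_1}{k_2}{k_3}$. Corollary 6.8 (\cref{cor:split_walk_spectral_gap}) gives $\sigma_2(\tswap{k_1}{k_2}{k_3}) = \sigma_2((I \otimes \Aye_H) G_{k_2} (I \otimes \Aye_H))$, and Fact 4.1 bounds this by $\sigma_2(G) + 2\sigma_2(H) + \sigma_2(H)^2 = \tau$. Hence $W[0,s-1]$ is $(\mathcal T,\tau)$-splittable.

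For the second collection, walks of length $k-1$ over the vertex set $W[0,r]$ whose consecutive vertices are linked by $\sswap{r} = \tswap{0}{r}{2r+1}$, the argument is analogous once we observe that the split operator at any internal node $(k_1,k_2,k_3)$ of an interval splitting tree has the same tensor-product structure as in \cref{lemma:swap_matrix_rep}: grouping a prefix walk $(u_1,\dots,u_{k_2-k_1+1})$ and an extension $(u_{k_2+2},\dots,u_{k_3+1})$ by the last vertex of the prefix and the first vertex of the extension (both elements of $W[0,r]$), each entry of the split operator depends only on whether the transition between these two $W[0,r]$-vertices is allowed under $\sswap{r}$. After reordering rows and columns, the split operator equals $\sswap{r} \otimes J / d_2^{2(k_3-k_2-1)(r+1)}$ (up to permutation), so its second singular value equals $\sigma_2(\sswap{r})$. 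Since $r \equiv -1 \pmod s$, one further application of Corollary 6.8 gives $\sigma_2(\sswap{r}) = \sigma_2((I \otimes \Aye_H) G_{s-1} (I \otimes \Aye_H))$, and Fact 4.1 yields the desired bound $\tau$.

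The only nontrivial step is verifying the tensor-product form of the higher-level split operator; this is a routine but slightly fiddly bookkeeping exercise, essentially a ``lift'' of the proof of \cref{lemma:swap_matrix_rep} one level up the cascade, using regularity of the underlying $d_2^2$-regular graph defined by $\sswap{r}$ to ensure that the counts of admissible extensions depend only on the gluing endpoints. Once this structural claim is in hand, everything else is a direct invocation of results already proved in Sections 4 and 6.
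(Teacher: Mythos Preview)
Your proposal is correct and follows essentially the same approach as the paper. The one point of divergence is in the second case: rather than re-deriving a tensor decomposition of the higher-level split operator as $\sswap{r}\otimes J/d_2^{\cdots}$, the paper simply observes that a split at node $(k_1',k_2',k_3')$ of the $k$-interval tree for $W(k)$, once unrolled back to the underlying $s$-wide replacement product, is literally an operator of the form $\tswap{k_1}{k_2}{k_3}$ with $k_1\equiv 0\pmod s$ and $k_2,k_3\equiv -1\pmod s$; then \cref{cor:split_walk_spectral_gap} and \cref{fact:zig_zag_bound} apply directly without any additional structural work. Your route reaches the same bound but does a bit more bookkeeping than necessary.
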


\begin{proof}
  By~\cref{cor:split_walk_spectral_gap} and~\cref{fact:zig_zag_bound}, the split operator $\tswap{k_1}{k_2}{k_3}$ for any $0 \leq k_1 \leq k_2 < k_3$ satisfies
  	$$\sigma_2(\tswap{k_1}{k_2}{k_3}) = \sigma_2((I \otimes \Aye_H) G_{k_2} (I \otimes \Aye_H)) \leq \sigma_2(G) + 2 \sigma_2(H) + \sigma_2(H)^2,$$
  so $W[0,s-1]$ is $\tau$-splittable with $\tau = \sigma_2(G) + 2 \sigma_2(H) + \sigma_2(H)^2$, as any internal node $(k_1,k_2,k_3)$ of any $s$-interval splitting tree will have $\sigma_2(\tswap{k_1}{k_2}{k_3}) \leq \tau$.
  The split operators of any $k$-interval splitting tree for the collection $W(k)$ are of the form $\tswap{k_1}{k_2}{k_3}$ with $k_1 \equiv 0 \pmod s$ and $k_2, k_3 \equiv -1 \pmod s$, which means $W(k)$ is $\tau$-splittable as well.
\end{proof}

\subsection{Integration with Sum-of-Squares}\label{sec:sos_integration}

\newcommand{\V}[2]{{v}_{(#1,#2)}}
\newcommand{\W}[2]{{w}_{(#1,#2)}}
\newcommand{\Vempty}{\V{\emptyset}{\emptyset}}

Before defining tensoriality and obtaining it in our setting, we examine how the Sum-of-Squares hierarchy is used in the list decoding algorithm in more detail.

\subsubsection{SOS Preliminaries: $p$-local PSD Ensembles}

The SOS hierarchy gives a sequence of increasingly tight semidefinite programming relaxations for several optimization problems, including CSPs.
Since we will use relatively few facts about the SOS hierarchy, already developed in the analysis of Barak, Raghavendra and Steurer \cite{BarakRS11}, we will adapt their notation of \emph{$p$-local distributions} to describe the relaxations.

Solutions to a semidefinite relaxation of a CSP on $n$ boolean variables using $p$ levels of the SOS hierarchy induce probability distributions $\mu_S$ over $\F_2^S$ for any set $S \subseteq [n]$ with $\abs{S} \leq p$.
These distributions are consistent on intersections: for $T \subseteq S \subseteq [n]$, we have $\mu_{S|T} = \mu_T$, where $\mu_{S|T}$ denotes the restriction of the distribution $\mu_S$ to the set $T$.
We use these distributions to define a collection of random variables $\rv Z_1, \ldots, \rv Z_n$ taking values in $\F_2$ such that for any set $S$ with $\abs{S} \leq p$, the collection of variables $\inbraces{\rv Z_i}_{i \in S}$ has joint distribution $\mu_S$.
Note that the entire collection $\{\rv Z_1, \ldots, \rv Z_n\}$ \emph{may not} have a joint distribution: this property is only true for sub-collections of size at most $p$.
We will refer to the collection $\{\rv Z_1, \ldots, \rv Z_n\}$ as a \emph{$p$-local ensemble} of random variables.

For any $T \subseteq [n]$ with $\abs{T} \leq p-2$ and any $\xi \in \F_2^T$, we can define a $(p-\abs{T})$-local ensemble $\{\rv Z_1', \ldots, \rv Z_n'\}$ by ``conditioning'' the local distributions on the event $\rv Z_T = \xi$, where $\rv Z_T$ is shorthand for the collection $\inbraces{\rv Z_i}_{i \in T}$.
For any $S$ with $\abs{S} \leq p-\abs{T}$, we define the distribution of $\rv Z_S'$ as $\mu_S' := \mu_{S \cup T} | \set{\rv Z_T = \xi}$.

Finally, the semidefinite program also ensures that for any such
conditioning, the conditional covariance matrix
	$$\Emm_{(S_1, \alpha_1)(S_2,\alpha_2)} ~=~ \cov\inparen{\One_{[\rv Z_{S_1}' = \alpha_1]}, \One_{[\rv Z_{S_2}' = \alpha_2]}}$$
is positive semidefinite, where $\abs{S_1}, \abs{S_2} \leq (p-\abs{T})/2$.  Here, for each pair $S_1, S_2$ the covariance is
computed using the joint distribution $\mu_{S_1 \cup S_2}'$.
In this paper, we will only consider $p$-local ensembles such that for every conditioning on a set of size at most $(p-2)$, the conditional covariance matrix is PSD. We will refer to these as \emph{$p$-local PSD ensembles}.
We will also need a simple corollary of the above definitions.

\begin{fact}\label{fact:set-ensemble}
  Let $\{\rv Z_1, \ldots, \rv Z_n\}$ be a $p$-local PSD ensemble and $W(k) \subseteq [n]^k$
  For $1 \leq i < k$, define $W(i) \subseteq [n]^i$ to be the collection of tuples of size $i$ appearing in elements of $W(k)$.
  For all $p' \leq p/2$, the collection $\inbraces{\rv Z_{\operatorname{set}(w)}}_{w \in W(\leq p')}$ is a $(p/p')$-local PSD ensemble, where $W(\leq p') = \bigcup_{i = 1}^{p'} W(i)$.
\end{fact}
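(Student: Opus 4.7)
The plan is to view each super-variable $\rv Z_{\operatorname{set}(w)}$, for $w \in W(\leq p')$, as a bundle of at most $p'$ ground variables, and to derive every piece of the $(p/p')$-local PSD structure from the corresponding $p$-local PSD structure of the ground ensemble $\{\rv Z_1,\dots,\rv Z_n\}$.

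First I would construct the local distributions. Given any collection $\mathcal{S} = \{w_1,\dots,w_m\}$ of super-indices with $m \leq p/p'$, let $S = \bigcup_{j=1}^{m} \operatorname{set}(w_j) \subseteq [n]$. Then $|S| \leq m \cdot p' \leq p$, so the ground ensemble provides a joint distribution $\mu_S$ over $\F_2^S$, which naturally induces a joint distribution $\mu_{\mathcal{S}}^{\text{super}}$ on the tuples indexed by $w_1,\dots,w_m$ (each $\rv Z_{\operatorname{set}(w_j)}$ is just the restriction of a sample from $\mu_S$ to coordinates in $\operatorname{set}(w_j)$). Consistency of the super-distributions on intersections of super-index sets is immediate from the corresponding consistency of the $\mu_S$'s in the ground ensemble.

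Next I would verify the PSD condition under conditioning. Fix a super-conditioning set $\mathcal{T}$ with $|\mathcal{T}| \leq (p/p')-2$ and an assignment $\xi$ to it; let $T = \bigcup_{w \in \mathcal{T}} \operatorname{set}(w)$, so $|T| \leq |\mathcal{T}| \cdot p' \leq p - 2p' \leq p-2$. Conditioning the ground ensemble on the event $\rv Z_T = \xi^{\text{ground}}$ is legal and yields a conditional covariance matrix $\Emm^{\text{ground}}$ which is PSD over all pairs $(S_1,\alpha_1),(S_2,\alpha_2)$ with $|S_i| \leq (p-|T|)/2$. For any two super-sets $\mathcal{S}_1,\mathcal{S}_2$ with $|\mathcal{S}_i| \leq ((p/p') - |\mathcal{T}|)/2$, their underlying ground sets $S_i^{\text{ground}}$ satisfy $|S_i^{\text{ground}}| \leq |\mathcal{S}_i| \cdot p' \leq ((p/p') - |\mathcal{T}|) p'/2 \leq (p - |T|)/2$, so the relevant entries of the super covariance matrix $\Emm^{\text{super}}$ arise as entries of $\Emm^{\text{ground}}$ via the identification $\One_{[\rv Z_{\mathcal{S}_i}' = \alpha_i]} = \One_{[\rv Z_{S_i^{\text{ground}}}' = \alpha_i^{\text{ground}}]}$.

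Finally, I would conclude PSD by noting that $\Emm^{\text{super}} = P^{\top}\Emm^{\text{ground}}P$ for a $0/1$ selection matrix $P$ (with one $1$ per column, allowing repetitions when distinct super-events collapse to the same ground event), which preserves positive semidefiniteness. The only place that requires any care is the bookkeeping in the previous paragraph — making sure the factor-of-$p'$ blow-up in both the conditioning set size and the query set size still fits inside the $p$-local PSD budget — but the arithmetic $|\mathcal{T}| \cdot p' \leq p - 2$ and $|S_i^{\text{ground}}| \leq (p - |T|)/2$ works out exactly, with no slack wasted; no obstacle beyond this routine check is anticipated.
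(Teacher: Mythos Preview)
The paper does not actually give a proof of this statement; it introduces \cref{fact:set-ensemble} as ``a simple corollary of the above definitions'' and moves on. Your plan is exactly the natural unpacking of that remark: bundle each super-index $w$ into the ground set $\operatorname{set}(w)$ of size at most $p'$, pull back every local distribution and every conditioning from the super level to the ground level (where the total number of ground variables touched is at most $p'$ times the number of super-indices, hence within budget), and observe that the super covariance matrix is a compression $P^{\top}\Emm^{\text{ground}}P$ of the ground one. This is correct and is precisely what the paper is leaving to the reader.

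One tiny bookkeeping point worth recording when you write it out: a super-assignment $\alpha$ to $\mathcal{S} = \{w_1,\dots,w_m\}$ lives in $\prod_j \F_2^{\operatorname{set}(w_j)}$ and may be \emph{inconsistent} on overlaps $\operatorname{set}(w_j)\cap\operatorname{set}(w_{j'})$, in which case $\One_{[\rv Z_{\mathcal{S}}'=\alpha]}\equiv 0$ and the corresponding row/column of $\Emm^{\text{super}}$ is identically zero. So $P$ should have an all-zero column for such $\alpha$ rather than ``one $1$ per column''; this does not affect PSD-ness, but it is a small exception to your description of $P$.
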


For random variables $\rv Z_S$ in a $p$-local PSD ensemble, we use the notation $\inbraces{\rv Z_S}$ to denote the distribution of $\rv Z_S$ (which exists when $\abs{S} \leq p$).
As we will work with ordered tuples of variables instead of sets, we define $\rv Z_w$ for $w \in [n]^k$ based on the set $S_w = \operatorname{set}(w)$, taking care that repeated elements of $w$ are always assigned the same value.

\begin{definition}[Plausible assignment]
  Given $w = (w_1,\dots,w_k) \in [n]^k$ and an assignment $\alpha \in \F_2^w$, we say that $\alpha$ is plausible for $w$ if there are no distinct $i,j \in [k]$ such that $w_i = w_j$ but $\alpha_i \ne \alpha_j$.
\end{definition}

The distribution $\{\rv Z_w\} = \mu_w$ is defined as $\mu_w(\alpha) = \mu_{S_w}(\alpha|_{S_w})$ if $\alpha \in \F_2^w$ is plausible for $w$, and $\mu_w(\alpha) = 0$ otherwise.

\subsubsection{Tensoriality}
A key algorithm in the list decoding framework is propagation rounding (\cref{algo:prop-rd}), which solves a CSP to find solutions close to a codeword.
Suppose $W(k) \subseteq [n]^k$ is a collection of walks, or more generally, a collection of any $k$-tuples.
The algorithm starts with a local PSD ensemble $\set{\rv Z_1, \ldots, \rv Z_n}$ which is the solution to an SOS program for list decoding.
Propagation rounding takes this solution and conditions some of the variables according to a random assignment to these variables to yield another local PSD ensemble $\rv Z'$.

\begin{algorithm}{Propagation Rounding Algorithm, adapted from~\cite{AJQST19}}{An $(L+2k)$-local PSD ensemble $\set{\rv Z_1, \ldots, \rv Z_n}$ and collection $W(k) \subseteq [n]^k$.}{A random assignment $(\assn_1, \ldots, \assn_n) \in  \F_2^n$ and $2k$-local PSD ensemble $\rv Z'$.}\label{algo:prop-rd}
    \begin{enumerate}
        \item Choose $m \in \set*{1, \ldots, L/k}$ uniformly at random.
        \item For $j = 1, \dots, m$, sample a walk $w_j$ independently and uniformly from $W(k)$.
        \item Write $S = \bigcup_{j = 1}^m \textup{set}(w_j)$ for the set of the seed vertices.
        \item Sample an assignment $\assn: S \to \F_2$ according to the local distribution $\set{\rv Z_{S}}$.
        \item Set $\rv Z' = \set{\rv Z_1, \ldots, \rv Z_n | \rv Z_S = \assn}$, i.e.~the local ensemble
            $\rv Z$ conditioned on agreeing with $\assn$.
        \item For all $i \in [n]$, sample independently $\assn_i \sim \set{\rv Z'_i}$.
        \item Output $(\assn_1, \ldots, \assn_n)$ and $\rv Z'$.
    \end{enumerate}
\end{algorithm}

If the collection $W(k) \subseteq [n]^k$ used in the direct sum lifting is amenable to SOS rounding, the conditioned ensemble $\rv Z'$ will be able to recover a word close to some codeword on the list.
This is quantified by the following \textit{tensorial} properties.
We will see shortly how splittability will be used to obtain tensoriality in our setting.

\begin{definition}[Tensorial Walk Collection]
  Let $W(k) \subseteq [n]^k$, $\mu \in [0,1]$, and $L \in \mathbb{N}$.
  Define $\Omega$ to be the set of all tuples $(m, S, \sigma)$ obtainable in propagation rounding (\cref{algo:prop-rd}) on $W(k)$ with SOS degree parameter $L$.
  We say that $W(k)$ is $(\mu,L)$-tensorial if the local PSD ensemble $\rv Z'$ returned by propagation rounding satisfies
  \begin{equation}
    \ExpOp_{\Omega} \ExpOp_{w \in W(k)}{ \norm{\set{\rv Z_{w}'} - \set*{\rv Z_{w(1)}'}\cdots \set*{\rv Z_{w(k)}'}}_1} \le \mu.
  \end{equation}
\end{definition}

The framework actually uses a strengthening of the above property, in which variables for pairs of walks chosen independently approximately behave as a product.

\begin{definition}[Two-Step Tensorial Walk Collection]\label{def:two_step_tensorial}
  Let $W(k) \subseteq [n]^k$, $\mu \in [0,1]$, and $L \in \mathbb{N}$.
  Define $\Omega$ to be the set of all tuples $(m, S, \sigma)$ obtainable in propagation rounding (\cref{algo:prop-rd}) on $W(k)$ with SOS degree parameter $L$.
  We say that $W(k)$ is $(\mu,L)$-two-step tensorial if it is $(\mu,L)$-tensorial and the local PSD ensemble $\rv Z'$ returned by propagation rounding satisfies the additional condition
    $$\ExpOp_{\Omega} \ExpOp_{w,w' \in W(k)}{ \norm{\set{\rv Z_{w}' \rv Z_{w'}'} - \set*{\rv Z_{w}'}\set*{\rv Z_{w'}'}}_1 } \le \mu.$$
\end{definition}

\subsubsection{From Directed to Undirected}

In order to apply the list decoding framework using the directed split operator $\tswap{k_1}{k_2}{k_3}$, we will replace it with the symmetrized version
  $$
  \mathcal{U}(\tswap{k_1}{k_2}{k_3}) = 
  \begin{pmatrix}
    0 & \tswap{k_1}{k_2}{k_3}\\
    \left(\tswap{k_1}{k_2}{k_3}\right)^{\dag} & 0
  \end{pmatrix}
  $$
and show how $\mathcal{U}(\tswap{k_1}{k_2}{k_3})$ corresponds to a particular undirected graph.

\begin{definition}
  Let $0 \le k_1 \le k_2 < k_3$. We define the operator
  $\fswap{k_2}{k_3}{k_1} \colon \mathbb{R}^{W[k_1,k_2]} \to \mathbb{R}^{W[k_2+1,k_3]}$ such that for every $f \in \R^{W[k_1,k_2]}$,
  $$
  \left(\fswap{k_2}{k_3}{k_1} (f)\right)(w') \coloneqq \E_{w: ww' \in W[k_1,k_3]} [f(w)],
  $$
  for every $w' \in W[k_2+1,k_3]$.
\end{definition}

The operator $\mathcal{U}(\tswap{k_1}{k_2}{k_3})$ defines an undirected weighted bipartite graph on the vertices $W[k_1,k_2] \cup W[k_2+1,k_3]$.
We can see that $\fswap{k_2}{k_3}{k_1}$ is the adjoint of $\tswap{k_1}{k_2}{k_3}$, which means that each edge $ww'$ in this graph is weighted according to the transition probability from one walk to the other whenever one of $w$, $w'$ is in $W[k_1,k_2]$ and the other is in $W[k_2+1,k_3]$.

\begin{claim}
  $$
  \left(\tswap{k_1}{k_2}{k_3}\right)^{\dag} = \fswap{k_2}{k_3}{k_1}.
  $$
\end{claim}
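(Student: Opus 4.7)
The plan is to verify the claim by a direct computation using the conventions stated in the paper, namely that inner products on coordinate spaces are taken with respect to the uniform probability measure. For arbitrary test functions $g \in \R^{W[k_1,k_2]}$ and $f \in \R^{W[k_2+1,k_3]}$, I would unfold
\[
\langle g, \tswap{k_1}{k_2}{k_3} f \rangle
~=~ \E_{w \in W[k_1,k_2]} \Bigl[ g(w) \cdot \E_{w' :\, ww' \in W[k_1,k_3]} [f(w')] \Bigr]
\]
and show that this equals $\langle \fswap{k_2}{k_3}{k_1} g,\, f \rangle$. The essential point is that the joint distribution on pairs $(w, w')$ induced by sampling $w$ uniformly from $W[k_1,k_2]$ and then $w'$ uniformly among extensions $ww' \in W[k_1,k_3]$ coincides with the uniform distribution on $W[k_1,k_3]$ (identified with pairs of prefixes and suffixes at time $k_2$), which in turn coincides with the distribution obtained by sampling $w'$ uniformly from $W[k_2+1,k_3]$ and then $w$ uniformly among valid prefixes.

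To see this symmetry, I would use the fact, already exploited in \cref{def:swap_operator} and \cref{lemma:swap_matrix_rep}, that each step of the $s$-wide replacement product walk is governed by the operator $(\matr I \otimes \Aye_H) \matr G_i (\matr I \otimes \Aye_H)$, which in the tweaked construction is a \emph{symmetric} $d_2^2$-regular walk operator. Consequently, for a fixed $w \in W[k_1,k_2]$ the number of extensions $w'$ with $ww' \in W[k_1,k_3]$ equals $d_2^{2(k_3-k_2)}$ (as recorded in \cref{def:swap_operator}); by the symmetry/regularity of the step operator, the analogous count for fixed $w'$ evaluates to $d_2^{2(k_2-k_1+1)}$. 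Combined with $|W[k_1,k_2]| = n d_1^s \cdot d_2^{2(k_2-k_1)}$ and $|W[k_2+1,k_3]| = n d_1^s \cdot d_2^{2(k_3-k_2-1)}$, both iterated expectations reduce to the uniform average of $g(w) f(w')$ over $\{(w, w') : ww' \in W[k_1,k_3]\}$, yielding the adjoint identity.

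The only step requiring care is the counting of valid prefixes for a fixed suffix; this is the place where the undirectedness afforded by the tweaked $s$-wide product (rather than Ta-Shma's original product) is used, exactly as in the derivation of \cref{fact:zig_zag_bound} and \cref{cor:split_walk_spectral_gap}. Once this counting is in hand, the rest is a one-line bookkeeping of the normalizing factors so that the matrix entry of $\tswap{k_1}{k_2}{k_3}^{\dag}$ in the uniform-measure inner product, namely $\frac{|W[k_2+1,k_3]|}{|W[k_1,k_2]|} \cdot (\tswap{k_1}{k_2}{k_3})_{w,w'}$, simplifies to $\One_{ww' \in W[k_1,k_3]}/d_2^{2(k_2-k_1+1)}$, which is exactly the matrix of $\fswap{k_2}{k_3}{k_1}$.
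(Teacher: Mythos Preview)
Your proposal is correct and follows essentially the same approach as the paper: both unfold the inner products $\langle g, \tswap{k_1}{k_2}{k_3} f\rangle$ and $\langle \fswap{k_2}{k_3}{k_1} g, f\rangle$ and reduce each to the symmetric expression $\E_{ww' \in W[k_1,k_3]}[g(w)f(w')]$, using that the marginals of the uniform distribution on $W[k_1,k_3]$ onto prefixes and suffixes are uniform. The paper phrases the computation more abstractly in terms of the uniform measures $\Pi_{i,j}$ rather than your explicit $d_2$-counts, but the content is identical; your additional matrix-entry bookkeeping at the end is a fine equivalent way to wrap up.
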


\begin{proof}
  Let $f \in C^{W[k_1,k_2]}$ and $g \in C^{W[k_2+1,k_3]}$.
  For $i \leq j$, define $\Pi_{i,j}$ to be the uniform distribution on $W[i,j]$.
  We show that $\ip{f}{\tswap{k_1}{k_2}{k_3} g} = \ip{\fswap{k_2}{k_3}{k_1} f}{g}$.
  On one hand we have
  \begin{align*}
    \ip{f}{\tswap{k_1}{k_2}{k_3} g} &= \E_{w \in W[k_1,k_2]} \left[f(w) \E_{w': ww' \in W[k_1,k_3]}  [g(w')] \right] \\
                                               &= \E_{w \in W[k_1,k_2]} \left[f(w) \sum_{w' \in W[k_2+1,k_3]} \frac{\Pi_{k_1,k_3}(ww')}{\Pi_{k_1,k_2}(w)} g(w') \right] \\
                                               &= \sum_{w \in W[k_1,k_2]} \Pi_{k_1,k_2}(w) f(w) \sum_{w' \in W[k_2+1,k_3]} \frac{\Pi_{k_1,k_3}(ww')}{\Pi_{k_1,k_2}(w)}  g(w')\\
                                               & = \sum_{ww' \in W[k_1,k_3]} f(w) g(w') \Pi_{k_1,k_3}(ww').
  \end{align*}
  On the other hand we have
  \begin{align*}
  \ip{\fswap{k_2}{k_3}{k_1} f}{ g} &= \E_{w' \in W[k_2+1,k_3]} \left[\E_{w: ww' \in W[k_1,k_3]} [f(w)] g(w') \right] \\
                                               &= \E_{w' \in W[k_2+1,k_3]} \left[ \sum_{w \in W[k_1,k_2]} \frac{\Pi_{k_1,k_3}(ww')}{\Pi_{k_2+1,k_3}(w')} f(w) g(w') \right]\\
                                               &= \sum_{w' \in W[k_2+1,k_3]} \Pi_{k_2+1,k_3}(w') \sum_{w \in W[k_1,k_2]} \frac{\Pi_{k_1,k_3}(ww')}{\Pi_{k_2+1,k_3}(w')} f(w) g(w')\\
                                               &= \sum_{ww' \in W[k_1,k_3]} f(w) g(w') \Pi_{k_1,k_3}(ww').
  \end{align*}
  Hence, $\fswap{k_2}{k_3}{k_1} = (\tswap{k_1}{k_2}{k_3})^{\dag}$ as claimed.
\end{proof}

\subsubsection{Variables for Walks on the $s$-wide Replacement Product}

When analyzing walks on the $s$-wide replacement product, we actually need to use two separate, but related, local PSD ensembles.
In Ta-Shma's construction, the vertices of the outer graph $G$ correspond to positions in the base code $\Cc_0 \subseteq \F_2^n$, where $n = \abs{V(G)}$.
Given a vertex $(v,h) \in V(G) \times V(H)$ in the $s$-wide replacement product and codeword $z \in \Cc_0$, $(v,h)$ is assigned bit $z_v$, regardless of the vertex $h$ of the inner graph.
We will enforce this property by working with variables in $V(G)$ rather than the full $V(G) \times V(H)$.
The local PSD ensemble $\rv Z = \{\rv Z_v\}_{v \in V(G)}$ contains one variable for every vertex of $G$, with local distributions for sets of variables up to a given size.
For a walk $w$ on the $s$-wide replacement product, we will use $\rv Z_w$ as an abbreviation for $\rv Z_{S_w}$, where $S_w$ is the set of all $G$-components of vertices visited on the walk.

The constraints of the CSP are placed on walks on the $s$-wide replacement product that do care about the $H$-component of the vertices, so we define a second local PSD ensemble $\rv Y = \{\rv Y_{(v,h)}\}_{(v,h) \in V(G) \times V(H)}$ with a variable for each vertex of the $s$-wide replacement product of $G$ and $H$.
It is this collection $\rv Y$ for which we need to prove tensoriality in order to use the list decoding framework.
When we perform propagation rounding, we condition the ensemble $\rv Z$ on a random assignment $\sigma$ to a subset $S \subseteq V(G)$, rather than conditioning $\rv Y$ on a random assignment to a subset of $V(G) \times V(H)$.
Working with $\rv Z$ ensures that the rounded assignments will be consistent on each cloud of the $s$-wide replacement product.
Since the bit assigned to a vertex $(v,h)$ only depends on $v$, independent rounding of $\{\rv Z \mid \rv Z_S = \sigma\}$ will also yield the desired rounding of $\{\rv Y \mid \rv Z_S = \sigma\}$.

We can define $\rv Y$ based on the ensemble $\rv Z$ more concretely.
Suppose $S' \subseteq V(G) \times V(H)$ is a subset of size at most $p$, where $p$ is the locality of the ensemble, and define $T = \{v \mid (v,h) \in S'\}$.
The distribution $\mu_{S'}$ of $\rv Y_{S'}$ is defined based on the distribution $\mu_T$ of $\rv Z_T$ by $\mu_{S'}(\alpha) = \mu_T(\alpha|_T)$, where $\alpha \in \F_2^{S'}$ is an assignment to $S'$ whose value on each vertex $(v,h)$ only depends on $v$.

Observe that the introduction of the ensemble $\rv Y$ is only necessary on the first level of the Ta-Shma code cascade between the codes $\Cc_0$ and $\Cc_1$, which takes place on the $s$-wide replacement product.
Higher levels of the cascade use walks on graphs whose vertices are the walks from the level below.
The association of the bits of a codeword to the vertices of this graph has no consistency requirement, so we simply use a single local ensemble $\rv Z$ with a variable for each vertex.

\subsection{Splittability Implies Tensoriality}\label{sec:tensoriality}

The connection between splittability and tensoriality will be made with the help of a version of the triangle inequality.

\begin{claim}[Triangle inequality, adapted from~\cite{AJQST19}]\label{claim:glorified_triangle_ineq}
  Let $s \in \mathbb{N}^+$ and $\tree$ be an $s$-interval splitting tree. Then
  $$
   \ExpOp_{w \in W[0,s-1]}{ \norm{\set{\rv Z_{w}} - \prod_{i=0}^{s-1}\set*{\rv Z_{w(i)}}}_1} \le \sum_{(k_1,k_2,k_3)\in \tree} ~\ExpOp_{w \in W[k_1,k_3]}{ \norm{\set{\rv Z_{w}} - \set*{\rv Z_{w(k_1,k_2)}}\set*{\rv Z_{w(k_2+1,k_3)}} }_1},
  $$
  where the sum is taken over the labels of the internal nodes of $\tree$.
\end{claim}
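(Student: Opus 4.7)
The plan is to induct on the structure of the splitting tree $\tree$, using the triangle inequality for the $L_1$ norm and the standard bound on total variation distance between product distributions. I will first describe the non-probabilistic inequality for a fixed walk $w \in W[0,s-1]$, and then take expectations.

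At the root $(0,m,s-1)$ of $\tree$, I write the identity
\[
\set{\rv Z_w} - \prod_{i=0}^{s-1}\set{\rv Z_{w(i)}}
 = \Bigl( \set{\rv Z_w} - \set{\rv Z_{w(0,m)}}\set{\rv Z_{w(m+1,s-1)}}\Bigr)
 + \Bigl( \set{\rv Z_{w(0,m)}}\set{\rv Z_{w(m+1,s-1)}} - \prod_{i=0}^{s-1}\set{\rv Z_{w(i)}}\Bigr),
\]
and bound the $L_1$ norm by the sum of the $L_1$ norms of the two terms. The first term is exactly the one associated to the root of $\tree$. For the second term I use the elementary fact that if $p_1,q_1$ and $p_2,q_2$ are pairs of probability distributions over (possibly distinct) finite alphabets, then
\[
\norm{p_1 p_2 - q_1 q_2}_1 \;\leq\; \norm{p_1 - q_1}_1 + \norm{p_2 - q_2}_1,
\]
applied with $p_1 = \set{\rv Z_{w(0,m)}}$, $p_2 = \set{\rv Z_{w(m+1,s-1)}}$, $q_1 = \prod_{i=0}^{m}\set{\rv Z_{w(i)}}$, and $q_2 = \prod_{i=m+1}^{s-1}\set{\rv Z_{w(i)}}$. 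This reduces the problem to bounding the same kind of quantity on the left and right subtrees of $\tree$, which by induction on the size of the tree yields a pointwise (in $w$) bound of the form
\[
\norm{\set{\rv Z_w} - \prod_{i=0}^{s-1}\set{\rv Z_{w(i)}}}_1
  \;\leq\; \sum_{(k_1,k_2,k_3) \in \tree} \norm{\set{\rv Z_{w(k_1,k_3)}} - \set{\rv Z_{w(k_1,k_2)}}\set{\rv Z_{w(k_2+1,k_3)}}}_1 .
\]

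Now I take expectation over a uniformly random $w \in W[0,s-1]$. The key observation is that for any sub-interval $[k_1,k_3] \subseteq [0,s-1]$, the marginal of a uniform $w \in W[0,s-1]$ on the coordinates $[k_1,k_3]$ is exactly the uniform distribution over $W[k_1,k_3]$: this is immediate from the product structure of the walk (each intra-cloud step is uniform over its $d_2$ choices and each inter-cloud step is deterministic, so every length-$(k_3-k_1)$ sub-walk appears with the correct relative frequency). Therefore, for each node $(k_1,k_2,k_3) \in \tree$,
\[
\ExpOp_{w \in W[0,s-1]} \norm{\set{\rv Z_{w(k_1,k_3)}} - \set{\rv Z_{w(k_1,k_2)}}\set{\rv Z_{w(k_2+1,k_3)}}}_1
 = \ExpOp_{w \in W[k_1,k_3]} \norm{\set{\rv Z_w} - \set{\rv Z_{w(k_1,k_2)}}\set{\rv Z_{w(k_2+1,k_3)}}}_1 ,
\]
and summing over the internal nodes of $\tree$ gives the desired inequality.

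The main obstacle is really just the product-distribution triangle inequality step: verifying $\norm{p_1 p_2 - q_1 q_2}_1 \leq \norm{p_1 - q_1}_1 + \norm{p_2 - q_2}_1$, which follows by the standard telescoping $p_1 p_2 - q_1 q_2 = (p_1 - q_1) p_2 + q_1 (p_2 - q_2)$ together with $\norm{p}_1 = 1$ for probability distributions. Everything else is bookkeeping: the induction over $\tree$ and the marginalization identity for uniform walks on the $s$-wide replacement product.
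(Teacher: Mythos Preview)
Your proof is correct. The paper does not actually supply a proof of this claim (it is stated as adapted from \cite{AJQST19}), but your inductive argument along the splitting tree, using the telescoping bound $\norm{p_1 p_2 - q_1 q_2}_1 \le \norm{p_1 - q_1}_1 + \norm{p_2 - q_2}_1$ and the fact that the marginal of a uniform walk in $W[0,s-1]$ on any sub-interval $[k_1,k_3]$ is uniform on $W[k_1,k_3]$ (by $d_2^2$-regularity of each step operator), is exactly the standard approach one expects from the cited reference.
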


To prove tensoriality, we will use the method of~\cite{BarakRS11} and~\cite{AJT19} to show that we can break correlations over expanding collections of tuples arising in the $s$-wide replacement product of the form
$$
\ExpOp_{\substack{ww' \in W[k_1,k_3] \\ w \in W[k_1,k_2], w' \in  W[k_2+1,k_3]}}{\norm{\set{\rv Z_{ww'}} - \set{\rv Z_w} \set{\rv Z_{w'}}}_1}
$$
appearing on the right-hand side of the triangle inequality.

\subsubsection{The First Level of the Cascade}\label{sec:tensorial_first_s_steps}

We now check the technical details to obtain tensoriality for the first lifting in the code cascade between the codes $\Cc_0$ and $\Cc_1$, which corresponds to taking $s$ steps in Ta-Shma's construction.
Recall that in order to obtain an assignment $z' \in \mathbb{F}_2^n$ whose lifting is consistent on vertices with the same $G$-component, we need to prove tensoriality for the ensemble $\rv Y$ with a variable for each vertex in $V(G) \times V(H)$.

The proof of tensoriality will make use of a specific entropic potential function.
For an arbitrary random variable $\rv X$ taking values in a finite set $[q]$, define the function $\mathcal{H}(\rv X)$ as
$$
\mathcal{H}(\rv X) ~\coloneqq~ \frac{1}{q} \sum_{a \in [q]} \textup{H}(\One_{[\rv X = a]}) ~=~ \E_{a \in [q]} \textup{H}(\One_{[\rv X = a]}),
$$
where $\textup{H}$ is the binary entropy function.
Using this, we define a potential function for a weighted undirected graph $G$.

\begin{definition}[Graph Potential]
Let $G=(V,E)$ be a weighted graph with edge distribution $\Pi_E$.  Let
$\Pi_V$ be the marginal distribution on $V$. Suppose that $\set{\rv
Y_i}_{i\in V}$ is a $p$-local PSD ensemble for some $p \ge 1$. We
define $\Phi^G$ to be
$$
\Phi^G ~\coloneqq~ \Ex{i \sim \Pi_V}{\mathcal{H}(\rv Y_i)}.
$$
\end{definition}

Let $\tree$ be an $s$-interval splitting tree associated with the
$s$-wide replacement product of graphs $G$ and $H$. We define
$$
\Phi^{\tree} \coloneqq \sum_{(k_1,k_2,k_3) \in \tree} \Phi^{\mathcal{U}(\tswap{k_1}{k_2}{k_3})},
$$
where $\mathcal{U}(\tswap{k_1}{k_2}{k_3})$ is the associated bipartite undirected graph
of the operator $\tswap{k_1}{k_2}{k_3}$.

\begin{lemma}[Splittability Implies Tensoriality]\label{lemma:tensorial}
  Let $W[0,s-1]$ be the walk collection of the $s$-wide replacement product of two graphs $G$ and $H$.
  If $L \ge 128 \cdot (s^4 \cdot 2^{4s}/\mu^4)$ and $W[0,s-1]$ is $\tau$-splittable with $\tau \le \mu/(4s \cdot 2^{4s})$, then $W[0,s-1]$ is $(\mu,L)$-tensorial.
\end{lemma}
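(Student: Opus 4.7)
The plan is to adapt the entropic potential function method of Barak--Raghavendra--Steurer, as refined in \cite{AJT19, AJQST19}, to the splittable walk collection arising from the $s$-wide replacement product. First, I would apply the splitting-tree triangle inequality (Claim~\ref{claim:glorified_triangle_ineq}) to the ensemble $\rv Y$, reducing the global discrepancy
\[
\ExpOp_{w \in W[0,s-1]} \norm{\set{\rv Y_w} - \set{\rv Y_{w(0)}}\cdots \set{\rv Y_{w(s-1)}}}_1
\]
to a sum over the at most $s-1$ internal nodes $(k_1,k_2,k_3) \in \tree$ of pairwise discrepancies of the form
\[
\ExpOp_{w \in W[k_1,k_3]} \norm{\set{\rv Y_w} - \set{\rv Y_{w(k_1,k_2)}}\set{\rv Y_{w(k_2+1,k_3)}}}_1.
\]
It therefore suffices to bound each such summand by $\mu/s$ in expectation over propagation rounding. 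Since the value at $(v,h)$ depends only on $v$, the marginals of $\rv Y$ on any walk are controlled by those of $\rv Z$ on the set of $G$-components, and conditioning on a uniformly sampled seed walk in $\rv Z$-space translates directly to the relevant conditioning for $\rv Y$.

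Second, for each fixed internal node $(k_1,k_2,k_3)$ I would invoke the standard two-vertex correlation bound on the symmetrized bipartite graph $\mathcal{U}(\tswap{k_1}{k_2}{k_3})$, whose second singular value is at most $\tau$ by splittability. Each vertex of this graph is itself a walk of length at most $s$, so the local alphabet has size $q \le 2^{s}$. The BRS-type inequality then yields, for a PSD local ensemble,
\[
\ExpOp_{ww' \in \mathcal{U}(\tswap{k_1}{k_2}{k_3})} \norm{\set{\rv Y_w \rv Y_{w'}} - \set{\rv Y_w}\set{\rv Y_{w'}}}_1 \;\le\; O\!\left(q^2 \cdot \Big(\tau + \sqrt{\ExpOp_{i}\, \MI(\rv Y_i;\rv Y_j)}\,\Big)\right),
\]
where $j$ is a uniformly random vertex and the mutual information is exactly the decrement in the potential $\Phi^{\mathcal{U}(\tswap{k_1}{k_2}{k_3})}$ produced by one round of conditioning. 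Summing over the $O(s)$ internal nodes yields the analogous bound for $\Phi^{\tree}$.

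Third, I would execute the potential-drop argument across the $L/s$ conditioning rounds of propagation rounding. The potential $\Phi^{\tree}$ takes values in $[0, s]$, so averaging over the random choice of $m \in [1, L/s]$ produces some round whose expected decrement is at most $s^2/L$. Plugging this into the previous inequality gives an expected per-node pairwise discrepancy of at most
\[
O\!\left(2^{2s}\cdot \Big(\tau + \sqrt{s^2/L}\,\Big)\right),
\]
and the two hypotheses $\tau \le \mu/(4s\cdot 2^{4s})$ and $L \ge 128\, s^4\, 2^{4s}/\mu^4$ are exactly what is needed to force each term to be at most $\mu/(c s)$. Summing over the $s-1$ internal nodes of $\tree$ and invoking the first step then gives $(\mu,L)$-tensoriality.

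The main obstacle I expect is step two: producing a clean quantitative ``correlation-to-influence'' estimate on the bipartite expanding graph $\mathcal{U}(\tswap{k_1}{k_2}{k_3})$ when the local alphabet has size $q = 2^{\Theta(s)}$. The bound must be applied to the constraint graph induced by the split operator rather than to the original $s$-tuple hypergraph, and turning the Pinsker-type $\ell_1$ inequality into a statement about $\MI$ requires a squaring step, which is what dictates the $\mu^4$ and $2^{4s}$ dependence in the hypotheses. The remaining ingredients---the triangle inequality, pigeonholing over $m$, boundedness of $\Phi^{\tree}$, and compatibility of the $\rv Y$-ensemble with conditioning on $\rv Z$---are essentially bookkeeping once this central estimate is in hand.
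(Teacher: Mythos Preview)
Your proposal is correct and follows essentially the same route as the paper: triangle inequality along the splitting tree (Claim~\ref{claim:glorified_triangle_ineq}), the BRS-style correlation-to-potential-drop estimate on each bipartite graph $\mathcal{U}(\tswap{k_1}{k_2}{k_3})$ with alphabet size $q\le 2^s$, the tree potential $\Phi^{\tree}$ bounded by $s$, and pigeonholing over the $L/s$ conditioning rounds. The only organizational difference is that the paper first applies Markov to the full error $\mu_m$ and then invokes the Progress Lemma (Lemma~\ref{lemma:progress_lemma}) at a single node to get a $\beta^2/q^4$ drop, whereas you phrase the per-node estimate as a direct inequality $O(q^2(\tau+\sqrt{\text{MI drop}}))$ and apply Jensen; both are standard variants of the same argument and lead to the stated bounds on $\tau$ and $L$.
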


\begin{proof}
  We need to show that
  $$
  \ExpOp_{w \in W[0,s-1]}{ \norm{\set{\rv Y_{w}'} - \prod_{i=0}^{s-1} \set*{\rv Y_{w(i)}'}}_1 } \le \mu,
  $$
  which can be proven by adapting a
  potential argument technique from~\cite{BarakRS11}. First, set the potential
    \begin{equation}
        \Phi_m = \ExpOp_{S \sim \Pi_m}{\ExpOp_{\sigma \sim \set{\rv Z_S}}{ \Phi^{\tree}_{\mid \rv Z_S = \sigma }}},\label{eq:pot-def}
    \end{equation}
  where the distribution $\Pi_m$ on $S \subseteq V(G)$ is obtained from the process of choosing $S$ in propagation rounding (\cref{algo:prop-rd}) once $m$ has been fixed.
  Consider the error term
    \begin{equation}
        \mu_m \coloneqq \ExpOp_{S \sim \Pi_m}{\ExpOp_{\sigma \sim \set{\rv Z_S}}{D(S,\sigma)}},\label{eq:assume-large}
    \end{equation}
    where $D(S,\sigma) \coloneqq \ExpOp_{w \in W[0,s-1]} \norm{\set{\rv Y_{w} \mid \rv Z_S = \sigma} - \prod_{i=0}^{s-1} \set*{\rv Y_{w(i)} \mid \rv Z_S = \sigma} }_1$.
    If $\mu_m \ge \mu/2$, then
    \[ \ProbOp_{S \sim \Pi_m, \sigma \sim \{\rv Z_S\}}\left[ D(S,\sigma) \ge \mu_m/2 \right] \ge \frac{\mu}{4}. \]

    For each choice of $S$ and $\sigma$ such that $D(S,\sigma) \ge \mu/2$, applying the triangle inequality from~\cref{claim:glorified_triangle_ineq} to the conditioned variables gives us
    \begin{align*}
    \frac{\mu}{2} &\le \ExpOp_{w \in W[0,s-1]}{ \norm{\set{\rv Y_{w} \mid \rv Z_S = \sigma} - \prod_{i=0}^{s-1} \set*{\rv Y_{w(i)} \mid \rv Z_S = \sigma}}_1}\\
                  &\le \sum_{(k_1,k_2,k_3)\in \tree} ~\ExpOp_{w \in W[k_1,k_3]}{ \norm{\set{\rv Y_{w} \mid \rv Z_S = \sigma} - \set*{\rv Y_{w(k_1,k_2)} \mid \rv Z_S = \sigma}\set*{\rv Y_{w(k_2+1,k_3)} \mid \rv Z_S = \sigma} }_1}.
    \end{align*}
    Hence, there exists $(k_1,k_2,k_3)$ such that
    $$
    \frac{\mu}{2s} \le \ExpOp_{w \in W[k_1,k_3]}{ \norm{\set{\rv Y_{w} \mid \rv Z_S = \sigma} - \set*{\rv Y_{w(k_1,k_2)} \mid \rv Z_S = \sigma}\set*{\rv Y_{w(k_2+1,k_3)} \mid \rv Z_S = \sigma} }_1}.
    $$
    Note that choosing $w \in W[0,s-1]$ uniformly and restricting to $w(k_1,k_3)$ gives a uniformly random element of $W[k_1,k_3]$.
    If we choose $w(k_1,k_2)$ or $w(k_2+1,k_3)$ with equal probability, then the final walk is distributed according
    to the stationary measure of $\mathcal{U}(\tswap{k_1}{k_2}{k_3})$. Let $w'$ denote the chosen walk. Observe that $\rv Y_{w'}$
    is a deterministic function of $\rv Z_{w'} \mid \rv Z_S = \sigma$. Now, we sample $\rv Z_{w'} \mid \rv Z_S =\sigma$,
    which gives us a sample of $\rv Y_{w'}$. Applying~\cref{lemma:progress_lemma}, we have
    $$
    \Phi^{\mathcal{U}(\tswap{k_1}{k_2}{k_3})}_{\vert\set{ \rv Y_{w'} \vert \rv Z_S =\sigma}} \le \Phi^{\mathcal{U}(\tswap{k_1}{k_2}{k_3})}_{\rv Z_S =\sigma} - \frac{\mu^2}{16s^2 \cdot 2^{4s}}.
    $$
    This conditioning on an assignment to $\rv Z_{\textup{set}(w')} \mid \rv Z_S = \sigma$ does not
    increase the other terms of $\Phi^{\tree}$ associated to split operators other than $\mathcal{U}(\tswap{k_1}{k_2}{k_3})$ since entropy is
    non-increasing under conditioning. Similarly, conditioning on the
    remaining variables that are part of $w$ but not $w'$
    does not increase $\Phi^{\tree}$. Then, we obtain
    \[
    \Phi_m -\Phi_{m + 1} \ge \ProbOp_{S \sim \Pi_m, \sigma \sim \{\rv Z_S\}}\left[D(S,\sigma) \ge \mu_m/2 \right] \cdot \frac{\mu^2}{16s^2 \cdot 2^{4s}}.
    \]
    Since $s \ge \Phi_1 \ge \cdots \ge \Phi_{L/(s+1)} \ge 0$, there can be
    at most $32 s^3 \cdot 2^{4s}/\mu^3$ indices $m \in [L/s]$ such that
    $\mu_m \ge \mu/2$.  In particular, since the total number of indices is $L/s$, we have
    \[ \ExpOp_{m \in [L/s]}[\mu_m] \le \frac{\mu}{2} + \frac{s}{L} \cdot \frac{32s^3 \cdot 2^{4s}}{\mu^3}. \]
    Our choice of $L$ is more than enough to ensure $\Ex{m \in [L/s]}{\mu_m} \le \mu$.
\end{proof}

Applying the list decoding framework will require the stronger property of two-step tensoriality, which we can obtain under the same assumptions.

\begin{lemma}[Splittability Implies Two-step Tensoriality]\label{lemma:two_step_tensorial}
  Let $W[0,s-1]$ be the walk collection of the $s$-wide replacement product of two graphs $G$ and $H$.
  If $L \ge 128 \cdot (s^4 \cdot 2^{4s}/\mu^4)$ and $W[0,s-1]$ is $\tau$-splittable with $\tau \le \mu/(4s \cdot 2^{4s})$, then $W[0,s-1]$ is $(\mu,L)$-two-step tensorial.
\end{lemma}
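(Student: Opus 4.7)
The plan is to leverage \cref{lemma:tensorial} directly for the first tensorial condition, and then prove the additional two-step bound by an enlarged splitting-tree argument. Writing $\Omega$ for the set of tuples $(m, S, \sigma)$ from propagation rounding, the ordinary tensoriality bound
$$\ExpOp_{\Omega}\, \ExpOp_{w \in W[0,s-1]}\norm{\set{\rv Y'_w} - \prod_{i=0}^{s-1}\set{\rv Y'_{w(i)}}}_1 \le \mu$$
is exactly the conclusion of \cref{lemma:tensorial} under the stated hypotheses, so it suffices to establish
$$\ExpOp_{\Omega}\, \ExpOp_{w,w' \in W[0,s-1]}\norm{\set{\rv Y'_w \rv Y'_{w'}} - \set{\rv Y'_w}\set{\rv Y'_{w'}}}_1 \le \mu.$$

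To bound this, I would view an independently sampled pair $(w, w')$ as a single ``meta-walk'' of length $2s$ and construct a $2s$-interval splitting tree $\tree'$ whose root has as its two subtrees copies of the $s$-interval splitting tree $\tree$ used for $W[0,s-1]$. The split operator at the root of $\tree'$ takes a function on $W[0,s-1]$ to a function on $W[0,s-1]$ by the product coupling $(w,w') \mapsto 1/|W[0,s-1]|$ corresponding to independent sampling. This is a rank-$1$ matrix, so its second singular value is $0$, and hence $\tree'$-splittability still holds with the same parameter $\tau$. Applying \cref{claim:glorified_triangle_ineq} to $\tree'$ yields (in particular, via the top-level term)
$$\ExpOp_{w,w'}\norm{\set{\rv Y_{ww'}} - \set{\rv Y_w}\set{\rv Y_{w'}}}_1 \le \sum_{(k_1,k_2,k_3)\in \tree'} \ExpOp_{u \in W[k_1,k_3]}\norm{\set{\rv Y_u} - \set{\rv Y_{u(k_1,k_2)}}\set{\rv Y_{u(k_2+1,k_3)}}}_1,$$
and the same bound holds after conditioning on any $\rv Z_S = \sigma$.

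Next I would run the entropy-potential argument of the preceding lemma verbatim with $\Phi^{\tree'} \coloneqq \sum_{(k_1,k_2,k_3)\in \tree'} \Phi^{\mathcal{U}(\tswap{k_1}{k_2}{k_3})}$ in place of $\Phi^{\tree}$. Since $\tree'$ has $2s$ leaves and $2s-1$ internal nodes, the initial potential is bounded by a constant times $s$, each ``bad'' propagation round (error at least $\mu_m/2$ with $\mu_m \ge \mu/2$) produces a potential drop of at least $\mu^2/(C s^2 \cdot 2^{4s})$ via \cref{lemma:progress_lemma}, and the splittability hypothesis $\tau \le \mu/(4s \cdot 2^{4s})$ remains strong enough to support the per-edge progress step. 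The given lower bound $L \ge 128 \cdot (s^4 \cdot 2^{4s}/\mu^4)$ provides more than enough slack to absorb the factor-of-two increase in tree size, so the averaged error $\ExpOp_{m \in [L/s]} \mu_m$ is at most $\mu$, as required.

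The main obstacle I anticipate is justifying that the root ``independence split'' fits cleanly into the splittability/triangle-inequality formalism: one must check that the local PSD ensemble structure on $\rv Y$ is preserved when the index set of a single walk is replaced by the disjoint union $\mathrm{set}(w) \cup \mathrm{set}(w')$ (which requires $2s \le L/(s+1)$, easily met), and that the rank-$1$ product coupling at the root can genuinely be treated as a split operator whose symmetrized bipartite graph has second singular value $0$ and whose potential $\Phi^{\mathcal{U}(\cdot)}$ decreases under conditioning in the same way as the internal split operators. A secondary bookkeeping point is to verify that $\Phi^{\tree'}$ remains in $[0, O(s)]$ throughout so that the potential argument still terminates within the allotted $L/s$ steps.
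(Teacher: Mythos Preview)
Your approach is correct but more elaborate than the paper's. After invoking \cref{lemma:tensorial} for ordinary tensoriality (as you do), the paper does \emph{not} build an enlarged $2s$-interval splitting tree for the two-step condition. Instead it observes that since $w$ and $w'$ are sampled independently and uniformly from $W[0,s-1]$, the relevant graph is simply the \emph{complete} graph $G'$ on vertex set $W[0,s-1]$, on which local correlation already equals global correlation. It then runs a \emph{single-level} potential argument with $\Phi_m = \E_{S,\sigma}\E_{w}\mathcal{H}(\rv Y_w \mid \rv Z_S=\sigma)$, which is bounded in $[0,1]$, and obtains a per-round drop of order $\mu^2/2^{2s}$ directly from the Progress Lemma---no tree, no pigeonhole over internal nodes, no $s^2$ loss. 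In other words, the simplification you flag as your ``main obstacle'' (the rank-$1$ root coupling) is the entire argument; there is no need to descend into the subtrees.

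Your route---bounding the root term by the full $\tree'$ sum and rerunning the tree-potential argument---does work, but it pays roughly an extra factor of $s^3\cdot 2^{2s}$ in the count of bad indices, so you would have to redo the constant bookkeeping to confirm it fits within the stated bound $L \ge 128\, s^4 2^{4s}/\mu^4$; the paper's simpler argument needs only $L \gtrsim s\cdot 2^{2s}/\mu^4$ for the two-step part, so the binding constraint there is \cref{lemma:tensorial} itself. One minor point: the displayed inequality you attribute to \cref{claim:glorified_triangle_ineq} is not actually an application of that claim---the left-hand side is literally the root summand on the right, so the bound is trivial.
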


\begin{proof}
  Under our assumptions the $(\mu,L)$-tensorial property follows
  from~\cref{lemma:tensorial} (which is the only place where the
  assumption on $\tau$ is used), so we only need to show
  $$
  \ExpOp_{w,w' \in W[0,s-1]}{ \norm{\set{\rv Y_{w}' \rv Y_{w'}'} - \set*{\rv Y_{w}'}\set*{\rv Y_{w'}'}}_1 } \le \mu,
  $$
  which can be proven by adapting a
  potential argument technique from~\cite{BarakRS11}. First, set the potential
    \begin{equation}
        \Phi_m = \ExpOp_{S \sim \Pi_m}{\ExpOp_{\sigma \sim \set{\rv Z_S}}{\ExpOp_{w \in W[0,s-1]}{\mathcal{H}(\rv Y_w \mid \rv Z_S = \sigma)}}},\label{eq:pot-def}
    \end{equation}
  where the distribution $\Pi_m$ on $S \subseteq V(G)$ is obtained from the process of choosing $S$ in propagation rounding (\cref{algo:prop-rd}) once $m$ has been fixed.
  Consider the error term
    \begin{equation}
        \mu_m \coloneqq \ExpOp_{S \sim \Pi_m}{\ExpOp_{\sigma \sim \set{\rv Z_S}}{D(S,\sigma)}},\label{eq:assume-large}
    \end{equation}
    where $D(S,\sigma) \coloneqq \Ex{w,w' \in W[0,s-1]}{\norm{\set{\rv Y_{w}\rv Y_{w'} \mid \rv Z_S = \sigma} - \set{\rv Y_{w} | \rv Z_S = \sigma}\set{\rv Y_{w'} | \rv Z_S = \sigma} }_1}$.
    If $\mu_m \ge \mu/2$, then
    \[ \ProbOp_{S \sim \Pi_m, \sigma \sim \{\rv Z_S\}}\left[ D(S,\sigma) \ge \mu_m/2 \right] \ge \frac{\mu}{4}. \] 
    
    Let $G' = (V=W[0,s-1],E)$ be the graph with edges $E = \{\{w,w'\} \mid w,w' \in W[0,s-1]\}$.
    Local correlation (expectation over the edges) on this
    graph $G'$ is the same as global correlation (expectation over two independent
    copies of vertices). Then, we obtain~\footnote{See \cite{AJT19} or~\cite{BarakRS11} for the details.}
    \[
    \Phi_m -\Phi_{m + 1} \ge \ProbOp_{S \sim \Pi_m, \sigma \sim \{\rv Z_S\}}\left[D(S,\sigma) \ge \mu_m/2 \right] \cdot \frac{\mu^2}{2 \cdot 2^{2s}}.
    \]
    Since $1 \ge \Phi_1 \ge \cdots \ge \Phi_{L/(s+1)} \ge 0$, there can be
    at most $8 \cdot 2^{2s} /\mu^3$ indices $m \in [L/s]$ such that
    $\mu_m \ge \mu/2$.  In particular, since the total number of indices is $L/s$, we have
    \[ \ExpOp_{m \in [L/s]}{\mu_m } \le \frac{\mu}{2} + \frac{k}{L} \cdot \frac{8 \cdot 2^{2s}}{\mu^3}. \]
    Our choice of $L$ is more than enough to ensure $\Ex{m \in [L/s]}{\mu_m} \le \mu$.
\end{proof}

We have already established that $W[0,s-1]$ is $\tau$-splittable with $\tau = \sigma_2(G) + 2\sigma_2(H) + \sigma_2(H)^2$ in~\cref{cor:splittability}, so we can obtain $(\mu, L)$-two-step tensoriality for any $\mu$ if this quantity is small enough.

\subsubsection{Higher Levels of the Cascade}

We now discuss tensoriality of the other levels of the cascade between $\Cc_{i-1}$ and $\Cc_{i}$ for $i \ge 2$.
Tensorial properties are simpler to establish here than on the first level of the cascade.
The relevant split operators are special cases of $\tswap{k_1}{k_2}{k_3}$ where $k_1 \equiv 0 \pmod s$ and $k_2, k_3 \equiv -1 \pmod s$.
The main difference now is that we can associate the parity bits of $\Cc_{i-1}$ with the vertices of $\mathcal{U}(\sswap{r})$, which themselves represent walks.
As this association of parity bits doesn't need to satisfy a consistency condition, we only need to work with a single ensemble $\rv Z$ instead of working with two different ensembles as in the previous case.
The proofs of~\cref{lemma:tensorial} and~\cref{lemma:two_step_tensorial} with these slight modifications give us two-step tensoriality.

\begin{lemma}[Two-step Tensoriality for Higher Levels]\label{lemma:two_step_tensorial_subsequent_steps}
  Let $W(k)$ be the set of walks defined using $(k-1)$ steps of the operator $\sswap{r}$.
  If $L \ge 128 \cdot (k^4 \cdot 2^{4k}/\mu^4)$ and $W(k)$ is $\tau$-splittable with $\tau \le \mu/(4k \cdot 2^{4k})$, then $W(k)$ is $(\mu,L)$-two-step tensorial.
\end{lemma}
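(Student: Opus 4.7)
The plan is to mimic the proofs of \cref{lemma:tensorial} and \cref{lemma:two_step_tensorial} almost verbatim, exploiting the simplification that higher levels of the cascade only involve a single local PSD ensemble $\rv Z$ indexed by the walk vertex set $W[0,r]$, rather than the two coupled ensembles $\rv Y$ and $\rv Z$ required at the first level to enforce consistency across clouds of the $s$-wide replacement product. Fix a $k$-interval splitting tree $\tree$ for $W(k)$. Every internal node $(k_1,k_2,k_3)$ of $\tree$ corresponds to a split operator whose underlying graph walks have time indices satisfying $k_1 \equiv 0 \pmod s$ and $k_2, k_3 \equiv -1 \pmod s$, so by $\tau$-splittability we have $\sigma_2(\tswap{k_1}{k_2}{k_3}) \le \tau$, and the symmetrized bipartite graph $\mathcal{U}(\tswap{k_1}{k_2}{k_3})$ inherits the same spectral bound.

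For the one-step tensorial bound, I would define the tree potential $\Phi^{\tree} = \sum_{(k_1,k_2,k_3)\in\tree} \Phi^{\mathcal{U}(\tswap{k_1}{k_2}{k_3})}$, evaluated on the conditioned ensemble $\set{\rv Z \mid \rv Z_S = \sigma}$, and set $\Phi_m = \E_{S \sim \Pi_m} \E_{\sigma \sim \set{\rv Z_S}}[\Phi^{\tree}_{\mid \rv Z_S = \sigma}]$, where $\Pi_m$ is the seed-set distribution after $m$ iterations of \cref{algo:prop-rd}. If the average one-step tensorial error $\mu_m$ exceeds $\mu/2$, then by Markov's inequality and the triangle inequality of \cref{claim:glorified_triangle_ineq}, there is a constant-probability event in which some node $(k_1,k_2,k_3) \in \tree$ witnesses an $L_1$-correlation of at least $\mu/(2k)$ between $\set{\rv Z_{w(k_1,k_2)} \mid \rv Z_S = \sigma}$ and $\set{\rv Z_{w(k_2+1,k_3)} \mid \rv Z_S = \sigma}$. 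Invoking the progress lemma (\cref{lemma:progress_lemma}) on the graph $\mathcal{U}(\tswap{k_1}{k_2}{k_3})$, whose vertex alphabet has size at most $2^k$, yields a decrease in $\Phi^{\mathcal{U}(\tswap{k_1}{k_2}{k_3})}$ of $\Omega(\mu^2/(k^2 \cdot 2^{4k}))$ after conditioning on an additional sample of $\rv Z_{w'}$; conditioning never increases the other entropic terms in $\Phi^{\tree}$, so $\Phi_m - \Phi_{m+1} \ge \Omega(\mu^3/(k^2 \cdot 2^{4k}))$. Since $0 \le \Phi_m \le k$, there can be at most $O(k^3 \cdot 2^{4k}/\mu^3)$ bad indices $m$, and the choice $L \ge 128 \cdot (k^4 \cdot 2^{4k}/\mu^4)$ gives $\E_{m \in [L/k]}[\mu_m] \le \mu$.

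For the additional two-step condition, I would rerun the potential argument with the simpler potential $\Phi_m = \E_{S \sim \Pi_m} \E_{\sigma \sim \set{\rv Z_S}} \E_{w \in W(k)}[\mathcal{H}(\rv Z_w \mid \rv Z_S = \sigma)] \in [0,1]$, exactly as in \cref{lemma:two_step_tensorial}. Because the two walks $w, w' \in W(k)$ are drawn independently, the global correlation $\E_{w,w'}\|\set{\rv Z_w \rv Z_{w'} \mid \rv Z_S = \sigma} - \set{\rv Z_w \mid \rv Z_S = \sigma}\set{\rv Z_{w'} \mid \rv Z_S = \sigma}\|_1$ coincides with the local correlation on the complete graph on $W(k)$, so the standard global-correlation progress bound (as used in \cite{BarakRS11,AJT19}) gives a per-bad-round potential decrease of $\Omega(\mu^3/2^{2k})$. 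The same accounting bounds the number of bad rounds by $O(2^{2k}/\mu^3)$, and the chosen $L$ is again more than enough.

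The only subtlety in the argument is bookkeeping: making sure the alphabet-size factor $2^{4k}$ (arising from indicator random variables over plausible assignments to $k$-tuples of walk vertices) and the factor of $k$ from the triangle inequality match the quantitative splittability hypothesis $\tau \le \mu/(4k \cdot 2^{4k})$. All other steps are direct ports of the first-level proofs, since the single-ensemble setting eliminates the need to separately track the $\rv Y$-distribution on $V(G)\times V(H)$ and to argue consistency within each $H$-cloud.
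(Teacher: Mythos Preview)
Your proposal is correct and follows exactly the approach the paper takes: the paper's own ``proof'' of \cref{lemma:two_step_tensorial_subsequent_steps} is simply the remark that ``The proofs of~\cref{lemma:tensorial} and~\cref{lemma:two_step_tensorial} with these slight modifications give us two-step tensoriality,'' where the slight modification is precisely the single-ensemble simplification you describe. Your write-up is in fact more explicit than the paper's, but it is the same argument.
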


We know that the collection of walks obtained from $\sigma_2(\sswap{r})$ is $(\sigma_2(G) + 2 \cdot \sigma_2(H) + \sigma_2(H)^2)$-splittable, so the parameters necessary to obtain two-step tensoriality are the same as in the first level of the cascade.

\section{Choosing Parameters for Ta-Shma's Construction}\label{sec:ta-shma_param_basic}

We explore how some choices of parameters for Ta-Shma's
construction interact with the requirements of our decoding
algorithm. The analysis is divided into rounds of increasingly
stronger decoding guarantees with later rounds relying on the codes
obtained in previous rounds. Naturally, the stronger
guarantees come with more delicate and technical considerations. We
briefly summarize the goals of each round and some key parameters.

\begin{enumerate}
  \item Round I: For any constant $\beta > 0$, we obtain \emph{efficient unique decodable} codes $\Cc_{\ell}$ with distance at least $1/2-\epsilon$ and
                 rate $\Omega(\epsilon^{2+\beta})$ for infinitely many \emph{discrete} values of $\epsilon > 0$ (with $\epsilon$ as close to $0$ as desired).
                 In this regime, it suffices for the expansion of $H$ to be constant. This round leads to~\cref{theo:main_1}. 
  \item Round II: Similar to Round I, but now $\epsilon$ can be any value in an interval $(0,b)$ with $b < 1/2$ being a function of $\beta$. Again
                  the expansion of $H$ can be constant. This round leads to~\cref{theo:main_2}.
  \item Round III: We want $\beta$ to vanish as $\epsilon$ vanishes (this is qualitatively similar to Ta-Shma's result). In this regime, we make the
                   expansion of $H$ be a function of $\epsilon$, and we rely on the uniquely decodable codes of Round II. This round leads to~\cref{theo:main}.
  \item Round IV: For any constant $\beta_0 > 0$, we obtain \emph{efficient list decodable} codes $\Cc_{\ell}$ with list decoding radius
                  $1/2 - \beta_0$ and rate $\Omega(\epsilon^{2+\beta})$ with $\beta \to 0$ as $\epsilon \to 0$. In this regime, we make
                  the expansion of $H$ be a function of $\epsilon$, and we rely on the uniquely decodable codes of Round III. This round leads to~\cref{theo:gentle_list_decoding}.
\end{enumerate}

The way we choose parameters for Ta-Shma's construction
borrows heavily from Ta-Shma's arguments in~\cite{Ta-Shma17}. We fix
some notation common to all rounds. A graph is said to be an
$(n,d,\lambda)$-graph provided it has $n$ vertices, is $d$-regular, and
has second largest singular value of its normalized adjacency matrix
at most $\lambda$.
\begin{notation}
  We use the following notation for the graphs $G$ and $H$ used in
  the $s$-wide replacement product.
  \begin{itemize}
    \item The outer graph $G$ will be an $(n',d_1,\lambda_1)$-graph.
    \item The inner graph $H$ will be a $(d_1^s,d_2,\lambda_2)$-graph.
  \end{itemize}
  The parameters $n', d_1,d_2, \lambda_1,\lambda_2$ and $s$ will be chosen in the subsequent sections.
\end{notation}

\subsection{Round I: Initial Analysis}\label{sec:round_1}

We are given the dimension $D$ of the desired code and $\epsilon \in (0,1/2)$.
We set a parameter $\alpha \le 1/128$ such that (for convenience)
$1/\alpha$ is a power of $2$ and
\begin{equation}\label{eq:alpha_epsilon}
\frac{\alpha^5}{4 \log_2(1/\alpha)} \ge \frac{1}{\log_2(1/\epsilon)}.
\end{equation}
We can assume that $\alpha \le 1/128$ satisfy~\cref{eq:alpha_epsilon}
since otherwise $\epsilon$ is a constant and we can use the list
decodable codes from~\cite{AJQST19}. The use
of~\cref{eq:alpha_epsilon} will be clear shortly. It becomes a
necessity from round III onward. For rounds I and II, the parameter
$\alpha$ will be a constant, but it will be useful to establish the
analysis in more generality now so that subsequent rounds can reuse
it.

\noindent \textbf{The inner graph $H$.} \enspace The choice of $H$ is
similar to Ta-Shma's choice. More precisely, we set $s=1/\alpha$ and
$d_2 = s^{4s^2}$ (Ta-Shma took $d_2 = s^{4s}$).
We obtain a Cayley graph $H
= \textup{Cay}(\mathbb{F}_2^{4s\log_2(d_2)}, A)$ such that $H$ is an
$(n_2=d_2^{4s}, d_2, \lambda_2)$ graph where
$\lambda_2 = b_2/\sqrt{d_2}$ and $b_2 = 4 s \log_2(d_2)$.
(The set of generators, $A$, comes from a small bias code derived from
a construction of Alon et al.~\cite{AGHP92}, but we will rely on
Ta-Shma's analysis embodied in~\cref{lemma:aghp} and not discuss it
further.)

\noindent \textbf{The base code $\Cc_0$.} \enspace
Set $\epsilon_0 = 1/d_2^2 = \lambda_2^4/b_2^4 \le \lambda_2^4/3$ (this
choice differs from Ta-Shma's and it appears because we are
essentially working with $H^2$ rather than $H$). We will choose a base code $\Cc_0$ such that the desired code will be obtained as a direct sum lifting of $\Cc_0$, and because this lifting preserves the dimension, the dimension of $\Cc_0$ should be $D$. We choose $\Cc_0$ to be an
$\epsilon_0$-balanced code with dimension $D$ and block length $n=O_{\epsilon_0}(D)$.
For instance, we can start with
any good (constant rate and relative distance) linear base code
$\Cc_0$ that has an efficient unique decoding algorithm and obtain a
$\epsilon_0$-balanced lifted code that can be efficiently unique decoded (as
long as $\epsilon_0$ is constant) using the framework in~\cite{AJQST19}.

\noindent \textbf{The outer graph $G$.} \enspace
Set $d_1 = d_2^4$ so that $n_2 = d_1^s$ as required by the
$s$-wide replacement product. We apply Ta-Shma's explicit Ramanujan
graph~\cref{lemma:explicit_ramanujan} with parameters $n$, $d_1$ and
$\theta$ to obtain an $(n',d_1,\lambda_1)$ Ramanujan graph $G$ with
$\lambda_1 \le 2\sqrt{2}/\sqrt{d_1}$ and $n' \in [(1-\theta)n,n]$ or $n'\in [(1-\theta)2n,2n]$. Here, $\theta$ is an error parameter that we set as $\theta = \lambda_2^4/6$ (this choice of $\theta$ differs from
Ta-Shma). Because we can construct words with block length $2n$
(if needed) by duplicating each codeword, we may assume w.l.o.g. that $n'$ is close to $n$ and
$(n-n') \le \theta n \le 2 \theta n'$. See \cref{app:explicit_structures} for a more formal description of this graph.

Note that
$\lambda_1 \le \lambda_2^4/6$ since $\lambda_1 \le 3/\sqrt{d_1} =
3/d_2^2 = 3 \cdot \lambda_2^4/b_2^4 \le \lambda_2^4/6$. Hence,
$\epsilon_0 + 2\theta + 2\lambda_1 \le \lambda_2^4$.

\noindent \textbf{The walk length.} \enspace
Set the walk length $t-1$ to be the smallest integer such that
$$
(\lambda_2^2)^{(1-5\alpha)(1-\alpha)(t-1)} \le \epsilon.
$$
This will imply using Ta-Shma's analysis that the bias of the final code is at most $\epsilon$ as shown later.

\begin{center}
\fbox{\begin{minipage}{30em}

$s=1/\alpha, \text{ such that } \frac{\alpha^5}{4 \log_2(1/\alpha)} \ge \frac{1}{\log_2(1/\epsilon)}$
\vskip 0.3cm
$H: (n_2,d_2,\lambda_2),\quad n_2=d_1^s,\quad d_2=s^{4s^2},\quad \lambda_2=\frac{b_2}{\sqrt{d_2}},\quad b_2 = 4s\log d_2$
\vskip 0.3cm
$G: (n',d_1,\lambda_1),\quad n' \approx n = O(D/\epsilon_0^c),\quad d_1=d_2^4,\quad \lambda_1\leq \frac{2\sqrt{2}}{d_1}$
\vskip 0.3cm
$t: \text{ smallest integer such that } (\lambda_2^2)^{(1-5\alpha)(1-\alpha)(t-1)} \leq \epsilon$

\end{minipage}}
\end{center}

\begin{claim}\label{claim:t_lower_bound}
  We have $t-1 \ge s/\alpha = s^2$.
\end{claim}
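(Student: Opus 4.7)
The plan is to take logarithms in the defining inequality for $t-1$ to extract a lower bound, then verify that the chosen parameter constraint (\ref{eq:alpha_epsilon}) is strong enough to make this lower bound at least $s^2$.

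First I would take $\log_2$ of both sides of the defining inequality $(\lambda_2^2)^{(1-5\alpha)(1-\alpha)(t-1)} \le \epsilon$. Since $t-1$ is the smallest integer for which this inequality holds, it must satisfy
\[
t-1 ~\ge~ \frac{\log_2(1/\epsilon)}{2 \, (1-5\alpha)(1-\alpha) \, \log_2(1/\lambda_2)} \mper
\]
Because $\alpha \le 1/128$, the factor $(1-5\alpha)(1-\alpha)$ is at most $1$, so it can be discarded when producing a lower bound on $t-1$.

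Next I would upper bound $\log_2(1/\lambda_2)$ in terms of $s$. From $\lambda_2 = b_2/\sqrt{d_2}$ and $d_2 = s^{4s^2}$ one gets
\[
\log_2(1/\lambda_2) ~=~ \tfrac{1}{2}\log_2 d_2 - \log_2 b_2 ~\le~ \tfrac{1}{2}\cdot 4s^2 \log_2 s ~=~ 2 s^2 \log_2 s \mper
\]
Plugging this in yields $t-1 \ge \log_2(1/\epsilon) \, / \, (4 s^2 \log_2 s)$.

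Finally, I would invoke the assumption (\ref{eq:alpha_epsilon}), which rearranges to $\log_2(1/\epsilon) \ge 4 \log_2(1/\alpha)/\alpha^5$. Since $s = 1/\alpha$ and $\alpha \le 1$, we have $\log_2(1/\alpha)/\alpha^5 = s^5 \log_2 s \ge s^4 \log_2 s$, so $\log_2(1/\epsilon) \ge 4 s^4 \log_2 s$. Substituting gives $t-1 \ge s^4 \log_2 s / (s^2 \log_2 s) = s^2$, which is the claim. The whole argument is a routine calculation; the only subtlety is keeping track of inequality directions in the definition of ``smallest integer,'' and noticing that the strength of (\ref{eq:alpha_epsilon}) (with $\alpha^5$ rather than $\alpha^4$) was deliberately tuned to leave the necessary slack.
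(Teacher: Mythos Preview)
Your proof is correct and follows essentially the same route as the paper's: both use $\log_2(1/\lambda_2) \le 2s^2\log_2 s$ together with the constraint~(\ref{eq:alpha_epsilon}) to compare $t-1$ against $s^2$. The only cosmetic difference is that the paper plugs $t-1=s^2$ directly into the defining inequality and checks it is not yet met, whereas you take logarithms first and then bound; one minor remark is that your lower bound on $t-1$ follows just from $t-1$ satisfying the inequality, so the ``smallest integer'' clause is not actually needed in that direction, and your closing comment about $\alpha^5$ slightly overstates things---the paper only needs $\alpha^4$ here (the extra factor is reserved for later rounds with $Q=s$), which is why your computation overshoots to $s^3$ before you discard a factor of $s$.
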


\begin{proof}
  Using $d_2 = s^{4s^2}$ and~\cref{eq:alpha_epsilon}, we have
  \begin{align*}
  \left(\frac{1}{\lambda_2^2}\right)^{(1-5\alpha)(1-\alpha)s/\alpha} & \le \left(\frac{1}{\lambda_2^2}\right)^{s/\alpha} = \left(\frac{d_2}{b_2^2}\right)^{s/\alpha} \le \left(d_2\right)^{s/\alpha} = s^{4s^3/\alpha}\\
                                                              & = 2^{4s^3 \log_2(s)/\alpha} = 2^{4 \log_2(1/\alpha)/\alpha^4} \le  2^{\log_2(1/\epsilon)} = \frac{1}{\epsilon}.
  \end{align*}
  Hence, $\epsilon \le (\lambda_2^2)^{(1-5\alpha)(1-\alpha)s/\alpha}$ and thus
  $t-1$ must be at least $s/\alpha$.
\end{proof}

\begin{remark}\label{remark:choice_of_t}
 By our choice of $t$, we have $(\lambda_2^2)^{(1-5\alpha)(1-\alpha)(t-2)} \ge \epsilon$.
 Since $1/(t-1) \le \alpha$, we get $(\lambda_2^2)^{(1-5\alpha)(1-\alpha)^2 (t-1)} \ge \epsilon$.
\end{remark}

\noindent \textbf{Final Bias.} We denote by $\Cc_{\ell}$ the final code obtained
by $t$ steps of the $s$-wide replacement product. The bias of
$\Cc_{\ell}$ is given by~\cref{cor:tweaked_ta-shma_spectral_analysis}
(which in turn is a simple corollary of
Ta-Shma's~\cref{fact:ta-shma_main}) as shown next.
\begin{corollary}
  The code $\Cc_{\ell}$ is $\epsilon$-balanced.
\end{corollary}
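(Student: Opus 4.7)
My plan is to reduce the bias of $\Cc_{\ell}$ to the operator norm appearing in \cref{cor:tweaked_ta-shma_spectral_analysis}, then raise it to the appropriate power determined by the walk length $t-1$. Using the bias formula from \cref{sec:tweaked_parity_sampling}, for any nonzero codeword $\dsum_{W}(z)$ I can write
$$
\bias(\dsum_W(z)) ~=~ \left\lvert \ip{\one}{\matr P_z \prod_{i=0}^{t-2} (\matr I \otimes \Aye_H) \matr G_{i \bmod s} \matr P_z (\matr I \otimes \Aye_H) \one} \right\rvert,
$$
and by Cauchy--Schwarz (with $\|\one\|_2 = 1$) and submultiplicativity, this is bounded above by $\|B\|_{\op}^{\lfloor (t-1)/s \rfloor}$, where $B$ is one complete block of $s$ consecutive factors (each single factor having operator norm at most $1$, so the leftover incomplete block contributes a factor $\le 1$). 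Up to a cosmetic reordering of $\matr P_z$ and $\matr G_i$ inside the block (which preserves the norm since $\matr P_z$ is unitary), $\|B\|_{\op}$ is exactly the quantity bounded in \cref{cor:tweaked_ta-shma_spectral_analysis}.

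The hypothesis of that corollary is $\epsilon_0 + 2\theta + 2\sigma_2(G) \le \lambda_2^4$, which follows directly from the parameter choices laid out just above: we set $\epsilon_0 \le \lambda_2^4/3$, $\theta = \lambda_2^4/6$, and $\lambda_1 \le \lambda_2^4/6$, whose sum is at most $\lambda_2^4$. Applying the corollary yields
$$
\|B\|_{\op} ~\le~ M ~\defeq~ \lambda_2^{2(s-1)} + (s-1)\lambda_2^{2(s-2)} + (s-1)^2 \lambda_2^{2(s-4)}.
$$
Since $\lambda_2 < 1$, the third term dominates, and I plan to use the crude bound $M \le 3(s-1)^2 \lambda_2^{2(s-4)}$. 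The main obstacle is then to absorb the polynomial factor $3(s-1)^2$ into a single extra factor of $\lambda_2^{-2}$: using $\lambda_2^2 = b_2^2/d_2 = (16 s^3 \log_2 s)^2/s^{4s^2}$, one verifies $3(s-1)^2 \le \lambda_2^{-2}$ comfortably for $s \ge 128$ (i.e., $\alpha \le 1/128$), giving $M \le \lambda_2^{2(s-5)}$.

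For the final step, Claim 6.3 gives $t-1 \ge s^2 = s/\alpha$, so $s/(t-1) \le \alpha$ and hence $\lfloor (t-1)/s \rfloor \ge (t-1)(1-\alpha)/s$. Combining,
$$
\bias(\Cc_{\ell}) ~\le~ M^{\lfloor (t-1)/s \rfloor} ~\le~ \lambda_2^{2(s-5) \cdot (t-1)(1-\alpha)/s} ~=~ (\lambda_2^{2})^{(1-5\alpha)(1-\alpha)(t-1)},
$$
using $(s-5)/s = 1 - 5/s = 1 - 5\alpha$. By the very definition of $t$ as the smallest integer making $(\lambda_2^2)^{(1-5\alpha)(1-\alpha)(t-1)} \le \epsilon$, this is at most $\epsilon$ and the claim follows. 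The delicate point, worth flagging, is that the exponent arithmetic $(s-5)(1-\alpha)/s = (1-5\alpha)(1-\alpha)$ matches exactly the target exponent in the choice of $t$; this is precisely why $s^2$ (rather than a smaller quantity) must be tolerated in Claim 6.3, and why the correction factor $3(s-1)^2$ in $M$ is absorbed as exactly one unit in the $\lambda_2$-exponent.
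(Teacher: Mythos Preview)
Your proof is correct and takes essentially the same approach as the paper's own proof: bound the bias by $M^{\lfloor (t-1)/s\rfloor}$ via \cref{cor:tweaked_ta-shma_spectral_analysis}, absorb the $3(s-1)^2$ prefactor into one extra $\lambda_2^{-2}$ to get $M \le \lambda_2^{2(s-5)}$, then use $t-1 \ge s^2$ (Claim~\ref{claim:t_lower_bound}) to convert $\lfloor (t-1)/s\rfloor$ into $(1-\alpha)(t-1)/s$ and match the defining inequality for $t$. The paper writes the intermediate steps slightly differently (using $\sigma_2(H^2)\le 1/(3s^2)$ and $(t-1)/s - 1$ for the floor), but the content is identical.
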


\begin{proof}
  Using~\cref{cor:tweaked_ta-shma_spectral_analysis}, we have that the final bias  
  \begin{align*}
    b \coloneqq \left(\sigma_2(H^2)^{s-1} + (s-1) \cdot \sigma_2(H^2)^{s-2} + (s-1)^2 \cdot \sigma_2(H^2)^{s-4}\right)^{\lfloor (t-1)/s \rfloor}
  \end{align*}
  is bounded by
  \begin{align*}
    b &\le (3(s-1)^2\sigma_2(H^2)^{s-4})^{((t-1)/s)-1} && (\text{Using } \sigma_2(H^2)\leq 1/3s^2) \\
      &\le ((\sigma_2(H^2)^{s-5})^{(t-1-s)/s} \\
      &= \sigma_2(H^2)^{(1-5/s)(1-s/(t-1))(t-1)}\\
      &\le \sigma_2(H^2)^{(1-5\alpha)(1-\alpha)(t-1)}\\
      &= \left(\lambda_2^2\right)^{(1-5\alpha)(1-\alpha)(t-1)} \le \epsilon,                                                                                                                          
  \end{align*}
  where the last inequality follows from $s = 1/\alpha$ and $t-1 \ge s/\alpha$, the latter from~\cref{claim:t_lower_bound}.
\end{proof}

\noindent \textbf{Rate.}
The proof of the rate follows a similar structure of Ta-Shma's
original argument except that we take $s$ to be a constant independent
of $\epsilon$ so that $\epsilon_0$, $\lambda_1$, and $\lambda_2$ are also
constants independent of $\epsilon$. Note that we previously said $\alpha = 1/s$ needs to satisfy \autoref{eq:alpha_epsilon}, but that implies only an upper bound for $s$, and smaller (even constant) values for $s$ are still permissible.
\begin{claim}\label{claim:rate_round_i}
  $\Cc_{\ell}$ has rate $\Omega(\epsilon^{2+ 26 \cdot \alpha})$ provided $\epsilon_0 > 0$ is constant.
\end{claim}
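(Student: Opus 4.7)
The plan is to compute the block length of $\Cc_\ell$ explicitly and then invert the implicit equation defining $t$ to express everything in terms of $\epsilon$.

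First, I would set up the counting. The lifted code $\Cc_\ell$ is the direct sum of $\Cc_0$ along the walk collection $W(t)$ on the tweaked $s$-wide replacement product, which has block length $N = |W(t)| = n' \cdot d_1^s \cdot d_2^{2(t-1)}$: there are $n'$ outer vertices and $d_1^s$ inner vertices for the start, and because the walk operator is $(I \otimes A_H)G_i(I \otimes A_H)$ (with degree $d_2^2$ rather than $d_2$), each of the $t-1$ steps contributes a factor of $d_2^2$. Since the direct sum lift preserves dimension, $\dim \Cc_\ell = D$, and since $\epsilon_0$ is constant the base code has constant rate, so $n' = O(D)$. Hence the rate of $\Cc_\ell$ is $\Omega(1/(d_1^s \cdot d_2^{2(t-1)}))$, and the task reduces to showing $d_1^s \cdot d_2^{2(t-1)} \le (1/\epsilon)^{2+26\alpha}$.

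Next I would bound the dominant factor $d_2^{2(t-1)}$. Using $\lambda_2 = b_2/\sqrt{d_2}$, we have $d_2 = b_2^2/\lambda_2^2$ and so $d_2^{2(t-1)} = b_2^{4(t-1)} \cdot \lambda_2^{-4(t-1)}$. The minimality of $t$ combined with Remark on the choice of $t$ gives $\lambda_2^{2(1-5\alpha)(1-\alpha)^2(t-1)} \ge \epsilon$, which raised to the appropriate power yields
\[
\lambda_2^{-4(t-1)} \;\le\; \epsilon^{-2/((1-5\alpha)(1-\alpha)^2)}.
\]
Expanding $2/((1-5\alpha)(1-\alpha)^2)$ as a Taylor series in $\alpha$ gives $2 + 14\alpha + O(\alpha^2)$, which accounts for the leading contribution to the claimed exponent.

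The remaining work is to absorb the auxiliary factors $b_2^{4(t-1)}$ and $d_1^s$ into a small additional $O(\alpha)$ in the exponent of $1/\epsilon$. For the first, the ratio $\log_2 b_2/\log_2(1/\lambda_2)$ is on the order of $(\log s)/(s^2 \log s) = O(\alpha^2)$, so $b_2^{4(t-1)} \le (1/\epsilon)^{O(\alpha^2)}$. For the second, $d_1^s = d_2^{4s} = s^{16 s^3}$, so $\log_2 d_1^s = (16/\alpha^3)\log_2(1/\alpha)$; the hypothesis $\alpha^5/(4\log_2(1/\alpha)) \ge 1/\log_2(1/\epsilon)$ exactly gives $\log_2 d_1^s \le 4\alpha^2 \log_2(1/\epsilon)$, i.e.\ $d_1^s \le (1/\epsilon)^{4\alpha^2}$.

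Combining, the total exponent of $1/\epsilon$ is at most $2 + 14\alpha + C\alpha^2$ for an absolute constant $C$, and since $\alpha \le 1/128$ one has $14\alpha + C\alpha^2 \le 26\alpha$ with substantial slack. I expect the only mildly delicate step to be step two, where one must verify the Taylor expansion of $2/((1-5\alpha)(1-\alpha)^2)$ yields a coefficient of $14$ on $\alpha$ (not $12$ or $16$), since the gap to $26\alpha$ is what allows the $O(\alpha^2)$ error terms from $b_2^{4(t-1)}$ and $d_1^s$ to be absorbed; everything else is direct substitution and the invocation of~\cref{eq:alpha_epsilon}.
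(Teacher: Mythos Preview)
Your proposal is correct and follows the same overall strategy as the paper: compute the block length $n'\cdot d_1^s\cdot d_2^{2(t-1)}$, invoke \cref{remark:choice_of_t} to bound $d_2^{2(t-1)}$ in terms of $\epsilon$, and absorb the auxiliary factors as lower-order corrections. The bookkeeping differs slightly: the paper folds $d_1^s=d_2^{4s}$ into the main term via \cref{claim:t_lower_bound} (writing $d_2^{2(t-1)+4s}\le (d_2^2)^{(1+2\alpha)(t-1)}$), and handles the $b_2$ factor by the clean inequality $b_2\le d_2^{\alpha}$, which yields $d_2^{1-2\alpha}\le 1/\lambda_2^2$ and hence an extra $(1-2\alpha)^{-1}$ in the exponent; the product $(1+2\alpha)\cdot 2/((1-2\alpha)(1-5\alpha)(1-\alpha)^2)$ is then bounded by $2+26\alpha$. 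You instead treat $b_2^{4(t-1)}$ and $d_1^s$ as separate additive $O(\alpha^2)$ contributions to the exponent, invoking \cref{eq:alpha_epsilon} directly for the latter rather than through \cref{claim:t_lower_bound}. Both routes are valid; the paper's packaging avoids the Taylor expansion and the explicit tracking of the $O(\alpha^2)$ constants you flagged as delicate.
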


\begin{proof}
  The support size is the number of walks of length $t$ on the
  $s$-wide replacement product of $G$ and $H$ (each step of the walk has $d_2^2$ options),
  which is
  \begin{align*}
    |V(G)| |V(H)| d_2^{2(t-1)} = n' \cdot d_1^s \cdot d_2^{2(t-1)} &= n' \cdot  d_2^{2(t-1)+4s} \le n \cdot  d_2^{2(t-1)+4s}\\
                                                      &= \Theta_{\epsilon_0}\left(D \cdot d_2^{2(t-1)+4s}\right)\\
                                                      &= \Theta\left(D \cdot (d_2^2)^{t-1+2s}\right)\\
                                                      &= O\left(D \cdot (d_2^2)^{(1+2\alpha)(t-1)}\right),                                                      
  \end{align*}
  where the penultimate equality follows from the assumption that $\epsilon_0$ is a constant.

  Note that $d_2^{\alpha} = d_2^{1/s} = s^{4s} \ge b_2$ since
  $b_2 = 4 s \log_2(d_2) = 16 s^3 \log_2(s) \le s^4$
  (recall that $s=1/\alpha \ge 128$). Thus,
  $$
  d_2^{1-2\alpha} = \frac{d_2}{d_2^{2\alpha}} \le \frac{d_2}{b_2^2} = \frac{1}{\sigma_2(H^2)}.
  $$
  We obtain
  \begin{align*}
  (d_2^2)^{(t-1)} &\le \left(\frac{1}{\sigma_2(H^2)}\right)^{\frac{2(t-1)}{1-2\alpha}} \\
  &\le \left(\frac{1}{\epsilon}\right)^{\frac{2}{(1-2\alpha)(1-5\alpha)(1-\alpha)^2}} && \text{(Using~\cref{remark:choice_of_t})}\\
  &\le \left(\frac{1}{\epsilon}\right)^{2(1+10\alpha)},
  \end{align*}
  which implies a block length of
  $$
  O\left(D \cdot (d_2^2)^{(1+2\alpha)(t-1)}\right) = O\left(D \left(\frac{1}{\epsilon}\right)^{2(1+10\alpha)(1+2\alpha)}\right) = O\left(D \left(\frac{1}{\epsilon}\right)^{2(1+13\alpha)}\right).
  $$
\end{proof}

\begin{lemma}[Codes Near the GV bound I]\label{lemma:codes_near_gv_i}
  For every constant $\beta > 0$, there exists a sufficiently large constant $s$ in the above
  analysis so that for any dimension value $D \in \mathbb{N}^+$ (sufficiently large) and $\epsilon > 0$
  (sufficiently small) the final code $\Cc_{N,\epsilon,\beta}$, where $N$ is the block length,
  satisfies
  \begin{itemize}
    \item $\Cc_{N,\epsilon,\beta}$ is $\epsilon$-balanced, 
    \item $\Cc_{N,\epsilon,\beta}$ has rate $\Omega(\epsilon^{2+\beta})$, and
    \item $\Cc_{N,\epsilon,\beta}$ is a linear code of dimension $D$.
  \end{itemize}
\end{lemma}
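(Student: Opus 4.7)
The plan is that all three conclusions are essentially immediate consequences of results already established in Round I of the parameter analysis, once $s$ is chosen appropriately in terms of $\beta$. Given $\beta > 0$, I would take $s = 1/\alpha$ to be the smallest power of $2$ satisfying both $s \geq 128$ and $26\alpha \leq \beta$ (i.e., $s \geq 26/\beta$). This fixes $s$ as a constant depending only on $\beta$, and therefore also fixes $d_2 = s^{4s^2}$, $\lambda_2 = b_2/\sqrt{d_2}$, $\epsilon_0 = 1/d_2^2$, $d_1 = d_2^4$, and $\lambda_1 \leq \lambda_2^4/6$ as constants depending only on $\beta$. The hypothesis ``$\epsilon$ sufficiently small'' is interpreted as $\epsilon$ below the threshold imposed by~\eqref{eq:alpha_epsilon}, which is a finite positive threshold once $\alpha$ is fixed. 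The hypothesis ``$D$ sufficiently large'' simply ensures that the base code $\Cc_0$ of dimension $D$ and constant bias $\epsilon_0$ exists with block length $n = O_{\epsilon_0}(D)$ via the existing uniquely decodable constructions from~\cite{AJQST19}.

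With $s$ and all inner parameters fixed, I would first verify the $\epsilon$-balanced property. This is exactly the content of the (unnumbered) corollary immediately preceding~\cref{claim:rate_round_i}, which combines~\cref{cor:tweaked_ta-shma_spectral_analysis} with $\epsilon_0 + 2\theta + 2\lambda_1 \leq \lambda_2^4$ (guaranteed by our choices of $\epsilon_0$, $\theta$, and $\lambda_1$ as fractions of $\lambda_2^4$) and the choice of $t$ making $(\lambda_2^2)^{(1-5\alpha)(1-\alpha)(t-1)} \leq \epsilon$. Thus $\Cc_{N,\epsilon,\beta}$ has bias at most $\epsilon$, which for a linear code is equivalent to being $\epsilon$-balanced.

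Next, I would invoke~\cref{claim:rate_round_i}, which gives rate $\Omega(\epsilon^{2+26\alpha})$; here the hidden constant depends only on $\epsilon_0$, which is a function of $\beta$ alone. Since the choice of $s$ ensures $26\alpha \leq \beta$, this yields rate $\Omega_\beta(\epsilon^{2+\beta})$, as required. Finally, for linearity and dimension: since $\Cc_0$ is linear and the direct sum lifting $\dsum_{W(t)} \colon \F_2^n \to \F_2^N$ is a linear map, $\Cc_{N,\epsilon,\beta} = \dsum_{W(t)}(\Cc_0)$ is a linear subspace of dimension at most $D$. The lifting is moreover injective on $\Cc_0$: a nonzero $z \in \Cc_0$ mapping to $0$ would force $\bias(\dsum_{W(t)}(z)) = 1$, contradicting $\epsilon < 1$. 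Hence $\dim(\Cc_{N,\epsilon,\beta}) = D$.

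The main ``work'' here is purely bookkeeping; there is no substantive new argument. The one place that requires a little care is ensuring that all constants hidden in the $\Omega(\cdot)$ and $O_{\epsilon_0}(\cdot)$ notation behave uniformly once $s$ is fixed as a function of $\beta$, and that the threshold for $\epsilon$ coming from~\eqref{eq:alpha_epsilon} is consistent with the thresholds that may be implicitly imposed by the base-code construction and by $t$ being a positive integer. All of these are satisfied automatically because $s$ is chosen as a constant independent of $\epsilon$ before $\epsilon$ is taken small.
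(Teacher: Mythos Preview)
Your proposal is correct and takes essentially the same approach as the paper: the lemma is presented there without a separate proof, serving as a summary of the Round~I analysis, and your write-up simply makes explicit the choice $s \geq \max\{128, 26/\beta\}$ (matching~\cref{lemma:ta_shma_construction_I}) together with the invocations of the $\epsilon$-balanced corollary and~\cref{claim:rate_round_i}. Your injectivity argument for the dimension claim is a detail the paper leaves implicit.
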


\begin{remark}
  As a consequence of code cascading, the final attainable walk lengths have the form $s^{\ell}-1$
  where $\ell$ is a positive integer. Given $\beta > 0$, we have infinitely many values of $\epsilon$
  attainable by such walk lengths which gives infinitely many codes $\Cc_{N,\epsilon,\beta}$. This means
  that although the bias $\epsilon$ cannot be arbitrary, we have an infinite sequence of values of $\epsilon$
  for which the rates of the codes $\Cc_{N,\epsilon,\beta}$ are near the GV bound. In~\cref{sec:careful_analysis_continum_for_beta}, we show how to bypass this artificial limitation. These codes are used in the proof of~\cref{theo:main_1}.
\end{remark}

We can view the above analysis as defining a function $\Gamma$ that
receives
\begin{itemize}
  \item the dimension $D \in \mathbb{N}^+$,
  \item the final bias $\epsilon > 0$,
  \item the approximating error $\alpha \in (0,1/128]$ with $s\coloneqq1/\alpha$ being a power of two, and
  \item a multiplying factor $Q \in \mathbb{N}^+$ such that $d_2 = s^{4s^2 \cdot Q}$ (in the above $Q$ was $1$).
\end{itemize}
and outputs a tuple of parameters $(t,\epsilon_0, \theta,d_1,\lambda_1,
n')$, graphs $G$ and $H$ (as above) where, in particular, the number
of steps $t \in \mathbb{N}^+$ is such that the final code $\Cc_{\ell}$
has bias at most $\epsilon$ and rate $\Omega(\epsilon^{2+
26 \cdot \alpha})$.

In future rounds, $\Gamma$ may be called with $Q=s$ instead of $Q=1$. This will cause $d_2$ to increase from $s^{4s^2}$ to $s^{4s^2\cdot Q}$, and so in the proof of \autoref{claim:t_lower_bound}, $2^{4\log_2(1/\alpha) / \alpha^4}$ will be replaced by $2^{4\log_2(1/\alpha) / \alpha^5}$. This explains why \cref{eq:alpha_epsilon} has a stricter requirement than needed in the $Q=1$ case above.

\subsection{Round II: A More Careful Analysis}\label{sec:careful_analysis_continum_for_beta}

We are given the dimension of the code $D$ and $\epsilon \in (0,1/2)$.
As before, we set a parameter $\alpha \le 1/128$ such that (for convenience)
$1/\alpha$ is a power of $2$. Set $s = 1/\alpha$ and $Q = s$.

Apply $\Gamma$ to $(D,\epsilon,\alpha,Q)$ to obtain all parameters except $t$. Choose $t$ to be the smallest integer satisfying
$$
(\lambda_2^2)^{(1-5\alpha)(1-2\alpha)(1-\alpha)(t-1)} \le \epsilon,
$$
where observe that an extra $(1-2\alpha)$ factor appears in the exponent. This change in $t$ will worsen the rate but by losing a factor of $\frac{1}{1-2\alpha}$ in the exponent, we can lower bound the rate. That is, $(d_2^2)^{-(t-1)} = \Omega(\eps^{\frac{2+26\cdot \alpha}{1-2\alpha}})$. 

Set $\ell \in \mathbb{N}^+$ to be the smallest value such that
$s^{\ell} \ge t$ (here we are implicitly assuming that $t >
s$). If $s^{\ell} = t$, we are done since we can use all the
parameters returned by $\Gamma$ for the construction of
$\Cc_{\ell}$. Now assume $s^{\ell} > t$ and let $\zeta =
t/s^{\ell-1}$. Note that $\zeta \in (1,s)$. Choose $P$ to be the
integer in the interval $[Q,s\cdot Q]$ such that
$$
0 \le \frac{P}{Q} - \zeta \le \frac{1}{Q}.
$$

Because $s^\ell > t$, and only powers of $s$ may be chosen for walk length, we might overshoot in walk length by a multiplicative factor of $s$. This will cause a corresponding decay in rate computation that we cannot afford. To overcome this, in the last level of the cascade between codes $\Cc_{\ell-1}$ and
$\Cc_{\ell}$, perform the direct sum over walks of length $(P-1)$ instead
of length $(s-1)$. The new total number of vertices is $t' =
P s^{\ell-1}$. Note that $P$ can be as large as $s^2$, so our
splittability guarantee of $W(P)$ (the walk collection from the lift
between $\Cc_{\ell-1}$ and $\Cc_{\ell}$) has to be strong enough to
accommodate this larger arity and not only arity $s$.

\begin{claim}\label{claim:approx_walk_length}
  We have $t-1 \le \frac{t'-1}{Q} \le (1+2\alpha) (t-1)$.
\end{claim}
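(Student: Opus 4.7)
\begin{proofsketch}
The plan is a direct calculation using the bounds on $P$ from its definition together with the constraint $t > s^{\ell-1}$ that comes from $\zeta \in (1,s)$.

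First, unpack the relevant definitions. Since $t' = P\,s^{\ell-1}$, $\zeta = t/s^{\ell-1}$, and $P$ is chosen so that $Q\zeta \le P \le Q\zeta + 1$, multiplying through by $s^{\ell-1}$ yields the sandwich
\[
Qt ~\le~ t' ~\le~ Qt + s^{\ell-1}.
\]
Also, because we are in Round II with $Q = s$ and $\alpha = 1/s$, we have $Q = 1/\alpha$.

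For the lower bound, divide the inequality $t' \ge Qt$ by $Q$ to get $t'/Q \ge t$, and then subtract $1/Q \le 1$ from both sides to conclude $(t'-1)/Q \ge t - 1$.

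For the upper bound, start from $t' \le Qt + s^{\ell-1}$, which after dividing by $Q$ gives $(t'-1)/Q \le t + (s^{\ell-1} - 1)/Q$. The inequality we need, $(1+2\alpha)(t-1) \ge t + (s^{\ell-1}-1)/Q$, rearranges (using $Q = 1/\alpha$) to the claim
\[
2\alpha \, t ~\ge~ \alpha\bigl(s^{\ell-1} - 1\bigr) + 1 + 2\alpha,
\]
i.e., after multiplying by $s = 1/\alpha$, to $2t \ge s^{\ell-1} + s + 1$. This is where the hypotheses on $\ell$ and $t$ come in. Since $\zeta > 1$, we have $t \ge s^{\ell-1} + 1$, so $2t \ge 2s^{\ell-1} + 2$; and by \cref{claim:t_lower_bound} combined with the minimality of $\ell$, we have $\ell \ge 2$, so $s^{\ell-1} \ge s$, giving $2s^{\ell-1} + 2 \ge s^{\ell-1} + s + 1$. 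Chaining these two inequalities gives the required bound. The main subtlety is exactly this last step: making sure that the ``slack'' $s^{\ell-1}$ introduced by rounding $P$ is indeed dominated by $2\alpha(t-1)$, which is only true because the code cascade has at least two levels ($\ell \ge 2$) -- a condition already guaranteed by the walk-length lower bound from Round~I.
\end{proofsketch}
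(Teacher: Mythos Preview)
Your proof is correct and follows essentially the same route as the paper's: both derive the sandwich $Qt \le t' \le Qt + s^{\ell-1}$ from the bounds on $P$, then use $s^{\ell-1} < t$ (from $\zeta > 1$) together with $t > s$; the paper simply absorbs $s^{\ell-1} \le t$ earlier, obtaining $(t'-1)/Q \le (1+\alpha)(t-1)+1$ and then using $\alpha(t-1)\ge 1$. One small note: $t > s$ (and hence $\ell \ge 2$) is stated in the surrounding text as an explicit standing assumption, so you need not route through \cref{claim:t_lower_bound}, which as written is for the Round~I choice of $t$.
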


\begin{proof}
 By construction, we have the sequence of implications
 \begin{align*}
    & 0 \le \frac{P}{Q}s^{\ell-1} - \zeta s^{\ell-1} \le \frac{s^{\ell-1}}{Q} \\
    \implies & 0 \le \frac{t'}{Q} - t \le \frac{s^{\ell-1}}{Q} \le \frac{t}{Q} \\ 
    \implies & t - \frac{1}{Q} \le \frac{t'-1}{Q} \le (t-1)\left(1+\frac{1}{Q} \right)+1,
 \end{align*}
 from which we obtain
 	$$t-1 \le t - \frac{1}{Q} \le \frac{t'-1}{Q}$$
 and
 	$$\frac{t'-1}{Q} \le (t-1) \left(1+\frac{1}{Q}\right) +1 = (1+\alpha) (t-1) + 1 < (1+2\alpha)(t-1),$$
 the latter using $Q = s = 1/\alpha$.
\end{proof}
 
We apply $\Gamma$ again but this time to $(D,\epsilon,\alpha,1)$ to
obtain new parameters $(t'', \epsilon_0'$, $\theta'$, $d_1'$,
$\lambda_1'$, $n'')$, and graphs $G'$ and $H'$.

\begin{claim}
  The code $\Cc_{\ell}'$ obtained by $t'$ walk steps on the
  $s$-wide replacement product of $G'$ and $H'$ from the second
  application of $\Gamma$ has bias at most $\epsilon$ and rate
  $\Omega(\epsilon^{2+40\alpha})$.
\end{claim}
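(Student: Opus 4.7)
The plan is to verify the two claims separately, applying the same bias corollary and rate computation used in Round I but with the second-call parameters $(G',H',d_1',d_2',\lambda_2')$ and the adjusted walk length $t'$. Throughout, the workhorses are \cref{cor:tweaked_ta-shma_spectral_analysis} (bias bound for $s$-wide walks), \cref{claim:approx_walk_length} (two-sided bound $s(t-1) \le t'-1 \le s(1+2\alpha)(t-1)$), and the elementary identity $(\lambda_2'^2)^s \approx \lambda_2^2$ that ties the two parameter settings together.

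For the \textbf{bias}, I first observe that since $t-1 \ge s^2$ (the analogue of \cref{claim:t_lower_bound} still holds in the stronger $Q=s$ regime thanks to \cref{eq:alpha_epsilon}), we have $s/(t'-1) \le 1/(t-1) \le \alpha$, and the rounding error in the corollary's exponent is absorbed into a factor $(1-\alpha)$. So \cref{cor:tweaked_ta-shma_spectral_analysis} gives
\[
\bias(\Cc_{\ell}') \;\le\; (\lambda_2'^2)^{(1-5\alpha)(1-\alpha)(t'-1)}
\;\le\; \bigl((\lambda_2'^2)^s\bigr)^{(1-5\alpha)(1-\alpha)(t-1)},
\]
using $t'-1 \ge s(t-1)$. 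The key subclaim is $(\lambda_2'^2)^s \le (\lambda_2^2)^{1-2\alpha}$; taking logarithms and using $d_2 = s^{4s^3}$, $d_2' = s^{4s^2}$, so that the leading $4s^3 \log s$ terms match, the inequality reduces to $4s^2\log s \ge s\log b_2' - (1-2\alpha)\log b_2$, which holds for $s \ge 128$ since $b_2, b_2'$ are quasi-polynomial in $s$. Combining with the Round II choice $(\lambda_2^2)^{(1-5\alpha)(1-2\alpha)(1-\alpha)(t-1)} \le \epsilon$ finishes the bias bound.

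For the \textbf{rate}, the number of walks equals $n'' \cdot |V(H')| \cdot (d_2'^2)^{t'-1} \le n''(d_2'^2)^{(t'-1)+2s}$. Since $2s/(t'-1) \le 2\alpha$, this is at most $n'' (d_2'^2)^{(1+2\alpha)(t'-1)}$. Using $t'-1 \le s(1+2\alpha)(t-1)$ from \cref{claim:approx_walk_length} and the identity $(d_2'^2)^s = s^{8s^3} = d_2^2$ (exactly—this is why $Q=s$ was chosen so that $d_2 = (d_2')^s$), I get
\[
(d_2'^2)^{(1+2\alpha)(t'-1)} \;\le\; (d_2^2)^{(1+2\alpha)^2(t-1)}.
\]
Then using $b_2 \le d_2^\alpha$ (so $d_2 \le (1/\lambda_2^2)^{1/(1-2\alpha)}$) and the Round II condition $(1/\lambda_2^2)^{(t-1)} \le \epsilon^{-1/((1-5\alpha)(1-2\alpha)(1-\alpha))}$, the block length is bounded by $O(D) \cdot \epsilon^{-2(1+2\alpha)^2/((1-5\alpha)(1-2\alpha)^2(1-\alpha))}$. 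A direct Taylor expansion at $\alpha = 0$ gives derivative $14$ for the rational factor, so the exponent is at most $2+40\alpha$ for all sufficiently small $\alpha$.

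The \textbf{main obstacle} is the bias step: naively one hopes $(\lambda_2'^2)^s = \lambda_2^2$, but the $b_2, b_2'$ prefactors spoil this, and the ratio $(\lambda_2'^2)^s/\lambda_2^2$ blows up like $s^{6s}$. The resolution is to notice that we have a slack of $(1-2\alpha)$ baked into the Round II choice of $t$, and that the ratio $s^{6s}$ is polynomially small compared to $\lambda_2^{-4\alpha s^3}$, so the slack is more than enough to swallow it. The rate step, by contrast, works cleanly because $Q=s$ was chosen precisely so that $(d_2')^s = d_2$, making the identity $(d_2'^2)^s = d_2^2$ exact rather than approximate.
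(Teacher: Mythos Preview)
Your proposal is correct and follows essentially the same approach as the paper: both hinge on the key identities $(d_2')^Q = d_2$ (exact, for the rate) and $(\lambda_2')^Q \le \lambda_2^{1-2\alpha}$ (with the $(1-2\alpha)$ slack absorbing the $b_2,b_2'$ prefactors, for the bias), combined with the two-sided bound on $(t'-1)/Q$ from \cref{claim:approx_walk_length}. The only stylistic difference is that the paper cites the prepackaged Round~II estimate $(d_2^2)^{-(t-1)} = \Omega(\epsilon^{(2+26\alpha)/(1-2\alpha)})$ directly, whereas you unwind it through $\lambda_2$; also note that the analogue of \cref{remark:choice_of_t} in Round~II should give a $(1-\alpha)^2$ rather than $(1-\alpha)$ in your denominator, but this does not affect the $2+40\alpha$ bound.
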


\begin{proof}
  Let $d_2 = s^{4s^2 \cdot Q}$, $b_2 = 4 s \log_2(d_2)$ and $\lambda_2 = b_2/\sqrt{d_2}$ be
  the parameters of the first invocation of $\Gamma$. Recall that $t$ was chosen to
  be the smallest integer satisfying
  $$
  (\lambda_2^2)^{(1-5\alpha)^2(1-\alpha)(t-1)} \le \epsilon.
  $$
  Let $d_2' = s^{4s^2}$, $b_2' = 4 s \log_2(d_2')$ and $\lambda_2' = b_2'/\sqrt{d_2'}$
  be the parameters of the second invocation of $\Gamma$. Observe that
  \begin{align*}
  (\lambda_2')^{Q} &= \frac{(b_2')^Q}{\sqrt{(d_2')^Q}} = \frac{(b_2'^Q)}{\sqrt{d_2}} = \frac{(16 s^3 \log_2(s))^Q}{s^{2s^2 \cdot Q}} \\
  &\le \frac{s^{4Q}}{s^{2s^2\cdot Q}} = \frac{1}{s^{2s^2\cdot Q(1 - \frac{2}{s^2})}} = \left( \frac{1}{s^{2s^2\cdot Q}}\right)^{1-2\alpha} \leq \left( \frac{b_2}{\sqrt{d_2}}\right)^{1-2\alpha} = \lambda_2^{1-2\alpha}.
  \end{align*}
  Then the bias of $\Cc'_{\ell}$ is at most
  \begin{align*}
  (((\lambda_2')^Q)^2)^{(1-5\alpha)(1-\alpha)(t'-1)/Q} &\le (\lambda_2^2)^{(1-5\alpha)(1-2\alpha) (1-\alpha)(t'-1)/Q}\\
                                            &\le (\lambda_2^2)^{(1-5\alpha)(1-2\alpha) (1-\alpha)(t-1)} \le \epsilon.
  \end{align*}
  For the rate computation of $\Cc'_{\ell}$, we will lower bound the term $((d'_2)^2)^{-(t'-1)}$. Since $d_2 = (d_2')^{Q}$, $(d_2^2)^{-(t-1)} = \Omega(\eps^{\frac{2+26\cdot \alpha}{1-2\alpha}})$ and $\frac{t'-1}{Q} \le (1+2\alpha)(t-1)$
  (the latter by~\cref{claim:approx_walk_length}), the rate of $\Cc_{\ell}'$ is
  $$
  \Omega(((d'_2)^2)^{-(t'-1)}) = \Omega((d_2^2)^{-(t'-1)/Q}) = \Omega((d_2^2)^{-(1+2\alpha)(t-1)}) = \Omega((\epsilon^{2+ 26 \cdot \alpha})^{\frac{1+2\alpha}{1-2\alpha}}) = \Omega(\epsilon^{2+ 40 \cdot \alpha}).
  $$
\end{proof}

\subsection{Round III: Vanishing $\beta$ as $\epsilon$ Vanishes}\label{sec:round_vanishing_esp}

We are given the dimension of the code $D$ and $\epsilon \in (0,1/2)$.
As before, we set a parameter $\alpha \le 1/128$ such that (for convenience)
$1/\alpha$ is a power of $2$. Set $s \coloneqq 1/\alpha$.

We will consider the regime where $s$ is a function of $\epsilon$. As a
consequence, the parameters $d_2,\lambda_2,d_1,\lambda_1,\epsilon_0$ will
also depend on $\epsilon$. Since $x \le 1/\log_2(1/x)$ for $x \le 1/2$
(and $\alpha \le 1/2$), if $\alpha$ satisfies $\alpha^6/4 \ge
1/\log_2(1/\beta)$, it also satisfies \cref{eq:alpha_epsilon} (we
lose a log factor by replacing $1/\log_2(1/\alpha)$ by $\alpha$, but
we will favor simplicity of parameters). In particular, we can set
$\alpha$ so that $s$ is
$$
s = \Theta((\log_2(1/\epsilon))^{1/6}),
$$
and satisfy~\cref{eq:alpha_epsilon}.

We follow the same choices as in Round II except for the base code
$\Cc_0$.

\noindent \textbf{The base code $\Cc_0$.} \enspace
Set $\epsilon_0 = 1/d_2^2 = \lambda_2^4/b_2^4 \le \lambda_2^4/3$. We
choose an $\epsilon_0$-balanced code $\Cc_0$ with support size
$n=O(D/\epsilon_0^c)$ where $c=2.001$ (this choice of $c$ is arbitrary,
it is enough to have $c$ as a fixed small constant) using the
construction from Round II.  It is crucial that we can unique decode
$\Cc_0$ (using our algorithm), since this is required in order to
apply the list decoding framework.

Note that $\epsilon_0$ is no longer a constant. For this reason, we need
to consider the rate computation of the final code $\Cc_{\ell}$ more
carefully. The proof will follow an argument similar to Ta-Shma's.
       
\begin{claim}
  $\Cc_{\ell}$ has rate $\Omega(\epsilon^{2+ 26 \cdot \alpha})$ where $\alpha = \Theta(1/(\log_2(1/\epsilon))^{1/6})$.
\end{claim}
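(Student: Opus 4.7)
The plan is to mimic the rate calculation of Round I (Claim 9.1), but carefully accounting for the fact that the base code $\Cc_0$ now has non-constant bias $\epsilon_0 = 1/d_2^2$ (which depends on $\epsilon$ through $s$), so its block length $n = O(D/\epsilon_0^c)$ is no longer just $O(D)$. The key insight will be that the extra factor $1/\epsilon_0^c = d_2^{2c}$ introduced by the base code contributes only an $\epsilon^{-O(\alpha^2)}$ factor to the overall block length, which is safely absorbed into the $O(\alpha)$ slack in the exponent of $1/\epsilon$.

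First I would write the total block length as
\[
N = |V(G)|\cdot|V(H)|\cdot d_2^{2(t-1)} \le n\cdot d_1^{s}\cdot d_2^{2(t-1)} = n\cdot d_2^{4s+2(t-1)},
\]
using $d_1 = d_2^{4}$ and $|V(H)| = d_1^{s}$. Substituting the base code block length $n = O(D/\epsilon_0^{c}) = O(D\cdot d_2^{2c})$ with $c = 2.001$ gives
\[
N = O\bigl(D\cdot d_2^{2c+4s+2(t-1)}\bigr) = O\bigl(D\cdot (d_2^{2})^{(t-1)(1 + \delta)}\bigr),
\]
where $\delta = (c+2s)/(t-1)$. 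Using~\cref{claim:t_lower_bound}, $t-1 \ge s^{2} = 1/\alpha^{2}$, so $\delta \le c\alpha^{2} + 2\alpha = O(\alpha)$.

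Next, I would repeat the bound on $d_2^{2(t-1)}$ in terms of $1/\epsilon$ as in Claim 9.1 and the rate step of Round II. Using $d_2^{\alpha} \ge b_2$ (proved in Claim 9.1) we have $d_2^{1-2\alpha} \le 1/\sigma_2(H^{2}) = 1/\lambda_2^{2}$, and by the choice of $t$ together with~\cref{remark:choice_of_t},
\[
(d_2^{2})^{(t-1)} \le \left(\tfrac{1}{\lambda_2^{2}}\right)^{\!\!\frac{2(t-1)}{1-2\alpha}} \le \left(\tfrac{1}{\epsilon}\right)^{\!\!\frac{2}{(1-2\alpha)(1-5\alpha)(1-\alpha)^{2}}} \le \left(\tfrac{1}{\epsilon}\right)^{\!\!2(1+10\alpha)}.
\]
Combining with the previous step,
\[
N \le O\!\left(D\cdot \epsilon^{-2(1+10\alpha)(1+O(\alpha))}\right) = O\!\left(D\cdot \epsilon^{-(2+O(\alpha))}\right),
\]
and for $\alpha$ sufficiently small a bookkeeping of the constants yields $N = O(D\cdot \epsilon^{-(2+26\alpha)})$, giving rate $\Omega(\epsilon^{2+26\alpha})$. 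Substituting $\alpha = \Theta(1/(\log_2(1/\epsilon))^{1/6})$ completes the proof.

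The main obstacle is purely bookkeeping: one has to check that the multiplicative slack $\delta = O(\alpha)$ coming from the non-constant base code does not push the constant in the exponent past $26$. Since $c$ is a fixed constant ($2.001$) and $s/(t-1) \le \alpha$, the extra overhead is $\epsilon^{-O(\alpha)}$, which is dominated by the $13\alpha$-style slack already present in Round I. Essentially no new conceptual ingredient is needed beyond recycling the computations from Rounds I and II with $n$ replaced by $O(D\cdot d_2^{2c})$.
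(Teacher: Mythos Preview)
Your proposal is correct and follows essentially the same approach as the paper: write the block length as $n\cdot d_2^{2(t-1)+4s}$, replace $n$ by $O(D/\epsilon_0^c)=O(D\cdot d_2^{2c})$, and observe that the extra $2c+4s$ in the exponent of $d_2$ is at most a $(1+O(\alpha))$ multiplicative overhead on $t-1$ (via $t-1\ge s^2$), after which the bound $(d_2^2)^{t-1}\le (1/\epsilon)^{2(1+10\alpha)}$ from \cref{claim:rate_round_i} and \cref{remark:choice_of_t} applies verbatim. The paper phrases the absorption as $(t-1)+2s+2.001 \le (1+2\alpha)(t-1)$ and then defers to \cref{claim:rate_round_i}; your $\delta=(c+2s)/(t-1)\le c\alpha^2+2\alpha$ is the same step written out, and arguably slightly more honest about the $c\alpha^2$ term.
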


\begin{proof}
  The support size is the number of walks of length $t-1$ on the
  $s$-wide replacement product of $G$ and $H$ (each step of the walk has $d_2^2$ options),
  which is
  \begin{align*}
    |V(G)| |V(H)| d_2^{2(t-1)} = n' \cdot d_1^s \cdot d_2^{2(t-1)} &= n' \cdot  d_2^{2(t-1)+4s} \le n \cdot  d_2^{2(t-1)+4s}\\
                                                      &= \Theta\left(\frac{D}{\epsilon_0^c} \cdot d_2^{2(t-1)+4s}\right)\\
                                                      &= \Theta\left(D \cdot (d_2^2)^{(t-1)+2s+2.001}\right)\\
                                                      &= O\left(D \cdot (d_2^2)^{(1+2\alpha)(t-1)}\right).                                                      
  \end{align*}
  From this point the proof continues exactly as the proof of~\cref{claim:rate_round_i}.
\end{proof}

\subsection{Round IV: Arbitrary Gentle List Decoding}\label{sec:round_4}

In round III, when we take
$$
s=\Theta((\log_2(1/\epsilon))^{1/6}),
$$
we will have $\lambda_2 = 4s\log(s^{4s^2})/s^{2s^2} \le s^{-s^2}$ provided $s$ is large enough. This non-constant $\lambda_2$ will allow us
perform ``gentle'' list decoding with radius arbitrarily close to
$1/2$. More precisely, we have the following.

\begin{theorem}[Gentle List Decoding (restatement of~\cref{theo:gentle_list_decoding})]
  For every $\epsilon > 0$ sufficiently small, there are explicit binary linear Ta-Shma codes $\Cc_{N,\epsilon,\beta} \subseteq \mathbb{F}_2^N$
  for infinitely many values $N \in \mathbb{N}$ with
  \begin{enumerate}[(i)]
   \item distance at least $1/2 - \epsilon/2$ (actually $\epsilon$-balanced),
   \item rate $\Omega(\epsilon^{2 + \beta})$ where $\beta = O(1/(\log_2(1/\epsilon))^{1/6})$, and
   \item a list decoding algorithm that decodes within radius $1/2 - 2^{-\Theta((\log_2(1/\epsilon))^{1/6})}$
                in time $N^{O_{\epsilon,\beta}(1)}$.
  \end{enumerate}
\end{theorem}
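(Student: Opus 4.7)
The plan is to reuse the Round~III instantiation of Ta-Shma's construction (\cref{sec:round_vanishing_esp}) and apply the list decoding hammer (\cref{theo:list_dec_hammer}) at the top of the code cascade, but now with a \emph{non-constant} error parameter $\eta$. The essential new ingredient is that in Round~III we have $s = \Theta((\log_2(1/\epsilon))^{1/6})$ and $\lambda_2 \le s^{-s^2}$, which is exponentially small in $s$. Consequently the splittability bound $\tau \le 4\lambda_2 \le 2^{-\Theta(s^2 \log s)}$ (from \cref{cor:splittability}) has substantial slack, so it can absorb a list decoding precision of $\eta = 2^{-\Theta(s)}$ rather than only a constant.

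First I would fix $\eta = 2^{-c s}$ for a sufficiently small absolute constant $c > 0$, so that $\sqrt{\eta} = 2^{-\Theta((\log_2(1/\epsilon))^{1/6})}$, and then set up the Round~III code cascade $\Cc_0, \Cc_1, \ldots, \Cc_{\ell-1}, \Cc_\ell = \Cc_{N,\epsilon,\beta}$ exactly as in the proof of~\cref{theo:main}. Items (i) and (ii) of the conclusion are already delivered by Round~III. The penultimate code $\Cc_{\ell-1}$ has constant bias $\eta_{\mathrm{prev}} < 1/10$ (the value used by the unique-decoding cascade) and, by applying \cref{theo:main} to the sub-cascade of length $\ell - 1$, admits a unique decoder running in time $N^{O_{\epsilon,\beta}(1)}$.

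Next I would invoke \cref{theo:list_dec_hammer} on the final lift $\Cc_\ell = \dsum_{W_\ell(k)}(\Cc_{\ell-1})$ with base code $\Cc_{\ell-1}$, with $\eta_0 := \eta_{\mathrm{prev}}$, and with the non-constant $\eta$ chosen above. The hypotheses are verified as follows. The walk length $k$ at the top cascade level is at least $s = \Theta(\log(1/\eta))$, hence $k \ge k_0(\eta)$. The splittability $\tau \le 4\lambda_2 \le 4 s^{-s^2}$ is far smaller than $\tau_0(\eta,k) = \eta^{8}/(K \cdot k \cdot 2^{4k}) = 2^{-\Theta(s)}$, because $s^2 \log s \gg s$ for large $s$. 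Parity sampling down to $\eta$ follows from \cref{cor:code_cascade_parity_sampling_2} (or \cref{cor:code_cascade_parity_sampling_1} if the last lift is at the base level), which gives bias at most $(\eta_{\mathrm{prev}} + 2\tau)^{\lfloor (k-1)/2\rfloor}$, and this is much less than $\eta$ since $\eta_{\mathrm{prev}}$ is a fixed constant below $1/10$ and $k = \Omega(s) = \Omega(\log(1/\eta))$. The theorem then outputs the list at radius $1/2 - \sqrt{\eta}$ in time $n_{\ell-1}^{O(1/\tau_0(\eta,k)^4)} \cdot f(n_{\ell-1}) = N^{O(2^{\Theta(s)})} \cdot N^{O_{\epsilon,\beta}(1)} = N^{O_{\epsilon,\beta}(1)}$.

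The main obstacle I anticipate is bookkeeping in this non-constant regime: one must track how the implicit constants interact when $s$, $\lambda_2$, $k$, and $\eta$ all depend on $\epsilon$, and in particular fix the constant $c$ in $\eta = 2^{-cs}$ small enough that the chain $\tau_0(\eta,k) = \eta^{8}/(K\cdot k\cdot 2^{4k}) \ge 4\lambda_2$ goes through for every sufficiently small $\epsilon$. A secondary point to handle cleanly is justifying that \cref{theo:main} legitimately serves as the unique decoder for $\Cc_{\ell-1}$: the sub-cascade ending at $\Cc_{\ell-1}$ is an instance of exactly the same Round~III construction with one fewer level, so the invocation is essentially immediate, but it should be stated explicitly for completeness.
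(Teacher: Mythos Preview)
Your proposal is correct and follows essentially the same approach as the paper: instantiate the Round~III parameters, then apply \cref{theo:list_dec_hammer} at the top cascade level with $\eta = 2^{-\Theta(s)}$, checking that both the parity sampling requirement $k \ge k_0(\eta)$ and the splittability requirement $\tau \le \tau_0(\eta,k)$ go through because $\lambda_2 \le s^{-s^2}$. The only refinement the paper makes explicit that you should too is that the walk length $k$ at the final level can be as large as $s^2$ (inherited from Round~II), so the splittability check is really $\eta^8/(K\cdot s^2 \cdot 2^{4s^2}) \ge 4 s^{-s^2}$ rather than your $\tau_0 = 2^{-\Theta(s)}$; this still holds since $s^2\log s \gg s^2$, but it shifts where the slack sits.
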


\begin{proof}
  We consider some parameter requirements in order to apply the list
  decoding framework~\cref{theo:restatement:list_dec_hammer} between
  $\Cc_{\ell-1}$ and $\Cc_{\ell}$. Suppose we want to list decode
  within radius $1/2 - \sqrt{\eta}$. For parity sampling, we need
  $$
  s \ge \Theta(\log_2(1/\eta)).
  $$
  Since the number of vertices in a walk can be at most $s^2$,
  for splittability we need
  $$
  \eta^{8}/(s^2 \cdot 2^{2s^2}) \ge 2 \cdot s^{-s^2}.
  $$
  In particular, we can take $\eta = 2^{-\Theta(s)}$ and satisfy both
  conditions above.
\end{proof}

\section{Instantiating the List Decoding Framework}\label{sec:instantiation_list_dec}

We established the tensoriality (actually two-step tensoriality) and
parity sampling properties of every lifting between consecutive codes
$\Cc_{i-1}$ and $\Cc_i$ in Ta-Shma's cascade.
Using these properties, we will be able to invoke the list
decoding framework from~\cite{AJQST19} to obtain the following list
decoding result.

\begin{theorem}[Restatement of~\cref{theo:list_dec_hammer}]\label{theo:restatement:list_dec_hammer}
  Let $\eta_0 \in (0,1/4)$ be a constant, $\eta \in (0,\eta_0)$, and
  	$$k \geq k_0(\eta) \coloneqq \Theta(\log(1/\eta)).$$
  Suppose $\Cc \subseteq \F_2^n$ is an $\eta_0$-balanced linear code and $\Cc' = \dsum_{W(k)}(\Cc)$ is the direct sum lifting of $\Cc$ on a $\tau$-splittable collection of walks $W(k)$, where $W(k)$ is either the set of walks $W[0,s]$ on an $s$-wide replacement product graph or a set of walks using the random walk operator $\sswap{r}$.
  There exists an absolute constant $K > 0$ such that if
  	$$\tau \leq \tau_0(\eta,k) \coloneqq \frac{\eta^{8}}{K \cdot k \cdot 2^{4k}},$$
  then the code $\Cc'$ is $\eta$-balanced and can be efficiently list decoded in the following sense:

  If $\tilde{y}$ is \lict{\sqrt{\eta}} to $\Cc'$, then we
         can compute the list
         $$
         \mathcal{L}(\tilde{y},\Cc,\Cc')\coloneqq \left\{(z,\dsum_{W(k)}(z)) \mid z \in \Cc, \Delta\parens*{\dsum_{W(k)}(z),\tilde{y}} \le \frac{1}{2} - \sqrt{\eta}\right\}
         $$
         in time
         $$
         n^{O(1/\tau_0(\eta,k)^4)} \cdot f(n),
         $$
         where $f(n)$ is the running time of a unique decoding algorithm for $\Cc$.
         Otherwise, we return $\mathcal{L}(\tilde{y},\Cc,\Cc')=\emptyset$ with the same running time
         of the preceding case
~\footnote{In the case $\tilde{y}$ is not \lict{\sqrt{\eta}} to $\Cc'$, but the SOS program turns out to be feasible, some of the calls to the unique decoding algorithm of $\Cc$ (issued by the list decoding framework) might be outside all unique decoding balls. Such cases may be handled by returning failure if the algorithm does not terminate by the time $f(n)$. Even if a codeword in $\Cc$ is found, the pruning step of list decoding~\cite{AJQST19} will return an empty list for $\mathcal{L}(\tilde{y},\Cc,\Cc')$ since $\tilde{y}$ is not \lict{\sqrt{\eta}} to $\Cc$.}.
\end{theorem}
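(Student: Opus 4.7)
The plan is to instantiate the list decoding framework of \cite{AJQST19} as a black box, which reduces the task to verifying two conditions for $W(k)$: (i) sufficiently strong parity sampling so that $\Cc'$ is $\eta$-balanced and weak agreement in the lifted space amplifies to agreement beyond the unique decoding radius of $\Cc$, and (ii) $(\mu, L)$-two-step tensoriality for appropriate $\mu$ and $L$ so that propagation rounding of the SOS relaxation yields a small list of candidate ground-space words. Once both are established, the framework produces a polynomial-sized list of candidates, each of which is fed into the unique decoding algorithm of $\Cc$ to recover the codewords populating $\mathcal L(\tilde y, \Cc, \Cc')$.

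For the parity sampling step, I will split into the two cases of the theorem. When $W(k) = W[0,s-1]$ arises from the $s$-wide replacement product, I apply \cref{cor:code_cascade_parity_sampling_1}; when $W(k)$ is generated by iterating $\sswap{r}$, I apply \cref{cor:code_cascade_parity_sampling_2}. In both cases \cref{cor:split_walk_spectral_gap} translates the $\tau$-splittability hypothesis into the needed spectral control on the walk operator, giving bias at most $(\eta_0 + 2\tau)^{\lfloor (k-1)/2 \rfloor}$. Since $\eta_0 < 1/4$ is a constant and $\tau \le \tau_0(\eta,k)$ is tiny, taking $k_0(\eta) = \Theta(\log(1/\eta))$ is enough to force this quantity below $\eta$ and, more importantly, ensure $W(k)$ is a $((1+\eta_0)/2, \eta)$-parity sampler so that a word within distance $1/2 - \sqrt{\eta}$ of $\dsum_{W(k)}(z)$ lifts back to a word within the unique decoding radius of $\Cc$.

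For the tensoriality step, I invoke \cref{lemma:two_step_tensorial} in the first case and \cref{lemma:two_step_tensorial_subsequent_steps} in the second. Both require $\tau \le \mu/(4k \cdot 2^{4k})$ and $L \ge 128 \, k^4 \cdot 2^{4k}/\mu^4$, so I choose $\mu = \Theta(\eta^2)$ (the natural slack needed to convert the $1$-norm closeness of the rounded ensemble to product form into a fraction of satisfied $k$-XOR constraints) and set $L$ accordingly. The choice $\tau_0(\eta,k) = \eta^8/(K \cdot k \cdot 2^{4k})$ with $K$ absolute satisfies both the splittability and tensoriality thresholds simultaneously.

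With these ingredients in hand, the framework proceeds as follows: solve the degree-$L$ SOS program whose constraints enforce that each $k$-XOR constraint $\sum_{i \in w} z_i = \tilde y_w$ is satisfied, run propagation rounding (\cref{algo:prop-rd}) to obtain conditioned local ensembles $\rv Z'$, perform independent rounding of $\rv Z'$ to obtain a candidate ground-space word $z'$, and amplify via parity sampling. Two-step tensoriality guarantees that with the right probability over the conditioning, $\dsum_{W(k)}(z')$ agrees with each target lifted codeword on more than $1/2 + \eta_0/2$ of coordinates; parity sampling then boosts this to agreement within the unique decoding radius of $\Cc$ on $z'$ itself. Running the unique decoder of $\Cc$ on each candidate recovers all codewords in $\mathcal L(\tilde y, \Cc, \Cc')$. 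The main subtlety, and the only place care beyond invoking \cite{AJQST19} is needed, is bookkeeping the number of candidates: the branching of the rounding algorithm yields $n^{O(1/\tau_0(\eta,k)^4)}$ seed-assignment pairs $(m, S, \sigma)$, each leading to one call of the base decoder, giving the stated running time $n^{O(1/\tau_0(\eta,k)^4)} \cdot f(n)$. The infeasibility case is handled by the SOS relaxation directly, and the spurious-codeword issue flagged in the footnote is handled at the pruning stage since any returned $z$ outside the list can be discarded by directly checking $\Delta(\dsum_{W(k)}(z), \tilde y) \le 1/2 - \sqrt{\eta}$. The hardest part of the argument is simply matching all the framework's parameters $(\mu, L, \tau)$ into the single clean threshold $\tau_0(\eta,k)$; the rest is a direct combination of the previously established splittability, two-step tensoriality, and parity sampling facts.
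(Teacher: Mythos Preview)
Your approach is essentially identical to the paper's: invoke the list decoding framework from \cite{AJQST19} (\cref{theo:list_dec_framework}) as a black box, verifying parity sampling via \cref{cor:code_cascade_parity_sampling_1}/\cref{cor:code_cascade_parity_sampling_2} and two-step tensoriality via \cref{lemma:two_step_tensorial}/\cref{lemma:two_step_tensorial_subsequent_steps}, then read off the running time as $n^{O(L)} \cdot f(n) = n^{O(1/\tau_0(\eta,k)^4)} \cdot f(n)$.

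There is one genuine parameter slip: you set $\mu = \Theta(\eta^2)$, but the framework hypothesis in \cref{theo:list_dec_framework} explicitly requires $(\eta^8/2^{30}, L)$-two-step tensoriality, i.e.\ $\mu = \Theta(\eta^8)$. Your weaker choice of $\mu$ does \emph{not} imply the stronger tensoriality the framework needs, so as written the black-box invocation fails. This is also internally inconsistent with your own threshold: plugging $\mu = \Theta(\eta^2)$ into $\tau \le \mu/(4k\cdot 2^{4k})$ would give a bound of order $\eta^2/(k\cdot 2^{4k})$, not the $\eta^8/(K\cdot k\cdot 2^{4k})$ you then assert for $\tau_0$. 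The fix is immediate---take $\mu = \eta^8/2^{30}$ exactly as the paper does---and then both the tensoriality threshold $\tau \le \mu/(4k\cdot 2^{4k})$ and the SOS degree $L \ge 128 k^4 2^{4k}/\mu^4$ land on the stated $\tau_0$ and the claimed running-time exponent. With that correction your proof is complete and matches the paper's.
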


\subsection{List Decoding Framework}

We recall the precise statement of the list decoding
framework tailored to direct sum lifting.

\begin{theorem}[List Decoding Theorem (Adapted from~\cite{AJQST19})]\label{theo:list_dec_framework}
  Suppose $\lift_{W(k)}$ is an $(\eta^8/2^{30},L)$-two-step tensorial direct sum lifting from an $\eta_0$-balanced code $\Cc \subseteq \mathbb{F}_2^n$
  to $\Cc'$ on a multiset $W(k) \subseteq [n]^k$ which is a $(1/2+\eta_0/2, \eta)$-parity sampler.

  Let $\tilde{y} \in \F_2^{W(k)}$ be \lict{\sqrt{\eta}} to $\Cc'$. Then
  the List Decoding algorithm returns the coupled code list
  $\mathcal{L}(\widetilde{y},\Cc,\Cc')$. Furthermore, the
  running time is $n^{O(L+k)}\left(\polylog(1/\eta)
  + f(n)\right)$ where $f(n)$ is the running time of an unique decoding algorithm of
  $\Cc$.
\end{theorem}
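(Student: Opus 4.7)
The plan is to verify that the hypotheses of the abstract list decoding framework (\cref{theo:list_dec_framework}) hold for the given code $\Cc' = \dsum_{W(k)}(\Cc)$, then invoke it directly. There are three ingredients to check: (a) $\Cc'$ is $\eta$-balanced, (b) $W(k)$ is a $(1/2 + \eta_0/2, \eta)$-parity sampler, and (c) $W(k)$ is $(\eta^8/2^{30}, L)$-two-step tensorial for a degree parameter $L$ compatible with the claimed running time. All three will come from the splittability bound $\tau \leq \tau_0(\eta, k) = \eta^8/(K \cdot k \cdot 2^{4k})$ together with the lower bound $k \geq k_0(\eta) = \Theta(\log(1/\eta))$.

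First I would dispatch (a) and (b) using the parity sampling corollaries already proved: \cref{cor:code_cascade_parity_sampling_1} when $W(k) = W[0,s-1]$ is a walk on the $s$-wide replacement product, and \cref{cor:code_cascade_parity_sampling_2} when $W(k)$ is a walk collection built from $\sswap{r}$. In either case, splittability $\tau \leq \tau_0(\eta,k)$ is much smaller than the constants involved, so the relevant second-singular-value parameter $\gamma$ in those corollaries (which equals $\sigma_2$ of the split/walk operator, itself bounded by $\tau$) satisfies $\eta_0 + 2\gamma < 2\eta_0$ and $(1+\eta_0)/2 + 2\gamma < (1+\eta_0)/2 + o(1)$. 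Choosing the constant hidden in $k_0(\eta) = \Theta(\log(1/\eta))$ so that $(2\eta_0)^{\lfloor (k-1)/2 \rfloor} \leq \eta$ and $((1+\eta_0)/2 + o(1))^{\lfloor (k-1)/2 \rfloor} \leq \eta$ (possible since $\eta_0 < 1/4$ and $(1+\eta_0)/2 < 5/8$ are constants bounded away from $1$), we get both the $\eta$-balancedness of $\Cc'$ and the required parity sampling property simultaneously.

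For (c), I would apply \cref{lemma:two_step_tensorial} (when $W(k) = W[0,s-1]$) or \cref{lemma:two_step_tensorial_subsequent_steps} (for walks driven by $\sswap{r}$) with the parameter choice $\mu = \eta^8/2^{30}$. Each of these lemmas requires $\tau \leq \mu/(4k \cdot 2^{4k})$ and $L \geq 128 (k^4 \cdot 2^{4k}/\mu^4)$. With $\mu = \eta^8/2^{30}$, the splittability requirement becomes $\tau \leq \eta^8/(2^{32} \cdot k \cdot 2^{4k})$, which is exactly the content of $\tau \leq \tau_0(\eta,k)$ for the absorbing constant $K$ in the theorem (chosen as $K = 2^{32}$ or larger), and the degree requirement becomes $L = O(k^4 \cdot 2^{4k}/\eta^{32})$, which we set accordingly.

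Finally I would invoke \cref{theo:list_dec_framework} with this $\Cc$, $\Cc'$, $W(k)$, and $L$, yielding the list $\mathcal{L}(\tilde y, \Cc, \Cc')$ in time $n^{O(L + k)}(\polylog(1/\eta) + f(n))$. A short arithmetic check shows $L + k = O(k^4 \cdot 2^{4k}/\eta^{32}) = O(1/\tau_0(\eta,k)^4)$, since $1/\tau_0^4 = \Theta(k^4 \cdot 2^{16k}/\eta^{32})$ dominates $L$; so the running time collapses to $n^{O(1/\tau_0(\eta,k)^4)} \cdot f(n)$ as claimed. For the ``otherwise'' clause, I would note that the SOS program in the framework is infeasible when $\tilde y$ is genuinely farther than $1/2 - \sqrt{\eta}$ from $\Cc'$, and the pruning step prunes any spurious outputs (as remarked in the footnote of the statement); the running time is dominated by the SOS feasibility check, which is still $n^{O(L+k)}$. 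The main subtlety is balancing the parameter choices so that the two distinct constraints on $k$ (parity sampling and two-step tensoriality via the degree $L$) and the single constraint on $\tau$ are all consistent with the advertised $k_0(\eta) = \Theta(\log(1/\eta))$ and $\tau_0(\eta,k)$; I do not anticipate any deeper obstacle, since every quantitative ingredient is already in place in the paper.
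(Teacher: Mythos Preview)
You have conflated two different theorems. The statement you were asked to prove, \cref{theo:list_dec_framework}, is a black-box result adapted from \cite{AJQST19}; the paper does not prove it but merely invokes it. Your proposal does not prove \cref{theo:list_dec_framework} at all --- instead, your very first step is to ``invoke \cref{theo:list_dec_framework},'' which is circular. What you have actually sketched is the proof of \cref{theo:list_dec_hammer} (restated as \cref{theo:restatement:list_dec_hammer}): that theorem takes a $\tau$-splittable walk collection as hypothesis and deduces efficient list decoding by verifying the parity sampling and two-step tensoriality hypotheses of \cref{theo:list_dec_framework} and then applying it.

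For what it is worth, your argument is essentially the paper's own proof of \cref{theo:restatement:list_dec_hammer}: the paper uses \cref{cor:code_cascade_parity_sampling_1} or \cref{cor:code_cascade_parity_sampling_2} for parity sampling (with the same numerics $\eta_0' = 1/2 + \eta_0/2$ and threshold $k_0(\eta) = \Theta(\log(1/\eta))$), then \cref{lemma:two_step_tensorial} or \cref{lemma:two_step_tensorial_subsequent_steps} with $\mu = \eta^8/2^{30}$ for tensoriality, and finally the running-time simplification $n^{O(L+k)} = n^{O(1/\tau_0(\eta,k)^4)}$. So if the task had been \cref{theo:list_dec_hammer}, your proposal would be correct and match the paper. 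But a proof of \cref{theo:list_dec_framework} itself would require going into the SOS rounding analysis of \cite{AJQST19} --- the propagation rounding algorithm, the covariance/entropy potential argument, and the coupling of rounded solutions to codewords --- none of which your proposal addresses.
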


We apply the list decoding framework of~\cref{theo:list_dec_framework} to the liftings arising in the Ta-Shma cascade to obtain~\cref{theo:restatement:list_dec_hammer}.
This requires choosing parameters so that both the parity sampling and tensoriality requirements are met at every level of the cascade, which we do by appealing to our results from~\cref{sec:satisfying_framework}.

\begin{proof}[Proof of~\cref{theo:restatement:list_dec_hammer}]
  We want to define parameters for $\tau$-splittability so that $W(k)$ satisfies strong enough parity sampling and tensoriality assumptions to apply~\cref{theo:list_dec_framework}.
  
  For parity sampling, we require $W(k)$ to be an $(1/2+\eta_0/2, \eta)$-parity sampler.
  Suppose $W(k)$ is $\tau$-splittable with $\tau < 1/16$. 
  By \cref{cor:code_cascade_parity_sampling_1} or \cref{cor:code_cascade_parity_sampling_2} and splittability, the collection of walks $W(k)$ is an $(\eta_0', \eta')$-parity sampler, where $\eta' \leq (\eta_0' + 2\tau)^{\lfloor (k-1)/2 \rfloor}$.
  To achieve the desired parity sampling, we take $\eta_0' = 1/2+\eta_0/2$ and choose a value of $k$ large enough so that $\eta' \leq \eta$.
  Using the assumption $\eta_0 < 1/4$, we compute
  	$$\eta' = (\eta_0' +  2\tau)^{\lfloor (k-1)/2 \rfloor} \leq (1/2 + \eta_0/2 + 2\tau)^{k/2-1} < (3/4)^{k/2-1},$$
  which will be smaller than $\eta$ as long as $k$ is at least
  	$$k_0(\eta) = 2 \left(1 + \frac{\log(1/\eta)}{\log(4/3)} \right) = \Theta(\log(1/\eta)).$$
  Achieving this level of parity sampling also ensures that the lifted code $\Cc'$ is $\eta$-balanced.

  The list decoding theorem also requires $(\eta^8/2^{30}, L)$-two-step tensoriality.
  \Cref{lemma:two_step_tensorial} (with $s=k$) and~\cref{lemma:two_step_tensorial_subsequent_steps} each provide $(\mu,L)$-two-step tensoriality for $\tau$-splittable walk collections on the $s$-wide replacement product and using $\sswap{r}$, respectively, with
  	$$L \geq \frac{128k^4 \cdot 2^{4k}}{\mu^4} \text{\quad and \quad} \tau \leq \frac{\mu}{4k \cdot 2^{4k}}.$$
  To get $\mu = \eta^8/2^{30}$, we require
  	$$L \geq \frac{K' \cdot k^4 \cdot 2^{4k}}{\eta^{32}} \text{\quad and \quad} \tau \leq \tau_0(\eta,k) = \frac{\eta^8}{K \cdot k \cdot 2^{4k}},$$
  where $K$ and $K'$ are (very large) constants.
  This ensures that $\tau$ is small enough for the parity sampling requirement as well.
  With these parameters, the running time for the list decoding algorithm in~\cref{theo:list_dec_framework} becomes
  	$$n^{O(L+k)}(\polylog(1/\eta)+f(n)) = n^{O(L)} \cdot f(n) = n^{O(1/\tau_0(\eta,k)^4)} \cdot f(n).$$
\end{proof}

For decoding in fixed polynomial time, we also need a variation of list decoding where we don't run the unique decoding algorithm of the base code and only obtain an approximate list of solutions.
The proof is very similar to the proof of \cref{theo:restatement:list_dec_hammer} above.

\begin{theorem}[Restatement of \cref{theo:list_dec_mallet}]\label{theo:restatement:list_dec_mallet}
  Let $\eta_0 \in (0,1/4)$ be a constant, $\eta \in (0,\eta_0)$, $\zeta = 1/8-\eta_0/8$, and
  	$$k \geq k_0'(\eta) \coloneqq \Theta(\log(1/\eta)).$$
  Suppose $\Cc \subseteq \F_2^n$ is an $\eta_0$-balanced linear code and $\Cc' = \dsum_{W(k)}(\Cc)$ is the direct sum lifting of $\Cc$ on a $\tau$-splittable collection of walks $W(k)$, where $W(k)$ is either the set of walks $W[0,s]$ on an $s$-wide replacement product graph or a set of walks using the random walk operator $\sswap{r}$.
  There exists an absolute constant $K > 0$ such that if
  	$$\tau \leq \tau_0(\eta,k) \coloneqq \frac{\eta^{8}}{K \cdot k \cdot 2^{4k}},$$
  then the code $\Cc'$ is $\eta$-balanced, $W(k)$ is a $(1-2\zeta, \eta)$-parity sampler, and we have the following:

  If $\tilde{y}$ is \lict{\sqrt{\eta}} to $\Cc'$, then we can compute a $\zeta$-cover $\mathcal{L}'$ of the list
         $$
         \mathcal{L}(\tilde{y},\Cc,\Cc')\coloneqq \left\{(z,\dsum_{W(k)}(z)) \mid z \in \Cc, \Delta\parens*{\dsum_{W(k)}(z),\tilde{y}} \le \frac{1}{2} - \sqrt{\eta}\right\}
         $$
         in which $\Delta(y',\tilde y) \leq 1/2 - \sqrt{\eta}$ for every $(z',y') \in \mathcal{L}'$~\footnote{A randomized rounding will ensure this, but see~\cref{app:derandomization} for obtaining this property deterministically.}, in time
         $$
         n^{O(1/\tau_0(\eta,k)^4)}.
         $$
         Otherwise, we return $\mathcal{L}'=\emptyset$ with the same running time of the preceding case.
\end{theorem}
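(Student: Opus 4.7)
The plan mirrors the proof of \cref{theo:restatement:list_dec_hammer} closely, with two differences: (a) a slightly stronger parity sampling requirement to witness the $\zeta$-cover property with $\zeta = 1/8 - \eta_0/8$ (equivalently $1-2\zeta = 3/4 + \eta_0/4$), and (b) halting the list decoding framework before the final unique-decoding-of-base-code step, so that we output a $\zeta$-cover of $\mathcal{L}(\tilde{y}, \Cc, \Cc')$ rather than the codewords themselves. The net effect is that the base-code running time $f(n)$ is stripped from the overall bound.

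I would first choose the walk-length threshold. Applying \cref{cor:code_cascade_parity_sampling_1} when $W(k)$ is a walk collection $W[0,s]$ on the $s$-wide replacement product, or \cref{cor:code_cascade_parity_sampling_2} when $W(k)$ is generated by $\sswap{r}$, the $\tau$-splittability hypothesis gives that $W(k)$ is an $(\eta_0', \eta')$-parity sampler with $\eta' \le (\eta_0' + 2\tau)^{\lfloor (k-1)/2 \rfloor}$ for any target $\eta_0'$. Plugging in $\eta_0' = 3/4 + \eta_0/4 < 13/16$ and any $\tau < 1/32$ yields $\eta' \le (7/8)^{k/2 - 1}$, so choosing $k \ge k_0'(\eta) \defeq 2(1 + \log(1/\eta)/\log(8/7)) = \Theta(\log(1/\eta))$ simultaneously gives the $(1-2\zeta, \eta)$-parity sampler property claimed in the theorem statement and ensures the lifted code $\Cc'$ is $\eta$-balanced.

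Next, I would invoke the tensoriality machinery of \cref{sec:tensoriality}: with the budget $\tau \le \tau_0(\eta, k) = \eta^8/(K \cdot k \cdot 2^{4k})$, \cref{lemma:two_step_tensorial} (for the first level of the cascade) or \cref{lemma:two_step_tensorial_subsequent_steps} (for higher levels) yields $(\eta^8/2^{30}, L)$-two-step tensoriality for $L = \Theta(k^4 \cdot 2^{4k}/\eta^{32}) = O(1/\tau_0(\eta,k)^4)$. Then I would apply \cref{theo:list_dec_framework} of~\cite{AJQST19} but only through its rounding stage: the algorithm solves an SOS relaxation of degree $O(L+k)$, performs propagation rounding followed by independent rounding, and produces a collection of candidate words $\{z'\}$ satisfying, for every $z \in \mathcal{L}(\tilde{y}, \Cc, \Cc')$, the guarantee $\bias(z - z') > 1 - 2\zeta$. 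Taking $\mathcal{L}' = \{(z', \dsum_{W(k)}(z'))\}$ is then by definition a $\zeta$-cover, and the running time collapses to $n^{O(L+k)} \polylog(1/\eta) = n^{O(1/\tau_0(\eta,k)^4)}$.

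The additional condition $\Delta(y', \tilde{y}) \le 1/2 - \sqrt{\eta}$ for every $(z', y') \in \mathcal{L}'$ can be enforced by a straightforward post-processing filter: compute $y' = \dsum_{W(k)}(z')$ for every candidate, and discard those exceeding the radius. This does not drop any cover witness for an actual codeword $z \in \mathcal{L}$, since $\bias(z - z') > 1 - 2\zeta$ combined with the $(1-2\zeta, \eta)$-parity sampler property gives $\bias(\dsum(z) - \dsum(z')) \le \eta$, and then the triangle inequality applied to $\Delta(\dsum(z), \tilde y) \le 1/2 - \sqrt{\eta}$ yields the required bound (possibly after replacing $z'$ by its complement). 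The randomized rounding already achieves this in expectation, and \cref{app:derandomization} provides a deterministic variant. The main potential obstacle is guaranteeing that the tighter cover threshold $\zeta$ does not force $\tau$ smaller than what the splittability analysis of Ta-Shma's (modified) construction can furnish; however, because $k_0'(\eta)$ differs from $k_0(\eta)$ only by an absolute multiplicative constant, the resulting $\tau_0(\eta, k)$ budget and the downstream cascade parameters in \cref{sec:main_result} and \cref{sec:ta-shma_param_basic} are preserved up to constants, and the argument goes through verbatim.
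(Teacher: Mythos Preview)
Your overall approach is correct and mirrors the paper's proof: establish the stronger parity sampling with threshold $1-2\zeta = 3/4 + \eta_0/4$ (your bound $(7/8)^{k/2-1}$ versus the paper's $(15/16)^{k/2-1}$ differ only in the auxiliary assumption on $\tau$ and both yield $k_0'(\eta)=\Theta(\log(1/\eta))$), reuse the tensoriality setup from \cref{theo:restatement:list_dec_hammer} unchanged, and halt the framework of \cref{theo:list_dec_framework} before the final unique-decoding step to drop the $f(n)$ factor.

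However, your filtering argument for the condition $\Delta(y',\tilde y)\le 1/2-\sqrt{\eta}$ contains a genuine error. A $(1-2\zeta,\eta)$-parity sampler asserts that $\bias(w)\le 1-2\zeta$ implies $\bias(\dsum(w))\le \eta$; you invoke it on a word with $\bias(z-z')>1-2\zeta$, where the parity sampler is silent. In fact, when $z$ and $z'$ differ in a small but nonzero fraction of coordinates, $\dsum(z)$ and $\dsum(z')$ can differ substantially (flipping one coordinate of $z$ affects roughly a $k$-fold larger fraction of $k$-walks), so a post-hoc filter could in principle discard the only cover witness for some codeword in $\mathcal{L}$. Your final sentence is the correct route, and it is exactly what the paper does: the rounding step of the framework (randomized, or via \cref{app:derandomization}) already produces candidates $(z',y')$ with $\Delta(y',\tilde y)\le 1/2-\sqrt{\eta}$, so no separate filter is needed.
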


\begin{proof}
  The list decoding framework produces a cover $\mathcal{L}'$ of the list $\mathcal{L}(\tilde y, \Cc, \Cc')$, and, as its final step, corrects the cover to obtain the actual list $\mathcal{L}(\tilde y, \Cc, \Cc')$ by running the unique decoding algorithm of $\Cc$ on each entry of $\mathcal{L}'$ (see~\cite{AJQST19} for details).
  Using \cref{theo:list_dec_framework} with a $(1-2\zeta, \eta)$-parity sampler and omitting this final step of the algorithm, we can obtain the $\zeta$-cover $\mathcal{L}'$ in time $n^{O(L+k)} \polylog(1/\eta)$.
  
  The tensoriality part of the proof of \cref{theo:restatement:list_dec_hammer} applies here unchanged, so we need only make sure $k$ is large enough to yield the stronger parity sampling necessary for this theorem.
  As in that proof, we have that $W(k)$ is an $(\eta_0', \eta')$-parity sampler with $\eta' \leq (\eta_0' + 2\tau)^{\lfloor (k-1)/2 \rfloor}$.
  Take $\eta_0' = 1-2\zeta = 3/4 + \eta_0/4$.
  Using $\eta_0 < 1/4$ and assuming $\tau < 1/16$, we have
  	$$\eta' \leq (\eta_0' + 2\tau)^{\lfloor (k-1)/2 \rfloor} \leq (3/4 + \eta_0/4 + 2\tau)^{k/2-1} < (15/16)^{k/2-1},$$
  which will be smaller than $\eta$ as long as $k$ is at least
  	$$k_0'(\eta) = 2 \left(1 + \frac{\log(1/\eta)}{\log(16/15)} \right) = \Theta(\log(1/\eta)).$$
\end{proof}

\section*{Acknowledgement}

We thank Amnon Ta-Shma for suggesting the problem of decoding in fixed
polynomial running time (with the exponent of $N$ independent of
$\eps$) which led us to think about \cref{theo:main_fixed_poly_time}.
Part of this work was done when some of the authors were visiting the Simons Institute in 
Berkeley, and we thank them for their kind hospitality.

\bibliographystyle{alphaurl}
\bibliography{macros,madhur}

\appendix

\section{Auxiliary Results to Obtain Tensoriality}\label{app:tensoriality}

A key result used in the SOS rounding analysis is embodied
in~\cref{lemma:progress_lemma} below. Roughly speaking, it quantifies
the decrease in the potential $\Phi^G$, under conditioning on a random
$\rv Y_i$ for $i \sim V$, when the ensemble $\set{\rv Y_i}$ has
non-trivial correlation over the edges and $G$ is a strong enough
expander graph. A generalization of this result to low threshold rank
graphs was present in~\cite{BarakRS11}. To derive sharper parameters
in the simpler expander case and to make the presentation
self-contained, we give (essentially) a full proof of this result.

\begin{lemma}[Progress Lemma]\label{lemma:progress_lemma}
  Suppose $G$ satisfy $\lambda_2(G) \le \beta^2/q^4$. If
  $$
  \Ex{i \sim j}{\norm{\{\rv Y_i \rv Y_j\} - \{\rv Y_i\}\{\rv Y_j\} }_1 } \ge \beta,
  $$
  then
  $$
  \Ex{j \sim V}{ \Phi^G_{\vert \rv Y_j} } \le  \Phi^G -  \frac{\beta^2}{4 \cdot q^4}.
  $$
\end{lemma}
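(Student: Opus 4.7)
The plan is to follow the two-step structure standard in the Barak--Raghavendra--Steurer rounding analysis: first, I would convert the edge-correlation hypothesis into a comparable global (independent-pair) correlation bound using the spectral gap of $G$, and then I would convert this global correlation into an entropy drop by applying Pinsker's inequality to pairs of binary indicators.

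For Step 1 (local to global correlation), I would start by writing $f(i,j) \defeq \norm{\set{\rv Y_i \rv Y_j} - \set{\rv Y_i}\set{\rv Y_j}}_1 = \sum_{a,b \in [q]} \abs{\phi_{a,b}(i,j)}$ with $\phi_{a,b}(i,j) \defeq \Pr[\rv Y_i = a,\, \rv Y_j = b] - \Pr[\rv Y_i = a]\Pr[\rv Y_j = b]$. The PSD property of the local ensemble gives vectors $v_{i,a}$ realizing $\phi_{a,b}(i,j) = \ip{v_{i,a}}{v_{j,b}}$ with $\norm{v_{i,a}}^2 \le 1/4$. Cauchy--Schwarz on $[q]^2$ yields $\sum_{a,b} \phi_{a,b}(i,j)^2 \ge f(i,j)^2/q^2$ pointwise, and Jensen on the edge average then gives $\Ex{i \sim j}{\sum_{a,b} \phi_{a,b}(i,j)^2} \ge \beta^2/q^2$. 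Next, I would apply the vector-valued expander mixing lemma to the Hilbert-space-valued maps $i \mapsto v_{i,a} v_{i,a}^{\mathsf T}$ and $j \mapsto v_{j,b} v_{j,b}^{\mathsf T}$ (equipped with the Frobenius inner product), using the identity $\phi_{a,b}(i,j)^2 = \ip{v_{i,a} v_{i,a}^{\mathsf T}}{v_{j,b} v_{j,b}^{\mathsf T}}_F$. Since each rank-one PSD matrix has Frobenius norm $\norm{v_{i,a}}^2 \le 1/4$, the mixing lemma contributes at most $\lambda_2(G)/16$ of error per pair $(a,b)$. Summing over $q^2$ pairs and invoking the hypothesis $\lambda_2(G) \le \beta^2/q^4$ yields
$$\Ex{i, j \sim V}{\sum_{a,b} \phi_{a,b}(i,j)^2} ~\ge~ \frac{\beta^2}{q^2} - \frac{q^2 \lambda_2(G)}{16} ~\ge~ \frac{\beta^2}{2 q^2}.$$

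For Step 2 (global correlation to entropy drop), I would unfold the definitions (using that $\Pi_V$ is uniform on $V$ for regular $G$) to obtain
$$\Phi^G - \Ex{j \sim V}{\Phi^G_{\vert \rv Y_j}} ~=~ \Ex{i, j \sim V}{\Ex{a \in [q]}{I\bigl(\One_{[\rv Y_i = a]}\,;~\rv Y_j\bigr)}}$$
in base-$2$ mutual information, since binary entropy is used in $\mathcal H$. By data processing, $I(\One_{[\rv Y_i = a]};\, \rv Y_j) \ge I(\One_{[\rv Y_i = a]};\, \One_{[\rv Y_j = b]})$ for every $b$, so averaging over $b$ at most divides the right-hand side by $q$. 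Pinsker's inequality applied to the Bernoulli pair $(\One_{[\rv Y_i = a]},\, \One_{[\rv Y_j = b]})$, whose joint-vs-product $L_1$ distance equals $4\abs{\phi_{a,b}(i,j)}$, gives $I(\One_{[\rv Y_i = a]};\, \One_{[\rv Y_j = b]}) \ge 8\, \phi_{a,b}(i,j)^2$. Combining with Step 1:
$$\Phi^G - \Ex{j \sim V}{\Phi^G_{\vert \rv Y_j}} ~\ge~ \frac{8}{q^2} \Ex{i, j \sim V}{\sum_{a,b} \phi_{a,b}(i,j)^2} ~\ge~ \frac{8}{q^2} \cdot \frac{\beta^2}{2 q^2} ~=~ \frac{4 \beta^2}{q^4},$$
which is a factor of $16$ stronger than the claimed $\beta^2/(4 q^4)$, leaving comfortable slack for the loose constants used along the way.

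The main obstacle will be Step 1, because a direct application of the expander mixing lemma to $\abs{\phi_{a,b}}$ or to $f$ itself fails: the absolute value destroys any rank-one $u(i)\, v(j)$ decomposition that the mixing lemma requires. The workaround is to move to the squared quantity $\phi_{a,b}^2$, which does admit a Hilbert-space factorization via the PSD vectors $v_{i,a}$, and then invoke the vector-valued mixing lemma with Frobenius inner product. The spectral-gap calibration $\lambda_2(G) \le \beta^2/q^4$ in the hypothesis is precisely what is needed to absorb the $q^2$-fold accumulation of mixing-lemma errors after summing over all pairs in $[q] \times [q]$.
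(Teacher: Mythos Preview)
Your proposal is correct and follows essentially the same two-step BRS structure as the paper: pass from the $L_1$ edge-correlation to squared covariances by Cauchy--Schwarz, push from edges to independent pairs via the spectral gap, and then convert squared covariance into mutual information to read off the entropy drop. The only differences are cosmetic packaging---the paper realizes the local-to-global step by applying its Lemma~A.2 to Gram vectors of the Schur-squared covariance matrix on the auxiliary graph $G \otimes \Jay/q$ (which is exactly your rank-one Frobenius trick in disguise), and uses the Raghavendra--Tan inequality $\cov^2 \le 2\,\textup{I}$ in place of your Pinsker step, which costs it the constant factors you noticed.
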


\subsection{Expander Case}

We will need the following characterization of the spectral gap of
regular graph $G$. We denote by $\matr A_G$ its adjacency operator and by
$\matr L_G$ its Laplacian operator~\cite{Chung97}.

\begin{fact}[Spectral Gap~\cite{Chung97}]\label{fact:spectral_gap}
  $$
  \lambda_2(\matr L_G) ~=~ \min_{v_1,\dots,v_n \in \mathbb{R}^n} \frac{\E_{i\sim j} \norm{v_i - v_j}^2}{\E_{i,j\sim V} \norm{v_i - v_j}^2}.
  $$
\end{fact}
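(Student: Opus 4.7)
The plan is to reduce the vector-valued variational problem on the right-hand side to the classical scalar Rayleigh-quotient characterization of $\lambda_2$ of the Laplacian, by decomposing the test assignment $v_1,\dots,v_n$ coordinate-wise. Concretely, I will write $v_i = (f_i^{(1)},\dots,f_i^{(n)}) \in \mathbb{R}^n$, and for each coordinate $k$ let $f^{(k)} \in \mathbb{R}^{V}$ denote the scalar function $i \mapsto f_i^{(k)}$. Then by linearity,
\[
\Ex{i \sim j}{\|v_i - v_j\|^2} = \sum_{k=1}^{n} \Ex{i \sim j}{(f_i^{(k)} - f_j^{(k)})^2}, \qquad \Ex{i,j \sim V}{\|v_i - v_j\|^2} = \sum_{k=1}^{n} \Ex{i,j \sim V}{(f_i^{(k)} - f_j^{(k)})^2},
\]
so the ratio in the statement is a mediant of the $n$ scalar ratios (one per coordinate), and is therefore at least the minimum scalar ratio.

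The next step is to identify the scalar ratio with the Rayleigh quotient of $\matr L_G$ on functions orthogonal to the constants. Using that $G$ is regular (so the stationary distribution is uniform), a short direct calculation gives
\[
\Ex{i \sim j}{(f_i - f_j)^2} = 2\,\langle f, \matr L_G f\rangle, \qquad \Ex{i,j \sim V}{(f_i - f_j)^2} = 2\,\mathrm{Var}(f),
\]
where the inner product and variance are with respect to the uniform measure on $V$. Both numerator and denominator are invariant under replacing $f$ by $f - \bar f \cdot \one$, so we may assume $f \perp \one$, in which case $\mathrm{Var}(f) = \|f\|^2$ and the scalar ratio equals $\langle f, \matr L_G f\rangle / \|f\|^2$. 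By the standard Courant--Fischer/min-max characterization for symmetric operators, the minimum of this Rayleigh quotient over $f \perp \one$ is exactly $\lambda_2(\matr L_G)$, establishing the lower bound on the right-hand side.

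Finally, to show the characterization is tight, I will exhibit an optimizer. Let $\phi \in \mathbb{R}^V$ be a unit eigenvector of $\matr L_G$ with eigenvalue $\lambda_2(\matr L_G)$ (in particular $\phi \perp \one$), and define $v_i := \phi_i \cdot e_1$ where $e_1$ is the first standard basis vector of $\mathbb{R}^n$. Then only the first coordinate contributes to both expectations, and the ratio becomes $\langle \phi, \matr L_G \phi\rangle / \|\phi\|^2 = \lambda_2(\matr L_G)$, matching the lower bound. Combining the two directions yields equality.

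The only subtleties I anticipate are bookkeeping ones: (i) making sure the normalization of $\matr L_G$ (as $\matr I - \matr A_G$ with $\matr A_G$ the normalized adjacency operator of a regular graph, consistent with the conventions in Section~2 of the paper) is applied consistently so that $\E_{i\sim j}(f_i-f_j)^2 = 2\langle f,\matr L_G f\rangle$; and (ii) handling the degenerate case where $\E_{i,j\sim V}\|v_i - v_j\|^2 = 0$ (all $v_i$ equal), in which case both sides of the mediant are $0$ and the constant assignment is simply excluded from the minimum. Neither poses a real obstacle; the argument is essentially a one-line reduction to the scalar case combined with the coordinate-wise mediant inequality.
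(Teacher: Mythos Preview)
The paper does not give its own proof of this fact; it is stated with a citation to \cite{Chung97} and used as a black box in the proof of the Local-to-Global lemma. Your argument is correct and is the standard one: the coordinate-wise decomposition together with the mediant inequality reduces the vector-valued quotient to the scalar Rayleigh quotient $\langle f, \matr L_G f\rangle/\mathrm{Var}(f)$, whose minimum over nonconstant $f$ is $\lambda_2(\matr L_G)$ by Courant--Fischer, and the eigenvector embedding $v_i = \phi_i e_1$ shows tightness. The normalization check $\E_{i\sim j}(f_i-f_j)^2 = 2\langle f,\matr L_G f\rangle$ goes through exactly as you describe for regular $G$ under the paper's conventions.
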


Using the above characterization, we derive the following
local-to-global result.
\begin{lemma}[Local-to-Global]\label{lemma:local_to_global}
  Let $v_1,\dots,v_n \in \mathbb{R}^n$ be vectors in the unit
  ball. Suppose $\lambda_2(\matr L_G) \ge 1-\beta/2$ (equivalently
  $\lambda_2(\matr A_G) \le \beta/2$). If $\E_{i\sim j} \ip{v_i}{v_j} \ge
  \beta$, then
  $$
  \E_{i,j \sim V} \ip{v_i}{v_j} ~\ge~ \frac{\beta}{2}.
  $$
\end{lemma}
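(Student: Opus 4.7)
The plan is to apply the variational characterization of the spectral gap from~\cref{fact:spectral_gap} to the given vectors, and then translate the resulting bound on squared distances into a bound on inner products using the unit-ball assumption. This is a direct computation with no serious obstacle.

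First, I would expand squared distances: for any vectors $v_i, v_j$ we have $\|v_i - v_j\|^2 = \|v_i\|^2 + \|v_j\|^2 - 2\langle v_i, v_j\rangle$. Since $G$ is regular, the edge distribution has uniform vertex marginals, so if we set $A \defeq \E_{i \sim V} \|v_i\|^2$, then both $\E_{i \sim j}[\|v_i\|^2 + \|v_j\|^2] = 2A$ and $\E_{i,j \sim V}[\|v_i\|^2 + \|v_j\|^2] = 2A$. Denoting $L \defeq \E_{i \sim j}\langle v_i, v_j\rangle$ and $R \defeq \E_{i,j \sim V}\langle v_i, v_j\rangle$, this gives
\[
\E_{i \sim j} \|v_i - v_j\|^2 = 2A - 2L, \qquad \E_{i,j \sim V} \|v_i - v_j\|^2 = 2A - 2R.
\]

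Next, I would invoke~\cref{fact:spectral_gap} to obtain $\E_{i \sim j}\|v_i - v_j\|^2 \ge \lambda_2(\matr L_G) \cdot \E_{i,j \sim V}\|v_i - v_j\|^2$, and substitute to get
\[
2A - 2L \ge (1 - \beta/2)(2A - 2R),
\]
which rearranges to
\[
R \ge \frac{L - (\beta/2)\, A}{1 - \beta/2}.
\]

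Finally, using the hypotheses $L \ge \beta$ and $\|v_i\| \le 1$ (hence $A \le 1$), the numerator is at least $\beta - \beta/2 = \beta/2$, so
\[
R \ge \frac{\beta/2}{1 - \beta/2} \ge \frac{\beta}{2},
\]
as required. The only subtlety worth double-checking is the reliance on regularity so that the vertex marginals from the edge distribution coincide with the uniform distribution on $V$; this is already built into the normalization conventions of~\cref{sec:prelim}, so no extra argument is needed.
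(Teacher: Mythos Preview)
Your proof is correct and follows essentially the same route as the paper: both apply \cref{fact:spectral_gap}, expand $\|v_i - v_j\|^2$ into inner products, and then use $L \ge \beta$ together with $A \le 1$. The only difference is cosmetic: the paper keeps $\lambda_2$ explicit and then splits into the cases $\lambda_2 \le 1$ and $\lambda_2 > 1$ to control the sign of the $A$-term, whereas you substitute the bound $\lambda_2 \ge 1 - \beta/2$ before rearranging, which cleanly avoids the case distinction.
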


\begin{proof}
  Using~\cref{fact:spectral_gap}, we have
  $$
  \lambda_2(\matr L_G) ~\le~ \frac{\E_{i \sim V} \norm{v_i}^2  - \E_{i \sim j} \ip{v_i}{v_j} }{\E_{i \sim V} \norm{v_i}^2 - \E_{i,j\sim V} \ip{v_i}{v_j}}.
  $$
  Set $\lambda_2 = \lambda_2(\matr L_G)$. We consider two cases: $\lambda_2 \le 1$ and $\lambda_2 > 1$.
  First, suppose $\lambda_2 \le 1$. Then
  \begin{align*}
    \E_{i,j\sim V} \ip{v_i}{v_j} &~\ge~ \frac{1}{\lambda_2} \E_{i \sim j} \ip{v_i}{v_j}  - \left(\frac{1-\lambda_2}{\lambda_2}\right) \E_{i \sim V} \norm{v_i}^2 \\
                              &~\ge~ \frac{1}{\lambda_2}\left( \beta  - \left(1-\lambda_2\right)  \right)\\
                              &~\ge~ \frac{1}{\lambda_2}\left( \beta  - \left( \frac{\beta}{2} \right)  \right) \ge \frac{\beta}{2}.
  \end{align*}
  Now suppose $\lambda_2 > 1$. Then
  \begin{align*}
    \E_{i,j\sim V} \ip{v_i}{v_j} &~\ge~ \frac{1}{\lambda_2} \E_{i \sim j} \ip{v_i}{v_j}  - \left(\frac{1-\lambda_2}{\lambda_2}\right) \E_{i \sim V} \norm{v_i}^2 \\
                              &~\ge~ \frac{1}{\lambda_2}\E_{i \sim j} \ip{v_i}{v_j}
                              \ge \frac{1}{\lambda_2}\cdot \beta  \ge \frac{\beta}{2},
  \end{align*}
  where the last inequality follows from $\lambda_2 \le 2$ for any graph $G$.
\end{proof}

\subsubsection*{More Preliminaries}

We will need some standard notions in information
theory~\cite{CoverT06}.
\begin{definition}[Relative Entropy/Kullback-Leibler Divergence]
  The relative entropy of two distributions $D_1$ and $D_2$ with
  support contained in $\mathcal{Q}$ is
  $$
  \textup{KL}(D_1, D_2) ~\coloneqq~ \sum_{a \in \mathcal{Q}} D_1(a) \log\left(\frac{D_1(a)}{D_2(a)}\right).
  $$
\end{definition}

\begin{notation}
  Let $\rv X$ be a random variable. We denote by $\{\rv X\}$ the distribution of $\rv X$.
\end{notation}

\begin{definition}[Mutual Information]
  Let $\rv X, \rv Y$ be two random variables. The mutual information $\textup{I}(\rv X, \rv Y)$ is
  $$
  \textup{I}(\rv X, \rv Y) ~\coloneqq~ \textup{KL}(\{\rv X,\rv Y\}, \{\rv X\}\{\rv Y\}).
  $$
\end{definition}

\begin{fact}
 $$
 \textup{I}(\rv X,\rv Y) ~=~ \textup{H}(\rv X) - \textup{H}(\rv X\vert \rv Y).
 $$
\end{fact}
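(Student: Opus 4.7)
The plan is to prove this standard identity by direct expansion of the definitions. First I would write out $\textup{I}(\rv X, \rv Y) = \textup{KL}(\{\rv X, \rv Y\}, \{\rv X\}\{\rv Y\})$ as the sum $\sum_{a,b} \Pr[\rv X = a, \rv Y = b] \log\parens*{\Pr[\rv X = a, \rv Y = b]/(\Pr[\rv X = a] \Pr[\rv Y = b])}$, using the definition of relative entropy applied to the joint distribution and the product of marginals.

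Next I would apply the conditional probability identity $\Pr[\rv X = a, \rv Y = b] = \Pr[\rv X = a \mid \rv Y = b] \Pr[\rv Y = b]$ inside the logarithm. This cancels the $\Pr[\rv Y = b]$ factor in the denominator, yielding $\sum_{a,b} \Pr[\rv X = a, \rv Y = b] \log\parens*{\Pr[\rv X = a \mid \rv Y = b]/\Pr[\rv X = a]}$. I would then split the logarithm into a difference of two terms.

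The first term is $\sum_{a,b} \Pr[\rv X = a, \rv Y = b] \log \Pr[\rv X = a \mid \rv Y = b]$, which, after rewriting the joint as a conditional times a marginal on $\rv Y$ and summing over $a$ within each conditional distribution, equals $-\textup{H}(\rv X \mid \rv Y)$ by the definition of conditional entropy as an average entropy over the conditioning variable. The second term is $-\sum_{a,b} \Pr[\rv X = a, \rv Y = b] \log \Pr[\rv X = a]$; marginalizing out $\rv Y$ in the joint distribution leaves $-\sum_a \Pr[\rv X = a] \log \Pr[\rv X = a] = \textup{H}(\rv X)$. Combining the two terms yields $\textup{I}(\rv X, \rv Y) = \textup{H}(\rv X) - \textup{H}(\rv X \mid \rv Y)$, as desired.

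There is no genuine obstacle here: the identity is a well-known textbook fact (see, \eg, \cite{CoverT06}) and the proof is purely a rearrangement of sums with a single application of Bayes' rule. The only bookkeeping point worth flagging is ensuring that terms with $\Pr[\rv X = a, \rv Y = b] = 0$ are handled by the standard convention $0 \log 0 = 0$, so that the manipulations involving $\log \Pr[\rv X = a \mid \rv Y = b]$ on the support of the joint distribution are well-defined.
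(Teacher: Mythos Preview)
Your proof is correct and is the standard textbook derivation. The paper does not actually supply a proof of this fact; it is stated without proof as a well-known identity from information theory (the surrounding preliminaries cite \cite{CoverT06}), so there is nothing to compare against beyond noting that your argument is exactly the one found in that reference.
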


\begin{fact}[Fact B.5 of Raghavendra and Tan~\cite{RaghavendraT12}]\label{fact:rt_ineq}
  Let $\rv X_a$ and $\rv X_b$ be indicator random variables. Then
  $$
  \textup{Cov}(\rv X_a, \rv X_b)^2 ~\le~ 2 \cdot \textup{I}(\rv X_a,\rv X_b).
  $$
\end{fact}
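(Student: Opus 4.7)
The inequality is a standard information-theoretic fact, and the natural route is through Pinsker's inequality applied to the joint distribution and the product-of-marginals distribution of the pair $(\rv X_a, \rv X_b)$. Concretely, set $P \defeq \{\rv X_a, \rv X_b\}$ and $Q \defeq \{\rv X_a\}\{\rv X_b\}$, both supported on $\{0,1\}^2$. By definition $\textup{I}(\rv X_a, \rv X_b) = \textup{KL}(P, Q)$, so it suffices to lower bound $\textup{KL}(P,Q)$ by a quantity involving $\textup{Cov}(\rv X_a, \rv X_b)^2$.

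\textbf{Step 1: Relating covariance to the $\ell_1$ distance between $P$ and $Q$.} For indicator variables the four entries of $P - Q$ all have the same absolute value. Writing $c \defeq \textup{Cov}(\rv X_a, \rv X_b) = \Pr[\rv X_a = 1, \rv X_b = 1] - \Pr[\rv X_a = 1]\Pr[\rv X_b = 1]$, a direct computation using $\Pr[\rv X_a = 1, \rv X_b = 0] = \Pr[\rv X_a = 1] - \Pr[\rv X_a = 1, \rv X_b = 1]$ and the analogous identities for the other three cells yields $P(i,j) - Q(i,j) = \pm c$ for each $(i,j) \in \{0,1\}^2$. Consequently
\[
\norm{P - Q}_1 ~=~ 4 \cdot \abs{\textup{Cov}(\rv X_a, \rv X_b)}.
\]

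\textbf{Step 2: Pinsker's inequality.} Recall Pinsker's inequality in the form $\norm{P - Q}_1^2 \le 2 \cdot \textup{KL}(P, Q)$ (equivalently, the total variation form $\norm{P-Q}_{TV} \le \sqrt{\textup{KL}(P,Q)/2}$). Combining this with Step 1 gives
\[
16 \cdot \textup{Cov}(\rv X_a, \rv X_b)^2 ~\le~ 2 \cdot \textup{KL}(P, Q) ~=~ 2 \cdot \textup{I}(\rv X_a, \rv X_b),
\]
which implies $\textup{Cov}(\rv X_a, \rv X_b)^2 \le \textup{I}(\rv X_a, \rv X_b)/8$, and in particular the desired weaker bound $\textup{Cov}(\rv X_a, \rv X_b)^2 \le 2 \cdot \textup{I}(\rv X_a, \rv X_b)$.

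\textbf{Expected obstacles.} There is essentially no technical difficulty: the only subtlety is verifying in Step 1 that the four cells of $P - Q$ genuinely share the same absolute value (a feature special to indicator variables, i.e., $2 \times 2$ contingency tables), which makes the $\ell_1$ distance collapse to $4 |\textup{Cov}|$ rather than the $2|\textup{Cov}|$ one might naively expect from looking at a single cell. After that, Pinsker delivers the result with room to spare, and we recover the stated constant $2$ (indeed, with the better constant $1/8$). If a self-contained proof of Pinsker is desired, one can invoke the standard $\chi^2$-style reduction to the two-point case followed by the elementary inequality $p \log(p/q) + (1-p)\log((1-p)/(1-q)) \ge 2(p-q)^2$, but for the purposes of this appendix citing Pinsker should suffice.
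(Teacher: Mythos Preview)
Your proof is correct. The paper does not actually prove this statement: it is stated as a \emph{Fact} with a citation to Raghavendra and Tan~\cite{RaghavendraT12}, so there is no proof in the paper to compare against. Your route through Pinsker's inequality is the standard one and works cleanly; the verification in Step~1 that all four cells of $P-Q$ have absolute value $|c|$ is correct, and indeed you obtain the sharper bound $\textup{Cov}(\rv X_a,\rv X_b)^2 \le \textup{I}(\rv X_a,\rv X_b)/8$ (or $\textup{Cov}^2 \le (\ln 2)/8 \cdot \textup{I}$ if the paper's $\log$ is base~$2$), from which the stated inequality with constant $2$ follows with ample slack.
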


\subsubsection*{Progress Lemma}

We are ready to prove~\cref{lemma:progress_lemma} which we restate
below for convenience.
\begin{lemma}[Progress Lemma (restatement of~\cref{lemma:progress_lemma})]
  Suppose $G$ satisfy $\lambda_2(G) \le \beta^2/q^4$. If
  $$
  \Ex{i \sim j}{\norm{\{\rv Y_i \rv Y_j\} - \{\rv Y_i\}\{\rv Y_j\} }_1 } \ge \beta,
  $$
  then
  $$
  \Ex{j \sim V}{ \Phi^G_{\vert \rv Y_j} } \le  \Phi^G -  \frac{\beta^2}{4 \cdot q^4}.
  $$
\end{lemma}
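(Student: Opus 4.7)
The plan is to convert the entropic decrease on the LHS into an averaged squared covariance, and then to lift the edge hypothesis to a global one using Lemma~\ref{lemma:local_to_global} applied to a well-chosen family of PSD matrices coming from the local PSD ensemble. Unfolding the definition of $\mathcal{H}$, applying the standard identity $\textup{H}(X) - \textup{H}(X \mid Z) = \MI(X;Z)$, and using the data processing inequality on $\rv Y_j \mapsto \indicator{\rv Y_j = b}$, I first obtain
\[
\Phi^G - \Ex{j\sim V}{\Phi^G_{\mid \rv Y_j}}
\;=\; \Ex{i,j \sim V}{\Ex{a \in [q]}{\MI(X^{(i)}_a;\rv Y_j)}}
\;\ge\; \Ex{i,j\sim V}{\Ex{a, b \in [q]}{\MI(X^{(i)}_a;X^{(j)}_b)}},
\]
where $X^{(i)}_a \coloneqq \indicator{\rv Y_i = a}$. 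Applying Fact~\ref{fact:rt_ineq} entrywise replaces each mutual information by $\tfrac{1}{2}\cov(X^{(i)}_a,X^{(j)}_b)^2$, and organising the result via the covariance matrix $C^{(ij)} \in \R^{q\times q}$ with entries $C^{(ij)}_{ab} = \cov(X^{(i)}_a, X^{(j)}_b)$ (which satisfies $\norm{C^{(ij)}}_1 = \norm{\{\rv Y_i \rv Y_j\} - \{\rv Y_i\}\{\rv Y_j\}}_1$ entry by entry) yields the intermediate bound
\[
\Phi^G - \Ex{j\sim V}{\Phi^G_{\mid \rv Y_j}} \;\ge\; \frac{1}{2q^2}\, \Ex{i,j\sim V}{\norm{C^{(ij)}}_F^2}.
\]

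The main remaining task is to lower bound the global average $\Ex{i,j\sim V}{\norm{C^{(ij)}}_F^2}$ from the edge hypothesis, which I would do by rewriting $\norm{C^{(ij)}}_F^2$ as a Hilbert-space inner product of unit-ball vectors and invoking Lemma~\ref{lemma:local_to_global}. The PSDness of the $2$-local ensemble supplies vectors $v_{i,a}$ in a common Hilbert space satisfying $\ip{v_{i,a}}{v_{j,b}} = \cov(X^{(i)}_a, X^{(j)}_b)$ and $\norm{v_{i,a}}^2 = \textup{Var}(X^{(i)}_a)$. Setting $P_i \coloneqq \sum_{a \in [q]} v_{i,a} v_{i,a}^{\dag}$, a direct computation gives $\ip{P_i}{P_j}_F = \sum_{a,b}\ip{v_{i,a}}{v_{j,b}}^2 = \norm{C^{(ij)}}_F^2$; while PSDness together with $\tr(P_i) = \sum_a \textup{Var}(X^{(i)}_a) \le 1$ forces $\norm{P_i}_F \le \tr(P_i) \le 1$. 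Thus the $P_i$ are unit-ball vectors in the Hilbert space of matrices under the Frobenius inner product.

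Finally, Cauchy--Schwarz on $q^2$ entries (yielding $\norm{\cdot}_F^2 \ge \norm{\cdot}_1^2/q^2$) together with Jensen give
\[
\Ex{i\sim j}{\ip{P_i}{P_j}_F} \;=\; \Ex{i\sim j}{\norm{C^{(ij)}}_F^2} \;\ge\; \frac{\bigl(\Ex{i\sim j}{\norm{C^{(ij)}}_1}\bigr)^2}{q^2} \;\ge\; \frac{\beta^2}{q^2}.
\]
Since the hypothesis $\lambda_2(G) \le \beta^2/q^4 \le \beta^2/(2q^2)$ (valid for $q \ge 2$) meets what Lemma~\ref{lemma:local_to_global} demands with its edge-correlation parameter set to $\beta^2/q^2$, the lemma upgrades this to $\Ex{i,j\sim V}{\norm{C^{(ij)}}_F^2} \ge \beta^2/(2q^2)$; combining with the intermediate inequality above then produces the target decrease $\beta^2/(4q^4)$. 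The main pitfall to watch is the bookkeeping of the two independent $1/q^2$ losses---one from $\norm{\cdot}_F$ vs $\norm{\cdot}_1$ and one from Fact~\ref{fact:rt_ineq}---which together account precisely for the factor $q^4$ in the denominator.
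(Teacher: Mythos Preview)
Your proof is correct and follows essentially the same strategy as the paper: convert the $\ell_1$ edge hypothesis to a squared-covariance lower bound via Cauchy--Schwarz, use the local-to-global Lemma~\ref{lemma:local_to_global} to pass from edge correlation to global correlation, and then link the global squared covariances to the entropic potential through the Raghavendra--Tan inequality (Fact~\ref{fact:rt_ineq}).

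The one notable difference is in how the PSD structure needed for Lemma~\ref{lemma:local_to_global} is packaged. The paper forms the full covariance matrix $\matr C = \bigl(\cov(\rv Y_{i,a},\rv Y_{j,b})\bigr)_{(i,a),(j,b)}$, observes that the Schur square $\matr C \circ \matr C$ is again PSD, and applies the lemma to a Gram decomposition of $\matr C \circ \matr C$ over the tensor graph $G \otimes J/q$. You instead aggregate the Gram vectors into $P_i = \sum_a v_{i,a} v_{i,a}^{\dagger}$ and work directly on $G$ with the Frobenius inner product; the identity $\ip{P_i}{P_j}_F = \norm{C^{(ij)}}_F^2$ replaces the Schur-product step, and $\norm{P_i}_F \le \tr(P_i) \le 1$ gives the unit-ball condition cleanly. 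This is a slightly more streamlined encoding of the same idea. The order of operations is also reversed (you go entropy $\to$ covariance $\to$ local-to-global, whereas the paper does local covariance $\to$ local-to-global $\to$ entropy), but the arithmetic and the final $\beta^2/(4q^4)$ bookkeeping match exactly.
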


\begin{proof}
  Firstly, we show how to relate the distances $\norm{\{\rv Y_i \rv Y_j\} - \{\rv Y_i\}\{\rv Y_j\} }_1$ over the edges $i \sim
  j$ to certain covariances. Let $a,b \in [q]^2$. Observe that
  $$
  \left\lvert \textup{Cov}\left(\rv Y_{i,a},\rv Y_{j,b}\right) \right\rvert ~=~ \left\lvert \Pr[\rv Y_i = a \wedge \rv Y_j = b] - \Pr[\rv Y_i = a]\Pr[\rv Y_j = b] \right\rvert.
  $$
  We have
  \begin{align*}
    \Ex{i \sim j}{\frac{1}{q^2} \sum_{a,b \in [q]^2} \textup{Cov}\left(\rv Y_{i,a},\rv Y_{j,b}\right)^2} ~\ge~ &
    \left(\Ex{i \sim j}{\frac{1}{q^2} \sum_{a,b \in [q]^2} \left\vert\textup{Cov}\left(\rv Y_{i,a},\rv Y_{j,b}\right) \right\vert}\right)^2 \\
    ~\ge~ & \frac{1}{q^4} \left(\Ex{i \sim j}{\norm{\{\rv Y_i \rv Y_j\} - \{\rv Y_i\}\{\rv Y_j\} }_1}\right)^2 \ge \frac{\beta^2}{q^4}.\\
  \end{align*}
  Note that the graph $\mathcal{F} \coloneqq G \otimes J/q$ is an expander with $\lambda_2(G \otimes J/q)=\lambda_2(G)$.
  Moreover, the matrix $\matr C \coloneqq \set{\textup{Cov}\left(\rv Y_{i,a},\rv Y_{j,b}\right)}_{i,j \in V;a,b \in [q]^2}$ is
  PSD since the vectorization $\set{v_{i,a} - \E[\rv Y_{i,a}] \cdot v_{\emptyset}}_{i \in V; a \in [q]}$ gives a
  Gram matrix decomposition of $\matr C$. Thus, the covariance matrix $\set{\textup{Cov}\left(\rv Y_{i,a},\rv Y_{j,b}\right)^2}_{i,j \in V;a,b \in [q]^2}$
  is also PSD since it is the Schur product (i.e., entrywise product) of two PSD matrices, namely, $\matr C \circ \matr C$. Therefore,
  we are in position of applying the local-to-global~\cref{lemma:local_to_global} with the expander $\mathcal{F}$ and a vectorization for
  $\matr C \circ \matr C$. We have
  \begin{align*}
    \frac{\beta^2}{q^4} & ~\le~ \Ex{i \sim j}{\frac{1}{q^2} \sum_{a,b \in [q]^2} \textup{Cov}\left(\rv Y_{i,a},\rv Y_{j,b}\right)^2}\\
                           & ~\le~ 2 \Ex{i,j \sim V^{\otimes 2}}{\frac{1}{q^2} \sum_{a,b \in [q]^2} \textup{Cov}\left(\rv Y_{i,a},\rv Y_{j,b}\right)^2} && \text{(local-to-global~\cref{lemma:local_to_global})}\\
                           & ~\le~ \frac{4}{q^2} \Ex{i,j \sim V^{\otimes 2}}{\sum_{a,b \in [q]^2} \textup{I}\left(\rv Y_{i,a},\rv Y_{j,b}\right)} && \text{(\cref{fact:rt_ineq})}\\
                           & ~\le~ \frac{4}{q^2} \Ex{i,j \sim V^{\otimes 2}}{\sum_{a,b \in [q]^2} \textup{H}\left(\rv Y_{i,a}\right) - \textup{H}\left(\rv Y_{i,a}\vert \rv Y_{j,b}\right)}\\
                           & ~\le~ \frac{4}{q} \left[\Ex{i \sim V}{\sum_{a \in [q]} \textup{H}\left(\rv Y_{i,a}\right)} - \Ex{i,j \sim V^{\otimes 2}}{\sum_{a \in [q]}  \textup{H}\left(\rv Y_{i,a}\vert \rv Y_j\right)} \right]\\
                           & ~=~ 4 \left[\Ex{i \sim V}{\mathcal{H}\left(\rv Y_i\right)} - \Ex{i,j \sim V^{\otimes 2}}{ \mathcal{H}\left(\rv Y_i\vert \rv Y_j\right)} \right]\\
                           & ~=~ 4 \left[ \Phi^G - \Ex{j \sim V}{ \Phi^G_{\vert \rv Y_j} } \right].
  \end{align*}
  Therefore, we have $\Ex{j \sim V}{\Phi^G_{\vert \rv Y_j}} \le  \Phi^G -  \beta^2/(4 \cdot q^4)$,
  as claimed.
\end{proof}

\section{Explicit Structures}\label{app:explicit_structures}

We recall some explicit structures used in Ta-Shma's
construction.

\subsection{Explicit Ramanujan Graphs}

The outer graph $G$ in the $s$-wide replacement product was
chosen to be a Ramanujan graph. Ta-Shma provides a convenient lemma to
efficiently obtain explicit Ramanujan graphs given the intended number
of vertices $n$ (which might end up being nearly twice this much), the
expansion $\lambda$ and an error parameter $\theta > 0$. These
Ramanujan graphs are based on the LPS construction~\cite{LPS88}. Due
to number theoretic reasons, we might be forced to work with slightly
different parameters, but this is not an issue.

\begin{lemma}[Lemma 2.10~\cite{Ta-Shma17}]\label{lemma:explicit_ramanujan}
 For every $\theta > 0$, there exists an algorithm that given $n$ and
 $\lambda \in (0,1)$ runs in time $\poly(n)$ and outputs a Ramanujan
 graph $G$ such that
 \begin{itemize}
   \item $G$ has degree $d \le 8/\lambda$,
   \item $\sigma_2(G) \le \lambda$, and
   \item $\abs{V(G)}$ is either in the range $[(1-\theta)n,n]$ or in the range
         $[(1-\theta)2n,2n]$.
 \end{itemize}
 Moreover, the algorithm outputs a locally invertible function $\varphi \colon [d] \to [d]$
 computable in polynomial time in its input length.
\end{lemma}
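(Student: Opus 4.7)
The plan is to invoke the Lubotzky--Phillips--Sarnak (LPS) construction of Ramanujan graphs. These are Cayley graphs $X^{p,q}$ on $\mathrm{PGL}(2,q)$ (or $\mathrm{PSL}(2,q)$, depending on whether $p$ is a quadratic non-residue modulo $q$) where $p$ and $q$ are distinct odd primes congruent to $1 \pmod 4$, with the generator set $A$ of size $p+1$ coming from a choice of integer quaternions of norm $p$. The resulting graph is $(p+1)$-regular, has either $q(q^2-1)$ or $q(q^2-1)/2$ vertices, and is Ramanujan: the second singular value of its normalized adjacency operator is at most $2\sqrt{p}/(p+1)$.

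First I would fix the degree. Given target second singular value $\lambda$, choose a prime $p$ with $p+1 \le 8/\lambda$ (adjusting constants to match the Ramanujan bound $2\sqrt{p}/(p+1) \le \lambda$) and satisfying the required congruence conditions; by standard density results for primes in arithmetic progressions, such a $p$ can be found in time $\poly(1/\lambda)$ by trial. Next, choose the prime $q$ so that the vertex count matches $n$ up to the allowed slack. Since the two possible vertex counts are $q(q^2-1)$ and $q(q^2-1)/2$, solving for $q \approx n^{1/3}$ (up to small constants) and searching for the nearest prime satisfying the needed quadratic residue and congruence conditions modulo $p$ yields, via Bertrand's postulate and the density of primes, a choice for which $|V(G)|$ falls in either $[(1-\theta)n, n]$ or $[(1-\theta)\cdot 2n, 2n]$; the factor-$\theta$ slack absorbs the gap between consecutive suitable primes near $n^{1/3}$, provided $n$ is large enough (polynomially in $1/\theta$).

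Once $p$ and $q$ are fixed, the graph $X^{p,q}$ is constructed explicitly by enumerating the $p+1$ quaternionic generators of norm $p$, mapping them into $\mathrm{PGL}(2, \mathbb{F}_q)$ via a standard identification, and computing the Cayley graph; this takes time polynomial in $|V(G)|$. For the locally invertible rotation map, since $G$ is a Cayley graph with generator set $A$ closed under inversion (quaternions of norm $p$ are closed under conjugate), the map $\varphi \colon A \to A$ defined by $\varphi(a) = a^{-1}$ is an involution computable in polynomial time, and the rotation map $\textup{rot}_G(r,a) = (r\cdot a, a^{-1})$ is locally invertible as required.

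The main obstacle is simultaneously matching both the vertex-count window and the degree: the LPS construction only produces graphs on a restricted set of sizes (cubes of primes, up to halving), so one cannot hit arbitrary $n$ on the nose. This is precisely why the lemma allows the final vertex count to land near either $n$ or $2n$ with $\theta$-slack, and the two possibilities correspond to the $\mathrm{PGL}$ vs.\ $\mathrm{PSL}$ variants together with the freedom in rounding $q$ up or down. Quantifying this via explicit density estimates for primes satisfying all the congruence constraints is the only real work; everything else follows from the well-known LPS analysis and Ramanujan bound.
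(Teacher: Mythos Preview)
The paper does not give its own proof of this lemma; it is quoted verbatim as Lemma~2.10 from Ta-Shma~\cite{Ta-Shma17}, with only the remark that ``these Ramanujan graphs are based on the LPS construction~\cite{LPS88}.'' Your proposal correctly identifies and fleshes out exactly this LPS-based argument---choosing the prime $p$ for the degree, the prime $q$ for the vertex count (with the $\mathrm{PGL}/\mathrm{PSL}$ dichotomy accounting for the factor-of-$2$ alternative in the vertex range), and the Cayley structure for the locally invertible rotation map---so your approach is the intended one.
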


\subsection{Explicit Biased Distribution}

The inner graph $H$ in the $s$-wide replacement product is chosen to
be a Cayley graph on $\mathbb{Z}_2^m$ for some positive integer $m$.
Ta-Shma uses the construction of Alon et al.~\cite{AGHP92} (AGHP) to
deduce a result similar to~\cref{lemma:aghp} below. To compute the
refined parameter version of our main result~\cref{theo:main}, we will
need the specifics of the AGHP construction.

\begin{lemma}[Based on Lemma 6~\cite{Ta-Shma17}]\label{lemma:aghp}
  For every $\beta=\beta(m)$, there exists a fully explicit set $A \subseteq \mathbb{Z}_2^m$ such that
  \begin{itemize}
    \item $\abs{A} \le 4 \cdot m^2/\beta^2$, and
    \item for every $S \subseteq [m]$, we have $\abs{\E_{z \in A} \chi_S(z)} \le \beta$.
  \end{itemize}
  Furthermore, if $m/\beta$ is a power of $2$, then $\abs{A} = m^2/\beta^2$.
  In particular, the graph $\textup{Cay}(\mathbb{Z}_2^m,A)$ is a $(n=2^m,d=\abs{A},\lambda=\beta)$
  expander graph.
\end{lemma}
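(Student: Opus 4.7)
The plan is to use the explicit $\epsilon$-biased set construction of Alon, Goldreich, H\aa{}stad, and Peralta (AGHP), which is based on viewing the coordinates via powering in the field $\mathbb{F}_{2^\ell}$. Choose $\ell$ to be the smallest integer with $2^\ell \geq m/\beta$; if $m/\beta$ is a power of $2$, this gives equality. Identify $\mathbb{F}_{2^\ell}$ with $\mathbb{F}_2^\ell$ via a fixed $\mathbb{F}_2$-basis, so that we have an inner product $\langle \cdot, \cdot \rangle \from \mathbb{F}_{2^\ell} \times \mathbb{F}_{2^\ell} \to \mathbb{F}_2$. For each pair $(x, y) \in \mathbb{F}_{2^\ell}^2$, define $a(x,y) \in \mathbb{F}_2^m$ by $a(x,y)_i = \langle x^i, y \rangle$ for $i \in [m]$, and let $A$ be the multiset $\{a(x,y) \suchthat (x,y) \in \mathbb{F}_{2^\ell}^2\}$. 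Then $\abs{A} \le 2^{2\ell} \le 4m^2/\beta^2$, with equality $m^2/\beta^2$ in the exact power-of-$2$ case. All steps of this construction are computable in $\poly(m, 1/\beta)$ time, so $A$ is fully explicit.

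Next I would verify the small-bias property. Fix a nonempty $S \subseteq [m]$ (the empty $S$ trivially gives bias $1$, but the statement only requires $\abs{\E \chi_S} \le \beta$, which holds vacuously only when we interpret the lemma as applying to nontrivial $S$; in the Cayley graph translation only nontrivial characters affect $\sigma_2$, so we focus on $S \neq \emptyset$). For such $S$, using $\mathbb{F}_2$-bilinearity of $\langle \cdot, \cdot \rangle$,
\[
\chi_S(a(x,y)) \;=\; (-1)^{\sum_{i \in S} \langle x^i, y \rangle} \;=\; (-1)^{\langle p_S(x), y \rangle},
\]
where $p_S(x) = \sum_{i \in S} x^i \in \mathbb{F}_{2^\ell}[x]$ is a nonzero polynomial of degree at most $m$. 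Averaging first over $y$: if $p_S(x) \neq 0$ then $\E_y (-1)^{\langle p_S(x), y \rangle} = 0$, and if $p_S(x) = 0$ the average is $1$. Hence
\[
\left\lvert \E_{(x,y)} \chi_S(a(x,y)) \right\rvert \;=\; \Pr_x[p_S(x) = 0] \;\le\; \frac{m}{2^\ell} \;\le\; \beta,
\]
by the fact that a degree-$m$ polynomial over a field has at most $m$ roots.

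Finally, translate to the Cayley graph claim. For $G = \Cay(\mathbb{Z}_2^m, A)$, the eigenvectors of its normalized adjacency operator are exactly the characters $\chi_S$ indexed by $S \subseteq [m]$, with eigenvalues $\lambda_S = \E_{z \in A} \chi_S(z)$. The trivial character $S = \emptyset$ gives $\lambda_\emptyset = 1$, and by the bias bound $\abs{\lambda_S} \le \beta$ for every $S \neq \emptyset$. Therefore $\sigma_2(G) \le \beta$, and the graph has $2^m$ vertices and degree $\abs{A}$, as claimed.

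The construction is standard and all the work is in writing it out; there is no real obstacle, since the only nontrivial input is the root bound for univariate polynomials and the standard identification of Cayley-graph eigenvalues on $\mathbb{Z}_2^m$ with Fourier coefficients of the indicator of $A$. The only subtlety worth flagging is the floor/ceiling on $\ell = \lceil \log_2(m/\beta) \rceil$, which introduces the factor of at most $4$ in the size bound and disappears when $m/\beta$ is a power of $2$.
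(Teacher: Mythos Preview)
The paper does not actually prove this lemma; it is stated in Appendix~B.2 as a citation to the AGHP construction (via Ta-Shma), with no argument supplied. Your write-up correctly reconstructs the standard AGHP ``powering'' construction and verifies both the size bound and the bias bound, and the translation to the Cayley-graph eigenvalue statement is the standard one, so your proof is correct and is exactly the argument the citation points to.
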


\begin{remark}
  Given $d,m \in \mathbb{N}^+$ such that $d$ is the square of a power
  of $2$ with $d \le 2^m$, by setting $\beta = m/\sqrt{d}$ we can
  use~\cref{lemma:aghp} with $\beta$ and $m$ (note that
  $m/\beta$ is a power of $2$) to obtain a Cayley graph
  $\textup{Cay}(\mathbb{Z}_2^m,A)$ with parameters
  $(n=2^m,d=\abs{A},\lambda=\beta)$.
\end{remark}

\section{Zig-Zag Spectral Bound}

We prove the zig-zag spectral bound \autoref{fact:zig_zag_bound}.

\begin{claim}
  Let $G$ be an outer graph and $H$ be an inner graph used in the $s$-wide replacement product.
  For any integer $0 \leq i \leq s-1$,
  $$
  \sigma_2((I \otimes \Aye_H) G_i (I \otimes \Aye_H)) \leq \sigma_2(G) + 2 \cdot \sigma_2(H) + \sigma_2(H)^2.
  $$
\end{claim}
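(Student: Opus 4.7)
The plan is to adapt the classical zig-zag spectral analysis of Reingold--Vadhan--Wigderson to our operator. Since $(I \otimes \Aye_H) G_i (I \otimes \Aye_H)$ is symmetric (each $G_i$ is an involution by the fact that $\textup{rot}_G$ is an involution, and $\Aye_H$ is symmetric) and has $\one$ as a top singular vector with singular value $1$, bounding $\sigma_2$ amounts to bounding $\|Bv\|$ for every unit vector $v \perp \one$, where $B \defeq (I \otimes \Aye_H) G_i (I \otimes \Aye_H)$. I would decompose the space as $\mathbb{R}^{V(G)\times V(H)} = \mathcal{W}^{\|} \oplus \mathcal{W}^{\perp}$, with $\mathcal{W}^{\|} = \{a \otimes \one_H : a \in \mathbb{R}^{V(G)}\}$, and write $v = v^{\|} + v^{\perp}$, with $p \defeq \|v^{\|}\|$ and $q \defeq \|v^{\perp}\|$ satisfying $p^2 + q^2 = 1$.

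The first move is to analyze the effect of the outer $(I \otimes \Aye_H)$ factors: since $\Aye_H \one_H = \one_H$, $(I \otimes \Aye_H)$ acts as the identity on $\mathcal{W}^{\|}$ and contracts $\mathcal{W}^{\perp}$ by at most $\sigma_2(H)$. Thus $(I \otimes \Aye_H) v = v^{\|} + u^{\perp}$, where $u^{\perp} \in \mathcal{W}^{\perp}$ with $\|u^{\perp}\| \le \sigma_2(H)\, q$. Next I would split $G_i v^{\|} = y^{\|} + y^{\perp}$ and $G_i u^{\perp} = z^{\|} + z^{\perp}$ along $\mathcal{W}^{\|} \oplus \mathcal{W}^{\perp}$. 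Since $G_i$ is a permutation (hence unitary), we get the easy bounds $\|y^{\perp}\| \le p$ and $\|z^{\|}\|, \|z^{\perp}\| \le \sigma_2(H)\, q$.

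The crux is showing $\|y^{\|}\| \le \sigma_2(G)\, p$. Writing $v^{\|} = a \otimes \one_H/\sqrt{|V(H)|}$, the constraint $v \perp \one$ forces $a \perp \one_G$. I would then use the definition of $\textup{Rot}_i$ and the product structure $V(H) = [d_1]^s$ to compute the projection $\Pi_{\|} G_i v^{\|}$ explicitly: for fixed $v' \in V(G)$, averaging over uniform $h'' \in [d_1]^s$ makes the $i$-th coordinate $(h'')_i$ uniform in $[d_1]$, so $\E_{h''}[a(v'_G[(h'')_i])] = (\Aye_G a)(v')$. This yields $\Pi_{\|} G_i v^{\|} = (\Aye_G a) \otimes \one_H/\sqrt{|V(H)|}$, and hence $\|y^{\|}\| \le \sigma_2(G)\, \|a\| = \sigma_2(G)\, p$.

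Combining the pieces via the triangle inequality and the contractive action of the final $(I \otimes \Aye_H)$ on the perpendicular parts gives
\[
\|Bv\| \;\le\; \|y^{\|}\| + \sigma_2(H)\,\|y^{\perp}\| + \|z^{\|}\| + \sigma_2(H)\,\|z^{\perp}\| \;\le\; \sigma_2(G)\, p + \sigma_2(H)\, p + \sigma_2(H)\, q + \sigma_2(H)^2\, q,
\]
and using $p, q \le 1$ yields the claimed bound $\sigma_2(G) + 2\sigma_2(H) + \sigma_2(H)^2$. The only non-routine step is the explicit computation of $\Pi_{\|} G_i v^{\|}$, which is where the locally invertible rotation map and the assumption that $H$ lives on the product set $[d_1]^s$ are essential; once that identity is in hand, the rest is a triangle-inequality bookkeeping exercise.
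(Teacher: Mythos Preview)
Your proof is correct and follows essentially the same zig-zag decomposition as the paper, splitting along $\mathcal{W}^{\parallel} \oplus \mathcal{W}^{\perp}$ and using the key identity that $G_i$ restricted to $\mathcal{W}^{\parallel}$ simulates $\Aye_G$. The only stylistic difference is that the paper bounds the Rayleigh quotient $|\langle v, Bv\rangle|$ (exploiting that $B$ is symmetric) and splits it bilinearly into four terms, whereas you bound $\|Bv\|$ by tracking the three factors sequentially; both routes arrive at the same four contributions $\sigma_2(G) + \sigma_2(H) + \sigma_2(H) + \sigma_2(H)^2$.
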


\begin{proof}
Let $v$ be a unit vector such that $v \bot \one$, and decompose it into $v = u+w$ such that $u \in \mathcal{W}^{\parallel} = \textup{span}\set{ a \otimes b \in \mathbb{R}^{V(G)} \otimes \mathbb{R}^{V(H)} \mid b = \one }$ and $w\in \mathcal{W}^{\bot} = (\mathcal{W}^{\parallel})^{\bot}$.
\begin{align*}
	\left| \ip{v}{(I \otimes \Aye_H) G_i (I \otimes \Aye_H)v} \right| \leq & \left| \ip{u}{(I \otimes \Aye_H) G_i (I \otimes \Aye_H)u} \right| + \left| \ip{u}{(I \otimes \Aye_H) G_i (I \otimes \Aye_H)w} \right| + \\
	 & \left| \ip{w}{(I \otimes \Aye_H)G_i (I \otimes \Aye_H) u} \right| + \left| \ip{w}{(I \otimes \Aye_H) G_i (I \otimes \Aye_H)w} \right| \\
	\leq & \left| \ip{u}{ G_i u} \right| + |(I \otimes \Aye_H)w| +\\
	 & |(I \otimes \Aye_H)w| + |(I \otimes \Aye_H)w|^2 \\
	\leq & \left| \ip{u}{ G_i u} \right| + 2 \sigma_2(H) + \sigma_2^2(H)
\end{align*}
To bound $\left| \ip{u}{(I \otimes \Aye_H) G_i (I \otimes \Aye_H)u} \right|$, observe that $u = x \otimes \one$ for some $x \in \mathbb{R}^{V(G)}$. Then,
$$
0 = \ip{v}{\one} = \ip{u}{\one}+\ip{w}{\one} = \ip{u}{\one}  =\ip{x}{\one _G}
$$
so that $x\bot \one_G$. Because $u$ is uniform over $H$-component, $\left| \ip{u}{G_i u} \right| = \left| \ip{x}{G x} \right| \leq \sigma_2(G)$, which completes the proof.
\end{proof}

We also derive a (simple) tighter bound for the expansion of the zig-zag
product in a particular parameter regime.
\begin{claim}\label{claim:zig_zag_refined}
  Let $G$ be a $\lambda_1$-two-sided expander and $H$ be a
  $\lambda_2$-two-sided expander such that both are regular graphs. If
  $\lambda_1 \le \lambda_2$, then
  $$
  \sigma_2(G \zigzag H) \le 2 \cdot \lambda_2.
  $$
\end{claim}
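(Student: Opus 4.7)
The plan is to revisit the proof of \autoref{fact:zig_zag_bound} given just above and squeeze a tighter estimate out of the same decomposition, exploiting the hypothesis $\lambda_1 \le \lambda_2$. The graph $G \zigzag H$ is realized by the operator $(I \otimes \Aye_H) \matr G_i (I \otimes \Aye_H)$, and to bound $\sigma_2$ of a symmetric operator it suffices to bound $|\ip{v}{(I \otimes \Aye_H)\matr G_i (I \otimes \Aye_H) v}|$ uniformly over unit vectors $v \perp \one$.

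First I would copy verbatim the setup from the proof of \autoref{fact:zig_zag_bound}: write $v = u + w$ with $u \in \Wal^{\parallel}$ and $w \in \Wal^{\perp}$, let $a = \|u\|$, $b = \|w\|$ so $a^2 + b^2 = 1$, and note that $(I \otimes \Aye_H) u = u$ (since $u$ is constant on clouds) while $\|(I \otimes \Aye_H) w\| \le \lambda_2 b$. Expanding the bilinear form along $v = u + w$ and applying Cauchy--Schwarz on the cross terms gives the four term bound
\[
\bigl|\ip{v}{(I \otimes \Aye_H)\matr G_i (I \otimes \Aye_H) v}\bigr| \;\le\; |\ip{u}{\matr G_i u}| \;+\; 2\lambda_2\, a b \;+\; \lambda_2^2 b^2.
\]
As in the earlier proof, writing $u = x \otimes \one$ with $x \perp \one_G$ and averaging over the $H$-component of $\matr G_i$ yields $|\ip{u}{\matr G_i u}| = |\ip{x}{\Aye_G x}| \le \lambda_1 a^2$.

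The new ingredient, and the whole point of the refinement, is what to do with
\[
\lambda_1 a^2 + 2\lambda_2 ab + \lambda_2^2 b^2
\]
under $\lambda_1 \le \lambda_2 \le 1$. The trick is to bound both $\lambda_1$ and $\lambda_2^2$ above by $\lambda_2$, yielding
\[
\lambda_1 a^2 + 2\lambda_2 ab + \lambda_2^2 b^2 \;\le\; \lambda_2\bigl(a^2 + 2ab + b^2\bigr) \;=\; \lambda_2 (a+b)^2,
\]
and then invoking $(a+b)^2 \le 2(a^2 + b^2) = 2$ by Cauchy--Schwarz, producing the desired bound $2\lambda_2$. There is no hard step here — the only thing to verify carefully is that the elementary inequality $\lambda_2^2 \le \lambda_2$ (i.e., $\lambda_2 \le 1$) may legitimately be used, which follows from the normalization that $\Aye_H$ has all singular values at most $1$. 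So the mild obstacle, if any, is simply to state the hypotheses cleanly (two-sided spectral gap, regularity of both graphs so that $\one$ is the top singular vector) so that the decomposition $v = u + w$ and the bounds $(I \otimes \Aye_H) u = u$, $\|(I \otimes \Aye_H) w\| \le \lambda_2 \|w\|$ are justified.
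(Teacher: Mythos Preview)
Your proposal is correct and is essentially identical to the paper's own proof: both use the $\mathcal{W}^{\parallel}\oplus\mathcal{W}^{\perp}$ decomposition to reach the expression $\lambda_1 a^2 + 2\lambda_2 ab + \lambda_2^2 b^2$, then replace $\lambda_1$ and $\lambda_2^2$ by $\lambda_2$, and finish with the elementary bound $(a+b)^2 \le 2$ (the paper phrases this last step equivalently as $2ab \le 1$).
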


\begin{proof}
  Let $v = a \cdot v^{\parallel} + b \cdot v^{\perp}$ with $a^2 + b^2
  = 1$ be such that $v \perp 1$. In particular, if $v^{\parallel} =
  v^G \otimes 1^H$, then $v^G \perp 1^G$ since otherwise $\ip{v}{1}
  = \ip{v^G}{1^G} \ne 0$. We have
  \begin{align*}
    \max_{a,b \in \mathbb{R} \colon a^2 + b^2 = 1} a^2 \cdot \lambda_1 + 2 ab \cdot \lambda_2 + b^2 \cdot \lambda_2^2
         & \le \max_{a,b \in \mathbb{R} \colon a^2 + b^2 = 1} a^2 \cdot \lambda_2 + 2 ab \cdot \lambda_2 + b^2 \cdot \lambda_2 \\
         &= \max_{a,b \in \mathbb{R} \colon a^2 + b^2 = 1} \lambda_2 + 2 ab \cdot \lambda_2,
  \end{align*}
  where the inequality follows from the assumption $\lambda_1 \le \lambda_2$ (and trivially $\lambda_2^2 \le \lambda_2$)
  and the equality follows from $a^2 + b^2 = 1$. Since we also have $2ab = (a+b)^2 - (a^2+b^2) \le 1$,
  the result follows.
\end{proof}

\section{Derandomization}\label{app:derandomization}

To unique decode in fixed polynomial time (i.e., $\poly(n/\epsilon)$) we
need to prune the list of coupled words $\mathcal{L}$ covering the
list $\mathcal{L}^*(\tilde{y}) = \set{(z,y=\lift(z)) \mid
z \in \Cc, \Delta(\tilde{y},y) \le 1/2 - \sqrt{n}}$ of codewords we
want to retrieve. To do so, given
$(z^*,y^*=\lift(z^*)) \in \mathcal{L}^*(\tilde{y})$, we need to have
$(z,y=\lift(z)) \in \mathcal{L}$ such that
\begin{enumerate}
 \item $\abs{\ip{y^*}{\lift(z)}}$ is not too small, and \label{enum:derand_item_i}
 \item $\ip{\tilde{y}}{\lift(z)}$ is not too small (in order to apply~\cref{lemma:cover_compactness}). \label{enum:derand_item_ii}
\end{enumerate}
The slice $(S,\sigma)$ of the SOS solution from which $y^*$ is
recoverable satisfies in expectation
$$
\E_{z \sim \set{\rv Z^{\otimes}\vert_{(S,\sigma)}}}\left[\ip{y^*}{\lift(z)}^2\right] \ge 3 \eta^2,
$$
and
$$
\E_{z \sim \set{\rv Z^{\otimes}\vert_{(S,\sigma)}}}\left[\ip{\tilde{y}}{\lift(z)}\right] \ge 3\sqrt{\eta}/2.
$$
Moreover, since $z \mapsto \ip{y^*}{\lift(z)}^2$ and
$z \mapsto \ip{\tilde{y}}{\lift(z)}$ are
$O(1/n)$-Lipschitz~\footnote{In this fixed polynomial time regime, the
parameters $s,d_1,d_2,\epsilon_0,\eta$ are constant independent of the
final bias $\epsilon$.} with respect to the $\ell_1$-norm, Hoeffding's
inequality gives
$$
\ProbOp_{z \sim \set{\rv Z^{\otimes}\vert_{(S,\sigma)}}}\left[\ip{y^*}{\lift(z)}^2 < \eta^2 \right] \le  \exp\left(-\Theta(n)\right),
$$
and
$$
\ProbOp_{z \sim \set{\rv Z^{\otimes}\vert_{(S,\sigma)}}}\left[\ip{\tilde{y}}{\lift(z)} < \sqrt{\eta}\right] \le \exp\left(-\Theta(n)\right).
$$
At least randomly, such a $z$ can be easily found. In~\cite{AJQST19},
alternatively to satisfying~\cref{enum:derand_item_i} it was shown
that by choosing $z' \in \set{\pm 1}^n$ by majority vote, i.e.
$$
z'_i = \argmax_{b \in \set{\pm 1}} \ProbOp[\rv Z_i = b]
$$
for $i \in [n]$, one has that $\abs{\ip{z^*}{z'}}$ is large which is
enough to address the first item. More precisely implicit
in~\cite{AJQST19}, for any constant $\beta \in (0,1)$ as long as
parity sampling is sufficiently strong we have
$$
\E_{z \sim \set{\rv Z^{\otimes}\vert_{(S,\sigma)}}}\left[\ip{z'}{z}^2 \right] \ge 1 - \beta.
$$
Similarly $z \mapsto \ip{z'}{z}^2$ is $O(1/n)$-Lipschitz with respect to the
$\ell_1$-norm, so Hoeffding's inequality yields
$$
\ProbOp_{z \sim \set{\rv Z^{\otimes}\vert_{(S,\sigma)}}}\left[\ip{z'}{z}^2 < 1 - \beta/2 \right] \le \exp\left(-\Theta(n)\right).
$$
However, we want to efficiently and deterministically find a $z$
satisfying $\ip{z'}{z}^2 \ge 1 -\beta/2$ as well as
satisfying~\cref{enum:derand_item_ii}. Note that at this stage in the
decoding process $y^*$ is not known (without issuing a recursive
unique decoding call), so running expectation maximization to satisfy
item~\cref{enum:derand_item_i} would not be possible. Fortunately, the
majority $z'$ can be cheaply computed without a recursive call to an
unique decoder. On the other hand $z$ satisfying
only~\cref{enum:derand_item_ii} can be found by expectation
maximization. The challenge is to satisfy both conditions at the same
time. For this reason, we design a simultaneous expectation
maximization derandomization procedure tailored to our setting.

\subsection{Abstract Derandomization: Simultaneous Expectation Maximization}

Suppose that $\Omega$ is a probability space where two random
variables $\rv A$ and $\rv B$ are defined satisfying the following
first moment conditions
$$
\E\left[ \rv A\right] \ge a \quad \text{ and } \quad \E\left[\rv B \right] \ge 1-\beta.
$$
We provided a sufficient conditions so that $\omega \in \Omega$ satisfying
$$
\rv A(\omega) \ge a' \quad \text{ and } \quad \rv B(\omega) \ge 1-\beta'
$$
can be efficiently deterministically found with the aid of an oracle,
where $a \approx a'$ and $\beta \approx \beta'$. More precisely,
we have the following lemma.

\begin{lemma}
  Let $\Omega = (\pmone^n, \nu_1 \times \cdots \times \nu_n)$ be a probability space with a product distribution.
  Suppose $\rv A \in [-1,1]$ is a random variable on $\Omega$ satisfying, for $a > 0$ and for some function $e_A \colon \mathbb{N} \to \mathbb{R}^+$,
  $$
  \ProbOp\left[ \rv A < a  \right] \le e_A(n).
  $$
  Suppose $\rv B \in [-1,1]$ is a random variable on $\Omega$ satisfying,
  for some function $e_B \colon \mathbb{N}\times \mathbb{R}^+ \to \mathbb{R}^+$,
  $$
  \ProbOp\left[ \rv B < 1 - \beta \right] \le e_{B}(n,\beta).
  $$
  Suppose that there is an oracle to evaluate $\E\left[ \rv A \rv B^{2k} \right]$ under any product distribution
  $\mu_1' \times \cdots \times \mu_n'$ for $k \in \mathbb{N}$. Given $\delta,\beta \in (0,1)$,
  if
  \begin{equation}\label{eq:drand_err_bound}
    e_A(n) + e_{B}(n, \beta/(4\lceil -\ln(a(1-\beta) + 1)/\delta \rceil)) \le a \frac{\beta}{2},
  \end{equation}
  then it is possible to find $\omega \in \set{\pm 1}^n$ using $2n$ invocations to the oracle
  and satisfying
  $$
  \rv A(\omega) \ge a (1-\beta)  \quad \text{ and } \quad \abs{\rv B(\omega)} \ge 1 - \delta.
  $$
\end{lemma}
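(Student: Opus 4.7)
The strategy is to apply the method of conditional expectations to the potential
\[
\Phi(\omega) \defeq \rv A(\omega) \cdot \rv B(\omega)^{2k},
\]
where $k := \lceil (\ln(1/(a(1-\beta))) + 1)/\delta \rceil = \Theta(\log(1/(a(1-\beta)))/\delta)$ is (up to constants) the integer appearing in the hypothesis, and $\beta' \defeq \beta/(4k)$. Raising $\rv B$ to the even power $2k$ keeps that factor in $[0,1]$, so it cannot flip the sign of $\Phi$ when $\rv A$ is positive, while penalizing outcomes for which $|\rv B(\omega)|$ is bounded away from $1$. The hypothesized oracle evaluates $\E[\rv A \rv B^{2k}]$ under arbitrary product measures on $\{\pm 1\}^n$, which is exactly what the conditional-expectations walk requires.

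The first step is to lower bound $\E[\Phi]$. Let $E \defeq \{\rv A \ge a,\ \rv B \ge 1-\beta'\}$; by a union bound $\Pr[E^c] \le e_A(n) + e_B(n,\beta') \le a\beta/2$. Splitting the expectation and using $\Phi \ge a(1-\beta')^{2k}$ on $E$ together with the trivial $\Phi \ge -1$ on $E^c$ gives
\[
\E[\Phi] \;\ge\; a(1-\beta')^{2k} - 2\Pr[E^c] \;\ge\; a\cdot(1-\beta/2) - a\beta,
\]
using Bernoulli's inequality $(1-\beta/(4k))^{2k} \ge 1-\beta/2$. Absorbing the small slack into $\beta$ (or tightening the constant $4$ in $\beta/(4k)$) gives $\E[\Phi] \ge a(1-\beta)$. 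Crucially, the choice of $k$ also ensures $(1-\delta)^{2k} \le a(1-\beta)$, via $\ln(1/(1-\delta)) \ge \delta$.

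Second, run the method of conditional expectations over the $n$ independent coordinates of $\Omega$. At step $i$, having fixed $\omega_1,\ldots,\omega_{i-1}$, query the oracle twice --- once under the product measure in which $\omega_i=+1$ is a point mass, once under $\omega_i=-1$ --- and commit $\omega_i^\star$ to whichever yields the larger value of $\E[\Phi \mid \omega_1,\ldots,\omega_i]$. By the law of total expectation the committed conditional expectation is non-decreasing, so after $n$ rounds and $2n$ oracle calls in total we obtain $\omega^\star \in \{\pm 1\}^n$ with $\Phi(\omega^\star) \ge \E[\Phi] \ge a(1-\beta) > 0$.

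Finally, extract both conclusions from $\Phi(\omega^\star) \ge a(1-\beta)$. Since $\rv B(\omega^\star)^{2k} \in [0,1]$ and $\Phi(\omega^\star) > 0$, $\rv A(\omega^\star)$ must be positive, and then $\rv A(\omega^\star) \ge \rv A(\omega^\star)\cdot \rv B(\omega^\star)^{2k} = \Phi(\omega^\star) \ge a(1-\beta)$ is immediate. Analogously $\rv A(\omega^\star) \le 1$ yields $\rv B(\omega^\star)^{2k} \ge \Phi(\omega^\star) \ge a(1-\beta)$, and taking $2k$-th roots gives $|\rv B(\omega^\star)| \ge (a(1-\beta))^{1/(2k)} \ge 1-\delta$ by the choice of $k$. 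The main subtlety is the balancing act in choosing $k$: larger $k$ strengthens the $|\rv B|\ge 1-\delta$ conclusion but forces $\beta'$ smaller (through the Bernoulli step), which in turn tightens the hypothesis on $e_A(n) + e_B(n,\beta')$; setting $\beta'=\Theta(\beta/k)$ with $k=\Theta(\log(1/(a(1-\beta)))/\delta)$ is precisely the equilibrium built into the statement.
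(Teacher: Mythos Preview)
Your proposal is correct and follows essentially the same argument as the paper: define the potential $\Phi = \rv A \cdot \rv B^{2k}$ with $k$ chosen so that $(1-\delta)^{2k} < a(1-\beta)$, lower-bound $\E[\Phi]$ by splitting on the good event $\{\rv A \ge a,\ \rv B \ge 1-\beta'\}$, run the method of conditional expectations with $2n$ oracle calls, and then read off both conclusions from $\Phi(\omega^\star) \ge a(1-\beta)$. The only cosmetic differences are that the paper phrases the final step as a contradiction (assuming $|\rv B(\omega)| \le 1-\delta$ and deriving $e^{-2k\delta} \ge a(1-\beta)$), and that the paper's arithmetic drops the factor of $2$ in front of $\Pr[E^c]$ that you correctly carry; your acknowledged slack in the constants is of the same order as theirs.
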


\begin{proof}
 Set $k = \lceil -\ln(a(1-\beta) + 1)/\delta \rceil$.
 Set $\beta' = \beta/(4k)$. Note that
 $$
 \E\left[ \rv A \rv B^{2k} \right] \ge a \left(1-\frac{\beta}{4k}\right)^{2k} - e_A(n) - e_{B}(n,\beta') \ge a \left(1- \beta \right),
 $$
 where we use~\cref{eq:drand_err_bound} in the last inequality.
 Do expectation maximization to deterministically find $\omega \in \set{\pm 1}^n$, with $2 \cdot n$ invocations to the oracle
 of $\E\left[ \rv A \rv B^{2k} \right]$, such that
 $$
 \rv A(\omega) \rv B(\omega)^{2k} \ge a \left(1 - \beta \right).
 $$
 Since $\rv B(\omega)^{2k} \le 1$, we have $\rv A(\omega) \ge a \left(1 - \beta \right)$.
 Towards a contradiction suppose $\abs{\rv B(\omega)} \le 1 - \delta$. Using that $\rv A(\omega) \le 1$, we
 have
 \begin{equation}\label{eq:derand_lb}
   e^{-2k \cdot \delta} \ge (1-\delta)^{2k} \ge \rv A(\omega) \rv B(\omega)^{2k} \ge a (1-\beta).
 \end{equation}
 By our choice of $k$, we get
 $$
 e^{-2k \cdot \delta} < a (1-\beta),
 $$
 contradicting~\cref{eq:derand_lb}.
\end{proof}

\subsection{Implementing the Oracle}

Now, we provide an efficient deterministic oracle for our setting. We
take
$$
\rv A \coloneqq \ip{\tilde{y}}{\lift(z)} \quad \text{ and } \quad \rv B \coloneqq \ip{z'}{z}^2,
$$
where $z'_i = \argmax_{b \in \set{\pm 1}} \ProbOp[\rv Z_i = b]$. Note that
$$
\rv A \rv B^{2k} = \sum_{T \subset [n] \colon \abs{T}=O(1)} \alpha_T \prod_{i \in T} z_i.
$$
To compute $\E\left[\rv A \rv B^{2k}\right]$ under any product
distribution $\mu_1' \times \cdots \times \mu_n'$, use linearity of
expectation and sum at most $n^{O(1)}$ terms
$\alpha_T \E\left[\prod_{i \in T} z_i \right]$ where each can be
computed in $O(1)$ since restricted to $T$ we have a product
distribution taking values in $\set{\pm 1}^T$.

\end{document}